\newcommand\numeq[1]%
\newcommand\numleq[1]%
\def\tr{\operatorname{tr}}
\def\id{\operatorname{id}}
\newcommand{\overbar}[1]{\mkern 2mu\overline{\mkern-2mu#1\mkern-2mu}\mkern 2mu}
\DeclarePairedDelimiterX{\channels}[2]{[}{]}{%
  #1\;\delimsize\|\;#2%
}
\DeclarePairedDelimiterX{\states}[2]{(}{)}{%
  #1\;\delimsize\|\;#2%
}
\newcommand{\ubar}[1]{\underaccent{\bar}{#1}}
\newcommand{\staD}{\mathbb{D}\states}
\newcommand{\chD}{\mathbb{D}\channels}
\DeclareOldFontCommand{\rm}{\normalfont\rmfamily}{\mathrm}
\DeclareMathOperator\supp{supp}
\newtheorem{theorem}{Theorem}
\newtheorem{proposition}{Proposition}%
\newtheorem{corollary}{Corollary}
\newtheorem{lemma}{Lemma}
\newtheorem*{example*}{Example}%
\newtheorem{remark}{Remark}%
\newtheorem{dfn}{Definition}%
\newtheorem{observation}{Observation}%
\begin{document}

\title{Conditional entropy and information of quantum processes}
\author{Siddhartha Das}
\email{das.seed@iiit.ac.in}
\affiliation{Centre for Quantum Science and Technology (CQST), Center for Security, Theory and Algorithmic Research (CSTAR), International Institute of Information Technology Hyderabad, Gachibowli 500032, Telangana, India}

\author{Kaumudibikash Goswami}\email{goswami.kaumudibikash@gmail.com}
\affiliation{QICI Quantum Information and Computation Initiative, Department of Computer Science, The University of Hong Kong, Pokfulam Road, Hong Kong}

\author{Vivek Pandey}
\email{vivekpandey3881@gmail.com}
\affiliation{Institute of Physics, Faculty of Physics, Astronomy and Informatics, Nicolaus Copernicus University, Grudziadzka 5/7, 87-100 Toru\'n, Poland
}
\affiliation{Harish-Chandra Research Institute, A CI of Homi Bhabha National Institute, Chhatnag Road, Jhunsi, Prayagraj 211019, India}

\begin{abstract}
What would be a reasonable definition of the conditional entropy of bipartite quantum processes, and what novel insight would it provide? We develop this notion using four information-theoretic axioms and define the corresponding quantitative formulas. Our definitions of the conditional entropies of channels are based on the generalized state and channel divergences, for instance, quantum relative entropy. We find that the conditional entropy of quantum channels has potential to reveal insights for quantum processes that aren't already captured by the existing entropic functions, entropy or conditional entropy, of the states and channels. The von Neumann conditional entropy $S[A|B]_{\mathcal{N}}$ of the channel $\mathcal{N}_{A'B'\to AB}$ is based on the quantum relative entropy, with system pairs $A',A$ and $B',B$ being nonconditioning and conditioning systems, respectively. We identify a connection between the underlying causal structure of a bipartite channel and its conditional entropy. In particular, we provide a necessary and sufficient condition for a bipartite quantum channel $\mathcal{N}_{A'B'\to AB}$ in terms of its von Neumann conditional entropy $S[A|B]_{\mathcal{N}}$, to have no causal influence from $A'$ to $B$. As a consequence, if $S[A|B]_{\mathcal{N}}< -\log|A|$ then the channel necessarily has causal influence (signaling) from $A'$ to $B$. Our definition of the conditional entropy establishes the strong subadditivity of the entropy for quantum channels. We also study the total amount of correlations possible due to quantum processes by defining the multipartite mutual information of quantum channels.    
\end{abstract}

\maketitle

\section{Introduction}
Quantum information theory may be considered a fateful marriage between quantum theory~\cite{Neu32}, one of the two most fundamental theories that describe nature, and the information theory~\cite{Sha48}. It allows us to inspect physical processes and the theories describing them from an information-theoretic perspective. Generalization of the information-theoretic tools or concepts like (conditional) entropies of joint probability distributions of random variables to quantum (conditional) entropies of the states of composite quantum systems has proven to be pivotal in both fundamental as well as applied quests~\cite{CD94,HT07,ADHW09,SW10,RAR+11,TBH14,CBTW17,DBWH21,BGG+24}. The entropy of a system quantifies the amount of uncertainty intrinsic to the system and the conditional entropy quantifies the amount of uncertainty present in the system when partial or side information is provided. Quantum conditional entropies for states are ubiquitous in studying quantum systems and processes. In a limiting case, we can reduce quantum entropic quantities and their relations to those for random variables and classical processes in (classical) information theory (cf.~\cite{FB14,TBH14}). However, quantum entropic quantities and their relations offer insights unique to quantum systems (states), i.e., not feasible for (classical) random variables. A quintessential example is the existence of quantum states with negative conditional (von Neumann) entropy~\cite{CA97} with operational meaning in the task of quantum state merging~\cite{HOW05,HOW06}. In contrast, conditional (Shannon) entropy is always nonnegative for joint probability distributions. Negative conditional entropy of bipartite quantum states is a signature of entanglement~\cite{CA97}: no separable states~\cite{Sch35} can have negative conditional entropy.

The physical transformations of quantum states are given by quantum channels. Mathematically, quantum channels are represented by completely positive, trace-preserving linear maps~\cite{Kra83,Sud85}. In practice, the notion of quantum channels naturally arises because it is impossible to have an isolated quantum system that is not interacting with its environment. Specifically, quantum channels are physical transformations on the systems arising from a joint unitary evolution, possibly governed by an interaction Hamiltonian, with an initially uncorrelated environment~\cite{Sti55}. Particularly for bipartite quantum channels, which is relevant in our case, let $U_{A'B'E'\to ABE}$ be a unitary operator (corresponding to Hamiltonian of underlying systems), $|A'||B'||E'|=|A||B||E|$. Without loss of generality, we assume the initial environment to be in the ground state $\op{0}_{E'}$. Then the bipartite quantum channel $\mathcal{N}_{A'B'\to AB}$ is given by the transformation $\tr_E\circ\mathcal{U}_{A'B'E'\to ABE}(\,\cdot\,\otimes\op{0}_{E'})$, where $\mathcal{U}(\cdot)=U(\cdot)U^{\dag}$. 

\begin{table}[t]
    \centering
    \begin{tabular}{|c|c|c|} \hline
        & entropy & conditional entropy \\ \hline
       random variables  & $ \textcolor{orange}{0\leq H(X)}\leq \log|X|$  & $\textcolor{orange}{0\leq H(X|Y)}\leq \log|X|$ \\ \hline
        quantum states & $\textcolor{orange}{0\leq S(A)_{\rho}} \leq \log|A|$ & $ \textcolor{red}{-\log|A|\leq S(A|B)_{\rho}}\leq \log|A|$\\ \hline
      quantum channels   & $\textcolor{red}{-\log|A|\leq S[A]_{\mathcal{N}}}\leq \log|A|$ ~\cite{GW21} & $\textcolor{blue}{-\log|A'||B'||B|\leq S[A|B]_{\mathcal{N}}}\leq \log|A|$ \\ \hline
    \end{tabular}
    \caption{We present a comparison between the dimensional bounds known for Shannon entropy $H(X)$ and conditional entropy $H(X|Y)$, von Neumann entropy $S(A)_{\rho}$ and conditional entropy $S(A|B)_{\rho}$ for quantum states, and von Neumann entropy $S[A]_{\mathcal{N}}$ and conditional entropy $S[A|B]_{\mathcal{N}}$ of point-to-point channels $\mathcal{N}_{A'\to A}$ and bipartite quantum channels $\mathcal{N}_{A'B'\to AB}$, respectively. Shannon entropies can be expressed as Kullback-Leibler (KL) divergence, and von Neumann entropies can be expressed as (Umegaki) relative entropy, a quantum generalization of the KL divergence. Dimensional bounds on the von Neumann conditional entropy of a bipartite channel are from this work, and the remaining are facts known in the literature. Upper bounds for all have the same form. Hence, we may focus on lower bounds for comparison. The lower bound of the von Neumann entropy of a quantum channel is reflected in the lower bound of the von Neumann conditional entropy of a quantum state. In contrast, the lower bound for the von Neumann conditional entropy of a channel stands out from the rest.}
    \label{tab:bounds}
\end{table}

Investigating quantum channels from information-theoretic perspective is crucial to understand the dynamics of quantum systems and its utility in technological applications. This motivates us to formulate conditional entropies of bipartite quantum channels. To achieve this, we take an axiomatic approach similar to the formulation of entropies of quantum channels in \cite{Gou19,GW21}. In the limiting scenarios, we can obtain conditional entropies of quantum states and quantum entropies of quantum channels~\cite{GW21} (see also~\cite{YHW19,Yua19,SPSD24}\footnote{\cite{Yua19} and \cite{GW21} independently provided definitions of the entropies of a quantum channel, see~\cite{GW21} for comparison between their definitions based on axiomatic approach of \cite{Gou19}.}). As conditional entropies require at least two random variables or systems, for one to be conditioning (role of side memory) and the other to be nonconditioning, conditional entropy for quantum channels is introduced here for bipartite quantum channels. Lingering questions while conceptualizing the notion of conditional entropies of bipartite quantum channels would naturally be (see Table~\ref{tab:bounds}):
\begin{itemize}
    \item How novel would the conditional entropy of quantum channels be when compared with the conditional entropy of states? Can the conditional entropy of bipartite quantum channels be strictly lower than the least possible values of entropy of quantum channels? If yes, what would that signature property for quantum channels be?
\end{itemize}

Axiomatic approaches have been employed to define entropies and conditional entropies of probability distributions and quantum states~\cite{Ren65,Ren05,MDSFT13,TBH14,BH14,WWY14}. Axioms form the minimal set of properties that would make entropy functions meaningful, which could be subject to context. Some of the well-known and widely used examples in classical information theory are Shannon entropy, min-entropy, max-entropy, and R\'enyi entropies, which have quantum counterparts for states, namely, von Neumann entropy, min-entropy, max-entropy, R\'enyi entropies, and sandwiched R\'enyi entropies (see Section~\ref{sec:gen-st-div}). There are different ways to generalize classical entropic functions to quantum entropic functions. Some of these families of entropy functions find operational meanings in information-processing tasks that enhance their utility and significance, e.g.,~\cite{Ren05,MO15,DBWH21,AMT24}.

\section{Main results}
Considering the motivations and ideas discussed earlier, we introduce axioms for the conditional entropies of bipartite quantum channels. Based on these axioms and using certain families of state and channel divergences, we formulate expressions for arguably a valid notion of the conditional entropies of bipartite quantum channels. To contrast with the previously developed notion of entropy of quantum channels, we show the (von Neumann) conditional entropy of a bipartite channel can be more negative than the dimensional lower bound on the entropy of a channel, see Table~\ref{tab:bounds}. In a way, what pure state is in the context of entropy and conditional entropy for states, we have unitary (or isometric) channels in the context of entropy and conditional entropy for channels, see Table~\ref{tab:conditionalentropy}. Similar to the way the negative conditional entropy of states certifies entanglement, the conditional entropy of bipartite channel, which is more negative than the negative logarithm of the nonconditioning-output-dimension certifies causal influence (signaling) from nonconditioning input to conditioning output. Further, the conditional entropy of a quantum channel directly finds operational meaning (up to a scaling factor depending on the dimension of the nonconditioning input system) in asymmetric hypothesis testing for channel discrimination depending on the generalized channel divergence it is based on. 

\begin{table}
\begin{center}\label{tab:conditionalentropy}
\begin{tabular}{ | m{8cm} | m{8cm} | } 
  \hline
 Conditional entropy of quantum states &  Conditional entropy of quantum channels \\ 
  \hline\hline
$S(A)_{\rho}$ achieves minimal value $0$ iff the state $\rho_A$ is \textcolor{red}{pure}~\cite{Neu32}. & $S[A]_{\mathcal{N}}$ achieves minimal value $-\log \min\{|A'|,|A|\}$ iff the channel $\mathcal{N}_{A'\to A}$ is \textcolor{red}{isometric}~\cite{GW21}.\\ 
  \hline
$S(A|B)_{\rho}<0$ $\implies$ state $\rho_{AB}$ is \textcolor{blue}{entangled}~\cite{CA97}. & $S[A|B]_{\mathcal{N}}<-\log|A|$ $\implies$ channel $\mathcal{N}_{A'B'\to AB}$ is \textcolor{blue}{signaling} from $A'\to B$. \\ 
  \hline
$S(A|B)<0$ for all pure entangled states $\rho_{AB}$~\cite{CA97,HOW05}. & $S[A|B]_{\mathcal{N}}<-\log|A|$ for all isometric channels $\mathcal{N}_{A'B'\to AB}$ signaling from $A'\to B$.\\
\hline
$S(A|BC)_{\rho}\leq S(A|B)_{\rho}$ for states $\rho_{AB}$~\cite{LR73}. & $S[A|BC]_{\mathcal{N}}\leq S[A|B]_{\mathcal{N}}$ for channels $\mathcal{N}_{A'B'\to AB}$.\\
  \hline 
\end{tabular}
\caption{We compare the features of the (von Neumann) conditional entropy of quantum states with that of quantum channels.}
\end{center}
\end{table}

\subsection{Axiomatic approach to conditional entropy of quantum processes}
Consider ${\rm Q}$ be the set of all quantum channels and ${\rm Q}(A',A)$ denote the set of channels $\mathcal{M}_{A'\to A}$. Given a valid entropy function ${f}[A]_\mathcal{M}$ of a quantum channel $\mathcal{M}_{A'\to A}$, i.e., the function, $f[\cdot]: \mathrm{Q}\to \mathbb{R}$, showing characteristics of entropy for a quantum channel, we demand that the conditional entropy function ${f}[A|B]_{\mathcal{N}}$ of a bipartite channel $\mathcal{N}_{A'B'\to AB}$, with $\mathbf{A}\equiv (A',A)$ and $\mathbf{B}\equiv (B',B)$, exhibits the following characteristics (P1 to P4):
\begin{itemize}
    \item[P1] \textit{Monotonicity}. It should be nondecreasing under the actions of local pre-processing and post-processing with any random unitary channel (probabilistic mixture of unitary channels) on $\mathbf{A}$ and any local pre-processing or post-processing on $\mathbf{B}$. 
     \item[P2] \textit{Conditioning should not increase entropy}\footnote{This does beg the question of how first to have a reasonable notion of a marginal of an arbitrary bipartite channel.}. $f[A|B]_{\mathcal{N}}\leq f[A]_{\mathcal{N}}$.
     \item[P3] \textit{Normalization}. For a replacer channel $\mathcal{N}$ that outputs $\rho_{AB}$ (irrespective of what the input state is), $f[A|B]_{\mathcal{N}}=f(A|B)_{\rho}$, where $f(A|B)_{\rho}$ is the conditional entropy function of $\rho_{AB}$ (entropy of $A$ conditioned on $B$).
     \item[P4] \textit{Reduction to the entropy of the channel on the nonconditioning system for a tensor-product channel.} For a tensor-product, bipartite channel $\mathcal{N}^1_{A'\to A}\otimes\mathcal{N}^2_{B'\to B}$, $f[A|B]_{\mathcal{N}^1\otimes\mathcal{N}^2}=f[A]_{\mathcal{N}^1}$.
\end{itemize}

Valid entropy functions of channels have been introduced and discussed in \cite{Gou19,GW21} (see also \cite{SPSD24}). For the above axiom (P2) to make sense, we need a well-meaning definition for the reduced channel, which is a marginal channel analogous to the marginal of states. For any bipartite channel $\mathcal{N}_{A'B'\to AB}$, a channel $\tr_B\circ\mathcal{N}_{A'B'\to AB}$ is its \textit{reduced channel} on $A$. A reduced channel defined like this is a valid, well-defined channel, as the composition of two channels is always a channel. A reduced channel plays the role of marginal fittingly to define the conditional entropy of the quantum channel and discuss the strong subadditivity of the channel's entropy.

\subsection{Formulation of conditional entropies of quantum channels}
We formulate the conditional entropies of quantum channels in terms of generalized channel divergence, which in turn are derived from the generalized state divergence. A generalized state divergence~\cite{SW12} is a function $\mathbb{D}(\cdot\Vert\cdot):{\rm St}(A')\times{\rm St}(A')\to \mathbb{R}$ between quantum states $\rho,\sigma\in\mathrm{St}(A')$ that is monotone (nonincreasing) under the action of quantum channels $\mathcal{N}_{A'\to A}$. Based on the state divergence, a divergence function between a channel $\mathcal{N}_{A'\to A}$ and a completely positive map $\mathcal{M}_{A'\to A}$ is be defined as~\cite{LKDW18,SPSD24} $\mathbb{D}[\mathcal{N}\Vert\mathcal{M}]:=\sup_{\rho_{RA'}}\mathbb{D}(\mathcal{N}(\rho_{RA'})\Vert\mathcal{M}(\rho_{RA'}))$, for states $\rho_{RA'}\in{\rm St}(RA')$. We make an observation that a divergence function between channels as defined earlier, is always monotone (nonincreasing) under the action of quantum superchannels, i.e., pre-processing and post-processing with quantum channels. In this sense, these divergence functions between channels are deemed as generalized channel divergence.

Let $\mathcal{R}_{A'\to A}$ denote the completely mixing (depolarizing) linear map that has replacer action, $\mathcal{R}(X_{A'})=\tr(X_{A'})\mathbbm{1}_A$ for all trace-class $X_{A'}$, and $\mathbbm{1}_A$ is the identity operator. The generalized entropy of a quantum channel $\mathcal{N}_{A'\to A}$ is defined as~\cite{GW21,SPSD24} $\mathbf{S}[A]_{\mathcal{N}}=-\mathbf{D}[\mathcal{N}_{A'\to A}\Vert\mathcal{R}_{A'\to A}]$, where $\mathbf{D}[\cdot\Vert\cdot]$ denotes a generalized channel divergence such that $\mathbf{S}[A]_{\mathcal{N}}$ is a valid entropy function of a channel, see for instance~\cite{GW21}.

We define the generalized conditional entropy of an arbitrary bipartite quantum channel $\mathcal{N}_{A'B'\to AB}$, finite- or infinite-dimensional, as
\begin{equation}\label{eq:con-ent-def}
    \mathbf{S}[A|B]_{\mathcal{N}}=- \inf_{\mathcal{M}\in{Q}(B',B)}\mathbf{D}[\mathcal{N}_{A'B'\to AB}\Vert\mathcal{R}_{A'\to A}\otimes\mathcal{M}_{B'\to B}],
\end{equation}
where $\mathbf{D}[\cdot\Vert\cdot]$ is a generalized channel divergence based on the generalized state divergence $\mathbf{D}(\cdot\Vert\cdot)$ that is additive and nonnegative for states. This definition satisfies our axioms (P1-P4) for conditional entropies of quantum channels.

The above definition can be extended to the generalized mutual information of an arbitrary bipartite quantum channel $\mathcal{N}_{A'B'\to AB}$, finite- or infinite-dimensional, as
\begin{equation}
      \mathbf{I}[A;B]_{\mathcal{N}}=- \inf_{\mathcal{M}^1\in{Q}(A',A),\mathcal{M}^2\in{Q}(B',B)}\mathbf{D}[\mathcal{N}_{A'B'\to AB}\Vert\mathcal{M}^1_{A'\to A}\otimes\mathcal{M}^2_{B'\to B}].
\end{equation}

Some suitable generalized channel divergence for our purpose are (Umegaki) relative entropy~\cite{Ume62}, min-relative entropy~\cite{Dat09}, max-relative entropy~\cite{Ren05,Dat09}, and subsets of families of Sandwiched R\'enyi and geometric R\'enyi relative entropies~\cite{MDSFT13,WWY14,FF21}. These generalized divergences are widely used because of their utility and operational meanings. If the associated generalized channel divergence has a semidefinite representation, then the conditional entropy and mutual information of a bipartite channel can be efficiently computed via semidefinite programming. The generalized conditional entropy and mutual information of a bipartite quantum channel naturally and directly find some operational meanings related to the operational meanings of the associated generalized channel and state divergence.

The generalized entropy exhibits conditional entropy-based strong subadditivity for an arbitrary tripartite channel $\mathcal{N}_{A'B'C'\to ABC}$,
\begin{equation}
    \mathbf{S}[A|B]_{\mathcal{N}}-\mathbf{S}[A|BC]_{\mathcal{N}}\geq 0,
\end{equation}
where $\mathcal{N}_{A'B'C'\to AB}=\tr_C\circ\mathcal{N}_{A'B'C'\to AB}$ is a reduced channel of $\mathcal{N}$. We also have mutual information-based strong subadditivity, 
\begin{equation}
    \mathbf{I}[A;BC]_{\mathcal{N}}-\mathbf{I}[A;B]_{\mathcal{N}}\geq 0.
\end{equation}
Interestingly, when the associated generalized divergence is taken to be relative entropy~\cite{Ume62}, we have ${S}[A|B]_{\mathcal{T}}-{S}[A|BC]_{\mathcal{T}}={I}[A;BC]_{\mathcal{T}}-{I}[A;B]_{\mathcal{T}}$ for tele-covariant tripartite channels~\cite[Definition 5]{DBWH21} $\mathcal{T}_{A'BC'\to ABC}$.

\subsection{von Neumann conditional entropy and signaling in bipartite quantum channel}
By considering the generalized channel (or state) divergence in Eq.~\eqref{eq:con-ent-def} to be Umegaki relative entropy, $D(\rho_A\Vert\sigma_A):=\lim_{\varepsilon\to 0^+}\tr(\rho_A(\log\rho_A-\log(\sigma_A+\varepsilon\mathbbm{1}_A)))$, we arrive at the von Neumann conditional entropy $S[A|B]_{\mathcal{N}}$ of a quantum channel $\mathcal{N}_{A'B'\to AB}$. In general,
\begin{equation}
    -\log|B|+S[AB]_{\mathcal{N}}\leq S[A|B]_{\mathcal{N}}\leq \inf_{\op{\psi}\in{\rm St}(RA'B')}S(A|RB)_{\mathcal{N}(\op{\psi})},
\end{equation}
where $S(A|B)_{\omega}:= - D(\rho_{AB}\Vert\mathbbm{1}_A\otimes\rho_B)$ is the von Neumann conditional entropy of a state $\omega_{AB}$ and $\omega_{B}:=\tr_A(\omega_{AB})$ denotes a reduced state of $\omega_{AB}$. The saturation of the r.h.s. holds, i.e.,
\begin{equation}
    S[A|B]_{\mathcal{N}}= \inf_{\op{\psi}\in{\rm St}(RA'B')}S(A|RB)_{\mathcal{N}(\op{\psi})}
\end{equation}
\textit{if and only if} the bipartite channels $\mathcal{N}_{A'B'\to AB}$ are semicausal, i.e., no-signaling from $A'\to B$~\cite{BGNP01,PHHH06}; whereas, for channels $\mathcal{N}_{A'B'\to AB}$ with causal influence, signaling from $A'\to B$, the strict inequality ``$<$" holds,
\begin{equation}
    S[A|B]_{\mathcal{N}}< \inf_{\op{\psi}\in{\rm St}(RA'B')}S(A|RB)_{\mathcal{N}(\op{\psi})}.
\end{equation}
The value $S[A|B]_{\mathcal{N}}<-\log |A|$ guarantees that the channel has causal influence from $A'\to B$. The minimum possible dimensional bound $-\log\min\{|A||B|^2,|A'||B'||B|\}$ is possible only for isometric channels that are signaling from $A'\to B$. For the two-qu$d$it unitary SWAP channel, the von Neumann conditional entropy is $-3\log d$, where $|A'|=|B'|=|A|=|B|=d$. 

For tele-covariant bipartite channels $\mathcal{T}_{A'B'\to AB}$~\cite{HHH99,CDP09,DBW17}, i.e., bipartite channels simulable via teleportation with the resource states being their Choi states $\Phi^{\mathcal{T}}_{R_AABR_B}:=\mathcal{T}(\Phi_{R_AR_B:A'B'})$, where $\Phi_{A:B}$ denotes a maximally entangled state of Schmidt rank $\min\{|A|,|B|\}$ and $R_A\simeq A', R_B\simeq B'$,
\begin{equation}
S[A|B]_{\mathcal{T}}=S(R_AA|R_BB)_{\Phi^{\mathcal{N}}}-\log|A'|.
\end{equation}
For a tele-covariant bipartite unitary channel $\mathcal{U}_{A'B'\to AB}$, we have $S[A|B]_{\mathcal{U}}+\log |A'|=-S(R_AA)_{\Phi^\mathcal{U}}=-S(R_BB)_{\Phi^{\mathcal{U}}}$.

If we consider the strong subadditivity of the von Neumann entropy for a tripartite quantum channel $\mathcal{N}_{A'B'C'\to ABC}$ in terms of the conditional entropy, then we get
\begin{equation*}
    S[A|B]_{\mathcal{N}}-S[A|BC]_{\mathcal{N}}\geq 0,
\end{equation*}
which is analogous to $0\leq S(A|B)_{\omega}-S(A|BC)_{\omega}:= I(A;C|B)_{\rho}$ for quantum states $\omega_{ABC}$, and $I(A;C|B)_{\omega}$ is the quantum conditional mutual information of a state $\omega_{ABC}$. In this sense, $I[A;C|B]_{\mathcal{N}}=S[A|B]_{\mathcal{N}}-S[A|BC]_{\mathcal{N}}$ can be deemed as the conditional mutual information of tripartite channel $\mathcal{N}_{A'B'C'\to ABC}$. For a tele-covariant tripartite channel $\mathcal{T}_{A'B'C'\to ABC}$, we get
\begin{equation*}
    I[A;C|B]_{\mathcal{T}}=I(R_AA;C|R_BBR_C)_{\Phi^{\mathcal{T}}}\leq I(R_AA;R_CC|R_BB)_{\Phi^{\mathcal{T}}},
\end{equation*}
with the Choi state $\Phi^{\mathcal{T}}_{R_AAR_BBR_CC}$, $R_X\simeq X'$ for $X\in\{A,B,C\}$.

\subsection{Outline of the paper}
In the latter sections of the paper, we discuss the methods and results in detail. The structure of this paper is as follows. In Section~\ref{prelims&notations}, we fix our notations and present the preliminaries and background necessary for the discussion of the results. In Section~\ref{sec:conditional_ent}, we provide the definition and formula for the conditional entropy of quantum channels and discuss some of its desirable properties. In Section~\ref{sec:family_of_conditional_entropies},
  we discuss properties and calculations of the conditional entropy of the channel based on relative entropy, called the von Neumann conditional entropy. We derive lower and upper bounds on the von Neumann conditional entropy for arbitrary channels and analytically and numerically calculate the conditional entropy for some relevant quantum channels. We provide expressions for the conditional entropies based on the min-, max-, and geometric R\'enyi relative entropies. We numerically evaluate the geometric R\'enyi conditional entropy and the conditional min-entropy of some channels of practical interest. In Section~\ref{sec:signaling_vs_semicausal}, we discuss the connection between the presence of causal influence in bipartite processes and their conditional entropies. We find expression for the conditional entropy for semicausal (nosignaling from nonconditioning input system to conditioning output system) quantum channels in terms of the minimal conditional entropy of the output state. We provide a lower bound on conditional entropy for $k$-extendible tele-covariant quantum channels. In Section~\ref{sec:operational_meaning}, we discuss the operational interpretation of the conditional entropy of quantum channels in terms of asymmetric quantum hypothesis testing for quantum channel discrimination in Stein's setting and for communication task called conditional quantum channel merging, a variant of quantum state merging protocol. In Section~\ref{sec:strong_subad_QCH}, we study the conditional entropy-based strong subadditivity property for quantum channels and define the conditional mutual information for quantum channels. Analogous to quantum Markov states, we define quantum Markov channels and study the properties of approximate quantum Markov channels in terms of their recoverability when nonconditioning systems become partially inaccessible or lost. In Section~\ref{sec:mutual_information_CH}, we define multipartite mutual information for multipartite quantum channels and derive the expression for tele-covariant quantum channels in terms of their Choi states. In Section~\ref{sec:bidirectional_CH}, we have extended our definition of conditional entropy for bidirectional quantum channels, where two distant parties hold each input and output pair and discuss their properties. We conclude by providing brief remarks and some of the future directions from this work in Section~\ref{sec:discussion}.

\tableofcontents
\section{Preliminaries} \label{prelims&notations}
A composite quantum system $AB$ is associated with a separable tensor-product Hilbert space $\mathcal{H}_{AB}:=\mathcal{H}_A\otimes\mathcal{H}_B$, where $\mathcal{H}_A$ denotes the Hilbert space of $A$, $|A|:=\dim{(\mathcal{H}_A)}$, and in general $|A|\leq +\infty$. Going forward we may use $A$ to denote the Hilbert space of system $A$. $\mathrm{L}(A)$, $\mathrm{Pos}(A)$, $\mathrm{St}(A)$ are the set of all linear bounded operators, positive semidefinite and bounded operators, and density operators (states), respectively, defined on the Hilbert space $A$. A linear bounded operator $\rho_A$ denotes $\rho\in\mathrm{L}(A)$. Quantum states are given by density operators that are positive semidefinite operators with unit trace. Pure quantum states are rank-one density operators. A bipartite state $\rho_{AB}$ is called separable between $A:B$ if it can be written as a convex combination of product states, $\rho_{AB}=\sum_{x}p(x)\omega^x_A\otimes\sigma^x_B$ for $\sum_xp(x)=1$, $p(x)\in[0,1]$ and $\omega^x\otimes\sigma^x\in{\rm St}(AB)$ for all $x$. If a bipartite state $\rho_{AB}$ is not separable, then it is called entangled. Let $\Gamma_{RA'}:=\op{\Gamma}_{RA'}$ denote a maximally entangled operator, i.e., $\ket{\Gamma}_{RA'}:=\sum_{i=0}^{\min\{|A'|,|R|\}-1}\ket{i,i}_{RA'}$, where $\{\ket{i}\}_i$ forms an orthonormal set of vectors. We know  $\Gamma_{R_AR_B:A'B'}=\Gamma_{R_AA'}\otimes\Gamma_{R_BB'}$.

Let ${\rm CP}(A',A)$ denote the set of all completely positive maps $\mathcal{M}_{A'\to A}$ that outputs bounded operators $\sigma\in{\rm Pos}(A)$ if input is a bounded operator $\rho\in{\rm Pos}(A)$. Let $\mathrm{Q}(A',A)$ denote the set of all quantum channels with input system $A'$ and output system $A$. A quantum channel $\mathcal{N}_{A'\to A}:{\rm St}(A')\to {\rm St}(A)$ is a completely positive, trace-preserving map and $\mathcal{N}_{A'\to A}\in{\rm Q}(A',A)$. $\id_{A}$ denotes the identity channel acting on $X\in{\rm L}(A)$. $\id_R\otimes\mathcal{N}_{A'\to A}$ may simply be denoted as $\mathcal{N}_{A'\to A}$ whenever context is clear. The Choi operator $\Gamma^\mathcal{N}$ of a completely positive map is $\mathcal{N}_{A'\to A}(\Gamma_{RA'})$, meaning $\id_R\otimes\mathcal{N}_{A'\to A}(\Gamma_{RA'})$ where $R\simeq A'$, i.e., $|R|=|A'|$. The Choi state $\Phi^{\mathcal{N}}$ of a quantum channel $\mathcal{N}_{A'\to A}$, when $|A'|,|A|<+\infty$, is its normalized Choi operator, i.e., $\Phi^{\mathcal{N}}_{RA}:=\frac{1}{|A'|}\Gamma^{\mathcal{N}}_{RA}$. In general, for a linear positive map $\mathcal{M}_{A'\to A}$, $\Phi^{\mathcal{M}}_{RA}:=\mathcal{M}(\Phi_{RA'})$. Let $\pi_A:=\frac{1}{|A|}\mathbbm{1}_A$ denote the maximally mixed state for $|A|<+\infty$. $\tr$ denotes the trace operation and $\tr_A$ denotes the partial trace of a subsystem $A$ over a composite system; for $\tr(X_A)$ to be finite, we need $X\in{\rm L}(A)$ to be a trace-class operator. $\tr(\mathbbm{1}_{A})=|A|$, which is infinite if $|A|=+\infty$. $\log$ denotes the logarithm with base $2$. $\norm{X}_{p}$ for $p\in[1,\infty]$ denotes the Schatten $p$-norm, $\norm{X}_{p}:=\tr[|X|^{p}]^{1/p}$, $|X|=\sqrt{X^\dag X}$ for $X\in\mathrm{L}(A')$. The diamond norm of a completely positive map $\mathcal{N}_{A'\to A}$ is $\norm{\mathcal{N}}_{\diamond}:=\sup_{\rho\in\mathrm{St}(RA')}\norm{\id_R\otimes\mathcal{N}(\rho_{RA'})}_1$, where it suffices to optimize over pure states with $|R|=|A'|$. The average energy of a system $A'$ in the state $\rho_{A'}$ is $\tr[\widehat{H}_{A'}\rho_{A'}]$, where $\widehat{H}_{A'}$ denotes the Hamiltonian of $A'$. An isomteric channel $\mathcal{V}_{A'\to A}$ is a quantum channel with a single Kraus operator, $\mathcal{V}_{A'\to A}(\cdot)=V(\cdot)V^{\dag}$, where $V_{A'\to A}$ is an isomtery operator, i.e., $V^\dag V=\mathbbm{1}_{A'}$ and $VV^{\dag}=\Pi_{A}$ for some projector $\Pi_{A}$ on the support of $A$. An unitary channel $\mathcal{U}_{A'\to A}$ is an isomteric channel with $|A'|=|A|$. Let $\mathrm{SC}((A',A),(C',C))$ denote the set of all superchannels that map $\mathrm{Q}(A',A)$ to $\mathrm{Q}(C',C)$. The action of a superchannel $\Theta\in\mathrm{SC}((A',A),(C',C))$ on any quantum channel $\mathcal{N}_{A \rightarrow A'}$ can be written as a composition of quantum channel as follows~\cite{CDP09}:
  \begin{equation}
      \Theta \left( \mathcal{N}_{A' \rightarrow A} \right) = \mathcal{M}^{\rm{post}}_{AR \rightarrow C} \circ \mathcal{N}_{A' \rightarrow A}  \circ \mathcal{M}^{\rm{pre}}_{C' \rightarrow A'R}, \label{equ:superchannel_action}
  \end{equation}
  where $\mathcal{M}^{\rm post}_{AR \rightarrow C}$ and $\mathcal{M}^{\rm pre}_{C' \rightarrow A'R}$ are postprocessing and preprocessing channels, respectively. We will call a superchannel $ \Theta$ as isomteric superchannel if $\mathcal{M}^{\rm{post}}$ and $ \mathcal{M}^{\rm{pre}}$ are isomteric channels.

\subsection{Generalized state divergence}
\begin{dfn}[\cite{SW12}]
A generalized state divergence $\mathbb{D}(\cdot\Vert\cdot)$ is a function between two positive semidefinite operators defined on the same Hilbert space that is monotonically nonincreasing for the states under the action of an arbitrary quantum channel. That is, generalized state divergence $\mathbb{D}(\rho\Vert\sigma)$ for any $\rho\in\mathrm{St}(A')$ and $\sigma\in \mathrm{St}(A')$ and an arbitrary quantum channel $\mathcal{N}_{A'\to A}$ satisfies
\begin{equation}
    \mathbb{D}(\rho\Vert\sigma)\geq \mathbb{D}(\mathcal{N}(\rho)\Vert\mathcal{N}(\sigma)).
\end{equation}
\end{dfn}
A generalized state divergence $\mathbb{D}(\cdot\Vert\cdot)$ between arbitrary two states $\rho,\sigma\in \mathrm{St}(A)$ always satisfies the following properties~\cite{WWY14}:
\begin{align}
    \mathbb{D}(\rho\Vert\sigma) &=\mathbb{D}(\rho\otimes\omega\Vert\sigma\otimes\omega),\\
    \mathbb{D}(\rho\Vert\sigma) & =\mathbb{D}(\mathcal{V}(\rho)\Vert\mathcal{V}(\sigma)),
\end{align}
where $\omega\in\mathrm{St}(B)$ is some arbitrary state and $\mathcal{V}_{A\to E}$ is an arbitrary isomteric channel. Some generalized states divergences may have some additional properties:
\begin{enumerate}
    \item[S1] \textit{Faithfulness}. A generalized state divergence is called faithful if $\mathbb{D}(\rho\Vert\sigma)\geq 0$ for all states $\rho,\sigma\in\mathrm{St}(A)$ and $\mathbb{D}(\rho\Vert\sigma)= 0$ if and only if $\rho=\sigma$.
    \item[S2] \textit{(Sub)additivity}. A generalized state divergence is called subadditive if for arbitrary states $\rho_A,\sigma_A,\omega_B,\tau_B$ (additive when equality always holds),
    \begin{equation}
        \mathbb{D}(\rho\otimes\omega\Vert\sigma\otimes\tau)\geq \mathbb{D}(\rho\Vert\sigma)+  \mathbb{D}(\omega\Vert\tau).
    \end{equation}
\end{enumerate}

\subsubsection{Examples of generalized state divergence}\label{sec:gen-st-div}
Some of the specific generalized divergences with wide applications and use are the families of sandwiched R\'enyi relative entropy, quantum (Umegaki) relative entropy, $\varepsilon$-hypothesis testing relative entropy, negative of (Uhlmann) fidelity, purified distance, geometric R\'enyi divergence, trace-distance, etc.

The trace-distance between $\rho,\sigma\in\mathrm{Pos}(A')$ is $ \norm{\rho-\sigma}_1$, where $\norm{X}_1:=\tr|X|$ for $X\in\mathrm{L}(A')$. The trace-distance between quantum states is monotonically nonincreasing under the action of positive trace-preserving maps. The (Uhlmann) fidelity between $\rho,\sigma\in\mathrm{Pos}(A')$ is $F(\rho,\sigma)=\norm{\sqrt{\rho}\sqrt{\sigma}}^2_1$ ($\sqrt{F(\rho,\sigma)}=\tr\{\sqrt{\rho\sigma}\}$~\cite{BJ23,SBM24}). The fidelity between quantum states is monotonically nondecreasing under the action of positive trace-preserving maps. The purified distance $P(\rho,\sigma):=\sqrt{1-F(\rho,\sigma)}$ is monotonically nonincreasing between quantum states under the action of positive trace-preserving maps. 

The quantum (Umegaki) relative entropy between $\rho\in\mathrm{St}(A')$ and $\sigma\in\mathrm{Pos}(A')$ is defined as~\cite{Ume62} 
\begin{equation}
D(\rho\Vert \sigma)\coloneqq  \left\{ 
\begin{tabular}{c c}
$\tr\{\rho[\log\rho-\log\sigma]\}$, & $\supp(\rho)\subseteq\supp(\sigma)$\\
$+\infty$, &  otherwise.
\end{tabular} 
\right.
\end{equation}
The quantum relative entropy is monotonically nonincreasing under the action of positive trace-preserving maps \cite{MR15}, that is $D(\rho\Vert\sigma)\geq D(\mathcal{M}(\rho)\Vert\mathcal{M}{(\sigma)})$ for any $\rho\in\mathrm{St}(A')$ and $\sigma\in\mathrm{St}(A')$ and a positive trace-preserving map $\mathcal{M}_{A'\to A}$.

The sandwiched R\'enyi relative entropy $\widetilde{D}_\alpha(\cdot\Vert\cdot)$ \cite{MDSFT13,WWY14} is defined between any
$\rho\in\mathrm{St}(A')$ and $\sigma\in\mathrm{Pos}(A')$, for $ \alpha\in (0,1)\cup(1,\infty)$, as
\begin{align}
\widetilde{D}_\alpha(\rho\Vert \sigma)\coloneqq  \frac{1}{\alpha-1}\log \tr\left\{\left(\sigma^{\frac{1-\alpha}{2\alpha}}\rho\sigma^{\frac{1-\alpha}{2\alpha}}\right)^\alpha \right\}.\label{eq:def_sre}
\end{align}
It is set to $+\infty$ for $\alpha\in(1,\infty)$ if $\supp(\rho)\nsubseteq \supp(\sigma)$. The sandwiched R\'enyi relative entropy is monotonically nondecreasing in $\alpha$ \cite{MDSFT13}:
\begin{equation}\label{eq:mono_sre}
\widetilde{D}_\alpha(\rho\Vert\sigma)\leq \widetilde{D}_\beta(\rho\Vert\sigma) ,
\end{equation}
 if   $\alpha\leq \beta$,  for  $\alpha,\beta\in(0,1)\cup(1,\infty)$.

For certain range of $\alpha$, the sandwiched R\'enyi relative entropy $\widetilde{D}_\alpha(\rho\Vert\sigma)$ is a generalized state divergence~\cite{MDSFT13,WWY14}: Let $\mathcal{N}\in\mathrm{Q}(A',A)$ and let $\rho,\sigma\in\mathrm{St}(A')$, then for all $ \alpha\in [1/2,1)\cup (1,\infty)$,
\begin{equation}
\widetilde{D}_\alpha(\rho\Vert\sigma)\geq \widetilde{D}_\alpha(\mathcal{N}(\rho)\Vert\mathcal{N}(\sigma)).
\end{equation} 
In the limit $\alpha\to 1$, the sandwiched R\'enyi relative entropy $\widetilde{D}_\alpha(\rho\Vert\sigma)$ converges to the quantum relative entropy $D(\rho\Vert\sigma)$ \cite{MDSFT13,WWY14}. In the limit $\alpha\to \infty$, the sandwiched R\'enyi relative entropy $\widetilde{D}_\alpha(\rho\Vert\sigma)$ converges to the max-relative entropy~\cite{MDSFT13}, which is defined as~\cite{Ren05,Dat09}
\begin{equation}\label{eq:max-rel}
D_{\max}(\rho\Vert\sigma) \coloneqq \log \inf  \{\lambda:\ \rho \leq \lambda\sigma\}=\log\norm{\sigma^{-1/2}\rho\sigma^{-1/2}}_{\infty}, 
\end{equation}
with $D_{\max}(\rho\Vert\sigma)=+\infty$ if $\supp(\rho)\nsubseteq\supp(\sigma)$.

The $\varepsilon$-hypothesis-testing relative entropy between $\rho\in\mathrm{St}(A')$ and $\sigma\in\mathrm{Pos}(A')$ is defined as~\cite{BD10,WR12}
\begin{align}
D^\varepsilon_h(\rho\Vert\sigma)\coloneqq -\log\inf_{\Lambda}\{\tr\{\Lambda\sigma\}:  0\leq\Lambda\leq \mathbbm{1} \wedge\tr\{\Lambda\rho\}\geq 1-\varepsilon\}\quad \text{for}\, \varepsilon\in[0,1].\label{eq:hypo-test-div}
\end{align}
In the limit $\varepsilon=0$, $D^{\varepsilon=0}_h(\rho\Vert\sigma)$ is equal to the min-relative entropy~\cite{Dat09}
\begin{equation}
    D_{\min}(\rho\Vert\sigma) \coloneqq \left\{ 
\begin{tabular}{c c}
$-\log \tr (\sigma \Pi_{\rho})$, & $ \tr[\rho \sigma] \neq 0$\\
$+\infty$, &  otherwise,
\end{tabular} 
\right.
\end{equation}
where $\Pi_{\rho}$ denote projection onto the support of $\sigma$. 

\subsection{Generalized channel function and divergence}
A generalized channel function $\mathbb{D}[\cdot\Vert\cdot]$ between any two completely positive maps, quantum channel $\mathcal{N}_{A'\to A}$ and a completely positive map $\mathcal{M}_{A'\to A}\in{\rm CP}(A',A)$ can be defined as~\cite{LKDW18,SPSD24}
    \begin{equation}\label{eq:gen-ch-f}
        \mathbb{D}[\mathcal{N}\Vert\mathcal{M}]\coloneqq \sup_{\rho\in\mathrm{St}(RA')}\mathbb{D}(\id_{R}\otimes\mathcal{N}(\rho_{RA'})\Vert\id_R\otimes\mathcal{M}(\rho_{RA'})),
    \end{equation}
    where $\mathbb{D}(\cdot\Vert\cdot)$ is a generalized state divergence. It suffices to restrict $\rho_{RA'}$ in \eqref{eq:gen-ch-f} to pure states and $R\simeq A'$. The above definition is generalization of \cite[Definition II.2]{LKDW18} (see \cite[Definition 3]{SPSD24}).

If the input states to the channel are accessible only from a set $\mathsf{Access}(RA')\subset \mathrm{St}(RA')$ of states because of some physical constraints, then we define the resource-constraint generalized channel function between a quantum channel $\mathcal{N}_{A'\to A}$ and a completely positive map $\mathcal{M}\in{\rm CP}(A',A)$ as (cf.~\cite[Eq.~(67)]{SPSD24})
\begin{equation}\label{eq:gen-ch-f2}
        \mathbb{D}_{\mathsf{Access}}[\mathcal{N}\Vert\mathcal{M}]\coloneqq \sup_{\rho_{RA'}\in\mathsf{Access}(RA')}\mathbb{D}(\id_{R}\otimes\mathcal{N}(\rho_{RA'})\Vert\id_R\otimes\mathcal{M}(\rho_{RA'})),
    \end{equation}
and hence, \begin{equation}
    \mathbb{D}_{\mathsf{Access}}[\mathcal{N}\Vert\mathcal{M}]\leq \mathbb{D}[\mathcal{N}\Vert\mathcal{M}].
\end{equation}
When $\mathrm{Access}(RA')=\mathrm{St}(RA')$, then $\mathbb{D}_{\mathrm{Access}}[\mathcal{N}\Vert\mathcal{M}]=\mathbb{D}[\mathcal{N}\Vert\mathcal{M}]$. 

\begin{lemma}\label{lem:gen-ch-order}
    Let $\mathbb{F}(\cdot\Vert\cdot)$ and $\mathbb{G}(\cdot\Vert\cdot)$ be two generalized state divergences such that for arbitrary pair of (valid) states $\rho,\sigma$, $\mathbb{F}(\rho\Vert\sigma)\leq \mathbb{G}(\rho\Vert\sigma)$. Then, the generalized channel functions $\mathbb{F}[\cdot\Vert\cdot]$ and $\mathbb{G}[\cdot\Vert\cdot]$ also obey the same order for arbitrary pair of (valid) channels $\mathcal{N}, \mathcal{M}$, i.e., 
    \begin{align}
        \mathbb{F}[\mathcal{N}\Vert\mathcal{M}]\leq \mathbb{G}[\mathcal{N}\Vert\mathcal{M}], \qquad
        \mathbb{F}_{\rm Access}[\mathcal{N}\Vert\mathcal{M}]\leq \mathbb{G}_{\rm Access}[\mathcal{N}\Vert\mathcal{M}].
    \end{align}
\end{lemma}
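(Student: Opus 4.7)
The plan is to reduce both inequalities to the elementary fact that taking a supremum preserves pointwise inequalities, so there is no real obstacle beyond unfolding the definitions in Eq.~\eqref{eq:gen-ch-f} and Eq.~\eqref{eq:gen-ch-f2}. First I would fix an arbitrary bipartite state $\rho_{RA'}\in\mathrm{St}(RA')$ (with $R\simeq A'$, since pure-state optimization on such an $R$ suffices by the remark right after \eqref{eq:gen-ch-f}). Applying the channels $\mathcal{N}_{A'\to A}$ and $\mathcal{M}_{A'\to A}$ with the identity on $R$ produces a pair of legitimate quantum states $\id_R\otimes\mathcal{N}(\rho_{RA'})$ and $\id_R\otimes\mathcal{M}(\rho_{RA'})$ on $RA$. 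The hypothesis $\mathbb{F}(\rho\Vert\sigma)\le\mathbb{G}(\rho\Vert\sigma)$ for all valid state pairs then yields
\begin{equation}
\mathbb{F}\bigl(\id_R\otimes\mathcal{N}(\rho_{RA'})\,\Vert\,\id_R\otimes\mathcal{M}(\rho_{RA'})\bigr)
\;\le\;
\mathbb{G}\bigl(\id_R\otimes\mathcal{N}(\rho_{RA'})\,\Vert\,\id_R\otimes\mathcal{M}(\rho_{RA'})\bigr).
\end{equation}

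Next I would take the supremum of both sides over $\rho_{RA'}\in\mathrm{St}(RA')$. Since the left-hand side is bounded above, for every $\rho_{RA'}$, by $\sup_{\tau}\mathbb{G}(\id_R\otimes\mathcal{N}(\tau)\Vert\id_R\otimes\mathcal{M}(\tau))=\mathbb{G}[\mathcal{N}\Vert\mathcal{M}]$, taking supremum on the left gives $\mathbb{F}[\mathcal{N}\Vert\mathcal{M}]\le\mathbb{G}[\mathcal{N}\Vert\mathcal{M}]$, which is the first claim. For the resource-constrained inequality, I would run the identical argument with the supremum restricted to $\mathsf{Access}(RA')\subseteq\mathrm{St}(RA')$; monotonicity of the supremum over a subset then yields $\mathbb{F}_{\mathsf{Access}}[\mathcal{N}\Vert\mathcal{M}]\le\mathbb{G}_{\mathsf{Access}}[\mathcal{N}\Vert\mathcal{M}]$, which is the second claim.

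There is essentially no hard step here; the only thing to be mindful of is that both $\mathbb{F}(\cdot\Vert\cdot)$ and $\mathbb{G}(\cdot\Vert\cdot)$ must be evaluated on the same input state, so the pointwise inequality can be lifted inside the common supremum. Because $\id_R\otimes\mathcal{N}(\rho_{RA'})$ and $\id_R\otimes\mathcal{M}(\rho_{RA'})$ are genuine density operators on $RA$ (as $\mathcal{N},\mathcal{M}$ are quantum channels and hence completely positive and trace-preserving), the state-level hypothesis is applicable. Finally, the argument is agnostic to finite- vs.\ infinite-dimensional systems, so the conclusion holds without further assumptions on $|A'|$ or $|A|$.
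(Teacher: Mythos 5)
Your proof is correct and follows essentially the same route as the paper's: apply the pointwise state-level inequality to the channel outputs on a common input and then take suprema (over all of $\mathrm{St}(RA')$ or over $\mathsf{Access}(RA')$). If anything, your version is slightly cleaner, since the paper's argument invokes a state $\psi_{RA}$ "realizing the supremum" of $\mathbb{F}[\mathcal{N}\Vert\mathcal{M}]$, whereas your appeal to monotonicity of the supremum avoids assuming that the supremum is attained.
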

\begin{proof}
     Consider a pair of states $\rho_{RA}, \sigma_{RA}$, and a pair of quantum channels $\mathcal{N}_{A \rightarrow A},\mathcal{M}_{A \rightarrow A}$. It is given that for any pair of states $\widetilde{\rho}_{RA}$ and $\widetilde{\sigma}_{RA}$, we have 
    $\mathbb{F}(\widetilde{\rho}_{RA}\Vert\widetilde{\sigma}_{RA})\leq\mathbb{G}(\widetilde{\rho}_{RA}\Vert\widetilde{\sigma}_{RA})$. Let us define $\mathrm{id}_{R} \otimes \mathcal{N}_{A \rightarrow A}(\rho_{RA}):=\widetilde{\rho}_{RA}$ and $\mathrm{id}_{R} \otimes \mathcal{M}_{A \rightarrow A}(\rho_{RA}):=\widetilde{\sigma}_{RA}$. We then have 
    \begin{equation}
        \mathbb{F}(\mathrm{id}_{R} \otimes \mathcal{N}_{A \rightarrow A}(\rho_{RA}) \Vert \mathrm{id}_{R} \otimes \mathcal{M}_{A \rightarrow A}(\rho_{RA}))\leq\mathbb{G}(\mathrm{id}_{R} \otimes \mathcal{N}_{A \rightarrow A}(\rho_{RA})\Vert\mathrm{id}_{R} \otimes \mathcal{M}_{A \rightarrow A}(\rho_{RA})). \label{channel_inequality}
    \end{equation}
    If we take the supremum over all the states $\rho_{RA}$ in the l.h.s. of the above inequality, then the l.h.s. will become $\mathbb{F}[\mathcal{N}\Vert\mathcal{M}]$. Let $\psi_{RA}$ be the state realizing the supremum of $\mathbb{F}[\mathcal{N}\Vert\mathcal{M}]$. Now let us take the supremum over all the states $\rho_{RA}$ in the r.h.s. of the above inequality. Clearly $\psi_{RA}$ is also a valid choice of $\rho_{RA}$ to calculate the supremum of $\mathbb{G}(\cdot\Vert\cdot)$ and we have the following inequality $\mathbb{F}(\mathrm{id}_{R} \otimes \mathcal{N}_{A \rightarrow A}(\psi_{RA}) \Vert \mathrm{id}_{R} \otimes \mathcal{M}_{A \rightarrow A}(\psi_{RA})) = \mathbb{F}[\mathcal{N}\Vert\mathcal{M}] \leq\mathbb{G}(\mathrm{id}_{R} \otimes \mathcal{N}_{A \rightarrow A}(\psi_{RA})\Vert\mathrm{id}_{R} \otimes \mathcal{M}_{A \rightarrow A}(\psi_{RA})) \leq \mathbb{G}[\mathcal{N}\Vert\mathcal{M}]$, which proves our claim. The similar argument can be used to prove $ \mathbb{F}_{\rm Access}[\mathcal{N}\Vert\mathcal{M}]\leq \mathbb{G}_{\rm Access}[\mathcal{N}\Vert\mathcal{M}]$.
\end{proof}

\begin{dfn}[Bidirectional generalized channel function]\label{def:bi-gen-ch}
    A bidirectional generalized channel function $\mathbb{D}^{\leftrightarrow}[\cdot\Vert\cdot]$ between two bipartite completely positive maps $\mathcal{N},\mathcal{M}\in{\rm CP}(A'B',AB)$ is given by
    \begin{equation}
        \mathbb{D}^{\leftrightarrow}[\mathcal{N}\Vert\mathcal{M}]:=\sup_{\rho\in\mathrm{St}(R_AA'),\omega\in\mathrm{St}(R_BB')}\mathbb{D}(\mathcal{N}(\rho_{R_AA'}\otimes\omega_{R_BB'})\Vert\mathcal{M}(\rho_{R_AA'}\otimes\omega_{R_BB'})),
    \end{equation}
    where it suffices to optimize over pure product states $\rho_{R_AA'}\otimes\omega_{R_BB'}$.
\end{dfn}

Before we proceed further, we make an important observation regarding the monotonicity of the generalized channel function. The proof argument of the theorem below is directly lifted from \cite{Yua19,SPSD24}. See detailed proof in Appendix~\ref{prop:monotonicity_of_channeldivergence_proof} for the sake of completeness.
\begin{theorem}\label{chgen_div_mon}
 The generalized channel function $ \mathbb{D}\left[\mathcal{N}\Vert \mathcal{M}\right]$ between an arbitrary pair of completely positive maps, $\mathcal{N}\in{\rm Q}(A',A)$ and $\mathcal{M}\in{\rm CP}(A',A)$, is nonincreasing under the action of a quantum superchannel $\Theta \in SC ((A',A),(C',C))$, i.e.,
    \begin{equation}
      \mathbb{D}\left[\mathcal{N}\Vert \mathcal{R}\right]\geq    \mathbb{D}\left[\Theta\left(\mathcal{N}\right) \Vert \Theta \left(\mathcal{R}\right)\right],
    \end{equation}
where $ \mathbb{D}(\cdot \Vert \cdot)$ is a generalized state divergence. That is, a generalized channel function is a generalized channel divergence as it is monotone under the action of quantum superchannels.
\end{theorem}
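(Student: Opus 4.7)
The plan is to unfold the right-hand side using the Stinespring-like decomposition of a superchannel given in Eq.~\eqref{equ:superchannel_action}, and then reduce the claim to the data-processing inequality for the underlying generalized state divergence together with a shrinking of the feasible input set.

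Concretely, I would start from the definition \eqref{eq:gen-ch-f} applied to $\Theta(\mathcal{N})$ and $\Theta(\mathcal{R})$ with some reference system $R'\simeq C'$:
\begin{equation}
\mathbb{D}[\Theta(\mathcal{N})\Vert\Theta(\mathcal{R})]=\sup_{\rho\in\mathrm{St}(R'C')}\mathbb{D}\bigl(\id_{R'}\otimes\Theta(\mathcal{N})(\rho_{R'C'})\,\big\|\,\id_{R'}\otimes\Theta(\mathcal{R})(\rho_{R'C'})\bigr).
\end{equation}
Using $\Theta(\mathcal{L})=\mathcal{M}^{\mathrm{post}}_{AR\to C}\circ\mathcal{L}_{A'\to A}\circ\mathcal{M}^{\mathrm{pre}}_{C'\to A'R}$ for $\mathcal{L}\in\{\mathcal{N},\mathcal{R}\}$, both arguments of the divergence become the image under the common channel $\id_{R'}\otimes\mathcal{M}^{\mathrm{post}}_{AR\to C}$ of, respectively, $(\id_{R'}\otimes\mathcal{N}_{A'\to A})\bigl(\sigma_{R'RA'}\bigr)$ and $(\id_{R'}\otimes\mathcal{R}_{A'\to A})\bigl(\sigma_{R'RA'}\bigr)$, where $\sigma_{R'RA'}:=(\id_{R'}\otimes\mathcal{M}^{\mathrm{pre}}_{C'\to A'R})(\rho_{R'C'})\in\mathrm{St}(R'RA')$.

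Then I would invoke the data-processing property of the generalized state divergence under the channel $\id_{R'}\otimes\mathcal{M}^{\mathrm{post}}_{AR\to C}$ to remove the postprocessing, yielding
\begin{equation}
\mathbb{D}\bigl(\id_{R'}\otimes\Theta(\mathcal{N})(\rho_{R'C'})\,\big\|\,\id_{R'}\otimes\Theta(\mathcal{R})(\rho_{R'C'})\bigr)\leq \mathbb{D}\bigl(\id_{R'R}\otimes\mathcal{N}(\sigma_{R'RA'})\,\big\|\,\id_{R'R}\otimes\mathcal{R}(\sigma_{R'RA'})\bigr).
\end{equation}
At this point $\sigma_{R'RA'}$ is a bona fide state on a composite reference $(R'R)$ times the input $A'$, so the right-hand side is upper bounded by the supremum over all such input states, i.e.\ by $\mathbb{D}[\mathcal{N}\Vert\mathcal{R}]$; the possibility of restricting to pure inputs with $R\simeq A'$ does not weaken this step, as enlarging the reference is permitted by the definition.

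Taking the supremum over $\rho_{R'C'}$ on the left then gives $\mathbb{D}[\Theta(\mathcal{N})\Vert\Theta(\mathcal{R})]\leq \mathbb{D}[\mathcal{N}\Vert\mathcal{R}]$, which is the claim. The only nontrivial bookkeeping is that the superchannel decomposition creates an auxiliary system $R$ that must be absorbed into the reference of the channel divergence; I expect this to be the main (but minor) obstacle, and it is handled cleanly because the definition of $\mathbb{D}[\cdot\Vert\cdot]$ allows an arbitrary reference system. The bidirectional and resource-constrained variants follow by the same argument with the corresponding suprema, and the restriction to pure inputs can be justified a posteriori by purifying $\sigma_{R'RA'}$ and again using data processing.
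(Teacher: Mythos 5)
Your proposal is correct and follows essentially the same route as the paper's proof in Appendix~\ref{prop:monotonicity_of_channeldivergence_proof}: decompose the superchannel into pre- and post-processing, absorb the auxiliary system created by the pre-processing into an enlarged reference (so the pre-processed states form a subset of all admissible inputs), and strip the post-processing via the data-processing inequality of the underlying state divergence. The only cosmetic difference is the order in which you apply the two bounding steps, which does not change the argument.
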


\begin{proposition}
    The generalized channel divergence between arbitrary quantum channels $\mathcal{N}_{A'\to A},\mathcal{M}_{A'\to A}$ satisfy following properties:
\begin{enumerate}
    \item[C1] Invariance under the action of unitary superchannels, i.e., pre- and post-processing with unitary channels,
    \begin{equation}
        \mathbb{D}[\mathcal{N}\Vert\mathcal{M}]=\mathbb{D}[\mathcal{U}^{\rm post}\circ\mathcal{N}\circ\mathcal{U}^{\rm pre}\Vert\mathcal{U}^{\rm post}\circ\mathcal{M}\circ\mathcal{U}^{\rm pre}],\label{eq:C1}
    \end{equation}
    for any unitary channels $\mathcal{U}^{\rm pre}_{A"\to A'E'}, \mathcal{U}^{\rm post}_{AE'\to B}$.
    \item[C2] Invariance under appending of arbitrary quantum channels,
    \begin{equation}
          \mathbb{D}[\mathcal{N}\Vert\mathcal{M}]=  \mathbb{D}[\mathcal{N}\otimes\mathcal{T}\Vert\mathcal{M}\otimes\mathcal{T}],\label{eq:C2}
    \end{equation}
    where $\mathcal{T}_{B'\to B}$ is an arbitrary quantum channel. 
\end{enumerate}
\end{proposition}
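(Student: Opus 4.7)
The plan is to derive both C1 and C2 from the data-processing inequality for superchannels (Theorem~\ref{chgen_div_mon}): in each case I will exhibit a superchannel transforming the left-hand pair of channels into the right-hand pair, together with an inverting superchannel that does the reverse, so that the two monotonicity inequalities combine into equalities.

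For C1, the map $\Theta(\mathcal{X}):=\mathcal{U}^{\rm post}\circ\mathcal{X}\circ\mathcal{U}^{\rm pre}$ fits the superchannel template~\eqref{equ:superchannel_action} with ancilla $R=E'$, $\mathcal{M}^{\rm pre}=\mathcal{U}^{\rm pre}$, and $\mathcal{M}^{\rm post}=\mathcal{U}^{\rm post}$, so Theorem~\ref{chgen_div_mon} gives $\mathbb{D}[\mathcal{N}\Vert\mathcal{M}]\geq\mathbb{D}[\Theta(\mathcal{N})\Vert\Theta(\mathcal{M})]$. For the reverse direction, I would use an inverting superchannel that prepares a fiducial state on $E'$ and traces it out afterwards, sandwiched between the adjoint unitary channels:
\begin{equation*}
\Theta^{-1}(\mathcal{Y})(\rho_{A'}):=\tr_{E'}\!\bigl[(\mathcal{U}^{\rm post})^{\dagger}\bigl(\mathcal{Y}\bigl((\mathcal{U}^{\rm pre})^{\dagger}(\rho_{A'}\otimes|0\rangle\!\langle 0|_{E'})\bigr)\bigr)\bigr].
\end{equation*}
This takes the form~\eqref{equ:superchannel_action} (with a trivial ancilla on the $\mathcal{Y}$ wire), and using $\mathcal{U}^{\dagger}\mathcal{U}=\id$ one verifies $\Theta^{-1}(\Theta(\mathcal{N}))=\mathcal{N}$ and the same for $\mathcal{M}$, so a second application of Theorem~\ref{chgen_div_mon} yields the matching lower bound and hence~\eqref{eq:C1}.

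For C2, the forward direction uses the superchannel $\Xi(\mathcal{X}):=\mathcal{X}\otimes\mathcal{T}$, which matches~\eqref{equ:superchannel_action} by taking $R=B'$, $\mathcal{M}^{\rm pre}=\id_{A'B'}$, and $\mathcal{M}^{\rm post}=\id_A\otimes\mathcal{T}_{B'\to B}$; Theorem~\ref{chgen_div_mon} immediately gives $\mathbb{D}[\mathcal{N}\Vert\mathcal{M}]\geq\mathbb{D}[\mathcal{N}\otimes\mathcal{T}\Vert\mathcal{M}\otimes\mathcal{T}]$. For the opposite inequality, I fix any state $\sigma_{B'}$ and define the superchannel $\Xi'(\mathcal{Y})(\rho_{A'}):=\tr_B[\mathcal{Y}(\rho_{A'}\otimes\sigma_{B'})]$, whose preprocessing appends $\sigma_{B'}$ and whose postprocessing traces out $B$. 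Because $\mathcal{T}$ is trace-preserving, $\tr_B[\mathcal{T}(\sigma_{B'})]=1$, so $\Xi'(\mathcal{N}\otimes\mathcal{T})=\mathcal{N}$ and likewise for $\mathcal{M}$, and Theorem~\ref{chgen_div_mon} delivers the matching lower bound, establishing~\eqref{eq:C2}.

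No serious obstacle is anticipated: the argument is essentially bookkeeping once Theorem~\ref{chgen_div_mon} is available. The only point requiring mild care is verifying that each auxiliary map $\Theta,\Theta^{-1},\Xi,\Xi'$ genuinely fits the superchannel decomposition~\eqref{equ:superchannel_action}, but each decomposes transparently as a preprocessing channel (a unitary, a partial trace composed with fixed-state preparation, or an identity) followed by a postprocessing of similar type; the algebraic cancellations reduce to $\mathcal{U}^{\dagger}\mathcal{U}=\id$ in C1 and trace preservation $\tr\circ\,\mathcal{T}=\tr$ in C2.
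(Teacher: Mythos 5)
Your proposal is correct and follows essentially the same route as the paper: both directions of C1 and C2 are obtained by applying the superchannel monotonicity of Theorem~\ref{chgen_div_mon} twice, once with the forward superchannel and once with an undoing superchannel (inverse unitaries with the $E'$ ancilla reset for C1; append-$\sigma_{B'}$-then-trace-out-$B$ for C2, which is just a repackaging of the paper's ``compose with $\tr_B$ and restrict to product inputs'' step). Your write-up is in fact somewhat more explicit than the paper's, which disposes of C1 in a single sentence.
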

\begin{proof}
    The first property (C1) follows by twice applying the monotonicity of the generalized channel divergence under the action of unitary channels.
    To prove the second property (C2), we first note that appending a quantum channel is an action of a superchannel, $\mathcal{N}_{A'\to A}\otimes\mathcal{T}_{B'\to B}= \mathcal{N}_{A'\to A}\circ\mathcal{T}_{B'\to B}=\mathcal{T}_{B'\to B}\circ\mathcal{N}_{A'\to A}$, and as composition with partial trace is also a superchannel, then
    \begin{align}
    \mathbb{D}[\mathcal{N}\Vert\mathcal{M}] & \geq \mathbb{D}[\mathcal{N}\otimes\mathcal{T}\Vert\mathcal{M}\otimes\mathcal{T}]\nonumber\\
    & \geq \mathbb{D}[\tr_{B}\circ\mathcal{N}\otimes\mathcal{T}\Vert\tr_{B}\circ \mathcal{M}\otimes\mathcal{T}]\nonumber\\
    & = \sup_{\rho_{RA'B'}\in\mathrm{St}(RA'B')}\mathbb{D}(\tr_{B}\circ\mathcal{N}\otimes\mathcal{T}(\rho_{RA'B'})\Vert\tr_{B}\circ \mathcal{M}\otimes\mathcal{T}(\rho_{RA'B'}))\nonumber\\
    & \geq \sup_{\rho_{RA'}\in\mathrm{St}(RA')}\mathbb{D}(\tr_B\circ\mathcal{N}\otimes\mathcal{T}(\rho_{RA'}\otimes\sigma_{B'})\Vert\tr_B\circ\mathcal{M}\otimes\mathcal{T}(\rho_{RA'}\otimes\sigma_{B'}))\nonumber \\
    & = \mathbb{D}[\mathcal{N}\Vert\mathcal{M}].
    \end{align}
\end{proof}

\section{Conditional quantum entropies: Definitions and Characteristics} \label{sec:conditional_ent}
\subsection{Recipe for deriving information measures of quantum channels}
In this section, we lay down the recipe for generalizing entropic and information measures of quantum states to quantum channels, with the primary focus being conditional entropy. We first recall the definition of generalized entropy of a quantum channel $\mathcal{N}_{A'\to A}$~\cite{Gou19,GW21}. To do so, we now define the completely mixing map, also called a completely depolarizing map.
\begin{dfn}[\cite{SPSD24}] \label{def_completely_depolarising_map}
The completely mixing map $\mathcal{R}_{A' \rightarrow A}$ is a linear completely positive map which replaces input trace-class operators with a scalar times the identity operator; in particular, its action on the trace-class operators $X\in\mathrm{L}(A')$ is
\begin{equation}\label{equ:completely_depolarizing_map}
        \mathcal{R}_{A' \rightarrow A}(X):= \mathrm{tr}(X)\mathbbm{1}_{A}.
    \end{equation}
    The Choi operator of the completely mixing map $\mathcal{R}_{A' \rightarrow A}$ is $\Gamma_{RA}^\mathcal{R}=\mathbbm{1}_{R}\otimes\mathbbm{1}_{A}$, where $R\simeq A'$.
\end{dfn}

\begin{dfn}[Generalized entropy~\cite{Gou19,GW21}]
    The generalized entropy $\mathbf{S}[\mathcal{N}]$ of a point-to-point channel $\mathcal{N}_{A'\to A}$ is defined as the generalized channel divergence between the quantum channel $\mathcal{N}\in{\rm Q}(A',A)$ and the completely mixing map $\mathcal{R}\in {\rm CP}(A',A)$~\cite{GW21,SPSD24},
    \begin{equation}
        \mathbf{S}[\mathcal{N}]:=\mathbf{S}[A]_{\mathcal{N}}:=-\mathbf{D}[\mathcal{N}\Vert\mathcal{R}],
    \end{equation}
    where $\mathbf{D}[\cdot\Vert\cdot]$ is a generalized channel divergence such that $\mathbf{S}[\mathcal{N}]$ is a satisfactory entropic function of a channel. 
\end{dfn} 

\textit{Generalized entropy function of a quantum channel~\cite{SPSD24}}--- The generalized entropy function $\mathbf{S}[\mathcal{N}]$ of a point-to-point channel $\mathcal{N}_{A'\to A}$ is defined as the generalized channel divergence between the quantum channel $\mathcal{N}$ and the completely mixing map $\mathcal{R}$,
    \begin{equation}
        \mathbb{S}[\mathcal{N}]:=\mathbb{S}[A]_{\mathcal{N}}:=-\mathbb{D}[\mathcal{N}\Vert\mathcal{R}].
    \end{equation}

To formulate entropic functions that properly capture the notion of the conditional entropies of channel, we first revisit the definitions of von Neumann entropy $S(A)_{\rho}$, conditional entropy $S(A|B)_{\rho}$, and mutual information $I(A;B)_{\rho}$ for states in terms of the quantum relative entropy, with no explicit dependence on the dimensional factors.
\begin{align}
    S(A)_{\rho} & = -\tr(\rho_A\log\rho_A) = - D(\rho_A\Vert\mathbbm{1}_A),\label{eq:vN-RE}\\
    S(A|B)_{\rho} & = S(AB)_{\rho}-S(B)_{\rho} = - D(\rho_{AB}\Vert\mathbbm{1}_{A}\otimes\rho_{B})=-\inf_{\omega\in{\rm St}(B)}D(\rho_{AB}\Vert\mathbbm{1}_A\otimes\omega_B),\label{eq:cvN-RE}\\
    I(A;B)_{\rho} & = S(A)_{\rho}-S(A|B)_{\rho} = D(\rho_{AB}\Vert\rho_A\otimes\rho_B)=\inf_{\tau\in{\rm St}(A),\omega\in{\rm St}(B)}D(\rho_{AB}\Vert\tau_A\otimes\omega_B). \label{eq:mi-RE}
\end{align}
We note that the negative conditional entropy $-S(A|B)_{\rho}$ of a bipartite state $\rho_{AB}$ is equal to the coherent information $I(A\rangle B)_{\rho}$, a information metric that is of significance in the context of quantum communication~\cite{BNS98,ADHW09}.

The von Neumann entropy, conditional entropy, and mutual information of states $\rho_{AB}$ (Eqs.~\eqref{eq:vN-RE}-\eqref{eq:mi-RE}) have been extended (generalized) in the literature using appropriate generalized state divergences $\mathbf{D}$ in the following way (e.g.,~\cite{TBH14,MDSFT13,BD10}).
\begin{align}
    \mathbf{S}(A)_{\rho} & = - \mathbf{D}(\rho_A\Vert\mathbbm{1}_A),\label{eq:ge-vN}\\
    \mathbf{S}(A|B)_{\rho} & = -\inf_{\omega\in{\rm St}(B)}\mathbf{D}(\rho_{AB}\Vert\mathbbm{1}_A\otimes\omega_B),\label{eq:ge-cvN}\\
    \mathbf{I}(A;B)_{\rho} & =\inf_{\tau\in{\rm St}(A),\omega\in{\rm St}(B)}\mathbf{D}(\rho_{AB}\Vert\tau_A\otimes\omega_B).\label{eq:ge-mi}
\end{align}

We here intentionally recalled forms of the definitions that don't have explicit dependence on the dimensional factors, as the definitions then work also for infinite-dimensional systems. It also makes the form look simpler with less number of variables or arguments when thinking of generalizing the ideas to higher-order processes. Writing entropic and information quantities, whenever possible, in terms of the single function instead of a sum or differences of functions also helps in generalizing these quantities to a wider class or family. That is, the form $f(\cdot)=g(\cdot)+h(\cdot)$ could be less tractable (or convenient) than $f(\cdot)=j(\cdot)$ as the latter provides comprehensive insight for generalization compared to the former. Also, in infinite-dimensional systems, there are situations for certain entropic functions that the absolute values of some terms in the sum could blow up while the single-term form may provide an approach to get finite value whenever possible (cf.~\cite{Shi23}). 

When defining the entropy function for higher order quantum processes~\cite{CDGP08,Gou19} based on the observations made in \cite{Gou19,GW21}, it was observed in \cite{SPSD24} that, crudely speaking, the role of the identity operator $\mathbbm{1}$ at the level of the state is played by
\begin{enumerate}
    \item the replacer map $\mathcal{R}^{(1)}$ at the level of channels that outputs the identity operator $\mathbbm{1}=:\mathcal{R}^{(0)}$.
    \item the replacer supermap $\mathcal{R}^{(2)}$ at the level of superchannels that outputs the replacer map $\mathcal{R}^{(1)}$.
    \item the replacer $n$-order map $\mathcal{R}^{(n)}$ at the level of $n$-order processes that outputs the replacer $(n-1)$-order map $\mathcal{R}^{(n-1)}$, for $n\geq 1$.
\end{enumerate}
Also, the role of a generalized state divergence for entropy of states would be played by a generalized $n$-order process divergence defined at the level $n$-order process, that are monotone under the action of $(n+1)$-order processes.

If we focus simply on generalization the formula $S(A|B)_{\rho}$ in Eq.~\eqref{eq:cvN-RE} to the conditional entropy of a bipartite channel, we focus on $S(A|B)_{\rho}=-D(\rho_{AB}\Vert\mathbbm{1}_A\otimes\rho_B)$ and $S(A|B)_{\rho}=-\inf_{\omega\in{\rm St}(B)}D(\rho_{AB}\Vert\mathbbm{1}_A\otimes\rho_B)$. We have to inspect on how to ``channelize" $\mathbbm{1}_A\otimes\omega_B$ given that $\rho_{AB}$ is substituted with channel $\mathcal{N}_{A'B'\to AB}$ and relative entropy $D(\cdot\Vert\cdot)$ between states is replaced with relative entropy $D[\cdot\Vert\cdot]$ between channels. Two of the multiple possible ways could be
\begin{itemize}
    \item \textit{Candidate A}. Taking $\mathcal{R}_{A\to A}\circ\mathcal{N}_{A'B'\to AB}$ as a possible channelization of $\mathbbm{1}_{A}\otimes\rho_B$, since $\mathbbm{1}_A\otimes\rho_B=\mathcal{R}_{A\to A}(\rho_{AB})$. This would give a candidate for the conditional entropy of $\mathcal{N}_{A'B'\to AB}$ equivalent to $-D(\rho_{AB}\Vert\mathbbm{1}_A\otimes\rho_B)$,
    \begin{equation}\label{eq:CA}
       S^{\not\to}[A|B]_{\mathcal{N}}:= -D[\mathcal{N}_{A'B'\to AB}\Vert \mathcal{R}_{A\to A}\circ\mathcal{N}_{A'B'\to AB}].
    \end{equation}
    \item \textit{Candidate B}. Taking the form $\mathcal{R}_{A'\to A}\otimes\mathcal{M}_{B'\to B}$, where $\mathcal{M}\in{\rm Q}(B',B)$, to be a possible channelization of $\mathbbm{1}_A\otimes\omega_B$ as $\omega\in{\rm St}(B)$. $\mathbbm{1}_A$ is substituted by $\mathcal{R}_{A'\to A}$ and state $\omega_B$ is substituted by channel $\mathcal{M}_{B'\to B}$. This would give the following potential candidate for the conditional entropy of $\mathcal{N}_{A"B'\to AB}$,
    \begin{equation}\label{eq:CB}
        S[A|B]_{\mathcal{N}}= -\inf_{\mathcal{M}\in{\rm Q}(B',B)}D[\mathcal{N}_{A'B'\to AB}\Vert \mathcal{R}_{A'\to A}\otimes\mathcal{M}_{B'\to B}].
    \end{equation}
\end{itemize}
\begin{remark}
    While Candidate A (Eq.~\eqref{eq:CA}) is simpler than Candidate B (Eq.~\eqref{eq:CB}) for the conditional entropy of channel, the form of Candidate B seems generalizable to the (multipartite) mutual information of multipartite channels in a straightforward manner. We can even consider some other appropriate generalized channel divergence $\mathbf{D}$ instead of $D$ in Eqs.~\eqref{eq:CA} and \eqref{eq:CB} to lift the definitions there to the potential candidates for generalized conditional entropies.

    A novel definition of the conditional entropy of a bipartite channel is desired, meaning that it can be used to identify the presence of certain behaviors in quantum processes. For example, the negative conditional entropy of quantum states necessarily reveals the presence of entanglement in those states. Also, it would be a bonus if the dimensional bounds for the conditional entropy of a bipartite channel are distinct from that of the conditional entropy of states (or entropy of channel).
\end{remark}

We first begin to discuss with Candidate B as our preferred definition for the conditional entropy of the channel. We do so because the form of Candidate B is generalizable to the definition for mutual information of channel. We will, in the latter sections, see that Candidate B seems operationally preferable to Candidate A in general.
\subsection{Definitions and properties}
\begin{dfn}[Generalized conditional entropy of a bipartite quantum channel]
    The generalized conditional entropy $\mathbf{S}[A|B]_{\mathcal{N}}$ of an arbitrary bipartite quantum channel $\mathcal{N}_{A'B'\to AB}$ is defined as
    \begin{equation}\label{eq:def-gen-con-ent}
        \mathbf{S}[A|B]_{\mathcal{N}}\coloneqq - \inf_{\mathcal{M}\in\mathrm{Q}(B',B)} \mathbf{D}[\mathcal{N}_{A'B'\to AB}\Vert \mathcal{R}_{A'\to A}\otimes\mathcal{M}_{B'\to B}],
    \end{equation}
    where the infimum is over all channels $\mathcal{M}_{B'\to B}$ and $\mathbf{D}[\cdot\Vert\cdot]$ is a generalized channel divergence such that the axioms of conditional entropies of a bipartite quantum channel are satisfied.

    The resource-constraint generalized entropy $\mathbf{S}_{\mathrm{Access}}[A|B]_{\mathcal{N}}$ of an arbitrary bipartite channel $\mathcal{N}_{A'B'\to AB}$ is defined as
    \begin{equation}\label{eq:res-constraint-con-ent}
        \mathbf{S}_{\mathrm{Access}}[A|B]_{\mathcal{N}}\coloneqq - \inf_{\mathcal{M}\in\mathrm{Q}(B',B)} \mathbf{D}_{\mathrm{Access}}[\mathcal{N}_{A'B'\to AB}\Vert \mathcal{R}_{A'\to A}\otimes\mathcal{M}_{B'\to B}],
    \end{equation}
    where the infimum is over all channels $\mathcal{M}_{B'\to B}$. In particular, if we optimize (supremum)  $\mathbf{D}_{\rm Access}$ over the set $\mathrm{E}_e=\{\rho_{RA'B'} : \tr[\widehat{H}_{RAB}\rho_{RA'B'}]\leq e\in\mathbb{R}^+\}$ of input states $\rho_{RA'B'}$ with bounded energy ($\leq e\in(0,\infty)$), then $\mathbf{S}_{\mathrm{E}_e}[A|B]_{\mathcal{N}}$, with $\mathrm{Acess}=\mathrm{E}_e$, is called the energy-constraint generalized conditional entropy.
\end{dfn}
A direct consequence of the above definitions is
\begin{align}
    \mathbf{S}[A|B]_{\mathcal{N}} \leq \mathbf{S}_{\mathrm{Access}}[A|B]_{\mathcal{N}}, \qquad \mathbb{S}[A|B]_{\mathcal{N}} \leq \mathbb{S}_{\mathrm{Access}}[A|B]_{\mathcal{N}}.
\end{align}
If $\mathbb{F}(\cdot\Vert\cdot)$ and $\mathbb{G}(\cdot\Vert\cdot)$ are two generalized state divergences such that for arbitrary pair of (valid) states $\rho,\sigma$, $\mathbb{F}(\rho\Vert\sigma)\leq \mathbb{G}(\rho\Vert\sigma)$. Now applying Lemma~\ref{lem:gen-ch-order}, when $|A|<+\infty$, we note that the resource-constraint generalized conditional entropy functions $\mathbb{F}_{\rm Access}[A|B]_{\mathcal{N}}$ and $\mathbb{G}_{\rm Access}[A|B]_{\mathcal{N}}$ of an arbitrary bipartite channel $\mathcal{N}_{A'B'\to AB}$ are ordered as
\begin{equation}\label{eq:gen-con-order}
    \mathbb{F}_{\rm Access}[A|B]_{\mathcal{N}}\geq \mathbb{G}_{\rm Access}[A|B]_{\mathcal{N}}.
\end{equation}

Using the notion of the purified distance and generalized channel divergence function, we define smoothed the generalized conditional entropy function $\mathbb{S}[A|B]_{\mathcal{N}}$ of a quantum channel $\mathcal{N}_{A'B'\to AB}$:
\begin{dfn}
    The smooth conditional generalized entropy function ${\mathbb{S}^{\epsilon}}[A|B]_{\mathcal{N}}$ of a bipartite quantum channel $\mathcal{N}_{A'B' \rightarrow AB}$ is defined as, for $\epsilon \in [0,1)$, 
    \begin{equation}
        {\mathbb{S}^{\epsilon}}[A|B]_{\mathcal{N}}  = \sup_{\mathcal{M}\in{\rm Q}(A'B',AB):\, P[\mathcal{N},\mathcal{M}] \leq \epsilon} \mathbb{S}[A|B]_{\mathcal{M}},
    \end{equation}
    where $P[\mathcal{K},\mathcal{L}]$ is purified channel distance~\cite{GLN05} between $\mathcal{K},\mathcal{L}\in{\rm Q}(A',A)$,
    \begin{equation}\label{eq:sin-ch}
    P[\mathcal{K},\mathcal{L}] := \sup_{{\psi}\in{\rm St}{(RA')}} P \left(\mathcal{K} (\psi_{RA'}), \mathcal{L}(\psi_{RA'})\right),
\end{equation}
$P(\cdot, \cdot)$ is the purified distance and it suffices to take $\psi_{RA'}$ to be pure.
\end{dfn}

\begin{dfn}[Reduced channel]
    For a bipartite channel $\mathcal{N}_{A'B'\to AB}$, its reduced channel on $A$ is a quantum channel $\mathcal{N}_{A'B'\to A}:=\tr_{B}\circ\mathcal{N}_{A'B'\to AB}$ obtained by (partially) tracing out $B$ from the output of $\mathcal{N}_{A'B'\to AB}$. The channel $\mathcal{N}_{A'B'\to A}$ is called a reduced channel of $\mathcal{N}_{A'B'\to AB}$.
\end{dfn}

\begin{theorem}\label{thm:axioms}
    The generalized conditional entropy $\mathbf{S}[A|B]_{\mathcal{N}}$ of an arbitrary bipartite channel $\mathcal{N}_{A'B'\to AB}$ satisfies the axioms (P1-P4) for the conditional entropy functions, given that the associated generalized state divergence $\mathbf{D}$ is additive and nonnegative for states.
 \begin{enumerate}
     \item[A1] It is nondecreasing under the actions of an arbitrary local $\mathcal{R}$-preserving superchannel $\widehat{\Theta}\in\mathrm{SC}((A',A),(C',C))$ and arbitrary local superchannel $\Theta\in\mathrm{SC}((B',B),(D',D))$, i.e.,
    \begin{equation}
        \mathbf{S}[A|B]_{\mathcal{N}}\leq \mathbf{S}[C|D]_{\mathcal{N}'},
    \end{equation}
    where $\mathcal{N}'_{C'D'\to D'D}= \widehat{\Theta}\otimes\Theta [\mathcal{N}_{A'B'\to AB}]$ is the resultant bipartite channel.
    \item[A2]\label{thm:con-dont-ent}[Conditioning doesn't increase entropy]
The generalized conditional entropy $\mathbf{S}[A|B]_{\mathcal{N}}$ is always less than or equal to the generalized entropy of its reduced channel $\mathcal{N}_{A'B'\to AB}$, i.e.,
\begin{equation}
    \mathbf{S}[A|B]_{\mathcal{N}}\leq \mathbf{S}[A]_{\mathcal{N}}:=\mathbf{S}[\tr_{B}\circ\mathcal{N}_{A'B'\to AB}].
\end{equation}
 \item[A3][Normalization] If $\mathcal{N}_{A'B'\to AB}$ is a replacer channel, i.e.,
$\mathcal{N}_{A'B'\to AB}(\rho_{A'B'})=\sigma_{AB}$ for all $\rho\in \mathrm{St}(A'B')$ and some state $\sigma_{AB}$,
\begin{equation}
    \mathbf{S}[A|B]_{\mathcal{N}}=\mathbf{S}(A|B)_{\sigma}.
\end{equation}
\item[A4]\label{thm:pro-ch-gen-div}
    For a tensor-product channel $\mathcal{N}_{A'B'\to AB}=\mathcal{N}^1_{A'\to A}\otimes \mathcal{N}^2_{B'\to B}$ for some $\mathcal{N}^1\in{\rm Q}(A',A)$ and $\mathcal{N}^2\in{\rm Q}(B',B)$,
    \begin{equation}
        \mathbf{S}[A|B]_{\mathcal{N}}=\mathbf{S}[A]_{\mathcal{N}^1}.
    \end{equation}
 \end{enumerate} 
\end{theorem}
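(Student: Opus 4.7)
The plan is to verify axioms A1--A4 one at a time, each ultimately reducing to the monotonicity of the generalized channel divergence under superchannels (Theorem~\ref{chgen_div_mon}), the invariance properties C1 and C2 just established, and the standing assumption that $\mathbf{D}(\cdot\Vert\cdot)$ is additive and nonnegative on states. As a preliminary observation I would record once and for all: additivity together with nonnegativity and monotonicity forces $\mathbf{D}(\rho\Vert\rho)=0$ for every state $\rho$, since monotonicity under any state-to-state replacer channel makes $\mathbf{D}(\rho\Vert\rho)$ a constant $c$ in $\rho$, and additivity applied to $\rho\otimes\rho$ then forces $2c=c$, hence $c=0$.

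For A1, fix any $\mathcal{M}\in\mathrm{Q}(B',B)$ and apply Theorem~\ref{chgen_div_mon} to the local superchannel $\widehat{\Theta}\otimes\Theta$:
\begin{equation*}
\mathbf{D}[\mathcal{N}\Vert\mathcal{R}_{A'\to A}\otimes\mathcal{M}]\;\geq\;\mathbf{D}[\widehat{\Theta}\otimes\Theta(\mathcal{N})\Vert\widehat{\Theta}(\mathcal{R}_{A'\to A})\otimes\Theta(\mathcal{M})]\;=\;\mathbf{D}[\mathcal{N}'\Vert\mathcal{R}_{C'\to C}\otimes\Theta(\mathcal{M})],
\end{equation*}
where the last equality uses that $\widehat{\Theta}$ is $\mathcal{R}$-preserving. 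Since $\Theta(\mathcal{M})\in\mathrm{Q}(D',D)$, taking the infimum over $\mathcal{M}$ on the left (and noting that the set of images $\Theta(\mathcal{M})$ is contained in $\mathrm{Q}(D',D)$) yields $-\mathbf{S}[A|B]_{\mathcal{N}}\geq -\mathbf{S}[C|D]_{\mathcal{N}'}$. For A2 I observe that composing with $\tr_B$ is itself a superchannel, and that trace-preservation of $\mathcal{M}$ gives $\tr_B\circ(\mathcal{R}_{A'\to A}\otimes\mathcal{M}_{B'\to B})=\mathcal{R}_{A'B'\to A}$, so Theorem~\ref{chgen_div_mon} delivers
\begin{equation*}
\mathbf{D}[\mathcal{N}\Vert\mathcal{R}_{A'\to A}\otimes\mathcal{M}]\;\geq\;\mathbf{D}[\tr_B\circ\mathcal{N}\Vert\mathcal{R}_{A'B'\to A}]\;=\;-\mathbf{S}[A]_{\mathcal{N}}
\end{equation*}
uniformly in $\mathcal{M}$, from which A2 follows upon infimizing. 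Axiom A4 is then squeezed: the choice $\mathcal{M}=\mathcal{N}^2$ combined with C2 (with $\mathcal{T}=\mathcal{N}^2$) yields $\mathbf{D}[\mathcal{N}^1\otimes\mathcal{N}^2\Vert\mathcal{R}_{A'\to A}\otimes\mathcal{N}^2]=\mathbf{D}[\mathcal{N}^1\Vert\mathcal{R}_{A'\to A}]$, so $\mathbf{S}[A|B]_{\mathcal{N}^1\otimes\mathcal{N}^2}\geq\mathbf{S}[A]_{\mathcal{N}^1}$; conversely, A2 applied to $\mathcal{N}^1\otimes\mathcal{N}^2$, whose reduced channel equals $\mathcal{N}^1\otimes\tr_{B'}$, combined with C2 (with $\mathcal{T}=\tr_{B'}$) gives the matching inequality $\mathbf{S}[A|B]_{\mathcal{N}^1\otimes\mathcal{N}^2}\leq\mathbf{S}[A]_{\mathcal{N}^1}$.

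Axiom A3 is the main place that needs care. For the replacer $\mathcal{N}(\cdot)=\sigma_{AB}$ a direct computation gives $\mathcal{N}(\rho_{RA'B'})=\rho_R\otimes\sigma_{AB}$ and $(\mathcal{R}_{A'\to A}\otimes\mathcal{M})(\rho_{RA'B'})=\mathbbm{1}_A\otimes(\mathrm{id}_R\otimes\mathcal{M})(\rho_{RB'})$. For the upper bound on $-\mathbf{S}[A|B]_\mathcal{N}$ I restrict to state-replacer channels $\mathcal{M}_\omega(X)=\tr(X)\omega_B$, for which $(\mathcal{R}_{A'\to A}\otimes\mathcal{M}_\omega)(\rho_{RA'B'})=\rho_R\otimes\mathbbm{1}_A\otimes\omega_B$; additivity together with $\mathbf{D}(\rho_R\Vert\rho_R)=0$ collapses $\sup_\rho$ to the $\rho$-independent value $\mathbf{D}(\sigma_{AB}\Vert\mathbbm{1}_A\otimes\omega_B)$, whence $-\mathbf{S}[A|B]_{\mathcal{N}}\leq\inf_\omega\mathbf{D}(\sigma_{AB}\Vert\mathbbm{1}_A\otimes\omega_B)=-\mathbf{S}(A|B)_\sigma$. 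For the matching lower bound I plug a product input $\omega_R\otimes\pi_{A'}\otimes\tau_{B'}$ into the sup defining $\mathbf{D}[\cdot\Vert\cdot]$: additivity and $\mathbf{D}(\omega_R\Vert\omega_R)=0$ give $\mathbf{D}[\mathcal{N}\Vert\mathcal{R}\otimes\mathcal{M}]\geq\sup_\tau\mathbf{D}(\sigma_{AB}\Vert\mathbbm{1}_A\otimes\mathcal{M}(\tau_{B'}))$ for every $\mathcal{M}$, and since $\{\mathcal{M}(\tau^*):\mathcal{M}\in\mathrm{Q}(B',B)\}=\mathrm{St}(B)$ for any state $\tau^*$ (using replacers), a standard minimax exchange bounds the $\mathcal{M}$-infimum below by $\inf_\omega\mathbf{D}(\sigma_{AB}\Vert\mathbbm{1}_A\otimes\omega_B)=-\mathbf{S}(A|B)_\sigma$. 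The principal obstacle is precisely this A3 step, because the sup-over-inputs defining $\mathbf{D}[\cdot\Vert\cdot]$ and the channel-level infimum over $\mathcal{M}$ must be made compatible; additivity together with the preliminary normalization $\mathbf{D}(\rho\Vert\rho)=0$ is what allows the two bounds to match.
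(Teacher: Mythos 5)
Your proposal is correct, and for A1, A2, A3, and the first half of A4 it follows essentially the same route as the paper: monotonicity of the channel divergence under the relevant superchannel, the identity $\tr_B\circ(\mathcal{R}_{A'\to A}\otimes\mathcal{M})=\mathcal{R}_{A'B'\to A}$, restriction of the infimum to state-replacer channels for the upper bound in A3, and evaluation of the optimal $\mathcal{M}$ on a product input for the lower bound there. Your preliminary lemma $\mathbf{D}(\rho\Vert\rho)=0$ is a nice touch; the paper instead invokes the stated invariance $\mathbb{D}(\rho\otimes\omega\Vert\sigma\otimes\omega)=\mathbb{D}(\rho\Vert\sigma)$ at that point, but your derivation from additivity, nonnegativity, and monotonicity is sound and self-contained.

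The one genuinely different step is the direction $\mathbf{S}[A|B]_{\mathcal{N}^1\otimes\mathcal{N}^2}\leq\mathbf{S}[A]_{\mathcal{N}^1}$ in A4. The paper proves it by restricting the supremum to product inputs $\psi_{R_AA'}\otimes\phi_{R_BB'}$, splitting the divergence by additivity, and discarding the second term by nonnegativity --- this is in fact the only place the paper uses the additivity/nonnegativity hypothesis. You instead obtain it from A2 together with C2 applied with $\mathcal{T}=\tr_{B'}$, using $\tr_B\circ(\mathcal{N}^1\otimes\mathcal{N}^2)=\mathcal{N}^1\otimes\tr_{B'}$ and $\mathcal{R}_{A'B'\to A}=\mathcal{R}_{A'\to A}\otimes\tr_{B'}$. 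This is valid and has the advantage of needing only monotonicity for that inequality (you still use additivity and nonnegativity in A3, so the hypotheses are not removable overall). Both your argument and the paper's share the same minor technical wrinkle of applying C2 (and the additivity property) with the non-trace-preserving map $\mathcal{R}$, respectively the non-state operator $\mathbbm{1}_A\otimes\omega_B$, in the second slot; the underlying proofs go through for CP maps, so this is not a gap.
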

    \begin{proof}
\textit{Proof of A1}.---  Consider an arbitrary channel $\mathcal{M}_{B'\to B}$, the following inequalities hold.
    \begin{align}
 & \mathbf{D}[\mathcal{N}_{A'B'\to AB}\Vert\mathcal{R}_{A'\to A}\otimes\mathcal{M}_{B'\to B}]\nonumber\\
 &   \geq  \mathbf{D}[\widehat{\Theta}\otimes\Theta[\mathcal{N}_{A'B'\to AB}]\Vert\widehat{\Theta}[\mathcal{R}_{A'\to A}]\otimes\Theta[\mathcal{M}_{B'\to B}]]\nonumber\\
      & =  \mathbf{D}[\widehat{\Theta}\otimes\Theta[\mathcal{N}_{A'B'\to AB}]\Vert\mathcal{R}_{C'\to C}\otimes\Theta[\mathcal{M}_{B'\to B}]]\nonumber\\
      &  \geq  \inf_{\mathcal{M}'\in\mathrm{Q}(D', D)}\mathbf{D}[\widehat{\Theta}\otimes\Theta[\mathcal{N}_{A'B'\to AB}]\Vert\mathcal{R}_{C'\to C}\otimes\mathcal{M}'_{D'\to D}]\nonumber\\
      & = - \mathbf{S}[C|D]_{\widehat{\Theta}\otimes\Theta[\mathcal{N}]},
    \end{align}
    where the first inequality follows from the monotonicity of the generalized channel divergence, and the final equality follows from the definition. As the above relations holds for arbitrary quantum channel $\mathcal{M}_{B'\to B}$, they will also hold true for a channel $\mathcal{M}_{B'\to B}$ that minimizes $\mathbf{D}[\mathcal{N}_{A'B'\to AB}\Vert\mathcal{R}_{A'\to A}\otimes\mathcal{M}_{B'\to B}]$. That is, $-\mathbf{S}[A|B]_{\mathcal{N}}\geq -\mathbf{S}[C|D]_{\widehat{\Theta}\otimes\Theta[\mathcal{N}]}$. 
    
\textit{Proof of A2}.---  We have
    \begin{align}
        -\mathbf{S}[A]_{\mathcal{N}}& =
        -\mathbf{S}[\tr_{B}\circ\mathcal{N}_{A'B'\to AB}]\nonumber\\
        &= \mathbf{D}[\tr_{B}\circ\mathcal{N}_{A'B'\to AB}\Vert\mathcal{R}_{A'B'\to A}]\nonumber\\
       &= \mathbf{D}[\tr_{B}\circ\mathcal{N}_{A'B'\to AB}\Vert\tr_{B}\circ\mathcal{R}_{A'\to A}\otimes\mathcal{M}_{B'\to B}]\nonumber\\
       & \leq \mathbf{D}[\mathcal{N}_{A'B'\to AB}\Vert\mathcal{R}_{A'\to A}\otimes\mathcal{M}_{B'\to B}]\nonumber\\
        & \leq \inf_{\mathcal{M}\in{\rm Q}(B',B)}\mathbf{D}[\mathcal{N}_{A'B'\to AB}\Vert\mathcal{R}_{A'\to A}\otimes\mathcal{M}_{B'\to B}]\nonumber\\
      & \leq -\mathbf{S}[A|B]_{\mathcal{N}}.
    \end{align}

 \textit{Proof of A3}.--- 
    Consider a quantum channel $\mathcal{M}_{B'\to B}$ such that the first following equality holds.
    \begin{align}\label{eq:thm-proof-def}
        - \mathbf{S}[A|B]_{\mathcal{N}} & =\mathbf{D}[\mathcal{N}_{A'B'\to AB}\Vert \mathcal{R}_{A'\to A}\otimes\mathcal{M}_{B'\to B}]\nonumber\\
        &\geq \mathbf{D}(\mathcal{N}_{A'B'\to AB}(\rho_{A'B'})\Vert \mathcal{R}_{A'\to A}\otimes\mathcal{M}_{B'\to B}(\rho_{A'B'}))\nonumber\\
        &= \mathbf{D}(\sigma_{AB}\Vert \mathbbm{1}_A\otimes\mathcal{M}_{B'\to B}(\rho_{B'}))\nonumber\\
        &\geq \inf_{\omega_{B}\in\mathrm{St}(B)}\mathbf{D}(\sigma_{AB}\Vert\mathbbm{1}_A\otimes\omega_{B})=-\mathbf{S}(A|B)_{\sigma}.
    \end{align}
    Now consider an arbitrary replacer channel $\mathcal{M}'_{B'\to B}$ whose action is $\mathcal{M}'_{B'\to B}(\rho_{B'})=\omega_B$ for all $\rho_{B'}\in\mathrm{St}(B')$ and some state $\omega_B$. Let $\rho_{RA'B'}$ be a state that gives supremum of $\mathbf{D}[\mathcal{N}_{A'B'\to AB}\Vert \mathcal{R}_{A'\to A}\otimes\mathcal{M}'_{B'\to B}]$. Then,
    \begin{align}
        - \mathbf{S}[A|B]_{\mathcal{N}} &\leq \mathbf{D}[\mathcal{N}_{A'B'\to AB}\Vert \mathcal{R}_{A'\to A}\otimes\mathcal{M}'_{B'\to B}]\nonumber\\
       & = \mathbf{D}(\rho_R\otimes\sigma_{AB}\Vert\rho_R\otimes\mathbbm{1}_A\otimes\omega_B)\nonumber\\
       & =\mathbf{D}(\sigma_{AB}\Vert\mathbbm{1}_A\otimes\omega_B).
    \end{align}
    As the above inequality holds for arbitrary replacer channel $\mathcal{M}'_{B'\to B}$, it also holds for replacer channel for which $\omega_B$ gives infimum to $\mathbf{D}(\sigma_{AB}\Vert\mathbbm{1}_A\otimes\omega_B)$. That is, $ - \mathbf{S}[A|B]_{\mathcal{N}}\leq-\mathbf{S}(A|B)_{\sigma}\leq - \mathbf{S}[A|B]_{\mathcal{N}} $, which concludes the proof. 

\textit{Proof of A4}.--- We first prove that $-\mathbf{S}[A|B]_{\mathcal{N}}\geq -\mathbf{S}[A]_{\mathcal{N}^1}$. Let $\mathcal{M}_{B'\to B}$ be a channel such that the following identity holds
\begin{align}
 -\mathbf{S}[A|B]_{\mathcal{N}}= &  \mathbf{D}[\mathcal{N}^1_{A'\to A}\otimes\mathcal{N}^2_{B'\to B}\Vert\mathcal{R}_{A'\to A}\otimes\mathcal{M}_{B'\to B}]\nonumber\\
 \geq & \sup_{\psi_{R_AA'}\otimes\phi_{R_BB'}\in\mathrm{St}(R_AA'R_BB')}\mathbf{D}(\mathcal{N}^1(\psi_{R_AA'})\otimes\mathcal{N}^2(\phi_{R_BB'})\Vert\mathcal{R}(\psi_{R_{A}A'})\otimes\mathcal{M}(\phi_{R_BB'}))\nonumber\\
 = & \sup_{\psi_{R_AA'}} \mathbf{D}(\mathcal{N}^1(\psi_{R_AA'})\Vert\mathcal{R}(\psi_{R_AA'}))+ \sup_{R_BB'}\mathbf{D}(\mathcal{N}^2(\phi_{R_BB'})\Vert\mathcal{M}(\phi_{R_BB'}))\nonumber\\
 \geq & -\mathbf{S}[A]_{\mathcal{N}^1},
\end{align}
where to get the second equality and the second inequality, we have employed additive and nonnegative properties of $\mathbf{D}(\cdot\Vert\cdot)$ for states. Now, observing that appending a channel keeps the generalized channel divergence between two quantum channels invariant, we have $\mathbf{D}[\mathcal{N}^1_{A'\to A}\otimes\mathcal{N}^2_{B'\to B}\Vert\mathcal{R}_{A'\to A}\otimes\mathcal{N}^2_{B'\to B}]=\mathbf{D}[\mathcal{N}^1_{A'\to A}\Vert\mathcal{R}_{A'\to A}]=-\mathbf{S}[\mathcal{N}^1]$. This concludes the proof.
\end{proof}

We note that the additive and nonnegative property of the generalized state divergence was used only in the proof of property A4. The following corollary is a direct consequence of the property A1 in the theorem above.
\begin{corollary}
    The generalized conditional entropy $\mathbf{S}[A|B]_{\mathcal{N}}$ of an arbitrary bipartite channel $\mathcal{N}_{A'B'\to AB}$ remains invariant when locally acted with unitary superchannels ${\Theta}^1\in\mathrm{SC}((A',A),(C',C)), {\Theta}^2\in\mathrm{SC}((B',B),(D',D))$, i.e.,
    \begin{equation}
        \mathbf{S}[A|B]_{\mathcal{N}}=\mathbf{S}[C|D]_{\mathcal{N}'},
    \end{equation}
    where $\mathcal{N}'_{C'D'\to CD}=\Theta^1\otimes\Theta^2[\mathcal{N}_{A'B'\to AB}]$.
\end{corollary}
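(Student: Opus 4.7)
The plan is to obtain the equality from two applications of property A1, exploiting that unitary superchannels are invertible and that on the $A$-side they are in fact $\mathcal{R}$-preserving. A1 only delivers the one-sided bound $\mathbf{S}[A|B]_{\mathcal{N}}\le \mathbf{S}[C|D]_{\mathcal{N}'}$, so the whole point is to justify using A1 also with the inverse unitary superchannels to get the reverse bound, which then pins down equality.

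The first step I would carry out is to verify that any unitary superchannel on the $A$-side preserves the completely mixing map. Writing $\Theta^{1}(\mathcal{M}_{A'\to A})=\mathcal{U}^{\mathrm{post}}_{A\to C}\circ\mathcal{M}_{A'\to A}\circ\mathcal{U}^{\mathrm{pre}}_{C'\to A'}$ for unitary channels $\mathcal{U}^{\mathrm{pre}},\mathcal{U}^{\mathrm{post}}$ and evaluating on $\mathcal{R}_{A'\to A}$ gives
\begin{equation}
\Theta^{1}(\mathcal{R}_{A'\to A})(X_{C'})=\mathcal{U}^{\mathrm{post}}\!\left(\tr\big(\mathcal{U}^{\mathrm{pre}}(X_{C'})\big)\,\mathbbm{1}_A\right)=\tr(X_{C'})\,\mathbbm{1}_C=\mathcal{R}_{C'\to C}(X_{C'}),
\end{equation}
using the trace-preservation of $\mathcal{U}^{\mathrm{pre}}$ together with the fact that a unitary conjugation fixes the identity operator. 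Hence $\Theta^{1}$ qualifies as an $\mathcal{R}$-preserving local superchannel in the sense of A1.

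With this in hand, A1 applied to $(\Theta^{1},\Theta^{2})$ immediately yields $\mathbf{S}[A|B]_{\mathcal{N}}\le \mathbf{S}[C|D]_{\mathcal{N}'}$. For the reverse inequality, I would observe that the inverses $\Theta^{1,-1},\Theta^{2,-1}$ are again unitary superchannels (obtained by replacing $\mathcal{U}^{\mathrm{pre}},\mathcal{U}^{\mathrm{post}}$ by their adjoints on each side), so $\Theta^{1,-1}$ is $\mathcal{R}$-preserving by the same one-line calculation. Applying A1 to $\mathcal{N}'$ with the inverse pair produces $\mathbf{S}[C|D]_{\mathcal{N}'}\le \mathbf{S}[A|B]_{(\Theta^{1,-1}\otimes\Theta^{2,-1})(\mathcal{N}')}=\mathbf{S}[A|B]_{\mathcal{N}}$, and combining the two bounds closes the argument. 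The only substantive item is the $\mathcal{R}$-preservation check above; everything else is a standard reversibility argument and should present no real obstacle.
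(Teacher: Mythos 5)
Your proposal is correct and matches the paper's intended argument: the corollary is stated there as a ``direct consequence of property A1,'' and the implicit proof is exactly your two-sided application of A1 with the unitary superchannels and their inverses. Your explicit check that a unitary superchannel maps $\mathcal{R}_{A'\to A}$ to $\mathcal{R}_{C'\to C}$ (so that the A1 hypothesis of being $\mathcal{R}$-preserving is met) is a detail the paper leaves unstated, and it is the right thing to verify.
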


\begin{theorem}[Strong subadditivity]
    For an arbitrary tripartite channel $\mathcal{N}_{A'B'C'\to ABC}$, the generalized (conditional) entropy exhibits strong subadditivity property in the following sense, 
\begin{equation}
    \mathbf{S}[A|BC]_{\mathcal{N}}\leq \mathbf{S}[A|B]_{\mathcal{N}},
\end{equation}
where $\mathcal{N}_{A'B'C'\to AB}:=\tr_{C}\circ\mathcal{N}_{A'B'C'\to AB}$ is a reduced channel of $\mathcal{N}_{A'B'C'\to ABC}$.
\end{theorem}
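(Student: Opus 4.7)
The plan is to reduce the inequality to a straightforward data-processing argument on the generalized channel divergence, using the fact that composition with a partial trace is a quantum superchannel. By the definition in Eq.~\eqref{eq:def-gen-con-ent}, unfolding $\mathbf{S}[A|BC]_{\mathcal{N}}$ treats $\mathcal{N}_{A'B'C'\to ABC}$ as a bipartite channel with nonconditioning pair $(A',A)$ and conditioning pair $(B'C',BC)$, while $\mathbf{S}[A|B]_{\mathcal{N}}$ treats $\tr_{C}\circ\mathcal{N}_{A'B'C'\to ABC}$ as bipartite with conditioning pair $(B'C',B)$. Thus the desired inequality is equivalent to
\begin{equation*}
 \inf_{\mathcal{M}\in\mathrm{Q}(B'C',B)} \mathbf{D}[\tr_{C}\circ\mathcal{N}\,\Vert\,\mathcal{R}_{A'\to A}\otimes \mathcal{M}_{B'C'\to B}]
 \;\leq\; \inf_{\mathcal{M}'\in\mathrm{Q}(B'C',BC)} \mathbf{D}[\mathcal{N}\,\Vert\,\mathcal{R}_{A'\to A}\otimes \mathcal{M}'_{B'C'\to BC}] .
\end{equation*}

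First, for any fixed conditioning channel $\mathcal{M}'\in\mathrm{Q}(B'C',BC)$, I would let $\mathcal{M}\coloneqq \tr_{C}\circ \mathcal{M}'\in\mathrm{Q}(B'C',B)$, which is a valid quantum channel since it is a composition of two channels. Second, using that post-composition with the partial trace $\tr_{C}$ on the output is a quantum superchannel (take $\mathcal{M}^{\mathrm{post}}=\tr_{C}$ and trivial preprocessing in Eq.~\eqref{equ:superchannel_action}), Theorem~\ref{chgen_div_mon} yields
\begin{equation*}
 \mathbf{D}[\mathcal{N}\,\Vert\,\mathcal{R}_{A'\to A}\otimes \mathcal{M}'_{B'C'\to BC}]
 \;\geq\; \mathbf{D}[\tr_{C}\circ\mathcal{N}\,\Vert\,\tr_{C}\circ(\mathcal{R}_{A'\to A}\otimes \mathcal{M}'_{B'C'\to BC})] .
\end{equation*}
Third, since $\tr_{C}$ acts only on $C$, it passes through the tensor product: $\tr_{C}\circ(\mathcal{R}_{A'\to A}\otimes \mathcal{M}'_{B'C'\to BC})=\mathcal{R}_{A'\to A}\otimes(\tr_{C}\circ\mathcal{M}')=\mathcal{R}_{A'\to A}\otimes\mathcal{M}$, so the right-hand side is precisely $\mathbf{D}[\tr_{C}\circ\mathcal{N}\,\Vert\,\mathcal{R}_{A'\to A}\otimes \mathcal{M}_{B'C'\to B}]$.

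Finally, because $\mathcal{M}=\tr_{C}\circ\mathcal{M}'\in\mathrm{Q}(B'C',B)$, the infimum over the larger set $\mathrm{Q}(B'C',B)$ is bounded above by the expression with this particular $\mathcal{M}$, and then taking infimum over $\mathcal{M}'\in\mathrm{Q}(B'C',BC)$ on both sides of the resulting inequality yields the displayed inequality above, which is equivalent to $\mathbf{S}[A|BC]_{\mathcal{N}}\leq \mathbf{S}[A|B]_{\mathcal{N}}$. I do not anticipate any real obstacle: the only subtlety is ensuring that $\tr_{C}\circ(\,\cdot\,)$ qualifies as a superchannel in the sense of Eq.~\eqref{equ:superchannel_action} so that Theorem~\ref{chgen_div_mon} applies, and this is immediate by choosing $\mathcal{M}^{\mathrm{pre}}=\id$ and $\mathcal{M}^{\mathrm{post}}=\tr_{C}$.
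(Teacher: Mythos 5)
Your proposal is correct and follows essentially the same route as the paper's proof: both apply the monotonicity of the generalized channel divergence (Theorem~\ref{chgen_div_mon}) under the superchannel given by post-composition with $\tr_{C}$, observe that $\tr_{C}\circ(\mathcal{R}_{A'\to A}\otimes\mathcal{M}')=\mathcal{R}_{A'\to A}\otimes(\tr_{C}\circ\mathcal{M}')$, and then pass to the infimum over conditioning channels. The only cosmetic difference is that the paper fixes a (near-)optimal $\mathcal{M}'$ achieving the infimum for $\mathbf{S}[A|BC]_{\mathcal{N}}$ while you argue for arbitrary $\mathcal{M}'$ and infimize at the end, which is if anything slightly cleaner.
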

\begin{proof}
    Consider $\mathcal{M}_{B'C'\to BC}$ be a channel such that the first following identity holds,
    \begin{align}
        - \mathbf{S}[A|BC]_{\mathcal{N}}& = \mathbf{D}[\mathcal{N}_{A'B'C'\to ABC}\Vert\mathcal{R}_{A'\to A}\otimes\mathcal{M}_{B'C'\to BC}]\nonumber\\
        & \geq \mathbf{D}[\tr_C\circ\mathcal{N}_{A'B'C'\to ABC}\Vert\tr_C\circ\mathcal{R}_{A'\to A}\otimes\mathcal{M}_{B'C'\to BC}]\nonumber\\
        &\geq \inf_{\mathcal{P}\in\mathrm{Q}(B'C',B)}\mathbf{D}[\tr_C\circ\mathcal{N}_{A'B'C'\to ABC}\Vert\mathcal{R}_{A'\to A}\otimes\mathcal{P}_{B'C'\to B}]= - \mathbf{S}[A|B]_{\mathcal{N}},
    \end{align}
    where the first inequality follows from the monotonicity of the generalized channel divergence as tracing out is a quantum channel and composition with a channel is an action of a superchannel.
\end{proof}

\section{Families of conditional quantum entropies} \label{sec:family_of_conditional_entropies}
In this section, we inspect the properties of conditional entropies derived from some examples of generalized channel divergences. These channel divergences are obtained from some of the well-known and widely used examples of generalized state divergences, namely (Umegaki) relative entropy, geometric R\'enyi relative entropy, and max- and min-relative entropies.

\subsection{Conditional von Neumann entropy}
The von Neumann conditional entropy of an arbitrary bipartite channel $\mathcal{N}_{A'B'}$ is defined as
\begin{equation}
    S[A|B]_{\mathcal{N}}:= -\inf_{\mathcal{M}\in{\rm Q}(B',B)}D[\mathcal{N}_{A'B'\to AB}\Vert\mathcal{R}_{A'\to A}\otimes\mathcal{M}_{B'\to B}],
\end{equation}
where $D[\cdot\Vert\cdot]$ is the (Umegaki) relative entropy between completely positive maps, derived from the (Umegaki) relative entropy between states.
\begin{proposition}\label{prop:vN-upperbound}
    For an arbitrary bipartite channel $\mathcal{N}_{A'B'\to AB}$, its von Neumann conditional entropy $S[A|B]_{\mathcal{N}}$ is upper bounded as
    \begin{equation}
        S[A|B]_{\mathcal{N}}\leq \inf_{\op{\psi}\in{\rm St}(RA'B')}S(A|RB)_{\mathcal{N}(\op{\psi})}.
    \end{equation}
\end{proposition}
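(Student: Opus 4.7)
The plan is to reduce the proposition to Klein's inequality. Unpacking the definition of $S[A|B]_{\mathcal{N}}$ together with the generalized channel divergence gives
\begin{equation*}
-S[A|B]_{\mathcal{N}} \;=\; \inf_{\mathcal{M}\in\mathrm{Q}(B',B)}\,\sup_{\op{\psi}_{RA'B'}} D\bigl(\mathcal{N}(\op{\psi}_{RA'B'})\,\big\Vert\,(\mathcal{R}_{A'\to A}\otimes\mathcal{M}_{B'\to B})(\op{\psi}_{RA'B'})\bigr),
\end{equation*}
where the inner supremum can be restricted to pure inputs with $R\simeq A'B'$. It therefore suffices to establish the pointwise bound
\begin{equation*}
D\bigl[\mathcal{N}\,\big\Vert\,\mathcal{R}_{A'\to A}\otimes\mathcal{M}_{B'\to B}\bigr]\;\ge\;-S(A|RB)_{\mathcal{N}(\op{\psi})}\qquad \forall\,\mathcal{M}\in\mathrm{Q}(B',B),\ \op{\psi}_{RA'B'},
\end{equation*}
for then taking the supremum over $\op{\psi}$ and the infimum over $\mathcal{M}$ immediately yields $-S[A|B]_{\mathcal{N}}\ge -\inf_{\op{\psi}}S(A|RB)_{\mathcal{N}(\op{\psi})}$, which is the claim.

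For the pointwise bound, fix $\mathcal{M}$ and $\op{\psi}_{RA'B'}$ and probe the channel divergence at this particular pure input. Since $\mathcal{R}_{A'\to A}(\cdot)=\tr(\cdot)\mathbbm{1}_A$, the product map evaluates to $(\mathcal{R}\otimes\mathcal{M})(\op{\psi}_{RA'B'})=\mathbbm{1}_A\otimes\mathcal{M}(\psi_{RB'})$, where $\psi_{RB'}:=\tr_{A'}\op{\psi}_{RA'B'}$, so
\begin{equation*}
D[\mathcal{N}\Vert\mathcal{R}\otimes\mathcal{M}]\;\ge\; D\bigl(\mathcal{N}(\op{\psi})_{RAB}\,\big\Vert\,\mathbbm{1}_A\otimes\mathcal{M}(\psi_{RB'})\bigr).
\end{equation*}
The pivotal ingredient is the identity
\begin{equation*}
D(\omega_{XY}\Vert\mathbbm{1}_X\otimes\tau_Y) \;=\; -S(X|Y)_{\omega}+D(\omega_Y\Vert\tau_Y),
\end{equation*}
valid for any state $\omega_{XY}$ and density operator $\tau_Y$, obtained by direct expansion using $\tr[\omega_{XY}\log(\mathbbm{1}_X\otimes\tau_Y)]=\tr[\omega_Y\log\tau_Y]$. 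Combined with nonnegativity of the state relative entropy, this yields the Klein-type lower bound $D(\omega_{XY}\Vert\mathbbm{1}_X\otimes\tau_Y)\ge -S(X|Y)_{\omega}$, attained when $\tau_Y=\omega_Y$.

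Applying the Klein-type bound with the grouping $X=A$, $Y=RB$, $\omega_{RAB}=\mathcal{N}(\op{\psi})$, and $\tau_{RB}=\mathcal{M}(\psi_{RB'})$ converts the last displayed inequality into $D[\mathcal{N}\Vert\mathcal{R}\otimes\mathcal{M}]\ge -S(A|RB)_{\mathcal{N}(\op{\psi})}$, which is the pointwise bound. Taking the supremum over $\op{\psi}_{RA'B'}$ on the right and the infimum over $\mathcal{M}_{B'\to B}$ on the left completes the argument. There is no genuine obstacle in this proof; the one point requiring care is the grouping $(A)(RB)$ when invoking Klein's inequality, so that the reference system $R$ appears on the conditioning side of $S(A|RB)$ and matches the reference system used in the supremum defining the channel divergence.
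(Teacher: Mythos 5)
Your proof is correct and follows essentially the same route as the paper's: evaluate the channel divergence at an arbitrary pure input, use $D(\omega_{XY}\Vert\mathbbm{1}_X\otimes\tau_Y)=-S(X|Y)_\omega+D(\omega_Y\Vert\tau_Y)\ge -S(X|Y)_\omega$ (the paper cites this as Lemma~1 of~\cite{DMHB13}, stating that the infimum over $\tau_Y$ is attained at $\omega_Y$), and then take the supremum over inputs and the infimum over $\mathcal{M}$. The only cosmetic difference is that you derive the key identity by direct expansion rather than citing it.
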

\begin{proof}
From the definition of the von Neumann conditional entropy of a bipartite channel $\mathcal{N}_{A'B'\to AB}$, we have
\begin{align}
        -S[A|B]_{\mathcal{N}}& = \inf_{\mathcal{M}\in{\rm Q}(B',B)}D[\mathcal{N}_{A'B'\to AB}\Vert\mathcal{R}_{A'\to A}\otimes\mathcal{M}_{B'\to B}]\nonumber\\
      & = \inf_{\mathcal{M}\in{\rm Q}(B',B)}\sup_{\op{\phi}\in{\rm St}(RA'B')}D(\mathcal{N}(\phi_{RA'B'})\Vert\mathcal{R}\otimes\mathcal{M}(\phi_{RA'B'}))\nonumber\\
      & = \inf_{\mathcal{M}\in{\rm Q}(B',B)}\sup_{\op{\phi}\in{\rm St}(RA'B')}D(\mathcal{N}(\phi_{RA'B'})\Vert\mathbbm{1}_A\otimes\mathcal{M}(\phi_{RB'}))\nonumber\\
      & \geq \sup_{\op{\phi}\in{\rm St}(RA'B')} D(\mathcal{N}(\phi_{RA'B'})\Vert\mathbbm{1}_A\otimes\tr_A(\mathcal{N}(\phi_{RA'B'})))\nonumber\\
      & = -\inf_{\op{\phi}\in{\rm St}(RA'B')}S(A|RB)_{\mathcal{N}(\phi)}.
\end{align}
Here the inequality holds due to \cite[Lemma 1]{DMHB13} which states that, for an arbitrary state $\rho_{AB}$ and $\sigma\in{\rm Pos}(A)$, we have $\inf_{\omega\in{\rm St}(B)}D(\rho_{AB}\Vert\sigma_{A}\otimes\omega_B)=D(\rho_{AB}\Vert\sigma_A\otimes\rho_{B})$.
\end{proof}

\begin{lemma}\label{lem:tele-cov-reduced}
Consider a reduced channel $\mathcal{T}^{\mathcal{N}}_{B'\to B}$ obtained from a bipartite tele-covariant channel $\mathcal{N}_{A'B'\to AB}$~\cite[Definition 4]{DBW17}, when $|A'|,|A|<+\infty$, and the maximally mixed state $\pi_{A'}$,
\begin{equation}\label{eq:tele-cov-reduced}
\mathcal{T}^{\mathcal{N}}_{B'\to B}(\cdot):=\tr_{A}\circ\mathcal{N}_{A'B'\to AB}\big(\pi_{A'}\otimes(\cdot)\big),
\end{equation}
then the bipartite channels $\frac{1}{|A|}\mathcal{R}_{A'\to A}\otimes\mathcal{T}^{\mathcal{N}}_{B'\to B}$ and $\mathcal{N}_{A'B'\to AB}$ are always jointly tele-covariant (see~\cite[Remark 1]{DBW17} and~\cite[Definition 5]{DW19}).
 
Consider a reduced channel $\mathcal{T}^{\mathcal{N}}_{B'C'\to B}$ obtained from a tele-covariant tripartite channel $\mathcal{N}_{A'B'C'\to ABC}$ \cite[Definition 24]{DBWH21},
\begin{equation}\label{eq:tele-cov-reduced-tri}
\mathcal{T}^{\mathcal{N}}_{B'C'\to B}(\cdot):=\tr_{C}\circ\tr_{A}\circ\mathcal{N}_{A'B'C'\to ABC}(\pi_{A'}\otimes\cdot),
\end{equation}
then the channels $\frac{1}{|A|}\mathcal{R}_{A'\to A}\otimes\mathcal{T}^{\mathcal{N}}_{B'C'\to B}$ and $\tr_C\circ\mathcal{N}_{A'B'C'\to ABC}$ are always jointly tele-covariant.
\end{lemma}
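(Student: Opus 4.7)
The plan is to reduce the claim to two elementary observations: (i) the rescaled completely mixing map $\tfrac{1}{|A|}\mathcal{R}_{A'\to A}$ is covariant with respect to any pair of unitary groups on $A'$ and $A$, because it outputs $\pi_A$, which commutes with every unitary on $A$; and (ii) the maximally mixed state $\pi_{A'}$ is invariant under conjugation by any unitary on $A'$, so the $A$-side covariance of $\mathcal{N}$ gets ``absorbed'' when we feed in $\pi_{A'}$ and then trace out $A$. These two facts together should be enough to show that the product channel inherits exactly the group structure of $\mathcal{N}$ that defines tele-covariance.

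Concretely, bipartite tele-covariance of $\mathcal{N}_{A'B'\to AB}$ in the sense of~\cite{DBW17} furnishes groups of unitaries $\{U^g_{A'}\}_g,\{V^g_A\}_g$ on the $A$-pair and $\{W^h_{B'}\}_h,\{X^h_B\}_h$ on the $B$-pair (with each input-side family forming a one-design) such that $\mathcal{N}\circ(\mathcal{U}^g_{A'}\otimes\mathcal{W}^h_{B'})=(\mathcal{V}^g_A\otimes\mathcal{X}^h_B)\circ\mathcal{N}$ for all $g,h$. I would verify covariance of the reduced channel $\mathcal{T}^{\mathcal{N}}_{B'\to B}$ in \eqref{eq:tele-cov-reduced} under the $B$-side group by the short computation
\begin{align*}
\mathcal{T}^{\mathcal{N}}_{B'\to B}\bigl(W^h_{B'}\rho_{B'}W^{h\dagger}_{B'}\bigr)
&=\tr_A\mathcal{N}\bigl(\pi_{A'}\otimes W^h_{B'}\rho_{B'}W^{h\dagger}_{B'}\bigr)\\
&=\tr_A\mathcal{N}\bigl(U^g_{A'}\pi_{A'}U^{g\dagger}_{A'}\otimes W^h_{B'}\rho_{B'}W^{h\dagger}_{B'}\bigr)\\
&=\tr_A\bigl[(V^g_A\otimes X^h_B)\mathcal{N}(\pi_{A'}\otimes\rho_{B'})(V^{g\dagger}_A\otimes X^{h\dagger}_B)\bigr]\\
&=X^h_B\,\mathcal{T}^{\mathcal{N}}_{B'\to B}(\rho_{B'})\,X^{h\dagger}_B,
\end{align*}
where the first equality after inserting $U^g_{A'}$ uses $U^g_{A'}\pi_{A'}U^{g\dagger}_{A'}=\pi_{A'}$, the third equality is tele-covariance of $\mathcal{N}$, and the last equality uses cyclicity to pull $V^g_A$ through the partial trace on $A$. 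Combined with the automatic covariance of $\tfrac{1}{|A|}\mathcal{R}_{A'\to A}$ under the $A$-side group (trivially, since the output is $\pi_A$), the product channel $\tfrac{1}{|A|}\mathcal{R}_{A'\to A}\otimes\mathcal{T}^{\mathcal{N}}_{B'\to B}$ is covariant under exactly the same pair of one-design unitary groups as $\mathcal{N}_{A'B'\to AB}$, which is the notion of joint tele-covariance from \cite{DBW17,DW19}.

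For the tripartite statement in \eqref{eq:tele-cov-reduced-tri}, the definition of a tele-covariant tripartite channel in \cite{DBWH21} provides an analogous one-design group on each of the three system pairs. The same computation applied twice---absorbing the $A$-side covariance through $\pi_{A'}$ and then treating $\tr_C$ by cyclicity exactly as we treated $\tr_A$ above---shows that $\mathcal{T}^{\mathcal{N}}_{B'C'\to B}$ is covariant under the $B$-side and $C$-side input groups (with the $C$-side output unitary absorbed by $\tr_C$). Tensoring with $\tfrac{1}{|A|}\mathcal{R}_{A'\to A}$ then restores the full $A$-side covariance trivially, yielding a channel whose covariance structure coincides with that of $\tr_C\circ\mathcal{N}_{A'B'C'\to ABC}$. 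The only point requiring care, which I expect to be the mild obstacle in writing this out cleanly, is the bookkeeping that aligns the precise one-design conventions of the three cited definitions; the mathematical content is entirely the two lines ``$\pi$ is unitarily invariant'' and ``unitaries on the traced-out system are killed by cyclicity.''
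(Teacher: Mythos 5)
Your argument is correct and is precisely the reasoning the paper leaves implicit: the lemma is stated without proof, deferring to the cited definitions of (joint) tele-covariance, and the content is exactly your two observations that $\pi_{A'}$ is invariant under the input one-design on $A'$ (so the $A$-side covariance is absorbed) and that the output unitary on the traced-out system is removed by cyclicity, while $\tfrac{1}{|A|}\mathcal{R}_{A'\to A}$ is trivially covariant with respect to any representations since it outputs $\pi_A$. The tripartite case follows by the same computation applied to the $A$ and $C$ systems, as you indicate.
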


\begin{theorem}\label{thm:vN-choi-state}
    The von Neumann conditional entropy of an arbitrary tele-covariant channel $\mathcal{N}_{A'B'\to AB}$, when $|A'|<+\infty$, is related to the von Neumann conditional entropy of its Choi state in the following way,
    \begin{equation}
   S[A|B]_{\mathcal{N}}= S(R_AA|R_BB)_{\Phi^\mathcal{N}}-\log|A'|,
    \end{equation}
where $\Phi^{\mathcal{N}}_{R_AAR_BB}:=\mathcal{N}(\Phi_{R_AA'}\otimes\Phi_{R_BB})$, $R_A\simeq A', R_B\simeq B'$.
\end{theorem}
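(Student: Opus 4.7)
I would prove the identity by sandwiching $-S[A|B]_{\mathcal{N}}$ between the value $\log|A'| - S(R_AA|R_BB)_{\Phi^{\mathcal{N}}}$ (as a lower bound, valid for every candidate $\mathcal{M}$) and the value achieved at the explicit test channel $\mathcal{M}=\mathcal{T}^{\mathcal{N}}_{B'\to B}$ supplied by Lemma~\ref{lem:tele-cov-reduced}.

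For the lower bound, I would fix an arbitrary $\mathcal{M}\in\mathrm{Q}(B',B)$ and bound the supremum defining $D[\mathcal{N}\Vert\mathcal{R}_{A'\to A}\otimes\mathcal{M}_{B'\to B}]$ below by plugging in the product input $\Phi_{R_AA'}\otimes\Phi_{R_BB'}$. Using $\mathcal{R}_{A'\to A}(\Phi_{R_AA'})=\tfrac{1}{|A'|}\mathbbm{1}_{R_AA}$, the scaling law $D(\rho\Vert c\sigma)=D(\rho\Vert\sigma)-\log c$ for $c>0$, and the minimization
\[
\inf_{\omega_{R_BB}\in\mathrm{St}(R_BB)} D\bigl(\Phi^{\mathcal{N}}_{R_AAR_BB}\,\bigl\Vert\,\mathbbm{1}_{R_AA}\otimes\omega_{R_BB}\bigr) = -S(R_AA|R_BB)_{\Phi^{\mathcal{N}}},
\]
which is the same Lemma~1 of~\cite{DMHB13} already invoked in the proof of Proposition~\ref{prop:vN-upperbound}, I obtain $D[\mathcal{N}\Vert\mathcal{R}\otimes\mathcal{M}]\geq \log|A'|-S(R_AA|R_BB)_{\Phi^{\mathcal{N}}}$ uniformly in $\mathcal{M}$, and hence this inequality persists under the infimum.

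For the matching upper bound I would choose $\mathcal{M}=\mathcal{T}^{\mathcal{N}}_{B'\to B}$ as in Eq.~\eqref{eq:tele-cov-reduced}. By Lemma~\ref{lem:tele-cov-reduced}, the pair $\bigl(\mathcal{N}_{A'B'\to AB},\,\tfrac{1}{|A|}\mathcal{R}_{A'\to A}\otimes\mathcal{T}^{\mathcal{N}}_{B'\to B}\bigr)$ is jointly tele-covariant, so each channel is simulable via a common LOCC teleportation protocol whose resource is its own Choi state. Applying data processing of the Umegaki relative entropy under this simulating LOCC (the $\leq$ direction), together with evaluating the channel divergence on the maximally entangled input $\Phi_{R_AA'}\otimes\Phi_{R_BB'}$ (the $\geq$ direction), yields
\[
D\bigl[\mathcal{N}\,\bigl\Vert\,\tfrac{1}{|A|}\mathcal{R}\otimes\mathcal{T}^{\mathcal{N}}\bigr] = D\bigl(\Phi^{\mathcal{N}}_{R_AAR_BB}\,\bigl\Vert\,\pi_{R_AA}\otimes\Phi^{\mathcal{N}}_{R_BB}\bigr),
\]
where I have used $\tr_{R_AA}\Phi^{\mathcal{N}}_{R_AAR_BB}=\mathcal{T}^{\mathcal{N}}(\Phi_{R_BB'})=\Phi^{\mathcal{N}}_{R_BB}$. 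Expanding $\pi_{R_AA}=\tfrac{1}{|A'||A|}\mathbbm{1}_{R_AA}$ and undoing the $\tfrac{1}{|A|}$ normalization via $D[\mathcal{N}\Vert\mathcal{R}\otimes\mathcal{T}^{\mathcal{N}}]=D[\mathcal{N}\Vert\tfrac{1}{|A|}\mathcal{R}\otimes\mathcal{T}^{\mathcal{N}}]-\log|A|$ gives $D[\mathcal{N}\Vert\mathcal{R}\otimes\mathcal{T}^{\mathcal{N}}]=\log|A'|-S(R_AA|R_BB)_{\Phi^{\mathcal{N}}}$, which matches the lower bound and forces equality at the infimum.

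The main obstacle is the reduction of the channel divergence to the state divergence on Choi states in the upper bound. This is a standard but delicate symmetrization/twirling argument for jointly tele-covariant pairs (cf.~\cite{DBW17,DW19,DBWH21}): one averages an arbitrary input state over the covariance group acting on both subsystems, uses joint covariance to pull the twirl past the channel outputs, and then invokes data processing under the LOCC implementing the twirl and the teleportation simulation. The secondary care is bookkeeping the non-trace-preserving nature of $\mathcal{R}$ versus the trace-preserving $\tfrac{1}{|A|}\mathcal{R}$: renormalizing by $\tfrac{1}{|A|}$ to apply Lemma~\ref{lem:tele-cov-reduced} and then restoring the factor via the scaling of the relative entropy is exactly what produces the $\log|A'|$ offset in the final identity.
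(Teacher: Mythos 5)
Your proposal is correct and follows essentially the same route as the paper's proof: the achievability direction instantiates $\mathcal{M}=\mathcal{T}^{\mathcal{N}}_{B'\to B}$ from Lemma~\ref{lem:tele-cov-reduced} and uses joint tele-covariance to collapse the channel divergence to the divergence between Choi states (the paper cites \cite[Corollary II.5]{LKDW18} for this step), while the converse evaluates the supremum at the maximally entangled product input and invokes the same minimization identity $\inf_{\omega}D(\rho_{XY}\Vert\mathbbm{1}_X\otimes\omega_Y)=-S(X|Y)_{\rho}$. Your explicit bookkeeping of the $\tfrac{1}{|A|}$ renormalization needed to make the tele-covariance reduction apply to a trace-preserving map, and of the identification $\tr_{R_AA}\Phi^{\mathcal{N}}_{R_AAR_BB}=\mathcal{T}^{\mathcal{N}}(\Phi_{R_BB'})$, is a slightly more careful rendering of what the paper leaves implicit, and the $\log|A'|$ offset comes out correctly.
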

\begin{proof}
From Lemma~\ref{lem:tele-cov-reduced} and \cite[Remark 5]{DBWH21}, it directly follows that the bipartite channel $\frac{1}{|A|}\mathcal{R}_{A'\to A}\otimes\mathcal{T}^{\mathcal{N}}_{B'\to B}$ is jointly tele-covariant with the channel $\mathcal{N}_{A'B'\to AB}$, where $\mathcal{T}^{\mathcal{N}}$ is defined in Lemma~\ref{lem:tele-cov-reduced}. Then, $-S[A|B]_{\mathcal{N}}\leq {D}[\mathcal{N}\Vert\mathcal{R}\otimes\mathcal{T}^{\mathcal{N}}]={D}(\Phi^{\mathcal{N}}\Vert\frac{1}{|A'|}\mathbbm{1}_{R_AA}\otimes\Phi^{\mathcal{T}^{\mathcal{N}}})=\log|A'|-S(R_AA|R_BB)_{\Phi^\mathcal{N}}$, where we have used the result \cite[Corollary II.5]{LKDW18} for the first equality. We now prove the converse part to conclude the proof. For a quantum channel $\mathcal{M}_{B'\to B}$ that gives infimum for ${D}[\mathcal{N}_{A'B'\to AB}\vert\mathcal{R}_{A'\to A}\otimes\mathcal{M}_{B'\to B}]$, we have
\begin{align}
  -{S}[A|B]_{\mathcal{N}} &=  {D}[\mathcal{N}_{A'B'\to AB}\vert\mathcal{R}_{A'\to A}\otimes\mathcal{M}_{B'\to B}] \nonumber\\
  &\geq {D}(\Phi^{\mathcal{N}}_{R_AAR_BB}\Vert\frac{1}{|A'|}\mathbbm{1}_{R_AA}\otimes\Phi^{\mathcal{M}}_{R_BB})\nonumber\\
 &\geq \log|A'|+ \inf_{\sigma_{R_BB}\in\mathrm{St}(R_BB)}{D}(\Phi^{\mathcal{N}}_{R_AAR_BB}\Vert\mathbbm{1}_{R_A}\otimes\mathbbm{1}_A\otimes\sigma_{R_BB})\nonumber \label{eq:thm-proof-con}\\
 &=\log|A'|-S(R_AA|R_BB)_{\Phi^\mathcal{N}}. 
\end{align}
\end{proof}

\begin{corollary}\label{cor:cVN-unitaries}
    The von Neumann conditional entropy ${S}[A|B]_{\mathcal{U}}$ of a tele-covariant bipartite isometric channel $\mathcal{U}_{A'B'\to AB}$ is related to the entropy of the marginal (reduced state) $\Phi^{\mathcal{U}}_{R_AA}$ of its Choi state $\Phi^{\mathcal{U}}_{R_AAR_BB}$,
    \begin{equation}
       {S}[A|B]_{\mathcal{U}}+\log|A'|= - S(R_AA)_{\Phi^{\mathcal{U}}}=- S(R_BB)_{\Phi^{\mathcal{U}}}.
    \end{equation}
\end{corollary}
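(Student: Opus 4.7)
The plan is to derive the corollary directly from Theorem~\ref{thm:vN-choi-state} together with the purity of the Choi state of an isometric channel. By Theorem~\ref{thm:vN-choi-state}, tele-covariance of $\mathcal{U}_{A'B'\to AB}$ yields
\begin{equation*}
    S[A|B]_{\mathcal{U}} \;=\; S(R_AA|R_BB)_{\Phi^{\mathcal{U}}} - \log|A'|,
\end{equation*}
so it suffices to show that $S(R_AA|R_BB)_{\Phi^{\mathcal{U}}} = -S(R_AA)_{\Phi^{\mathcal{U}}} = -S(R_BB)_{\Phi^{\mathcal{U}}}$.

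The key observation is that $\Phi^{\mathcal{U}}_{R_AAR_BB} = \mathcal{U}_{A'B'\to AB}(\Phi_{R_AA'}\otimes\Phi_{R_BB'})$ is a pure state. Indeed, $\Phi_{R_AA'}\otimes\Phi_{R_BB'}$ is pure (a tensor product of maximally entangled pure states), and any isometric channel sends pure states to pure states, since $\mathcal{U}(\psi) = V\psi V^{\dagger}$ for an isometry $V$.

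Next, I would invoke the standard identities for pure bipartite states. For a pure $|\psi\rangle_{XY}$, the Schmidt decomposition gives $S(X)_{\psi} = S(Y)_{\psi}$ and $S(XY)_{\psi}=0$, hence $S(X|Y)_{\psi} = S(XY)_{\psi} - S(Y)_{\psi} = -S(Y)_{\psi} = -S(X)_{\psi}$. Applying this to $\Phi^{\mathcal{U}}_{R_AAR_BB}$ with the bipartition $R_AA \,|\, R_BB$ gives
\begin{equation*}
    S(R_AA|R_BB)_{\Phi^{\mathcal{U}}} \;=\; -S(R_BB)_{\Phi^{\mathcal{U}}} \;=\; -S(R_AA)_{\Phi^{\mathcal{U}}}.
\end{equation*}
Combining with the expression obtained from Theorem~\ref{thm:vN-choi-state} yields the claimed equality $S[A|B]_{\mathcal{U}} + \log|A'| = -S(R_AA)_{\Phi^{\mathcal{U}}} = -S(R_BB)_{\Phi^{\mathcal{U}}}$.

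Since the corollary is essentially a one-line consequence of the preceding theorem combined with elementary entropy facts, there is no real obstacle; the only point worth checking carefully is that the Choi state construction uses the product of maximally entangled pure inputs $\Phi_{R_AA'}\otimes\Phi_{R_BB'}$, so that purity is indeed preserved by the isometric $\mathcal{U}$.
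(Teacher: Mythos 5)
Your proof is correct and is exactly the argument the paper intends: the corollary is stated as an immediate consequence of Theorem~\ref{thm:vN-choi-state}, using that the Choi state of an isometric channel is pure so that $S(R_AA|R_BB)_{\Phi^{\mathcal{U}}}=-S(R_AA)_{\Phi^{\mathcal{U}}}=-S(R_BB)_{\Phi^{\mathcal{U}}}$. No gaps.
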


\begin{proposition} \label{prop:lowerbound_cond_entropy}
    The absolute value of the von Neumann conditional entropy of any bipartite channel $\mathcal{N}_{A'B'\to AB}$ is upper bounded by the log of the dimension of the nonconditioning output system, i.e.,
    \begin{align}
    - \log|B|+ S[AB]_{\mathcal{N}}& \leq S[A|B]_{\mathcal{N}}\leq \log|A|,
    \end{align}
where $S[AB]_{\mathcal{M}}:=S[\mathcal{M}_{A'B'\to AB}]$ is the von Neumann entropy of a bipartite channel $\mathcal{M}_{A'B'\to AB}$ and from \cite{GW21},
\begin{equation}\label{eq:lower-bound-entropy}
    -\log\min\{|A||B|,|A'||B'|\}\leq  S[AB]_{\mathcal{N}}\leq \log|A||B|.
\end{equation}
\end{proposition}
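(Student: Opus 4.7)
My plan is to prove the two inequalities independently; both are direct consequences of earlier results in the excerpt together with elementary manipulations of relative entropy.

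For the upper bound $S[A|B]_{\mathcal{N}}\le\log|A|$, the strategy is to appeal to Proposition~\ref{prop:vN-upperbound}, which already bounds $S[A|B]_{\mathcal{N}}$ above by $\inf_{\op{\psi}}S(A|RB)_{\mathcal{N}(\op{\psi})}$. From there it suffices to recall the standard dimensional bound on the von Neumann conditional entropy of a state, $S(A|RB)_{\omega}\le S(A)_{\omega}\le\log|A|$ (the first inequality from subadditivity $S(ARB)\le S(A)+S(RB)$, and the second from the dimensional bound on the von Neumann entropy). Taking the infimum over $\op{\psi}$ then yields the claim.

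For the lower bound $-\log|B|+S[AB]_{\mathcal{N}}\le S[A|B]_{\mathcal{N}}$, I would fix a particular feasible channel in the defining infimum. The key observation is that the completely mixing map factorises, $\mathcal{R}_{A'B'\to AB}=\mathcal{R}_{A'\to A}\otimes\mathcal{R}_{B'\to B}$, which is immediate on product operators and extends by linearity. While $\mathcal{R}_{B'\to B}$ itself is not trace preserving, the rescaling $\mathcal{M}_{B'\to B}:=\tfrac{1}{|B|}\mathcal{R}_{B'\to B}$ is the completely depolarising channel $\rho\mapsto\pi_B$, hence a valid element of $\mathrm{Q}(B',B)$ and so admissible in the infimum defining $S[A|B]_{\mathcal{N}}$.

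The final step invokes the scaling identity $D[\mathcal{N}\Vert c\,\mathcal{K}]=D[\mathcal{N}\Vert\mathcal{K}]-\log c$ for $c>0$, which follows from the pointwise identity $D(\rho\Vert c\sigma)=D(\rho\Vert\sigma)-\log c$ for states by taking the supremum over input states in the channel-divergence definition. Substituting the above $\mathcal{M}$ into the defining infimum and applying this identity with $c=1/|B|$ yields
\begin{align}
-S[A|B]_{\mathcal{N}} &\le D[\mathcal{N}\Vert \mathcal{R}_{A'\to A}\otimes\tfrac{1}{|B|}\mathcal{R}_{B'\to B}] \nonumber\\
&= D[\mathcal{N}\Vert \mathcal{R}_{A'B'\to AB}]+\log|B| \nonumber\\
&= -S[AB]_{\mathcal{N}}+\log|B|,
\end{align}
which rearranges to the desired inequality. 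I do not anticipate any serious obstacle; the only subtle check is the scaling behaviour of the channel divergence under rescaling of the second argument, which is routine (and support conditions are automatic since $\mathcal{R}$ outputs a scalar multiple of $\mathbbm{1}$, which is full rank). The secondary dimensional bound \eqref{eq:lower-bound-entropy} on $S[AB]_{\mathcal{N}}$ is imported directly from \cite{GW21} and is not re-derived here.
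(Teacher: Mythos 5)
Your proof is correct and follows essentially the same route as the paper: the lower bound is obtained by the identical choice $\mathcal{M}=\tfrac{1}{|B|}\mathcal{R}_{B'\to B}$ in the defining infimum together with the scaling identity for the relative entropy, and your upper-bound argument via Proposition~\ref{prop:vN-upperbound} combined with $S(A|RB)_{\omega}\leq\log|A|$ is exactly the alternative route the paper itself points to. No gaps.
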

\begin{proof}
To prove the lower bound, we notice that
\begin{align}
   - S[A|B]_{\mathcal{N}} &= \inf_{\mathcal{M}\in{\rm Q}(B',B)}  D[\mathcal{N}_{A'B'\to AB}\Vert\mathcal{R}_{A'\to A}\otimes\mathcal{M}_{B'\to B}]\nonumber\\
   & \leq D[\mathcal{N}_{A'B'\to AB}\Vert\mathcal{R}_{A'\to A}\otimes\frac{1}{|B|}\mathcal{R}_{B'\to B}]\nonumber\\
   & = \log|B|-S[AB]_{\mathcal{N}}.
\end{align}
$S[A|B]_{\mathcal{N}}\leq S[A]_{\mathcal{N}}$ is a direct consequence of [A2, Theorem~\ref{thm:axioms}] and $S[A]_{\mathcal{N}}\leq \log|A|$ follows from \cite{GW21}. We can also arrive at the upper bound using Proposition~\ref{prop:vN-upperbound}.

Inequality~\eqref{eq:lower-bound-entropy} is a direct consequence of \cite[Proposition 6]{GW21}. From \cite[Corollary 8]{GW21}, we can say that $S[AB]_{\mathcal{N}}=-\log|A'||B'|$ if and only if $\mathcal{N}_{A'B'\to AB}$ is an isometric channel. 
\end{proof}

Based on [A4, Theorem~\ref{thm:axioms}] and Proposition~\ref{prop:lowerbound_cond_entropy}, we note the following.
\begin{remark}\label{obs:cVN-bounds}
For all noninteraction bipartite channel $\mathcal{V}^1_{A'\to A}\otimes\mathcal{M}^2_{B'\to B}$, where $\mathcal{V}^1\in{\rm Q}(A',A), \mathcal{M}^2\in{\rm Q}(B',B)$,
    \begin{equation}
        S[A|B]_{\mathcal{V}^1\otimes\mathcal{M}^2}=-\log|A'|
    \end{equation}
if and only if $\mathcal{V}^1_{A'\to A}$ is an isometric channel, otherwise $S[A|B]_{\mathcal{V}^1\otimes\mathcal{M}^2}> -\log |A'|$ if $\mathcal{V}^1$ is not an isometric channel. For the (unitary) SWAP channel $\mathcal{S}_{A'B'\to AB}(\cdot):=\mathrm{SWAP}(\cdot)\mathrm{SWAP}^\dag$, we have $S[A|B]_{\mathcal{S}}=-3\log|A|$, where $\mathrm{SWAP}\ket{\psi}_{A'}\ket{\phi}_{B'}=\ket{\phi}_{A}\ket{\psi}_{B}$ and $|A'|=|B'|=|A|=|B|$. The least value $ -\log\min\{|A||B|^2,|A'||B'||B|\}$ for the von Neumann conditional entropy $S[A|B]_{\mathcal{N}}$ is not possible for any noisy (nonisometric) channel $\mathcal{N}_{A'B'\to AB}$. 
\end{remark}

\begin{proposition}[Continuity of conditional entropy]
If a bipartite channel $\mathcal{N}_{A'B'\to AB}$ is $\varepsilon$-close to a tele-covariant bipartite channel $\mathcal{T}_{A'B'\to AB}$ in the diamond norm, i.e., $\frac{1}{2}\norm{\mathcal{N}-\mathcal{M}}_{\diamond}= \varepsilon\in[0,1]$, then
\begin{equation}\label{eq:tele-afw}
    \abs{S[A|B]_{\mathcal{T}}-S[A|B]_{\mathcal{N}}}\leq 2\varepsilon\log|A'||A|+g_2(\varepsilon),
\end{equation}
where $g_2(\varepsilon):=(1+\varepsilon)\log(1+\varepsilon)-\varepsilon\log\varepsilon$.
\end{proposition}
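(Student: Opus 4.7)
The plan is to reduce this channel-level continuity bound to the Alicki--Fannes--Winter (AFW) continuity bound applied at the state level to the Choi states $\Phi^{\mathcal{N}}_{R_AAR_BB}$ and $\Phi^{\mathcal{T}}_{R_AAR_BB}$. First, since $\frac{1}{2}\norm{\mathcal{N}-\mathcal{T}}_{\diamond}=\varepsilon$, evaluating both channels on the fixed pure input $\Phi_{R_AA'}\otimes\Phi_{R_BB'}$ gives $\frac{1}{2}\norm{\Phi^{\mathcal{N}}-\Phi^{\mathcal{T}}}_{1}\leq \varepsilon$, and since $|R_AA|=|A'||A|$, the state-level AFW bound yields $\abs{S(R_AA|R_BB)_{\Phi^{\mathcal{N}}}-S(R_AA|R_BB)_{\Phi^{\mathcal{T}}}}\leq 2\varepsilon\log|A'||A|+g_2(\varepsilon)$, which is precisely the desired right-hand side.

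For the forward direction, I would use that Theorem~\ref{thm:vN-choi-state} gives the exact identity $S[A|B]_{\mathcal{T}}=S(R_AA|R_BB)_{\Phi^{\mathcal{T}}}-\log|A'|$ for the tele-covariant $\mathcal{T}$, while the converse portion of its proof applies to arbitrary bipartite channels to give the universal one-sided inequality
\begin{equation*}
S[A|B]_{\mathcal{N}}\leq S(R_AA|R_BB)_{\Phi^{\mathcal{N}}}-\log|A'|,
\end{equation*}
obtained by lower-bounding $D[\mathcal{N}\Vert\mathcal{R}\otimes\mathcal{M}]$ by its value on $\Phi_{R_AA'}\otimes\Phi_{R_BB'}$ and applying $\inf_{\sigma}D(\Phi^{\mathcal{N}}\Vert\mathbbm{1}_{R_AA}\otimes\sigma)=-S(R_AA|R_BB)_{\Phi^{\mathcal{N}}}$. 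Chaining this universal bound with the AFW estimate above and the identity for $\mathcal{T}$ yields $S[A|B]_{\mathcal{N}}-S[A|B]_{\mathcal{T}}\leq 2\varepsilon\log|A'||A|+g_2(\varepsilon)$.

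For the reverse direction, I would apply the local bipartite tele-covariance twirl $\Theta$---a local random-unitary superchannel on $\mathbf{A}$ and $\mathbf{B}$ that enforces tele-covariance and fixes tele-covariant channels---to produce $\widetilde{\mathcal{N}}:=\Theta[\mathcal{N}]$. Then $\widetilde{\mathcal{N}}$ is tele-covariant with $\frac{1}{2}\norm{\widetilde{\mathcal{N}}-\mathcal{T}}_{\diamond}\leq\varepsilon$ (superchannels are non-expansive in diamond norm), Axiom A1 of Theorem~\ref{thm:axioms} gives $S[A|B]_{\widetilde{\mathcal{N}}}\geq S[A|B]_{\mathcal{N}}$, and Theorem~\ref{thm:vN-choi-state} together with AFW on the two tele-covariant Choi states gives $\abs{S[A|B]_{\widetilde{\mathcal{N}}}-S[A|B]_{\mathcal{T}}}\leq 2\varepsilon\log|A'||A|+g_2(\varepsilon)$. \emph{The main obstacle} will be closing the reverse bound: monotonicity under the twirl only provides $S[A|B]_{\widetilde{\mathcal{N}}}\geq S[A|B]_{\mathcal{N}}$, which alone does not yield $S[A|B]_{\mathcal{T}}\leq S[A|B]_{\mathcal{N}}+O(\varepsilon\log|A'||A|)$. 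Resolving this requires showing that the universal upper bound on $S[A|B]_{\mathcal{N}}$ is approximately tight, with gap $O(\varepsilon\log|A'||A|)$, when $\mathcal{N}$ is near tele-covariant---equivalently, a pointwise continuity estimate on the channel divergence $D[\mathcal{N}\Vert\mathcal{R}\otimes\mathcal{T}^{\mathcal{T}}]$ in its first argument, exploiting that tele-covariance of the second slot forces the supremum over input states to be attained at $\Phi_{R_AA'}\otimes\Phi_{R_BB'}$.
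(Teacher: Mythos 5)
Your opening paragraph is, essentially verbatim, the paper's entire proof: from $\frac{1}{2}\norm{\mathcal{N}-\mathcal{T}}_{\diamond}=\varepsilon$ it deduces $\frac{1}{2}\norm{\Phi^{\mathcal{N}}-\Phi^{\mathcal{T}}}_1\leq\varepsilon$, applies the AFW bound~\eqref{eq:AFW} to the two Choi states with $|R_AA|=|A'||A|$, and then simply declares that the bound~\eqref{eq:tele-afw} follows. That last step implicitly invokes the Choi-state formula of Theorem~\ref{thm:vN-choi-state} for \emph{both} channels, even though only $\mathcal{T}$ is assumed tele-covariant.

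The obstacle you flag is therefore genuine, and it is not resolved in the paper's proof either. For a general $\mathcal{N}$, the converse chain in the proof of Theorem~\ref{thm:vN-choi-state} (which nowhere uses tele-covariance) supplies only the one-sided inequality $S[A|B]_{\mathcal{N}}\leq S(R_AA|R_BB)_{\Phi^{\mathcal{N}}}-\log|A'|$, so your argument rigorously yields $S[A|B]_{\mathcal{N}}-S[A|B]_{\mathcal{T}}\leq 2\varepsilon\log|A'||A|+g_2(\varepsilon)$ but not the reverse inequality. Your twirling attempt cannot close it, as you observe, since the A1 monotonicity of Theorem~\ref{thm:axioms} points the wrong way; what is missing is a \emph{lower} bound on $S[A|B]_{\mathcal{N}}$, i.e.\ an upper bound on $\sup_{\psi}D\bigl(\mathcal{N}(\psi_{RA'B'})\Vert\mathbbm{1}_A\otimes\mathcal{M}(\psi_{RB'})\bigr)$ for a judiciously chosen $\mathcal{M}$, and a naive continuity estimate in the first slot of the relative entropy fails without control on the support of the second. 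In short: your write-up is more careful than the paper's, establishes half of the claimed two-sided bound, and correctly isolates the missing half; the proposition as actually proved in the paper holds without further argument only when $\mathcal{N}$ is itself tele-covariant, in which case Theorem~\ref{thm:vN-choi-state} applies to both channels and the two-sided bound is immediate from AFW.
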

\begin{proof}
    The proposition is a direct consequence of the AFW continuity theorem~\cite{Win16} of the conditional von Neumann entropy which states that for any bipartite state $\rho,\sigma\in\mathrm{St}(AB)$, if $\frac{1}{2}\norm{\rho-\sigma}_1\leq \varepsilon\in[0,1]$, then
    \begin{equation}\label{eq:AFW}
        \abs{S(A|B)_{\rho}-S(A|B)_{\sigma}}\leq 2\varepsilon\log|A|+g_2(\varepsilon).
    \end{equation}
If $\frac{1}{2}\norm{\mathcal{N}-\mathcal{T}}_{\diamond}= \varepsilon$ for any pair of channels (or, completely positive maps), then their Choi states are $\varepsilon$-close in trace distance as $\frac{1}{2}\norm{\id_{R}\otimes\mathcal{N}(\rho_{RA'B'})-\id_{R}\otimes\mathcal{T}(\rho_{RA'B'})}_1\leq \varepsilon$ for all $\rho_{RA'B'}\in\mathrm{St}(RA'B')$ with arbitrary $|R|$. As $\frac{1}{2}\norm{\Phi^{\mathcal{N}}_{R_AAR_BB}-\Phi^{\mathcal{T}}_{R_AAR_BB}}\leq \varepsilon$, invoking the AFW continuity theorem (inequality~\eqref{eq:AFW}), we arrive at the bound~\eqref{eq:tele-afw}.
\end{proof}
\subsubsection*{Examples of tele-covariant channels and their conditional entropies}
We now analytically and numerically calculate the von Neumann conditional entropy using Theorem \ref{thm:vN-choi-state} for some tele-covariant quantum channels of practical interest in quantum computing and information processing tasks.
\begin{figure}[hbtp]
\begin{center}
\includegraphics[width=0.5\columnwidth]{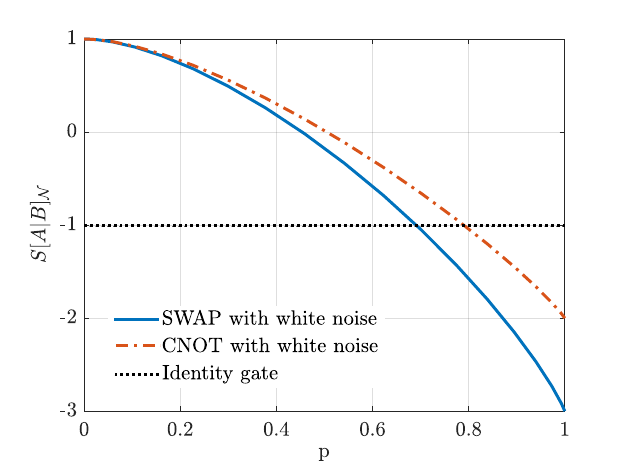}
\caption{In this figure, we plot the von Neumann conditional entropies $S[A|B]_{\mathcal{N}^{p,U}}$, where $U\in\{{\rm SWAP,CNOT}\}$, for noisy SWAP and CNOT operations with respect to the channels' parameter $p$, given by Eqs.~\eqref{equ:swap_ent} and \eqref{equ:cnot_ent}. The horizontal line at $S[A|B]_{\mathcal{N}}=-1$ is for the two-qubit noiseless identity gate. }
\label{conditional_entropy_tlecov}
\end{center}
\end{figure}

The von Neumann conditional entropy of the two-qu$d$it noisy SWAP gate $\mathcal{N}^{p,{\rm SWAP}(d)}_{A'B'\to AB}(\cdot):=(1-p)\widetilde{R} (\cdot) + p \mathrm{SWAP} (\cdot) \mathrm{SWAP}^\dag$, i.e., SWAP gate with white noise, is given by
\begin{equation}
    S[A|B]_{\mathcal{N}^{p,{\rm SWAP}(d)}}=-\left(p+\frac{1-p}{d^4}\right)\log\left(p+\frac{1-p}{d^4}\right)-(d^4-1)\left(\frac{1-p}{d^4}\log\frac{1-p}{d^4}\right)-3\log d,
\end{equation}
where $|A'|=|B'|=|A|=|B|=d$ and $\widetilde{\mathcal{R}}_{A'B'\to AB}$ is the completely mixing channel, i.e., $\widetilde{\mathcal{R}}_{A'B'\to AB}=\widetilde{\mathcal{R}}_{A'\to A}\otimes\widetilde{\mathcal{R}}_{B'\to B}$.

{\it SWAP and CNOT gates with white noise}.--- We now consider the two-qubit SWAP and CNOT gates mixed with the completely mixing channel $\widetilde{\mathcal{R}}_{A'B'\to AB}$ and $|A'|=|B'|=|A|=|B|=2$. The two-qubit noisy SWAP and CNOT gates are given by $\mathcal{N}^{p,\mathrm{SWAP}(2)}_{A'B'\to AB}(\cdot)$ and $ \mathcal{N}^{p,\mathrm{CNOT}}_{A'B'\to AB}(\cdot) = (1-p) \widetilde{R}(\cdot) + p \mathrm{CNOT} (\cdot) \mathrm{CNOT}^\dag$, respectively, for $p\in [0,1]$. The von Neumann conditional entropies of $\mathcal{N}^{p,\mathrm{SWAP}(2)}_{A'B'\to AB}$ and $\mathcal{N}^{p,\mathrm{CNOT}}_{A'B'\to AB}$ are, see Figure~\ref{conditional_entropy_tlecov},
\begin{align}
    S[A|B]_{\mathcal{N}^{p,\mathrm{SWAP}(2)}} &= -\frac{15p}{16} \log \frac{p}{16}-\left(1-\frac{15 p}{16}\right) \log (1-\frac{15 p}{16})-3,\label{equ:swap_ent}\\
      S[A|B]_{\mathcal{N}^{p,\mathrm{CNOT}}} &=  -\frac{15p}{16} \log \frac{p}{16}+\left(1-\frac{15 p}{16}\right) \log (1-\frac{15 p}{16})-\left(1-\frac{p}{2}\right) \log (\frac{1}{2}-\frac{p}{4})+\frac{p}{2} \log \frac{p}{4}-1. \label{equ:cnot_ent}
\end{align}

\subsection{Geometric R\'enyi conditional entropies}
In this section, we consider finite-dimensional systems and channels. Let $\mathcal{N}_{A' \rightarrow A}$ be a quantum channel and $\mathcal{M}_{A' \rightarrow A}$ be a completely positive map. The geometric R\'enyi relative entropy between $\mathcal{N}$ and $\mathcal{M}$ is defined as
\begin{equation}
    \widehat{D}_{\alpha} \left[\mathcal{N} \Vert \mathcal{M}\right] = \sup_{\psi_{RA'} \in \mathrm{St}(RA')} \widehat{D}_{\alpha} \left(\mathrm{id}_{R}\otimes\mathcal{N} \left(\psi_{RA'}\right) \Vert \mathrm{id}_{R}\otimes\mathcal{M} \left(\psi_{RA'}\right)\right),
\end{equation}
where $\alpha \in (0,1) \cup (1, \infty)$ and $\widehat{D}_{\alpha}\left(X \Vert Y\right):= \frac{1}{\alpha-1} \log \tr \left[ Y^{\frac{1}{2}} \left(Y^{-\frac{1}{2}} X Y^{-\frac{1}{2}}\right)^\alpha Y^{\frac{1}{2}}\right]$ is the geometric R\'enyi divergence between $X\in\mathrm{St}(A')$ and $Y\in\mathrm{Pos}(A')$.

For any finite-dimensional quantum channel $\mathcal{N}_{A' \rightarrow A}$, a finite-dimensional completely positive trace nonincreasing map $\mathcal{M}_{A' \rightarrow A}$, and $\alpha \in (1,2]$, the  geometric R\'enyi channel divergence is given by the following closed form~\cite{FF21}
\begin{equation}
    \widehat{D}_{\alpha} \left[\mathcal{N} \Vert \mathcal{M}\right] = \frac{1}{\alpha -1} \log\norm{\tr_{A} G_{1-\alpha} \left(\Gamma^{\mathcal{N}}_{RA}, \Gamma^{\mathcal{M}}_{RA}\right)}_{\infty}, \label{equ:geometric_renyi_div}
\end{equation}
where $G_{\alpha} \left(X,Y\right):= X^{\frac{1}{2}} \left(X^{-\frac{1}{2}}Y X^{-\frac{1}{2}}\right)^{\alpha}X^{\frac{1}{2}}$ is the weighted matrix geometric mean, $\Gamma^{\mathcal{N}}_{RA}$ denotes the Choi operator of $\mathcal{N}$. It is monotonically nonincreasing under the action of quantum superchannels for $\alpha\in(1,2]$.

\begin{proposition}\label{pro:geo-con-ent}
    The geometric R\'enyi conditional entropy $\widehat{S}_{\alpha}$, for $\alpha\in(1,2]$, of an arbitrary finite-dimensional bipartite channel $\mathcal{N}_{A'B'\to AB}$ is,
\begin{align}\label{eq:geo-con-ent}
     \widehat{{S}}_{\alpha}[A|B]_{\mathcal{N}} &=  -\min_{\mathcal{M}\in\mathrm{Q}(B',B)}  \frac{1}{\alpha -1}\log \norm{\tr_{AB} G_{1-\alpha} \left(\Gamma^{\mathcal{N}}_{R_AAR_BB}, \mathbbm{1}_{R_AA}\otimes\Gamma^{\mathcal{M}}_{R_BB}\right)}_{\infty},
\end{align}
where $\Gamma^{\mathcal{T}}_{R_CC}:=\mathcal{T}_{C'\to C}(\Gamma_{R_CC'})=|C'|\Phi^{\mathcal{T}}$ denotes the Choi operator of a completely positive map $\mathcal{T}_{C'\to C}$ with $|R_C|=|C'|, $|C|$<+\infty$.
\end{proposition}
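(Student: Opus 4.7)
The plan is to derive the formula by a direct substitution: apply the definition of the generalized conditional entropy in Eq.~\eqref{eq:def-gen-con-ent} with $\mathbf{D}=\widehat{D}_\alpha$, then insert the closed-form expression for the geometric R\'enyi channel divergence from Eq.~\eqref{equ:geometric_renyi_div}, and finally simplify the resulting Choi operator of the comparison map $\mathcal{R}_{A'\to A}\otimes\mathcal{M}_{B'\to B}$.

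First I would write the starting identity
\begin{equation*}
    \widehat{S}_\alpha[A|B]_{\mathcal{N}}=-\inf_{\mathcal{M}\in\mathrm{Q}(B',B)}\widehat{D}_\alpha[\mathcal{N}_{A'B'\to AB}\Vert\mathcal{R}_{A'\to A}\otimes\mathcal{M}_{B'\to B}],
\end{equation*}
which is the specialization of the generalized conditional entropy to $\widehat{D}_\alpha$. Next I would compute the Choi operator of the product map $\mathcal{R}_{A'\to A}\otimes\mathcal{M}_{B'\to B}$ against the product maximally entangled operator $\Gamma_{R_AR_B:A'B'}=\Gamma_{R_AA'}\otimes\Gamma_{R_BB'}$. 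Using $\mathcal{R}_{A'\to A}(\Gamma_{R_AA'})=\tr_{A'}(\Gamma_{R_AA'})\otimes\mathbbm{1}_A=\mathbbm{1}_{R_A}\otimes\mathbbm{1}_A$, the product structure yields
\begin{equation*}
    \Gamma^{\mathcal{R}\otimes\mathcal{M}}_{R_AAR_BB}=\mathbbm{1}_{R_AA}\otimes\Gamma^{\mathcal{M}}_{R_BB}.
\end{equation*}

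Then I would directly apply the closed-form expression \eqref{equ:geometric_renyi_div} (valid for $\alpha\in(1,2]$) to the pair $(\mathcal{N}_{A'B'\to AB},\mathcal{R}_{A'\to A}\otimes\mathcal{M}_{B'\to B})$, where the role of the single output system $A$ in the formula is played by the bipartite output $AB$, so that the partial trace in the formula becomes $\tr_{AB}$. This produces
\begin{equation*}
\widehat{D}_\alpha[\mathcal{N}\Vert\mathcal{R}\otimes\mathcal{M}]=\frac{1}{\alpha-1}\log\bigl\lVert\tr_{AB}G_{1-\alpha}\bigl(\Gamma^{\mathcal{N}}_{R_AAR_BB},\mathbbm{1}_{R_AA}\otimes\Gamma^{\mathcal{M}}_{R_BB}\bigr)\bigr\rVert_\infty,
\end{equation*}
and substituting back into the starting identity gives the claimed expression. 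Finally, I would justify replacing the infimum by a minimum: the set $\mathrm{Q}(B',B)$ is, via the Choi isomorphism, a compact convex subset of $\mathrm{Pos}(R_BB)$ (fixed partial-trace constraint $\tr_B\Gamma^{\mathcal{M}}=\mathbbm{1}_{R_B}$) in finite dimensions, and the objective is continuous in $\Gamma^{\mathcal{M}}$ on this compact set, so the infimum is attained.

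The only nontrivial step is the Choi-operator simplification for the map $\mathcal{R}\otimes\mathcal{M}$, but this is essentially bookkeeping given Definition~\ref{def_completely_depolarising_map}. The main obstacle, if any, is purely notational: one must be careful that the closed form in \eqref{equ:geometric_renyi_div}, stated for $\mathcal{N}_{A'\to A}$, extends verbatim to the bipartite setting by relabelling $A'\mapsto A'B'$, $A\mapsto AB$, $R\mapsto R_AR_B$, so that the partial trace covers both output factors $AB$ and the remaining support of the operator norm lives on $R_AR_B$. Continuity/compactness for the min replacement is standard and can be relegated to a brief remark.
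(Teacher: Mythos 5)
Your proposal matches the paper's proof essentially verbatim: both specialize the definition of the generalized conditional entropy to $\widehat{D}_\alpha$, invoke the closed-form channel divergence of Eq.~\eqref{equ:geometric_renyi_div} (i.e., Lemma 5 of \cite{FF21}) for the pair $(\mathcal{N},\mathcal{R}\otimes\mathcal{M})$, and simplify $\Gamma^{\mathcal{R}\otimes\mathcal{M}}_{R_AAR_BB}=\mathbbm{1}_{R_AA}\otimes\Gamma^{\mathcal{M}}_{R_BB}$. Your added compactness remark justifying the replacement of the infimum by a minimum is a small but welcome supplement the paper leaves implicit.
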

\begin{proof}
The proof follows by invoking ~\cite[Lemma 5]{FF21} in Definition Eq. ~\eqref{equ:geometric_renyi_div}: 
\begin{align}
    -\widehat{{S}}_{\alpha}[A|B]_{\mathcal{N}} &= \min_{\mathcal{M}\in\mathrm{Q}(B',B)}  \frac{1}{\alpha -1}\log \norm{\tr_{AB} G_{1-\alpha} \left(\Gamma^{\mathcal{N}}_{R_AAR_BB}, \Gamma^{\mathcal{R}\otimes\mathcal{M}}_{R_AAR_BB}\right)}_{\infty}\nonumber\\
    & = \min_{\mathcal{M}\in\mathrm{Q}(B',B)}  \frac{1}{\alpha -1}\log \norm{\tr_{AB} G_{1-\alpha} \left(\Gamma^{\mathcal{N}}_{R_AAR_BB}, \mathbbm{1}_{R_A}\otimes\mathbbm{1}_A\otimes\Gamma^{\mathcal{M}}_{R_BB}\right)}_{\infty}. \label{Eq:geom_cond_entr}
\end{align}
\end{proof}

 Due to \cite[Lemma 9]{FF21}, the geometric R\'enyi conditional entropy $\widehat{S}_{\alpha}$, for $\alpha\in(1,2]$, as in Eq.~\eqref{Eq:geom_cond_entr}, of an arbitrary finite-dimensional bipartite channel $\mathcal{N}_{A'B'\to AB}$ has a semidefinite representation as shown in the following corollary.

\begin{corollary}\label{cor:sdp_cond_geom}
For $\alpha= 1 + 2^{-\ell}$ with $\ell \in \mathbb{N}$, the geometric R\'enyi conditional entropy $\widehat{{S}}_{\alpha}[A|B]_{\mathcal{N}}$ takes the form $\widehat{{S}}_{\alpha}[A|B]_{\mathcal{N}}=-2^\ell \log y^*$, where $y^*$ is obtained from the following semidefinite programme:
\begin{align}
 &\mathrm{minimize:} \ y \nonumber \\
 &\mathrm{subject \  to:} \  P \in \mathsf{Herm}(R_AAR_BB), \{Q_i \in \mathsf{Herm}(R_AAR_BB)\}_{i=0}^\ell,  \nonumber \\
 &\qquad \qquad \quad    \begin{pmatrix}
P & \Gamma^\mathcal{N} \\
\Gamma^\mathcal{N} & Q_{\ell} \\
\end{pmatrix} \ge 0, \ \bigg\{\begin{pmatrix}
\Gamma^\mathcal{N} & Q_i \\
Q_i & Q_{i-1} \\
\end{pmatrix} \ge 0\bigg\}_{i=1}^\ell, \ \Gamma^{\mathcal{M}}_{R_BB} \ge 0,\  \tr_B \Gamma^{\mathcal{M}} = \mathbbm{1}_{R_B},   \nonumber \\
 & \qquad \qquad \quad Q_0 = \mathbbm{1}_{R_AA} \otimes \Gamma^{\mathcal{M}}_{R_BB}, \ \tr_{AB} P \le y \mathbbm{1}_{R_AR_B},
\end{align}
where $\mathsf{Herm}(R_AAR_BB)$ is the set of all Hermitian operators in $\mathrm{L}(R_AAR_BB)$.
\end{corollary}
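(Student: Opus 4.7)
The plan is to combine Proposition~\ref{pro:geo-con-ent} with the semidefinite characterization of the weighted matrix geometric mean at dyadic exponents from \cite[Lemma~9]{FF21}. Starting from Eq.~\eqref{Eq:geom_cond_entr} and substituting $\alpha = 1+2^{-\ell}$, the prefactor $\tfrac{1}{\alpha-1}$ becomes $2^\ell$ and the exponent $1-\alpha$ becomes $-2^{-\ell}$; since $\log$ is monotone, minimizing $\log\|\cdot\|_\infty$ of a positive semidefinite operator is equivalent to minimizing a scalar $y>0$ subject to the operator inequality $\tr_{AB} G_{-2^{-\ell}}(\Gamma^{\mathcal{N}},\mathbbm{1}_{R_AA}\otimes\Gamma^{\mathcal{M}}) \leq y\, \mathbbm{1}_{R_AR_B}$. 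If $y^*$ denotes the optimum of the resulting joint minimization over $y$ and $\mathcal{M}$, then the target identity $\widehat{S}_\alpha[A|B]_{\mathcal{N}} = -2^\ell\log y^*$ follows.

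Next I would introduce an auxiliary Hermitian matrix $P$ dominating the geometric mean, i.e., $P \geq G_{-2^{-\ell}}(\Gamma^{\mathcal{N}},\mathbbm{1}_{R_AA}\otimes\Gamma^\mathcal{M})$, and replace the operator-norm constraint by the linear inequality $\tr_{AB} P \leq y\,\mathbbm{1}_{R_AR_B}$. To encode the $P$-constraint as a linear matrix inequality I would use two ingredients. The first is the standard Schur-complement characterization of the matrix geometric mean: the $2\times 2$ block matrix with diagonal entries $A,B$ and off-diagonal entry $Z$ is positive semidefinite if and only if $Z \leq A\#_{1/2}B$. The second is the recursive identity $A\#_{1/2}(A\#_s B) = A\#_{s/2}B$, together with the sign-flip identity $A(A\#_s B)^{-1}A = A\#_{-s}B$. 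Setting $Q_0 = \mathbbm{1}_{R_AA}\otimes\Gamma^{\mathcal{M}}_{R_BB}$ and iterating the block constraints on $(\Gamma^{\mathcal{N}}, Q_{i-1}, Q_i)$ for $i=1,\dots,\ell$ certifies $Q_i \leq \Gamma^{\mathcal{N}}\#_{2^{-i}}(\mathbbm{1}_{R_AA}\otimes\Gamma^\mathcal{M})$, and the terminal block constraint on $(P, Q_\ell, \Gamma^{\mathcal{N}})$, which is the Schur-complement inequality $P \geq \Gamma^{\mathcal{N}} Q_\ell^{-1}\Gamma^{\mathcal{N}}$, then yields $P \geq \Gamma^{\mathcal{N}}\#_{-2^{-\ell}}(\mathbbm{1}_{R_AA}\otimes\Gamma^\mathcal{M})$, exactly as needed.

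Finally I would append the CPTP constraints on $\mathcal{M}\in\mathrm{Q}(B',B)$, namely $\Gamma^{\mathcal{M}}_{R_BB}\geq 0$ and $\tr_B \Gamma^{\mathcal{M}} = \mathbbm{1}_{R_B}$, both affine in $\Gamma^\mathcal{M}$, and observe that the full collection of constraints is a linear matrix inequality in the variables $y, P, Q_0,\dots,Q_\ell, \Gamma^{\mathcal{M}}$. To close the argument I would check that at the optimum the inequalities are saturated: for each fixed $\mathcal{M}$ the canonical choices $Q_i = \Gamma^{\mathcal{N}}\#_{2^{-i}}(\mathbbm{1}_{R_AA}\otimes\Gamma^\mathcal{M})$ and $P = \Gamma^{\mathcal{N}}\#_{-2^{-\ell}}(\mathbbm{1}_{R_AA}\otimes\Gamma^\mathcal{M})$ are feasible and achieve equality, so joint minimization over $(y,\mathcal{M})$ reproduces the expression of Proposition~\ref{pro:geo-con-ent}. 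The main obstacle is bookkeeping: one must check that the direction of each Schur-complement inequality propagates so that ``$\leq$'' compounds correctly along the recursion, and that the inversion in the terminal block genuinely realises the negative exponent $-2^{-\ell}$ rather than a positive one; this is precisely the technical content imported from \cite[Lemma~9]{FF21}.
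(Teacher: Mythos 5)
Your proposal is correct and follows essentially the same route as the paper, which obtains the corollary directly by combining Proposition~\ref{pro:geo-con-ent} with the dyadic Schur-complement construction of \cite[Lemma 9]{FF21}; you simply unpack the content of that lemma rather than citing it as a black box. The only imprecision is your ``if and only if'' in the Schur-complement step (the geometric mean $A\#_{1/2}B$ is the \emph{maximal} Hermitian off-diagonal block making the matrix positive semidefinite, so block positivity implies $Z\le A\#_{1/2}B$ but not conversely), which is harmless here since only that implication and the feasibility of the canonical choice are used.
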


\subsection{Conditional min and max entropies}
The max-relative entropy between a quantum channel $\mathcal{N}_{A'\to A}$, when $|A'|<\infty$, and a completely positive map $\mathcal{M}\in{\rm CP}(A',A)$ is given as~\cite[Lemma 12]{WBHK02}
\begin{align}
    D_{\max}[\mathcal{N} \parallel \mathcal{M}] &:=\sup _{\rho_{RA'}\in \mathrm{St}(RA')} D_{\max}(\id_R \otimes \mathcal{N}_{A'\to A}(\rho_{RA'}) \parallel \id_R \otimes \mathcal{M}_{A'\to A}(\rho_{RA'}))\nonumber\\
    & = D_{\max}(\Gamma^{\mathcal{N}}_{RA}\Vert\Gamma^{\mathcal{M}}_{RA}),\label{eq:max_channel_div}
\end{align}
the max-relative entropy $D_{\max}(\cdot\Vert\cdot)$~\eqref{eq:max-rel} between the Choi operators $\Gamma^{\mathcal{N}}, \Gamma^{\mathcal{M}}$ of linear maps $\mathcal{N},\mathcal{M}$, respectively. We recall that, for any $\rho\in\mathrm{St}(A'), \sigma\in\mathrm{Pos}(A')$,  
\begin{equation}
   \widetilde{D}_{\infty}(\rho\Vert\sigma):= \lim_{\alpha\to\infty}\widetilde{D}_{\alpha}(\rho\Vert\sigma)=D_{\max}(\rho\Vert\sigma),
\end{equation}
where $\widetilde{D}_{\alpha}(\rho\Vert\sigma)$ is the sandwiched R\'enyi relative entropy~\eqref{eq:def_sre}. 

\begin{dfn}[Conditional min-entropy]\label{def:max-con}
    The conditional min-entropy ${S}_{\min}[A|B]_{\mathcal{N}}$ of an arbitrary bipartite quantum channel $\mathcal{N}_{A'B'\to AB}$ is defined as
    \begin{equation}
       {S}_{\min}[A|B]_{\mathcal{N}}\coloneqq \widetilde{S}_{\infty}[A|B]_{\mathcal{N}}= - \inf_{\mathcal{M}\in\mathrm{Q}(B',B)} {D}_{\max}[\mathcal{N}_{A'B'\to AB}\Vert \mathcal{R}_{A'\to A}\otimes\mathcal{M}_{B'\to B}], \label{eq:cond_min_entr}
    \end{equation}
    where $D_{\max}[\cdot\Vert\cdot]$ is the max-relative entropy between completely positive maps as in Eq.~\eqref{eq:max_channel_div}.
\end{dfn}

\begin{proposition}
    The conditional min-entropy of an arbitrary finite-dimensional quantum channel $\mathcal{N}_{A'B'\to AB}$ is
    \begin{equation}
        {S}_{\min}[A|B]_{\mathcal{N}}=-\log\min_{\mathcal{M}\in\mathrm{Q}(B', B)}\norm{\mathbbm{1}_{R_AA}\otimes(\Gamma^{\mathcal{M}}_{R_BB})^{-\frac{1}{2}}\Gamma^{\mathcal{N}}_{R_AAR_BB}\mathbbm{1}_{R_AA}\otimes(\Gamma^{\mathcal{M}}_{R_BB})^{-\frac{1}{2}}}_{\infty},
    \end{equation}
    which has a semidefinite representation and can be efficiently computed using a semidefinite programme.
\end{proposition}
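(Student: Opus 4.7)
The plan is to unpack the definition in Eq.~\eqref{eq:cond_min_entr} and then reduce the channel max-divergence to the Choi picture. Starting from
\begin{equation*}
{S}_{\min}[A|B]_{\mathcal{N}} = -\inf_{\mathcal{M}\in \mathrm{Q}(B',B)} D_{\max}[\mathcal{N}_{A'B'\to AB}\Vert \mathcal{R}_{A'\to A}\otimes \mathcal{M}_{B'\to B}],
\end{equation*}
I would first invoke Eq.~\eqref{eq:max_channel_div}, which states that $D_{\max}[\mathcal{K}\Vert\mathcal{L}] = D_{\max}(\Gamma^{\mathcal{K}}\Vert\Gamma^{\mathcal{L}})$ for any channel $\mathcal{K}$ and completely positive map $\mathcal{L}$. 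Thus the channel-level max divergence collapses to a state-level max divergence between Choi operators defined on $R_AAR_BB$.

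Next I would compute the Choi operator of the product map $\mathcal{R}_{A'\to A}\otimes\mathcal{M}_{B'\to B}$. By Definition~\ref{def_completely_depolarising_map}, $\Gamma^{\mathcal{R}}_{R_AA} = \mathbbm{1}_{R_A}\otimes\mathbbm{1}_A$, and Choi operators factor over tensor products, so
\begin{equation*}
\Gamma^{\mathcal{R}\otimes\mathcal{M}}_{R_AAR_BB} = \mathbbm{1}_{R_AA}\otimes \Gamma^{\mathcal{M}}_{R_BB}.
\end{equation*}
Substituting into the definition of max-relative entropy from Eq.~\eqref{eq:max-rel}, $D_{\max}(\rho\Vert\sigma) = \log\norm{\sigma^{-1/2}\rho\sigma^{-1/2}}_{\infty}$, yields
\begin{equation*}
D_{\max}[\mathcal{N}\Vert \mathcal{R}\otimes\mathcal{M}] = \log\norm{\mathbbm{1}_{R_AA}\otimes (\Gamma^{\mathcal{M}}_{R_BB})^{-1/2}\, \Gamma^{\mathcal{N}}_{R_AAR_BB}\, \mathbbm{1}_{R_AA}\otimes (\Gamma^{\mathcal{M}}_{R_BB})^{-1/2}}_{\infty}.
\end{equation*}
Taking the infimum over $\mathcal{M}\in\mathrm{Q}(B',B)$ and negating gives the claimed closed-form expression. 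A minor technical point is the support condition: one must either restrict to invertible $\Gamma^{\mathcal{M}}$ and take a limit, or interpret the inverse on the support and note that the infimum is unaffected (and is in fact a minimum since the feasible set of Choi operators of channels is compact).

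For the semidefinite representation, I would use the standard rewriting $D_{\max}(\rho\Vert\sigma) = \log\min\{\lambda : \rho \leq \lambda\sigma\}$. This turns the outer infimum into the SDP
\begin{align*}
\text{minimize} \quad & \lambda \\
\text{subject to} \quad & \Gamma^{\mathcal{N}}_{R_AAR_BB} \leq \lambda\, \mathbbm{1}_{R_AA}\otimes \Gamma^{\mathcal{M}}_{R_BB}, \\
& \Gamma^{\mathcal{M}}_{R_BB} \geq 0, \qquad \tr_{B}\Gamma^{\mathcal{M}}_{R_BB} = \mathbbm{1}_{R_B},
\end{align*}
with variables $\lambda\in\mathbb{R}$ and $\Gamma^{\mathcal{M}}_{R_BB}\in\mathrm{Herm}(R_BB)$; the last two constraints enforce that $\mathcal{M}$ is a valid channel via the Choi--Jamio{\l}kowski correspondence. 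Since both the objective and all constraints are linear/semidefinite in $(\lambda, \Gamma^{\mathcal{M}})$, this is an SDP, and ${S}_{\min}[A|B]_{\mathcal{N}} = -\log\lambda^{*}$.

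I do not expect a genuine obstacle here: the argument is essentially bookkeeping between channel and Choi pictures, with the only delicate step being the support condition on $\Gamma^{\mathcal{M}}$. Handling that cleanly (by an $\varepsilon$-regularisation $\Gamma^{\mathcal{M}}\to \Gamma^{\mathcal{M}}+\varepsilon\mathbbm{1}$ and taking $\varepsilon\to 0^{+}$, combined with lower semicontinuity of $D_{\max}$) is the one place to be careful.
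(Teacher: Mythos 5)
Your derivation of the closed-form expression is essentially identical to the paper's: both invoke Eq.~\eqref{eq:max_channel_div} to collapse the channel-level $D_{\max}$ to the Choi operators, note $\Gamma^{\mathcal{R}\otimes\mathcal{M}}_{R_AAR_BB}=\mathbbm{1}_{R_AA}\otimes\Gamma^{\mathcal{M}}_{R_BB}$, and apply $D_{\max}(\rho\Vert\sigma)=\log\norm{\sigma^{-1/2}\rho\sigma^{-1/2}}_{\infty}$; your remarks on the support condition are a reasonable extra precaution.

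One small but genuine flaw in your SDP argument: the program you write down, with constraint $\Gamma^{\mathcal{N}}\leq\lambda\,\mathbbm{1}_{R_AA}\otimes\Gamma^{\mathcal{M}}_{R_BB}$ and variables $(\lambda,\Gamma^{\mathcal{M}})$, is \emph{not} a semidefinite program as stated, because that constraint is bilinear in $\lambda$ and $\Gamma^{\mathcal{M}}$, not jointly linear. The standard repair --- which is exactly what the paper does in Corollary~\ref{cor:sdp_cond_min} --- is to absorb the scalar into the Choi variable, setting $\widetilde{M}:=\lambda\,\Gamma^{\mathcal{M}}$, so that the constraints become $\Gamma^{\mathcal{N}}\leq\mathbbm{1}_{R_AA}\otimes\widetilde{M}$, $\widetilde{M}\geq 0$, and $\tr_B\widetilde{M}=\tr(\widetilde{M})\,\mathbbm{1}_{R_B}/\abs{B'}$, with objective $\tr(\widetilde{M})/\abs{B'}=\lambda$. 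With that one-line change of variables your argument goes through.
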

\begin{proof}
    Following Eq.~\eqref{eq:max_channel_div} and Definition~\ref{def:max-con}, we have
    \begin{align}
         {S}_{\min}[A|B]_{\mathcal{N}} = & -\log\min_{\mathcal{M}\in\mathrm{Q}(B', B)}\norm{(\Gamma^{\mathcal{R}\otimes\mathcal{M}}_{R_AAR_BB})^{-\frac{1}{2}}\Gamma^{\mathcal{N}}_{R_AAR_BB}(\Gamma^{\mathcal{R}\otimes\mathcal{M}}_{R_AAR_BB})^{-\frac{1}{2}}}_{\infty}\nonumber\\
         =         & -\log\min_{\mathcal{M}\in\mathrm{Q}(B', B)}\norm{\mathbbm{1}_{R_AA}\otimes(\Gamma^{\mathcal{M}}_{R_BB})^{-\frac{1}{2}}\Gamma^{\mathcal{N}}_{R_AAR_BB}\mathbbm{1}_{R_AA}\otimes(\Gamma^{\mathcal{M}}_{R_BB})^{-\frac{1}{2}}}_{\infty}. 
    \end{align}
\end{proof}
The conditional min-entropy of a bipartite channel has a semidefinite representation, as shown in the following corollary.
\begin{corollary} \label{cor:sdp_cond_min}
The conditional min-entropy can be estimated via the following semidefinite program:
\begin{align}
  S_{\min}[A|B]_{\mathcal{N}} =  -\log \min \bigg\{ \frac{\tr (\widetilde{M})}{ \abs{B'}} \ | \  \Gamma ^{\mathcal{N}} \le \mathbbm{1}_{R_AA} \otimes \widetilde{M}_{R_BB}, \ \widetilde{M} \ge 0, \ \tr_B\widetilde{M}=\tr(\widetilde{M}) \frac{\mathbbm{1}_{R_B}} { \abs{B'}} \bigg\}. 
\end{align}    
\end{corollary}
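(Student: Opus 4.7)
The plan is to start from the closed-form expression just established in the previous proposition, namely that
\begin{equation*}
    S_{\min}[A|B]_{\mathcal{N}} \;=\; -\log \min_{\mathcal{M}\in\mathrm{Q}(B',B)} \bigl\|\mathbbm{1}_{R_AA}\otimes(\Gamma^{\mathcal{M}}_{R_BB})^{-\frac{1}{2}}\,\Gamma^{\mathcal{N}}_{R_AAR_BB}\,\mathbbm{1}_{R_AA}\otimes(\Gamma^{\mathcal{M}}_{R_BB})^{-\frac{1}{2}}\bigr\|_{\infty}\,,
\end{equation*}
and rewrite it using the operator-inequality form of the max-relative entropy recalled in Eq.~\eqref{eq:max-rel}. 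Concretely, the operator-norm expression on the right-hand side equals $\inf\{\lambda\ge 0 : \Gamma^{\mathcal{N}} \le \lambda\,\mathbbm{1}_{R_AA}\otimes \Gamma^{\mathcal{M}}_{R_BB}\}$, so the two nested minima collapse into a joint minimization of $\lambda$ over pairs $(\lambda,\Gamma^{\mathcal{M}})$ satisfying this inequality together with the Choi constraints $\Gamma^{\mathcal{M}}\ge 0$ and $\tr_{B}\Gamma^{\mathcal{M}}=\mathbbm{1}_{R_B}$.

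The next and key step is the change of variable $\widetilde{M}_{R_BB} := \lambda\,\Gamma^{\mathcal{M}}_{R_BB}$, which absorbs the scalar into the Choi operator and turns the bilinear objective into a linear one. I would then verify three equivalences: $(i)$ positivity $\widetilde{M}\ge 0$ is equivalent to $\lambda\ge 0$ together with $\Gamma^{\mathcal{M}}\ge 0$; $(ii)$ the partial-trace condition $\tr_{B}\Gamma^{\mathcal{M}}=\mathbbm{1}_{R_B}$ is equivalent to $\tr_{B}\widetilde{M}=\lambda\,\mathbbm{1}_{R_B}$, and since $\tr(\Gamma^{\mathcal{M}})=|B'|$ we have $\lambda=\tr(\widetilde{M})/|B'|$, which is precisely the stated constraint $\tr_{B}\widetilde{M}=\tr(\widetilde{M})\,\mathbbm{1}_{R_B}/|B'|$; $(iii)$ the operator inequality $\Gamma^{\mathcal{N}}\le \lambda\,\mathbbm{1}_{R_AA}\otimes\Gamma^{\mathcal{M}}_{R_BB}$ becomes exactly $\Gamma^{\mathcal{N}}\le \mathbbm{1}_{R_AA}\otimes\widetilde{M}_{R_BB}$. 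Under this substitution the objective $\lambda$ equals $\tr(\widetilde{M})/|B'|$, yielding the semidefinite program in the statement.

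Finally, I would observe that the map $\widetilde{M}\mapsto (\tr(\widetilde{M})/|B'|,\ \widetilde{M}\,|B'|/\tr(\widetilde{M}))$ (extended by continuity when $\tr(\widetilde{M})=0$, which is ruled out whenever $\Gamma^{\mathcal{N}}\neq 0$) gives the inverse substitution, so the two feasible sets are in bijective correspondence with matching objective values, establishing equality rather than merely an inequality. I do not foresee a genuine obstacle; the only point that warrants care is the handling of the boundary case where $\Gamma^{\mathcal{M}}_{R_BB}$ is not invertible, which I would dispatch by the standard $\varepsilon$-perturbation $\Gamma^{\mathcal{M}}\mapsto (1-\varepsilon)\Gamma^{\mathcal{M}}+\varepsilon\,\mathbbm{1}_{R_B}\otimes \pi_B$ and lower-semicontinuity of $D_{\max}$, ensuring the substitution argument is valid and that the infimum is attained, making the program a bona fide semidefinite program.
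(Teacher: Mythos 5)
Your proposal is correct and follows essentially the same route as the paper: it rewrites the max-relative channel divergence in its operator-inequality (SDP) form and performs the change of variable $\widetilde{M}=\lambda\,\Gamma^{\mathcal{M}}$, using $\tr(\Gamma^{\mathcal{M}})=|B'|$ to identify $\lambda=\tr(\widetilde{M})/|B'|$. The extra care you take with the bijection between feasible sets and the non-invertible boundary case is sound but not needed beyond what the paper's one-line substitution already conveys.
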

\begin{proof}
    As a consequence of Eq.~\eqref{eq:max_channel_div} and Eq.~\eqref{eq:cond_min_entr}, we have 

        \begin{align}
     S_{\min}[A|B]_{\mathcal{N}}&=-\inf_{\mathcal{M}\in Q(B',B)} D_{\max}(\Gamma^{\mathcal{N}}_{R_AAR_BB}\parallel \mathbbm{1}_{R_AA}\otimes \Gamma^{\mathcal{M}}_{R_BB}). \end{align}
    
From the semidefinite representation of the max-relative entropy as shown in Eq.~\eqref{eq:max-rel}, we have 
\begin{align}
     &\inf_{\mathcal{M}\in Q(B',B)}D_{\max}(\Gamma^{\mathcal{N}}_{R_AAR_BB}\parallel \mathbbm{1}_{R_AA}\otimes \Gamma^{\mathcal{M}}_{R_BB}) \nonumber \\
     &= \log \min \{y \  | \ \Gamma^{\mathcal{N}} \le y (\mathbbm{1}_{R_AA} \otimes \Gamma^{\mathcal{M}}_{R_BB}),\  \Gamma^{\mathcal{M}} \ge 0, \ \tr_B \Gamma^{\mathcal{M}} = \mathbbm{1}_{R_B}\}.
\end{align}
With this we prove the corollary by defining $\widetilde{M}:=y \Gamma^{\mathcal{M}}$ and noting $y= \tr(\widetilde{M})/ \abs{B'}$.
\end{proof}

\begin{figure}
    \centering
    \includegraphics[width=\columnwidth]{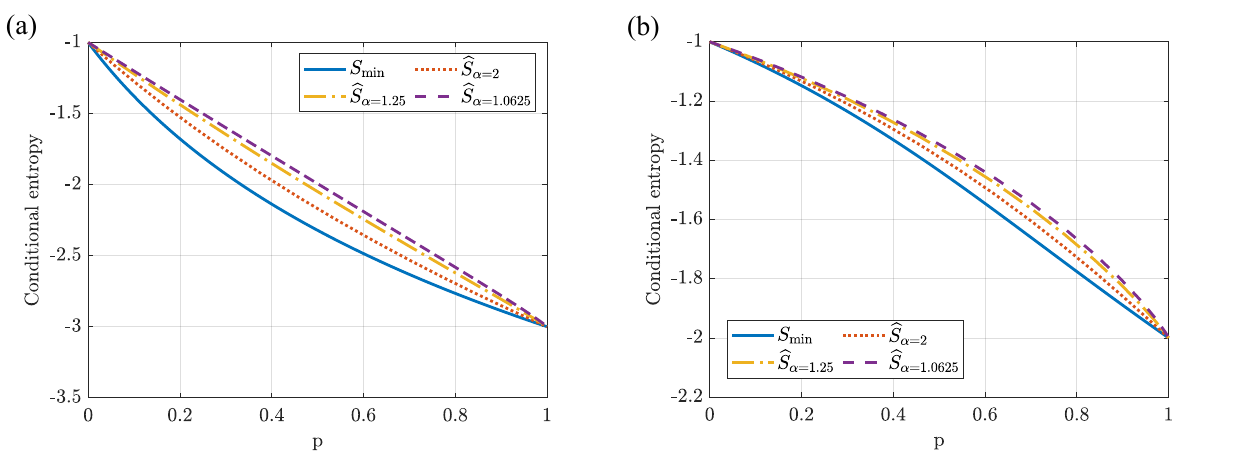}
    \caption{We plot the conditional entropies $\widehat{S}_{\alpha}[A|B]_{\mathcal{S}^{p,U}}$ and ${S}_{\min}[A|B]_{\mathcal{S}^{p,U}}$ of bipartite channels $\mathcal{S}^{p,U}$ for $U\in\{{\rm SWAP,CNOT}\}$ vs channels' mixture parameter $p\in [0,1]$. The dashed and dashed-dotted and dotted lines represent the geometric R\'enyi conditional entropies with $\alpha=1.0625$, $\alpha=1.25$, and $\alpha =2$, respectively, all of which are estimated using Corollary~\ref{cor:sdp_cond_geom}. The solid lines represent the conditional min-entropy $S_{\min}[A|B]_{\mathcal{S}^{p,U}}$ estimated using Corollary~\ref{cor:sdp_cond_min}.  Fig.(a) represents the mixture of SWAP and the identity gates (Eq.~\eqref{eq:sdp_SWAP}) and Fig.(b) represents the mixture of CNOT and the identity gates (Eq.~\eqref{eq:sdp_CNOT}).}
    \label{fig:SDP-conditional_entr}
\end{figure}

Using the semidefinite representations in Corollaries~\ref{cor:sdp_cond_geom} and ~\ref{cor:sdp_cond_min}, we estimate the geometric R\'enyi conditional entropy and the conditional min-entropy of the probabilistic mixtures of the two-qubit SWAP and identity gates and the two-qubit CNOT and identity gates:
\begin{align}
&\mathcal{S}^{p,\mathrm{SWAP}}_{A'B'\to AB}(\cdot) = p \mathrm{SWAP} (\cdot) \mathrm{SWAP}^\dag + (1-p)\id (\cdot) , \quad p\in[0,1], \label{eq:sdp_SWAP}  \\
&\mathcal{S}^{p,\mathrm{CNOT}}_{A'B'\to AB}(\cdot) = p \mathrm{CNOT} (\cdot) \mathrm{CNOT}^\dag + (1-p)\id(\cdot), \quad p\in[0,1]. \label{eq:sdp_CNOT}
\end{align}
In Fig.~\ref{fig:SDP-conditional_entr}, we plot the conditional entropies $\widehat{S}_{\alpha}[A|B]_{\mathcal{S}^{p,U}}$ and ${S}_{\min}[A|B]_{\mathcal{S}^{p,U}}$ of the channels in Eqs.~\eqref{eq:sdp_SWAP} and \eqref{eq:sdp_CNOT} with respect to the channels' mixing parameter $p$. For the case of the geometric R\'enyi conditional entropies $\widehat{S}_{\alpha}[A|B]_{\mathcal{S}^{p,U}}$ with $U\in\{{\rm SWAP,CNOT}\}$, we choose $\alpha=2$, $\alpha = 1.25$, and $\alpha=1.0625$, which correspond to $\ell=0$, $\ell=2$ and $\ell=4$, respectively, as in Corollary~\ref{cor:sdp_cond_geom}. Fig.~\ref{fig:SDP-conditional_entr}(a) plots for the mixture of SWAP and the identity (Eq.~\eqref{eq:sdp_SWAP}) and Fig.~\ref{fig:SDP-conditional_entr}(b) plots for the mixture of CNOT and the identity (Eq.~\eqref{eq:sdp_CNOT}). The plots show $S_{\min}[A|B]_{\mathcal{S}^{p,U}} \le \widehat{S}_{\alpha_1}[A|B]_{\mathcal{S}^{p,U}} \le \widehat{S}_{\alpha_2}[A|B]_{\mathcal{S}^{p,U}}$ for $\alpha_1 \ge \alpha_2$ in accordance with Eq.~\eqref{eq:gen-con-order}.

\begin{dfn}[Conditional max-entropy]\label{def:max-con_ent}
    The conditional max-entropy ${S}_{\max}[A|B]_{\mathcal{N}}$ of an arbitrary bipartite quantum channel $\mathcal{N}_{A'B'\to AB}$ is defined as
    \begin{equation}
       {S}_{\max}[A|B]_{\mathcal{N}} = - \inf_{\mathcal{M}\in\mathrm{Q}(B',B)} D_{\min}[\mathcal{N}_{A'B'\to AB}\Vert \mathcal{R}_{A'\to A}\otimes\mathcal{M}_{B'\to B}],
    \end{equation}
    where $D_{\min}[\mathcal{N} \parallel \mathcal{M}] :=\sup _{\psi_{RA'}\in \mathrm{St}(RA')} D_{\min}(\id_R \otimes \mathcal{N}_{A'\to A}(\psi_{RA'}) \parallel \id_R \otimes \mathcal{M}_{A'\to A}(\psi_{RA'}))$ is the min-relative entropy between $\mathcal{N},\mathcal{M}\in{\rm CP}(A',A)$.
\end{dfn}

\section{Bipartite processes: Signaling vs Semicausal} \label{sec:signaling_vs_semicausal}
A quantum channel $\mathcal{N}_{A'B' \to AB}$ is semicausal from $A'$ to $B$, if any local intervention on the input system $A'$ does not affect the output quantum state at $B$: signaling from $A'$ to $B$ is impossible. From now on, we refer to the set of such semicausal channels as $\rm{NS}_{A' \not \to B}$. A quantum channel $\mathcal{N}_{A'B' \to AB} \in \rm{NS}_{A' \not \to B}$ satisfies the following~\cite{BGNP01,PHHH06}:

\begin{align}
    \tr_{A} \circ \mathcal{N}_{A'B'\to AB}  = \tr_{A'} \otimes \widetilde{\mathcal{N}}_{B'\to B}. \label{Eq:semicausal}
\end{align}
See Fig.~\ref{fig:semicausal_channel}(a) for a pictorial depiction. Here $\widetilde{\mathcal{N}} \in Q(B',B)$ is the reduced quantum channel satisfying $\widetilde{\mathcal{N}}_{B'\to B}(\cdot)=\tr_{A}\circ \mathcal{N}_{A'B'\to AB}\left(\rho_{A'} \otimes (\cdot)_{B'}\right)$ for all $\rho \in \mathrm{St}(A')$. All semicausal channels in $\rm{NS}_{A' \not \to B}$ has the semilocalizable structure~\cite{Eggeling_2002}: $\mathcal{N}_{A'B'\to AB} = (\mathcal{N^A} \otimes \id_B) \circ (\id_{A'} \otimes \mathcal{N^B})$, with $\mathcal{N^A}\in Q(A'M, A)$ and $\mathcal{N^B}\in Q(B', MB)$. See Fig.~\ref{fig:semicausal_channel}(b). The reduced channel $\widetilde{N} \in Q(B',B)$ is then given as $\widetilde{\mathcal{N}}_{B'\to B}= \tr_M \circ \mathcal{N^B}_{MB'\to B}$. Note Eq.~\eqref{Eq:semicausal} can be rewritten in terms of $\mathcal{R}$ map:
\begin{align}
    \mathcal{R}_{A\to A} \circ \mathcal{N}_{A'B'\to AB}  = \mathcal{R}_{A'\to A} \otimes \widetilde{\mathcal{N}}_{B'\to B}, \qquad \ \forall \mathcal{N}_{A'B'\to AB} \in \rm{NS}_{A' \not \to B}. \label{Eq:semicausal2}
\end{align}
Clearly, for semicausal unitary channels in $\mathrm{NS}_{A'\not \to B}$, $\mathcal{N^A}$ and $\mathcal{N^B}$ have to be unitary channels~\cite{SW05,LB21}. In this scenario, we have the dimensional constraints $\abs{A'}\abs{M}= \abs{A}$, and $\abs{B'}=\abs{M}\abs{B}$. Particularly, when $A' \simeq A$ and $B' \simeq B$, we have $\abs{M}=1$, which indicates the unitary channels must be in the product form \cite{BGNP01, PHHH06}.

Let us now explore the properties of the semicausal quantum channel's conditional von Neumann entropy. We only present the case for $\mathcal{N}_{A'B'\to AB} \in \rm{NS}_{A' \not \to B}$ as the results for $\rm{NS}_{B' \not \to A}$ are similar. 

\begin{figure}
    \centering
    \includegraphics[width=\columnwidth]{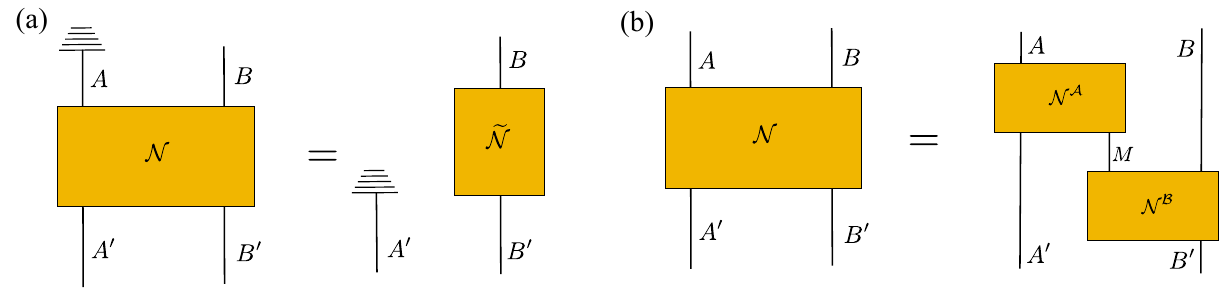}
    \caption{A semicausal channel $\mathcal{N}_{A'B'\to AB}$ with no-signaling constraint from $A'$ to $B$ ($\rm{NS}_{A' \not \to B}$). Time is flowing from down to up. In Fig.(a), the inverted earth symbol represents tracing out the relevant system. Fig.(a) represents Eq.~\eqref{Eq:semicausal}. Fig.(b) shows the semilocalizable structure of any channel in $\rm{NS}_{A' \not \to B}$.    } 
    \label{fig:semicausal_channel}
\end{figure}

\begin{theorem}~\label{Thm:semicausal}
    For a quantum channel $\mathcal{N}_{A'B'\to AB}$, its von Neumann conditional entropy $S[A|B]_{\mathcal{N}}$ satisfies the following identities,
    \begin{align}
        {S}[A|B]_{\mathcal{N}}  &  = - {D}[\mathcal{N}_{A'B' \to AB}\Vert\mathcal{R}_{A \to A}\circ \mathcal{N}_{A'B' \to AB}]\label{eq:ns-rel-ent}\\
       & = \inf_{\op{\psi} \in \mathrm{St}(RA'B')} {S}(A|RB)_{\mathcal{N}(\op{\psi})}, \label{eq:ns-con-ent}
    \end{align}
    if and only if $\mathcal{N}$ is no-signaling from $A'$ to $B$, i.e., $\mathcal{N}_{A'B'\to AB} \in \rm{NS}_{A' \not \to B}$.

    A bipartite quantum channel $\mathcal{N}_{A'B'\to AB}$ is signaling from $A'$ to $B$, i.e., there is causal influence from $A'$ to $B$ ($\mathcal{N}\notin{\rm NS}_{A'\not\to B})$, if and only if
    \begin{equation}\label{eq:ns_upperbound}
        S[A|B]_{\mathcal{N}}< \inf_{\op{\psi}\in{\rm St}(RA'B')}S(A|RB)_{\mathcal{N}(\op{\psi})}.
    \end{equation}
\end{theorem}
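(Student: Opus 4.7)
First observe that the two displayed equalities \eqref{eq:ns-rel-ent} and \eqref{eq:ns-con-ent} are in fact the same statement: by the definition of the channel divergence and the state identity $-S(A|B)_\omega = D(\omega_{AB}\Vert\mathbbm{1}_A\otimes\omega_B)$,
\begin{equation*}
D[\mathcal{N}\Vert\mathcal{R}_{A\to A}\circ\mathcal{N}]=\sup_{\phi}D(\mathcal{N}(\phi)\Vert\mathbbm{1}_A\otimes\tr_A\mathcal{N}(\phi))=-\inf_{\phi}S(A|RB)_{\mathcal{N}(\phi)}.
\end{equation*}
Moreover, the second half of the theorem (signaling $\Leftrightarrow$ strict ``$<$'') is just the contrapositive of the first half, since Proposition~\ref{prop:vN-upperbound} already provides the universal bound $S[A|B]_{\mathcal{N}}\le \inf_\phi S(A|RB)_{\mathcal{N}(\phi)}$. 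So the entire content reduces to proving: \emph{equality holds if and only if $\mathcal{N}\in\mathrm{NS}_{A'\not\to B}$}.

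\emph{Sufficiency (semicausal $\Rightarrow$ equality).} Assume $\mathcal{N}\in\mathrm{NS}_{A'\not\to B}$. The no-signaling identity \eqref{Eq:semicausal2} reads $\mathcal{R}_{A\to A}\circ\mathcal{N}=\mathcal{R}_{A'\to A}\otimes\widetilde{\mathcal{N}}_{B'\to B}$ for a bona-fide channel $\widetilde{\mathcal{N}}$. Using $\mathcal{M}=\widetilde{\mathcal{N}}$ as a trial channel in the infimum defining $S[A|B]_\mathcal{N}$ gives $-S[A|B]_\mathcal{N}\le D[\mathcal{N}\Vert\mathcal{R}_{A\to A}\circ\mathcal{N}]$. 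The reverse bound actually holds for every bipartite $\mathcal{N}$: for any $\mathcal{M}$ and pure $\phi_{RA'B'}$, the pointwise inequality $D(\mathcal{N}(\phi)\Vert\mathbbm{1}_A\otimes\mathcal{M}(\phi_{RB'}))\ge D(\mathcal{N}(\phi)\Vert\mathbbm{1}_A\otimes\tr_A\mathcal{N}(\phi))$ from \cite[Lemma 1]{DMHB13} passes through $\sup_\phi$ and then $\inf_\mathcal{M}$. Combining the two bounds produces \eqref{eq:ns-rel-ent}, hence \eqref{eq:ns-con-ent}.

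\emph{Necessity (equality $\Rightarrow$ semicausal).} This is the main obstacle. I would argue by contrapositive. Suppose the equality holds and let $\mathcal{M}^*$ attain $\inf_\mathcal{M} D[\mathcal{N}\Vert\mathcal{R}\otimes\mathcal{M}]$ (existence by compactness in the finite-dimensional setting; otherwise work with a minimizing sequence). The algebraic decomposition
\begin{equation*}
D(\mathcal{N}(\phi)\Vert\mathbbm{1}_A\otimes\mathcal{M}^*(\phi_{RB'}))=-S(A|RB)_{\mathcal{N}(\phi)}+D(\tr_A\mathcal{N}(\phi)\Vert\mathcal{M}^*(\phi_{RB'}))
\end{equation*}
combined with the equality hypothesis forces the nonnegative penalty $\Delta(\phi):=D(\tr_A\mathcal{N}(\phi)\Vert\mathcal{M}^*(\phi_{RB'}))$ to satisfy $\Delta(\phi)\le \sup_{\phi'}R(\phi')-R(\phi)$ for every pure $\phi_{RA'B'}$, where $R(\phi)=-S(A|RB)_{\mathcal{N}(\phi)}$; in particular $\Delta$ vanishes at every maximizer of $R$. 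The technical heart is to promote this pointwise saturation to global agreement $\mathcal{M}^*(\phi_{RB'})=\tr_A\mathcal{N}(\phi_{RA'B'})$ on a family of pure $\phi$ dense among purifications of arbitrary $\rho_{A'B'}$; by faithfulness of $D$ and continuity this would give $\tr_A\circ\mathcal{N}=\tr_{A'}\otimes\mathcal{M}^*$, i.e.\ semicausality with $\widetilde{\mathcal{N}}=\mathcal{M}^*$. My plan is to enlarge the reference $R$ and vary over purifications so that the maximizer manifold of $R$ becomes rich enough to pin $\mathcal{M}^*$ down uniquely, via a saddle-point/perturbative argument exploiting strict convexity of the Umegaki relative entropy together with convexity of the objective in $\mathcal{M}$. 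A cleaner alternative is to invoke the semilocalizable structure theorem \cite{Eggeling_2002} for elements of $\mathrm{NS}_{A'\not\to B}$ and argue that any infimizer $\mathcal{M}^*$ must itself take the semilocalizable form dictated by a semicausal $\mathcal{N}$. The main obstacle I expect is executing this last step cleanly without appealing to additional structural hypotheses on $\mathcal{N}$.
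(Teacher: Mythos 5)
Your reduction of the theorem to the single claim ``equality holds iff $\mathcal{N}\in\mathrm{NS}_{A'\not\to B}$'' is sound: Eqs.~\eqref{eq:ns-rel-ent} and \eqref{eq:ns-con-ent} coincide because $\mathcal{R}_{A\to A}\circ\mathcal{N}(\psi_{RA'B'})=\mathbbm{1}_A\otimes\tr_A\mathcal{N}(\psi_{RA'B'})$, and the second half of the statement does follow from the first half together with the universal bound of Proposition~\ref{prop:vN-upperbound}. Your sufficiency argument is also correct and is essentially the paper's: the paper expands $D(\mathcal{N}(\psi)\Vert\mathbbm{1}_A\otimes\mathcal{M}(\psi_{RB'}))$ via the chain rule $D(\alpha_{XY}\Vert\mathbbm{1}_X\otimes\beta_Y)=D(\alpha_{XY}\Vert\mathbbm{1}_X\otimes\alpha_Y)+D(\alpha_Y\Vert\beta_Y)$, uses semicausality to write $\tr_A\mathcal{N}(\psi)=(\id_R\otimes\widetilde{\mathcal{N}})(\psi_{RB'})$ for a single fixed channel $\widetilde{\mathcal{N}}$, and then kills the penalty term uniformly in $\psi$ with the trial choice $\mathcal{M}=\widetilde{\mathcal{N}}$, sandwiching with the max--min inequality; your route via Proposition~\ref{prop:vN-upperbound} plus the same trial channel is equivalent and slightly shorter.

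The gap is exactly where you say it is: the necessity direction is not proved. Your plan --- take an optimizer $\mathcal{M}^*$, observe that the penalty $\Delta(\phi)=D(\tr_A\mathcal{N}(\phi)\Vert\mathcal{M}^*(\phi_{RB'}))$ must vanish at every maximizer of $R(\phi)=-S(A|RB)_{\mathcal{N}(\phi)}$, and then propagate this to all inputs --- only pins down $\mathcal{M}^*$ on the (possibly very thin) set of maximizers of $R$; since a $\phi$ at which $\Delta(\phi)>0$ may sit well below $\sup_{\phi'}R(\phi')$, pointwise positivity of $\Delta$ somewhere does not force the supremum of $R+\Delta$ to exceed that of $R$, and no argument you give closes this. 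You should be aware, however, that the paper does not close it either: its proof consists of the forward chain of equalities (which already invokes semicausality in its third line, where $\rho_{RB}$ is identified with $(\id_R\otimes\widetilde{\mathcal{N}})(\psi_{RB'})$) followed by the one-sentence assertion that saturation of the chain yields the ``if and only if.'' So you have correctly isolated the step that genuinely needs an argument; as a proof of the full biconditional your proposal is incomplete, but the missing step is the same one the published proof leaves implicit rather than a misreading of the problem.
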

\begin{proof}
    Let us first go through the following chain of equations.
    \begin{align}
        &-S[A|B]_{\mathcal{N}} \numeq{1} \inf_{\mathcal{M}\in Q(B',B)} D[\mathcal{N}_{A'B'\to AB}\parallel \mathcal{R}_{A' \to A} \otimes \mathcal{M}_{B'\to B}] \nonumber \\
        &\numeq{2}\inf_{\mathcal{M}\in Q(B',B)} \sup _{\psi_{RA'B'} \in \mathrm{St}(RA'B')} D\left(\rho_{RAB} \parallel \mathbbm{1}_{A} \otimes (\id_R \otimes \mathcal{M})(\psi_{RB'})\right)\nonumber \\
        &\numeq{3}\inf_{\mathcal{M}\in Q(B',B)} \sup _{\psi_{RA'B'} \in \mathrm{St}(RA'B')} \bigg(D(\rho_{RAB} \parallel \mathbbm{1}_{A} \otimes \rho_{RB}) +  D\left((\id_R \otimes \widetilde{\mathcal{N}})(\psi_{RB'}) \parallel (\id_R \otimes \mathcal{M})(\psi_{RB'})\right) \bigg ) \nonumber \\
        &\numeq{4}\sup _{\psi_{RA'B'} \in \mathrm{St}(RA'B')} D\left(\rho_{RAB} \parallel \mathbbm{1}_{A} \otimes \rho_{RB}\right)  \nonumber \\ 
        &\numeq{5}D\left[\mathcal{N}_{A'B'\to AB} \parallel \mathcal{R}_{A\to A} \circ \mathcal{N}_{A'B'\to AB}\right]. \label{Eq:thm:semicausal}
    \end{align}
    Here, $\rho_{RAB}:=(\id_R \otimes \mathcal{N}_{A'B'\to AB} ) (\psi_{RA'B'})$. In the third equality, we have used $D(\alpha_{XY}\parallel \mathbbm{1}_X \otimes \beta_Y) = D(\alpha_{XY}\parallel \mathbbm{1}_X \otimes \alpha_Y) + D(\alpha_Y \parallel \beta_Y)$ for all $\alpha \in \mathrm{St}(XY)$, $\beta \in \mathrm{St}(Y)$. Further we have invoked Eq.~\eqref{Eq:semicausal} to write $\rho_{RB}=(\id_R \otimes \widetilde{\mathcal{N}})(\psi_{RB'})$. To obtain the fourth equality we note:

    \begin{align}
       \inf_{\mathcal{M}\in Q(B',B)}  D\left((\id_R \otimes \widetilde{\mathcal{N}})(\psi_{RB'}) \parallel (\id_R \otimes \mathcal{M})(\psi_{RB'})\right) = 0, \qquad \forall \psi_{RA'B'} \in \mathrm{St}(RA'B'). \label{Eq:semicausal_infimum_channel}
    \end{align}
    The infimum is achieved when $\widetilde{\mathcal{N}} = \mathcal{M}$. Now consider the following inequalities:

    \begin{align}
       &\inf_{\mathcal{M}\in Q(B',B)} \sup _{\psi_{RA'B'} \in \mathrm{St}(RA'B')} \bigg(D(\rho_{RAB} \parallel \mathbbm{1}_{A} \otimes \rho_{RB}) +  D\left((\id_R \otimes \widetilde{\mathcal{N}})(\psi_{RB'}) \parallel (\id_R \otimes \mathcal{M})(\psi_{RB'})\right) \bigg ) \nonumber \\
       &\le \sup _{\psi_{RA'B'} \in \mathrm{St}(RA'B')} D\left(\rho_{RAB} \parallel \mathbbm{1}_{A} \otimes \rho_{RB}\right) + \inf_{\mathcal{M}\in Q(B',B)} \sup _{\psi_{RA'B'} \in \mathrm{St}(RA'B')} D\left((\id_R \otimes \widetilde{\mathcal{N}})(\psi_{RB'}) \parallel (\id_R \otimes \mathcal{M})(\psi_{RB'})\right) \nonumber \\
       &=\sup _{\psi_{RA'B'} \in \mathrm{St}(RA'B')} D\left(\rho_{RAB} \parallel \mathbbm{1}_{A} \otimes \rho_{RB}\right). \label{Eq:semicausal_supremum}
    \end{align}
    The first inequality is because we take the supremum over the individual summands. As the first summand is independent of the channel $\mathcal{M}$, there is no infimum over it. In the second equation, we use Eq.~\eqref{Eq:semicausal_infimum_channel}. To show the inequality in the other direction, we first use the max–min inequality and then follow similar steps:

    \begin{align}
     &\inf_{\mathcal{M}\in Q(B',B)} \sup _{\psi_{RA'B'} \in \mathrm{St}(RA'B')} \bigg(D(\rho_{RAB} \parallel \mathbbm{1}_{A} \otimes \rho_{RB}) +  D\left((\id_R \otimes \widetilde{\mathcal{N}})(\psi_{RB'}) \parallel (\id_R \otimes \mathcal{M})(\psi_{RB'})\right) \bigg ) \nonumber \\
     &\ge \sup _{\psi_{RA'B'} \in \mathrm{St}(RA'B')} \inf_{\mathcal{M}\in Q(B',B)}  \bigg(D(\rho_{RAB} \parallel \mathbbm{1}_{A} \otimes \rho_{RB}) +  D\left((\id_R \otimes \widetilde{\mathcal{N}})(\psi_{RB'}) \parallel (\id_R \otimes \mathcal{M})(\psi_{RB'})\right) \bigg ) \nonumber \\
     &=\sup _{\psi_{RA'B'} \in \mathrm{St}(RA'B')} D\left(\rho_{RAB} \parallel \mathbbm{1}_{A} \otimes \rho_{RB}\right). \label{Eq:semicausal_supremum2}
    \end{align}
    Thus Eqs.~\eqref{Eq:semicausal_supremum},~\eqref{Eq:semicausal_supremum2} ensure the equality sign. Finally, we obtain Eq.~\eqref{Eq:thm:semicausal} from the definition of the channel divergence. Rearranging the sign, we prove Eq.~\eqref{eq:ns-rel-ent}. 
    
    For Eq.~\eqref{eq:ns-con-ent}, we start from the fourth equality in the above chain of equations and note that $D(\rho_{RAB}\parallel \mathbbm{1}_{A} \otimes \rho_{RB})=-S(A|RB)_{\rho}$. Since the chain of equations is guaranteed to be saturated, i.e., equalities hold, we have an ``if and only if" statement. We now apply Proposition~\ref{prop:vN-upperbound} to arrive at inequality~\eqref{eq:ns_upperbound}.
\end{proof}

The following corollary is a direct consequence of the above theorem.
\begin{corollary} \label{Cor:semicausal}
          All semicausal channels $\mathcal{N}_{A'B'\to AB} \in{\rm{NS}_{A'\not \to B}}$ satisfy  ${S}[A|B]_{\mathcal{N}} \geq -\log\abs{A}$. The inequality is saturated for the semicausal unitary channels. The presence of causal influence from $A'\to B$ for a bipartite channel $\mathcal{N}_{A'B'\to AB}$ is guaranteed if $S[A|B]_\mathcal{N} < -\log {\abs{A}}$.
\end{corollary}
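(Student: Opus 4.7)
The plan is to leverage Theorem~\ref{Thm:semicausal}, which for any $\mathcal{N}\in{\rm NS}_{A'\not\to B}$ provides the identity $S[A|B]_{\mathcal{N}} = \inf_{\op{\psi}\in{\rm St}(RA'B')} S(A|RB)_{\mathcal{N}(\op{\psi})}$, together with the standard dimensional lower bound $S(A|RB)_{\omega}\geq -\log|A|$ on the conditional von Neumann entropy of any tripartite state $\omega_{ARB}$. Combining these two observations immediately yields $S[A|B]_{\mathcal{N}}\geq -\log|A|$ for every semicausal $\mathcal{N}\in{\rm NS}_{A'\not\to B}$. The last sentence of the corollary --- that $S[A|B]_{\mathcal{N}}<-\log|A|$ certifies causal influence from $A'$ to $B$ --- is then just the contrapositive of this bound, so it requires no separate argument.

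For the saturation claim, I would produce a pure input realizing $S(A|RB)=-\log|A|$. Let $\mathcal{U}_{A'B'\to AB}$ be a semicausal unitary channel, and note that a bipartite unitary necessarily satisfies $|A'||B'|=|A||B|$. I choose $R\simeq A'B'$ and feed in $\psi_{RA'B'}=\op{\Phi}_{R:A'B'}$, the maximally entangled state across the bipartition $R\,|\,A'B'$. Then $\omega_{RAB}:=\mathcal{U}(\op{\Phi})$ is pure; its marginal on $R$ equals $\mathbbm{1}_R/|R|$, so by Schmidt duality $\omega_{AB}$ is maximally mixed and in particular $\omega_A=\mathbbm{1}_A/|A|$. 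For this pure output,
\begin{equation*}
    S(A|RB)_{\omega}=S(ARB)_{\omega}-S(RB)_{\omega}=0-S(A)_{\omega}=-\log|A|,
\end{equation*}
using $S(RB)_{\omega}=S(A)_{\omega}$ for pure $\omega_{RAB}$. Substituting into the formula from Theorem~\ref{Thm:semicausal} and combining with the lower bound from the previous paragraph pins $S[A|B]_{\mathcal{U}}$ at exactly $-\log|A|$.

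The corollary is essentially an unpacking of machinery already established in Theorem~\ref{Thm:semicausal}, so I do not anticipate any serious obstacle. The only point meriting a brief sanity check is the saturation claim, where the fact that a bipartite unitary can drive the $A$-marginal to $\mathbbm{1}_A/|A|$ follows automatically from the dimension constraint $|A'||B'|=|A||B|$ together with the maximally entangled choice of input.
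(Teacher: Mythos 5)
Your proof is correct and follows essentially the same route as the paper: combine the identity $S[A|B]_{\mathcal{N}}=\inf_{\psi}S(A|RB)_{\mathcal{N}(\psi)}$ from Theorem~\ref{Thm:semicausal} with the dimensional bound $S(A|RB)\geq-\log|A|$, then exhibit an explicit saturating input for unitary channels. The only (immaterial) difference is the choice of that input: the paper pulls a maximally entangled state across the cut $A:RB$ back through $\mathcal{U}^{\dagger}$, whereas you push the maximally entangled state $\Phi_{R:A'B'}$ forward and deduce $\omega_A=\pi_A$ from purity and the dimension constraint $|A'||B'|=|A||B|$ — both yield $S(A|RB)=-\log|A|$.
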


\begin{proof}
    This is due to Eq.~\eqref{eq:ns-con-ent} in Theorem~\ref{Thm:semicausal} and the lower bound $S(A|BR)_\rho \ge -\log \abs{A}$. The inequality is saturated for semicausal unitary channels, $\mathcal{U}_{A'B'\to AB} \in \rm{NS}_{A'\not \to B}$, with the initial state 
    \begin{align}
        \psi_{RA'B'}= (\id_R \otimes \mathcal{U^\dagger}_{AB\to A'B'})(\Phi_{RAB}),
    \end{align}
    where $\Phi_{RAB}$ is a maximally entangled state in the bipartition $A:BR$.
\end{proof}

A generalization of  Corollary~\ref{Cor:semicausal} is possible using a tele-covariant channel $\mathcal{N}_{A'B'\to AB}$, which is $k$-extendible with respect to the systems $A'$ and $A$ \cite{KDWW21}. Such a channel, in the limiting case of $k\to \infty$, becomes a one-way LOCC channel with classical communication from $B'\to A$, i.e., the channel $\mathcal{N}_{A'B'\to AB}$ is semicausal: $\mathcal{N}_{A'B'\to AB} \in \mathrm{NS}_{A'\to B}$. We first define a $k$-extendible channel below and then propose our generalization in Lemma~\ref{Lem:k_extendible}.

\begin{dfn}\label{Defn:k_extendibility}
    \cite{KDWW21} A bipartite channel $\mathcal{N}_{A'B'\to AB}$ is $k$-extendible for a natural number $k\geq 2$ with respect to the system $A', A$, if there exists a quantum channel $\mathcal{M} \in {\rm Q}(A'_1A'_2\cdots A'_kB', A_1A_2\cdots A_kB)$, where $A' {\simeq} A'_1{\simeq} A'_2 {\simeq \cdots \simeq} A'_k$ and $A {\simeq} A_1{\simeq} A_2 {\simeq \cdots \simeq} A_k$, having the following properties:
    \begin{enumerate}
        \item The channel $\mathcal{M}$ is permutation covariant with respect to the $A$ system, i.e., for all $g\in S_k$, where $S_k$ is the permutation group, we have
        \begin{align}
            \mathcal{M}_{A'_1A'_2\cdots A'_kB' \to A_1A_2\cdots A_kB}\circ \mathcal{U}^g_{A'_1A'_2\cdots A'_k}  = \mathcal{U}^g_{A_1A_2\cdots A_k} \circ \mathcal{M}_{A'_1A'_2\cdots A'_kB' \to A_1A_2\cdots A_kB}.
        \end{align}
        Here $\mathcal{U}^g$ is the unitary representation of the element $g\in S_k$.

        \item The channel $\mathcal{N}_{A'B'\to AB}$ is the marginal of $\mathcal{M}$ in the sense

        \begin{align}
           \mathcal{N}_{A'B'\to AB} \otimes  \Tr_{A'_2\cdots A'_k}= \Tr_{A_2\cdots A_k} \circ \mathcal{M}_{A'_1A'_2\cdots A'_kB' \to A_1A_2\cdots A_kB}. 
        \end{align}
    \end{enumerate}
             We call $\mathcal{M}$ a $k$-extension of $\mathcal{N}$. Note that comparing with Eq.~\eqref{Eq:semicausal}, we observe the following no-signaling constraints hold: input systems $A'_i$ cannot signal to $A_j$ for $i\neq j$. 
\end{dfn}

\begin{lemma}\label{Lem:k_extendible}
    [Conditional entropy for $k$-extendible tele-covariant channel] If a tele-covariant bipartite channel $\mathcal{N}_{A'B'\to AB}$ is $k$-extendible with respect to systems $A',A$, as in Definition~\ref{Defn:k_extendibility}, then its von Neumann conditional entropy is lower bounded as
    \begin{align}
        S[A|B]_{\mathcal{N}} \ge - \min\bigg\{\frac{2}{k}\log \abs{B} - \log \abs{A}, \frac{1}{k}\log \abs{B'} \abs{B}  - \log \abs{A'}\bigg\}.
    \end{align}
\end{lemma}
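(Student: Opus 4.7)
The plan is to reduce the channel-level inequality to a state-level one via Theorem~\ref{thm:vN-choi-state} and then exploit a symmetric-extension structure. Since $\mathcal{N}$ is tele-covariant, Theorem~\ref{thm:vN-choi-state} gives
\[
S[A|B]_\mathcal{N}\;=\;S(R_AA\mid R_BB)_{\Phi^\mathcal{N}}-\log|A'|,
\]
so the task reduces to lower-bounding the conditional entropy $S(R_AA\mid R_BB)_{\Phi^\mathcal{N}}$ of the Choi state.

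Next, I would invoke $k$-extendibility (Definition~\ref{Defn:k_extendibility}) to lift $\Phi^\mathcal{N}$ to the Choi state
\[
\Phi^\mathcal{M}_{Z_1\cdots Z_kR_BB}\;:=\;\mathcal{M}\!\left(\bigotimes_{i=1}^{k}\Phi_{R_{A_i}A'_i}\otimes\Phi_{R_BB'}\right),\quad Z_i:=R_{A_i}A_i.
\]
Permutation covariance of $\mathcal{M}$ together with the symmetry of the inputs $\Phi_{R_{A_i}A'_i}$ makes $\Phi^\mathcal{M}$ permutation-symmetric in the pairs $Z_i$, and the marginal condition of Definition~\ref{Defn:k_extendibility} enforces $\operatorname{tr}_{Z_2\cdots Z_k}\Phi^\mathcal{M}=\Phi^\mathcal{N}_{R_AAR_BB}$. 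Without loss of generality, by twirling $\mathcal{M}$ against the tele-covariance groups on each $(A'_i,A_i)$ and on $(B',B)$, I may further assume $\mathcal{M}$ itself is tele-covariant on each such subsystem, while preserving permutation symmetry and the marginal relation; this renders every single-party output and reference marginal of $\Phi^\mathcal{M}$ maximally mixed.

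The chain rule for conditional entropy together with strong subadditivity and permutation symmetry then yields
\[
S(R_AA\mid R_BB)_{\Phi^\mathcal{N}}\;=\;S(Z_1\mid R_BB)_{\Phi^\mathcal{M}}\;\ge\;\tfrac{1}{k}\,S(Z_1\cdots Z_k\mid R_BB)_{\Phi^\mathcal{M}}.
\]
I would then lower-bound the right-hand side in two complementary ways, producing the two branches inside the $\min$. Route (a): nonnegativity $S(Z_1\cdots Z_kR_BB)\ge 0$ combined with $S(R_BB)\le\log|B'||B|$ gives $\ge -\tfrac{1}{k}\log|B'||B|$, which after subtracting $\log|A'|$ yields one branch of the bound. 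Route (b): apply Araki-Lieb to $S(Z_1\cdots Z_kR_BB)$ and exploit the tele-covariance-forced maximal mixing of $\Phi^\mathcal{M}_{A_1\cdots A_kB}$ and of $\Phi^\mathcal{M}_{R_{A_1}\cdots R_{A_k}R_B}$, which pins $S(A_1\cdots A_kB)$ and $S(R_{A_1}\cdots R_{A_k}R_B)$ to their log-dimensional values; the resulting entropy differences collapse to terms in $\log|A|$, $\log|A'|$, $\log|B|$, and $\log|B'|$, and this route delivers the complementary branch involving $\tfrac{2}{k}\log|B|$ and $\log|A|$.

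The main technical obstacle is route (b): orchestrating Araki-Lieb together with strong subadditivity on the extended state so that the $k$-dependent cross-terms cancel and the dimensional bookkeeping collapses to precisely the stated constants $\tfrac{2}{k}\log|B|$ and $\log|A|$ requires careful use of both the tele-covariance of $\mathcal{M}$ (to fix marginal entropies) and its permutation symmetry (to average joint entropies across the $k$ copies). Once the two routes are in hand, substituting back into the identity of Theorem~\ref{thm:vN-choi-state} and taking the maximum of the two lower bounds (equivalently, the negation of the minimum) completes the proof of the lemma.
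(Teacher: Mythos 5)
Your skeleton coincides with the paper's: both reduce to the Choi state of the $k$-extension $\mathcal{M}$ via Theorem~\ref{thm:vN-choi-state} and the marginal/permutation conditions of Definition~\ref{Defn:k_extendibility}, and both obtain the $1/k$ factor from iterated strong subadditivity, $S(Z_1\cdots Z_k|R_BB)\le \sum_i S(Z_i|R_BB)=kS(R_AA|R_BB)_{\Phi^{\mathcal N}}$. The only real divergence is where the dimensional lower bound comes from: the paper applies its channel-level Proposition~\ref{prop:lowerbound_cond_entropy} (together with the entropy bounds of \cite{GW21}) directly to the extension channel, getting $S[A_1\cdots A_k|B]_{\mathcal M}\ge -\log\min\{|A|^k|B|^2,|A'|^k|B'||B|\}$ in one stroke and then passing to the Choi state, whereas you work entirely at the state level. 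Your route (a) is complete and reproduces the $|A'|^k|B'||B|$ branch. Your route (b), which you flag as "the main technical obstacle," does in fact close, and more easily than you suggest: Araki--Lieb gives $S(Z_1\cdots Z_kR_BB)\ge S(R_{A_1}\cdots R_{A_k}R_B)-S(A_1\cdots A_kB)\ge k\log|A'|+\log|B'|-k\log|A|-\log|B|$, where the first term is pinned because the reference marginal of the Choi state of \emph{any} trace-preserving map is maximally mixed (no tele-covariance of $\mathcal{M}$ needed) and the second term only needs the trivial dimension upper bound (your claim that $\Phi^{\mathcal M}_{A_1\cdots A_kB}$ is maximally mixed is neither needed nor justified); subtracting $S(R_BB)\le\log|B'||B|$ yields $S(Z_1\cdots Z_k|R_BB)\ge k\log|A'|-k\log|A|-2\log|B|$, which is exactly the other branch after dividing by $k$ and subtracting $\log|A'|$. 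Two caveats: the twirling step by which you "may assume $\mathcal{M}$ is tele-covariant" is unnecessary everywhere in the argument and should be dropped rather than defended, and you should state explicitly that permutation covariance plus the symmetric input forces $\Phi^{\mathcal M}_{Z_iR_BB}=\Phi^{\mathcal N}_{R_AAR_BB}$ for every $i$, which is what makes the symmetrization step legitimate. With route (b) written out as above, your proof is complete and matches the paper's bound term for term.
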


\begin{proof}
   We use the following chain of equations, where we use $\Pi_{i=1}^k \abs{A_i}=\abs{A}^k$ and $\Pi_{i=1}^k \abs{A'_i}=\abs{A'}^k$. Also, note $\mathcal{M}$ is a $k$-extension of $\mathcal{N}$.
    \begin{align}
        -\log \min\{\abs{A}^k\abs{B}^2, \abs{A'}^k\abs{B'}\abs{B}\} &\numleq{1} S[A_1A_2\cdots A_k|B]_{\mathcal{M}} \nonumber\\&\numleq{2} S(R_{A_1}A_1 R_{A_2}A_2 \cdots R_{A_k}A_k| R_B B)_{\Phi^{\mathcal{M}}} - k \log \abs{A'} \nonumber \\
        &\numleq{3}  \sum_{i=1}^k S(R_{A_i}A_i|R_BB)_{\Phi^{\mathcal{M}}} - k \log\abs{A'} \nonumber \\
        &\numeq{4} k S(R_AA|R_BB)_{\Phi^\mathcal{N}} - k \log\abs{A'} \nonumber 
        \\
        &\numeq{5}kS[A|B]_{\mathcal{N}}.
    \end{align}
     The first inequality is due to the general lower bound as in Proposition~\ref{prop:lowerbound_cond_entropy}. The second inequality is due to the fact that, in general, the initial maximally entangled state is suboptimal to evaluate the conditional entropy of a channel, see Eq.~\eqref{eq:thm-proof-con}. The third inequality is due to the subadditivity of the von Neumann entropy. In the fourth and fifth equality, we use the fact that $\mathcal{N}$ is a tele-covariant channel. Finally, the statement in the lemma follows by rearranging the l.h.s. and r.h.s. of the above inequality, 
\end{proof}

\subsection{Generalized conditional NS-entropy}
\textit{Generalized conditional NS-entropy}.--- Let $S^{\not\to}[A|B]_{\mathcal{N}}$ denote the function in the r.h.s. of Eq.~\eqref{eq:ns-rel-ent} for an arbitrary bipartite quantum channel $\mathcal{N}_{A'B'\to AB}$. This is the candidate A in Eq.~\eqref{eq:CA}. Its generalized form can be denoted as $\mathbf{S}^{\not\to}[A|B]_{\mathcal{N}}$,
\begin{equation}
    \mathbf{S}^{\not\to}[A|B]_{\mathcal{N}}:=\mathbf{D}[\mathcal{N}_{A'B'\to AB}\Vert\mathcal{R}_{A\to A}\circ\mathcal{N}_{A'B'\to AB}].
\end{equation}
$\mathbf{S}^{\not\to}[A|B]_{\mathcal{N}}$ satisfies all the properties in Theorem~\ref{thm:axioms}. Therefore, it can be deemed a valid conditional entropy of a bipartite channel. We refer to $\mathbf{S}^{\not\to}[A|B]_{\mathcal{N}}$ as the generalized conditional NS-entropy, where NS denotes connection to no-signaling. However, it is not our first preference for the definition of the conditional entropy when compared to $\mathbf{S}[A|B]_{\mathcal{N}}$.

For the von Neumann conditional NS-entropy $S^{\not\to}[A|B]_{\mathcal{N}}$ of a bipartite channel $\mathcal{N}_{A'B'\to AB}$, we have $S^{\not\to}[A|B]_{\mathcal{N}}=\inf_{\psi\in{\rm St}(RA'B')}S(A|RB)_{\mathcal{N}(\psi)}$, where it suffices to optimize over pure states $\psi_{RA'B'}$; the dimensional bounds
\begin{equation}
    -\log|A|\leq S^{\not\to}[A|B]_{\mathcal{N}}\leq \log|A|
\end{equation}
resemble that of the von Neumann conditional entropy of quantum states and the von Neumann entropy of a channel $\mathcal{N}_{A'B'\to A}=\tr_B\circ\mathcal{N}_{A'B'\to AB}$, see Table~\ref{tab:bounds}. 

Let $\mathcal{V}^{\mathcal{N}}_{A'B'\to ABE}$ denote an isometric extension channel of $\mathcal{N}_{A'B'\to AB}$, i.e.,
\begin{equation}
    \tr_E\circ\mathcal{V}^\mathcal{N}_{A'B'\to ABE}=\mathcal{N}_{A'B'\to AB}.
\end{equation}
A complementary channel $\mathcal{N}^c_{A'B'\to E}$ of a bipartite channel $\mathcal{N}_{A'B'\to AB}$ is given as
\begin{equation}
    \mathcal{N}^c_{A'B'\to E}=\tr_{AB}\circ\mathcal{V}^{\mathcal{N}}_{A'B'\to ABE}.
\end{equation}

 The von Neumann conditional NS-entropy obeys the following relation,
\begin{equation}
   {S}^{\not\to}[A|B]_{\mathcal{N}}+\sup_{\rho\in{\rm St}(A'B')}S(A|E)_{\mathcal{V}^{\mathcal{N}}(\rho)}=0.
\end{equation}
It follows because of the duality relation of the von Neumann conditional entropy, $S(A|B)_{\phi}+S(A|E)_{\phi}=0$ for any pure state $\phi_{ABE}$. Then,
\begin{equation}
  \inf_{\op{\psi}\in{\rm St}(RA'B')}S(A|RB)_{\mathcal{N}(\op{\psi})} =\inf_{\rho\in{\rm St}(A'B')}-S(A|E)_{\mathcal{V}^{\mathcal{N}}(\rho)}=-\sup_{\rho\in{\rm St}(A'B')}S(A|E)_{\mathcal{V}^{\mathcal{N}}(\rho)}.
\end{equation}

\section{Operational meaning of conditional entropies} \label{sec:operational_meaning}
\subsection{Quantum hypothesis testing for channel discrimination}
The generalized conditional entropy $\mathbf{S}[A|B]_{\mathcal{N}}$ of a bipartite channel $\mathcal{N}_{A'B'\to AB}$ has an operational meaning in terms of the channel discrimination task~\cite{CMW16,FFRS20,FF21,BKSD23} based on the associated generalized divergence $\mathbf{D}[\cdot\Vert\cdot]$ between the given channel $\mathcal{N}_{A'B'\to AB}$ (null hypothesis) and some channel $\widetilde{\mathcal{R}}_{A'\to A}\otimes\mathcal{M}_{B'\to B}$ (alternate hypothesis) (cf.~\cite[Appendix D]{FF21}), where $\widetilde{\mathcal{R}}_{A'\to A}:=\frac{1}{|A|}\mathcal{R}_{A'\to A}$ is the completely mixing or depolarizing channel,
\begin{align}
    -\mathbf{S}[A|B]_{\mathcal{N}}&=\inf_{\mathcal{M}\in{\rm Q}(B',B)}\mathbf{D}[\mathcal{N}_{A'B'\to AB}\Vert\mathcal{R}_{A'\to A}\otimes\mathcal{M}_{B'\to B}] \nonumber\\
    &=\inf_{\mathcal{M}\in{\rm Q}(B',B)}\mathbf{D}[\mathcal{N}_{A'B'\to AB}\Vert|A|\widetilde{\mathcal{R}}_{A'\to A}\otimes\mathcal{M}_{B'\to B}].
\end{align}
Hence, we have $ \mathbf{S}[A|B]_{\mathcal{N}} =\log|A|- \inf_{\mathcal{M}\in{\rm Q}(B',B)}\mathbf{D}[\mathcal{N}_{A'B'\to AB}\Vert\widetilde{\mathcal{R}}_{A'\to A}\otimes\mathcal{M}_{B'\to B}]$ if $\mathbf{D}(\rho\Vert c\sigma )=-\log c+\mathbf{D}(\rho\Vert\sigma)$ for all scalars $c\in\mathbb{C}$.

\subsubsection{Parallel uses of channels for the channel discrimination task}
To elucidate, we consider a game with two different kind of adversarial challenges $\texttt{Quest}_x[\mathcal{N}]$, where $x\in\{a,b\}$. The game consists of a challenger, Charlie, and an adversary, Ash. Ash provides a black box $\mathcal{B}$, which is a bipartite quantum channel $\mathcal{B}^{y}$ ($y\in\{0,x\}$), to Charlie and the goal of Charlie is to discriminate if the given box indeed is the desired bipartite channel. In each challenge $x$, black box has only two possibilities $\{H_0,H_1^x\}$.
\begin{enumerate}
    \item[$H_0$] Null hypothesis: The desired bipartite channel is $\mathcal{B}_{A'B'\to AB}^0=\mathcal{N}_{A'B'\to AB}$.
    \item[$H_1^x$] Alternate hypothesis: The bipartite channel is, 
    \begin{itemize}
        \item[(a)] if $x=a$, $\mathcal{B}_{A'B'\to AB}^a=\widetilde{\mathcal{R}}_{A'\to A}\otimes\mathcal{M}_{B'\to B}$, where a channel $\mathcal{M}_{B'\to B}$ is chosen in the best  possible way to help adversary mimic $\mathcal{N}_{A'B'\to AB}$ given that the completely mixing channel $\widetilde{\mathcal{R}}_{A'\to A}$ is fixed.
        \item[(b)] if $x=b$, $\mathcal{B}_{A'B'\to AB}^b=\widetilde{\mathcal{R}}_{A\to A}\circ\mathcal{N}_{A'B'\to AB}$, that is, the output $A$ of the channel $\mathcal{N}_{A'B'\to AB}$ is postprocessed with the completely mixing channel.
    \end{itemize}
\end{enumerate}

In asymmetric quantum hypothesis testing, the goal is to minimize type-II error probability $\beta_{\varepsilon}$ given an upper bound on type-I error probability is $\varepsilon$~\cite{WR12}. The optimal rate for the single use of black box under $\texttt{Quest}_x[\mathcal{N}]$ is given as, for $\varepsilon\in(0,1)$,
\begin{align}
    D^{\varepsilon}_h[\mathcal{N}_{A'B'\to AB}\Vert\mathcal{B}^x_{A'B'\to AB}],\quad 
\end{align}
where $D^{\varepsilon}_h[\mathcal{N}\Vert\mathcal{B}]$ is the $\varepsilon$-hypothesis testing channel divergence:
\begin{equation}
D^{\varepsilon}_h[\mathcal{N}\Vert\mathcal{B}]:=\sup_{\psi\in{\rm St}(R'A'B')}D^{\varepsilon}_h(\mathcal{N}(\psi_{R'A'B'})\Vert\mathcal{B}(\psi_{R'A'B'})),
\end{equation}
obtained from the $\varepsilon$-hypothesis testing relative entropy $D^\varepsilon_h(\rho\Vert\sigma):=-\log\beta_{\varepsilon}(\rho\Vert\sigma)$ and $\beta_{\varepsilon}(\rho\Vert\sigma):=\inf_{0\leq \lambda\leq \mathbbm{1}}\{\tr[\lambda \sigma]: \tr[\lambda\rho]\geq 1-\varepsilon\}$. 

Let us assume that in these challenges $\texttt{Quest}^{\parallel}_x[\mathcal{N}]$, only parallel uses of the channels and product input states between different uses of channels are allowed. That is, if $n$ copies of the channel $\mathcal{B}_{A'B'\to AB}$ is provided in $\texttt{Quest}^{\parallel}_x[\mathcal{N}]$, then the challenger is allowed to input states of the form $\psi_{RA'B'}^{\otimes n}$. In the $i.i.d.$ asymmetric hypothesis testing setting, applying quantum Stein's lemma~\cite{HP91,ON00}, for all $\varepsilon\in(0,1)$,
\begin{equation}
  \lim_{n\to\infty}\frac{1}{n} D^{\varepsilon}_h(\mathcal{N}^{\otimes n}(\psi^{\otimes n}_{RA'B'})\Vert{\mathcal{B}_x}^{\otimes n}(\psi^{\otimes n}_{R'A'B'}))= D(\mathcal{N}(\psi_{R'A'B'})\Vert\mathcal{B}^x(\psi_{R'A'B'})),
\end{equation}
which implies,
\begin{equation}
     \lim_{n\to\infty}\sup_{\psi\in \mathrm{St}(R'A'B')}\frac{1}{n}D^{\varepsilon}_h(\mathcal{N}^{\otimes n}(\psi^{\otimes n}_{R'A'B'})\Vert{\mathcal{B}_x}^{\otimes n}(\psi^{\otimes n}_{R'A'B'}))= D[\mathcal{N}_{A'B'\to AB}\Vert\mathcal{B}^x_{A'B'\to AB}].
\end{equation}

\begin{observation}
    In the asymmetric hypothesis testing protocol for the challenge $\texttt{Quest}^{\parallel}_x[\mathcal{N}]$, $x\in\{a,b\}$, the optimal achievable and strong converse rate $R^{\parallel}_{x}[\mathcal{N}]$ is, 
\begin{enumerate}
    \item[(a)] when $x=a$,
    \begin{equation}
       R^{\parallel}_{a}[\mathcal{N}]= \inf_{\mathcal{M}\in{\rm Q}(B',B)}D[\mathcal{N}_{A'B'\to AB}\Vert\widetilde{R}_{A'\to A}\otimes\mathcal{M}_{B'\to B}]= - S[A|B]_{\mathcal{N}}+\log|A|.
    \end{equation}
     \item[(b)] when $x=b$,
    \begin{equation}
       R^{\parallel}_{b}[\mathcal{N}]= D[\mathcal{N}_{A'B'\to AB}\Vert\widetilde{R}_{A\to A}\circ\mathcal{N}_{A'B'\to AB}] = - S^{\not\to}[A|B]_{\mathcal{N}}+\log|A|.
            \end{equation}
        \end{enumerate}     
  
 If a quantum channel $\mathcal{N}_{A'B'\to AB}\in \rm{NS}_{A' \not \to B}$, then $ R^{\parallel}_{a}[\mathcal{N}]= R^{\parallel}_{b}[\mathcal{N}]$ always holds. In contrast, for the bipartite channels $\mathcal{N}_{A'B'\to AB}\notin \rm{NS}_{A' \not \to B}$ we have $R^{\parallel}_{a}[\mathcal{N}]> R^{\parallel}_{b}[\mathcal{N}]$.  
\end{observation}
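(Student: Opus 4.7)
The plan is to reduce each challenge to a quantum Stein's lemma problem for channel discrimination, match the resulting single-letter rates to the two flavours of generalized conditional entropy via a scalar shift, and then invoke Theorem~\ref{Thm:semicausal} together with Proposition~\ref{prop:vN-upperbound} to obtain the signaling comparison.

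First, I would treat each challenge for a fixed alternate hypothesis $\mathcal{B}^x$. By the regularized form of quantum Stein's lemma already stated in the excerpt, the optimal parallel asymptotic rate with i.i.d.\ probe states is $\sup_{\psi}D(\mathcal{N}(\psi)\Vert\mathcal{B}^x(\psi))=D[\mathcal{N}\Vert\mathcal{B}^x]$. In challenge (a) the adversary is allowed to pick the channel $\mathcal{M}_{B'\to B}$ so as to minimize discriminability, giving $R^{\parallel}_{a}[\mathcal{N}]=\inf_{\mathcal{M}\in\mathrm{Q}(B',B)}D[\mathcal{N}\Vert\widetilde{\mathcal{R}}_{A'\to A}\otimes\mathcal{M}_{B'\to B}]$; in challenge (b) the alternate channel is fixed and $R^{\parallel}_{b}[\mathcal{N}]=D[\mathcal{N}\Vert\widetilde{\mathcal{R}}_{A\to A}\circ\mathcal{N}]$.

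Next, I would use the scalar identity $D(\rho\Vert c\sigma)=D(\rho\Vert\sigma)-\log c$ with $c=1/|A|$ to shift between $\widetilde{\mathcal{R}}$ and $\mathcal{R}$, inheriting the identity at the channel level:
\begin{align}
D[\mathcal{N}\Vert\widetilde{\mathcal{R}}_{A'\to A}\otimes\mathcal{M}_{B'\to B}] &= \log|A|+D[\mathcal{N}\Vert\mathcal{R}_{A'\to A}\otimes\mathcal{M}_{B'\to B}], \\
D[\mathcal{N}\Vert\widetilde{\mathcal{R}}_{A\to A}\circ\mathcal{N}] &= \log|A|+D[\mathcal{N}\Vert\mathcal{R}_{A\to A}\circ\mathcal{N}].
\end{align}
Taking the infimum over $\mathcal{M}$ in the first line and comparing with the definition~\eqref{eq:def-gen-con-ent} of $S[A|B]_{\mathcal{N}}$ specialized to the Umegaki relative entropy yields $R^{\parallel}_{a}[\mathcal{N}]=-S[A|B]_{\mathcal{N}}+\log|A|$, while the second line together with the definition of $S^{\not\to}[A|B]_{\mathcal{N}}$ gives $R^{\parallel}_{b}[\mathcal{N}]=-S^{\not\to}[A|B]_{\mathcal{N}}+\log|A|$.

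Finally, for the qualitative comparison, Proposition~\ref{prop:vN-upperbound} establishes the generic inequality $S[A|B]_{\mathcal{N}}\le S^{\not\to}[A|B]_{\mathcal{N}}$, and Theorem~\ref{Thm:semicausal} upgrades it to equality precisely when $\mathcal{N}\in\mathrm{NS}_{A'\not\to B}$; contrapositively, signaling from $A'$ to $B$ forces $S[A|B]_{\mathcal{N}}<S^{\not\to}[A|B]_{\mathcal{N}}$, i.e.\ $R^{\parallel}_{a}[\mathcal{N}]>R^{\parallel}_{b}[\mathcal{N}]$, which is exactly the dichotomy in the observation.

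The hard part is not the algebra but justifying that the adversary's single-shot tensor product strategy $(\widetilde{\mathcal{R}}\otimes\mathcal{M})^{\otimes n}$ is optimal among all product channels compatible with the parallel-use restriction, and that the supremum over i.i.d.\ probe states $\psi^{\otimes n}$ single-letterizes cleanly to $D[\mathcal{N}\Vert\mathcal{B}^x]$. Both are absorbed by the regularized Stein's lemma statement quoted in the excerpt, so the entire argument reduces to the three clean steps above, with the only genuine content being the strict separation step that invokes Theorem~\ref{Thm:semicausal}.
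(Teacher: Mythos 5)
Your proposal is correct and follows essentially the same route the paper takes: single-letterize each challenge via the quoted i.i.d.\ quantum Stein's lemma, convert between $\widetilde{\mathcal{R}}$ and $\mathcal{R}$ with the scalar shift $D(\rho\Vert c\sigma)=D(\rho\Vert\sigma)-\log c$ to recover $-S[A|B]_{\mathcal{N}}+\log|A|$ and $-S^{\not\to}[A|B]_{\mathcal{N}}+\log|A|$, and obtain the equality/strict-inequality dichotomy from Proposition~\ref{prop:vN-upperbound} together with Theorem~\ref{Thm:semicausal}. No substantive differences from the paper's argument.
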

\begin{remark}
 If the challenger is allowed to select the channel $\mathcal{B}^0_{A'B'\to AB}$ of their choice with a limitation that $|A'|=|A|=|B'|=|B|=d$ is fixed by the adversary in the challenges $\texttt{Quest}^{\parallel}_x[\mathcal{N}]$, $x\in\{a,b\}$. Then, there exists an optimal strategy, considering the null hypothesis to be the two-qu$d$it SWAP channel $\mathcal{S}_{A'B'\to AB}$, such that
    \begin{equation}
        R^{\parallel}_{a}[\mathcal{S}]- R^{\parallel}_{b}[\mathcal{S}]=2\log d,
    \end{equation}
    which can be made arbitrarily large by choosing $d$ accordingly.
\end{remark}

\subsubsection{Adaptive strategy for channel discrimination} \label{subsec:adaptive_stratgegy}
\textit{Asymmetric hypothesis testing for channel discrimination using adaptive strategy}.--- Let $\{\Lambda,\mathcal{A}\}$ denote an arbitrary adaptive protocol for discriminating quantum channels $\mathcal{N}^1,\mathcal{N}^2\in {\rm Q}(A',A)$, where $\Lambda$ denotes the measurement operator corresponding to the null hypothesis at the end of the protocol and $\mathcal{A}$ denotes the set of (adaptive) channels applied between the uses of $\mathcal{N}^1,\mathcal{N}^2$; see \cite[Section 3]{WBHK02} for the definitions of the protocol and also for the (slightly modified) notations used here. Let $\alpha_{n}\left(\{\Lambda,\mathcal{A}\}\right)$ and $\beta_{n}\left(\{\Lambda,\mathcal{A}\}\right)$ be the corresponding type-I and type-II error probabilities, respectively, where $n\in\mathbb{N}$ denotes the number of channel uses for $\mathcal{N}^1$ or $\mathcal{N}^2$ in the protocol. The asymmetric distiguishability, in Stein's setting, is defined as~\cite{WBHK02} 
 \begin{equation}
     \zeta_{n} \left(\varepsilon, \mathcal{N}^1,\mathcal{N}^2\right) := \sup_{\Lambda,\mathcal{A}} \bigg \{-\frac{1}{n}\log\beta_{n}\left(\{\Lambda,\mathcal{A}\}\right) \big | \alpha_{n}\left(\{\Lambda,\mathcal{A}\}\right) \leq \varepsilon\bigg\},
 \end{equation}
 and its asymptotic quantities are defined as
 \begin{equation}
     \overbar{\zeta} \left(\varepsilon, \mathcal{N}^1,\mathcal{N}^2\right) = \underset{n \rightarrow \infty}{\lim \sup} ~\zeta_{n} \left(\varepsilon, \mathcal{N}^1,\mathcal{N}^2\right),\  \text{and} \ \ubar{\zeta} \left(\varepsilon, \mathcal{N}^1,\mathcal{N}^2\right) = \underset{n \rightarrow \infty}{ \lim \inf}~ \zeta_{n} \left(\varepsilon, \mathcal{N}^1,\mathcal{N}^2\right).
 \end{equation}
 Let $\varepsilon \in (0,1)$ and $\alpha \in(1,2]$, then we have the following strong converse bound on $\overbar{\zeta} \left(\varepsilon, \mathcal{N}^1,\mathcal{N}^2\right)$ ~\cite[Theorem 49, Appendix D]{FF21} (also see ~\cite[Corollary 18]{WBHK02}),
 \begin{equation}\label{eq:channel_discrimination_bounds}
    D\left(\mathcal{N}^1\Vert \mathcal{N}^2\right) \leq \ubar{\zeta} \left(\varepsilon, \mathcal{N}^1,\mathcal{N}^2\right) \leq \overbar{\zeta} \left(\varepsilon, \mathcal{N}^1,\mathcal{N}^2\right) \leq\widehat{D}_{\alpha}\left(\mathcal{N}^1\Vert \mathcal{N}^2\right) \leq D_{\rm max} \left(\mathcal{N}^1\Vert \mathcal{N}^2\right).
 \end{equation}
The input state for an adaptive discrimination protocol can be arbitrary. The above bound works for any pair of quantum channels $\mathcal{N}^1, \mathcal{N}^2\in{\rm Q}(A',A)$.

 Applying inequalities~\eqref{eq:channel_discrimination_bounds}~and~\eqref{eq:gen-con-order}, for any bipartite quantum channel $\mathcal{N}_{A'B'\to AB}$,
 \begin{align}
     -S[A|B]_{\mathcal{N}}+\log |A| & \leq \inf_{\mathcal{M}\in{\rm Q}(B',B)}\ubar{\zeta} \left(\varepsilon, \mathcal{N}_{A'B'\to AB},\widetilde{\mathcal{R}}_{A'\to A}\otimes\mathcal{M}_{B'\to B}\right)\\
     &\leq \inf_{\mathcal{M}\in{\rm Q}(B',B)} \overbar{\zeta} \left(\varepsilon, \mathcal{N}_{A'B'\to AB},\widetilde{\mathcal{R}}_{A'\to A}\otimes\mathcal{M}_{B'\to B}\right)\\
     &\leq  -\widehat{S}_{\alpha}[A|B]_{\mathcal{N}}+\log|A| \quad\forall \alpha\in (1,2] \\
     &\leq  -S_{\rm min}[A|B]_{\mathcal{N}} +\log |A|.
 \end{align}

\subsection{Conditional quantum channel merging}
The conditional quantum entropy finds an information-theoretic meaning in quantum state merging~\cite{HOW05,Ber09}. Consider a tripartite  pure state $\psi_{RAB}$ where $A$ and $B$ are accessible to Ash and Bob, respectively, and neither one has access to the reference system $R$; Ash and Bob are correlated. The task of state merging is for Ash to transfer her share of state to Bob provided that the classical communication is free. The optimal entanglement cost in asymptotic limit for this task is the von Neumann conditional entropy $S(A|B)_{\psi}$~\cite{HOW06}. If $ S(A|B)_{\psi}$ is negative, then the state merging is possible and $-S(A|B)_{\psi}$ number of maximally entangled states can be left afterwards for future use~\cite{HOW06}. 

The von Neumman entropy $S[\mathcal{M}]$ of a point-to-point quantum channel $\mathcal{M}_{A'\to A}$ is equal to the quantum channel merging capacity~\cite{GW21}. The goal of quantum channel merging is to transfer the state of output $A$ to Eve possessing isomteric extension $E$ of the channel $\mathcal{M}$~\cite{GW21}, where $\mathcal{V}^{\mathcal{M}}_{A'\to AE}$ is an isometric channel extension of $\mathcal{M}$. It is a variant of a quantum state merging protocol. The entropy of a channel is equal to its completely bounded entropy $S_{\rm CB, min}$~\cite[Eq.~(1.1)]{DJKR06}. We note that the entropy of a channel is related to its private randomness capacity $P_{\rm random}(\mathcal{M})$ (up to a constant factor of log of the output dimension)~\cite[Theorem 6]{YHW19}. The private randomness capacity is the optimal rate, in an usual asymptotic $i.i.d.$ uses of channels, at which the receiver (Bob) accessing output $A$ of the channel $\mathcal{M}_{A'\to A}$ can extract \textit{private randomness}, against eavesdropper accessing isomteric extension $E$ of the channel ${\mathcal{N}}_{A'\to A}$, when the sender (Ash) sends states through the channel $\mathcal{N}$ (see~\cite{YHW19} for proper definition). In particular, we have the following complimentary relation between $P_{\rm random}(\mathcal{M})$ and $S[\mathcal{M}]$,
\begin{equation}\label{eq:trade-off-p-e}
    P_{\rm random}(\mathcal{M})+S[\mathcal{M}]=\log|A|.
\end{equation}
 The sum of the private randomness capacity $P_{\rm random}(\mathcal{M})$ and the entropy of a channel $S[A]_{\mathcal{M}}$ always equals to the logarithm of the output dimension of the channel. The private randomness capacity is maximum, $2\log|A|$, if and only if the channel is unitary and minimum, $0$, if and only if the channel is completely mixing (depolarizing).

We consider a variant of the quantum channel merging protocol, which we call conditional quantum channel merging protocol.

\textit{Conditional quantum channel merging.}--- Consider $\mathcal{V}^{\mathcal{N}}_{A'B'\rightarrow ABE}$ is an isometric channel extension of a quantum bipartite channel $\mathcal{N}_{A'B' \rightarrow AB}$. The isometric channel is a broadcasting channel with two sources (inputs) $A'$, $B'$ and three receivers (outputs) $A$ (Ash), $B$ (Bob), and $E$ (Eve). The task of Ash is to merge her share $A$ of the channel with Eve's share $E$ using forward local operation and classical communication (LOCC), also called one-way LOCC, and entanglement. Let $\psi_{RA'B'}$ be the state generated by the source, which is distributed to Ash, Bob, and Eve via isometric channel extension $\mathcal{V}^{\mathcal{N}}_{A'B'\rightarrow ABE}$. At any stage we can replace the systems under consideration with $n$  copies of itself and this will be equivalent to starting with $\psi_{RA'^{n}B'^{n}}$ which is transmitted via isometric channel extension $\left(\mathcal{V}^{\mathcal{N}}_{A'B'\rightarrow ABE} \right)^{\otimes n}$. 

Let $n,l,k \in \mathbb{N}$ and $\epsilon \in [0,1]$. Consider an one-way LOCC operation $\mathcal{L}_{A_{n}E_{n}A''_{0}E''_{0} \rightarrow \overbar{A}_{E}\overbar{E}^{n} A''_{1}E''_{1}}$ from Ash to Eve. In an $(n, M, \varepsilon)$-conditional channel merging protocol, where $M=l/k$, we demand
 \begin{align}
     \sup_{\psi_{R A'^{n} B'^{n}}} \bigg |\bigg| \mathrm{id}^{\otimes n}_{AE \rightarrow \overbar{A}_{E}\overbar{E}} & \circ \left(\mathcal{V}^{\mathcal{N}}_{A'B'\rightarrow ABE} \right)^{\otimes n} \left(\psi_{RA'^{n}B'^{n}} \otimes  \Phi^{l}_{A''_{1}E''_{1}}\right) \nonumber\\
     &-\mathcal{L}_{A_{n}E_{n}A''_{0}E''_{0} \rightarrow \overbar{A}_{E}\overbar{E}^{n} A''_{1}E''_{1} } \left(\left(\left(\mathcal{V}^{\mathcal{N}}_{A'B'\rightarrow ABE} \right)^{\otimes n} \psi_{RA'^{n}B'^{n}}\right) \otimes \Phi^{k}_{A''_{0}E''_{0}} \right)  \bigg|\bigg|_{1} \leq \varepsilon,
 \end{align}
 where $ \Phi^{l}_{A''_{1}E''_{1}}$ and  $\Phi^{k}_{A''_{0}E''_{0}}$ are maximally entangled states of Schmidt rank $l$ and $k$, respectively, and system pairs $A''_{0}, A'_{1}$ and $E''_{0}, E''_{1}$ are auxiliary systems of Ash and Eve, respectively. The entanglement gain of the protocol is $\log M := \log l -\log k $ and $R := \frac{1}{n}{\log M}$ is called the entanglement (gain) rate of the protocol. $R$ is called an achievable rate if there exists a protocol $ (n, 2^{n(R-\delta)}, \varepsilon) ~~\forall~~\epsilon\in (0,1],\delta > 0$ for sufficiently large $n$. The conditional quantum channel merging capacity $C_{\rm gain}(\mathcal{N}_{A'B' \rightarrow AB})$ is defined as 
 \begin{equation}
     C_{\rm gain}[A|B]_{\mathcal{N}} \equiv \sup \{R \vert ~R\textrm{ is achievable for conditional channel merging on $\mathcal{N}$}\}.
 \end{equation}

\begin{proposition}
    The conditional quantum channel merging capacity of an arbitrary bipartite channel $\mathcal{N}_{A'B'\to AB}$ is upper bounded (weak converse) by the 
    \begin{equation}\label{eq:converse-merging}
         C_{\rm gain}[A|B]_{\mathcal{N}} \leq \lim_{n\to\infty}\frac{1}{n}S^{\not\to}[A^n|B^n]_{\mathcal{N}^{\otimes n}},
    \end{equation}
    where $\frac{1}{n}S^{\not\to}[A^n|B^n]_{\mathcal{N}^{\otimes n}}$ is the regularized von Neumann conditional NS-entropy, i.e., $S^{\not\to}[A^n|B^n]_{\mathcal{N}^{\otimes n}}$ is the von Neumann conditional NS-entropy of the $n$ parallel uses of a bipartite channel $\mathcal{N}$. 
\end{proposition}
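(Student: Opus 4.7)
The plan is to apply the weak converse of the Horodecki--Oppenheim--Winter pure-state merging theorem to the pure state on $RA^{n}B^{n}E^{n}$ produced by the isometric extension $\mathcal{V}^{\mathcal{N}}$ used $n$ times on an adversarially chosen input, and then re-express the resulting entropic bound via the pure-state duality of the conditional von Neumann entropy that already underlies $S^{\not\to}$. The $\sup_{\psi}$ appearing in the definition of an $(n,M,\varepsilon)$ conditional-channel-merging protocol forces the converse to hold uniformly over inputs, which is precisely what converts the single-instance state-merging bound into an infimum reconstructing $S^{\not\to}[A^{n}|B^{n}]_{\mathcal{N}^{\otimes n}}$.

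First, I would fix $n$ and take an $(n,M,\varepsilon_{n})$ protocol with $M=l/k$ such that $R-\delta \le \tfrac{1}{n}\log M$. For each pure input $\psi_{RA'^{n}B'^{n}}$, the extended output $|\Phi_{\psi}\rangle := \mathcal{V}^{\mathcal{N}^{\otimes n}}(|\psi\rangle\langle\psi|)$ is pure on $RA^{n}B^{n}E^{n}$, and the one-way LOCC $\mathcal{L}$ from Ash (who holds $A^{n}$) to Eve (who holds $E^{n}$) implements a one-shot pure-state merging of $|\Phi_{\psi}\rangle$ from Ash to Eve with net entanglement gain $\log M$; Bob's register $B^{n}$ and the reference $R$ jointly play the role of the inert purifying environment (Bob performs no operation). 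The one-shot / smoothed weak converse of state merging then yields, for every $\psi$,
\begin{equation}
\log M \;\le\; -\, S(A^{n}|E^{n})_{\Phi_{\psi}} + f(n,\varepsilon_{n}),
\end{equation}
where $f(n,\varepsilon_{n})$ is an AFW/Fannes-type continuity correction obtained from $\|\,\mathcal{L}(\cdot)-\mathrm{id}(\cdot)\,\|_{1}\le\varepsilon_{n}$ and vanishes as $o(n)$ once $\varepsilon_{n}\to 0$ appropriately.

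Next, taking the infimum over $\psi$ on the right-hand side (the inequality holds uniformly in $\psi$) and invoking the pure-state duality $S(A^{n}|E^{n})_{\Phi_{\psi}} = -\,S(A^{n}|RB^{n})_{\Phi_{\psi}}$ on $|\Phi_{\psi}\rangle_{RA^{n}B^{n}E^{n}}$, together with $\tr_{E}\Phi_{\psi} = \mathcal{N}^{\otimes n}(\psi)$, gives
\begin{equation}
\log M \;\le\; \inf_{\psi}\bigl(-\,S(A^{n}|E^{n})_{\Phi_{\psi}}\bigr) + f(n,\varepsilon_{n}) \;=\; \inf_{\psi} S(A^{n}|RB^{n})_{\mathcal{N}^{\otimes n}(\psi)} + f(n,\varepsilon_{n}) \;=\; S^{\not\to}[A^{n}|B^{n}]_{\mathcal{N}^{\otimes n}} + f(n,\varepsilon_{n}),
\end{equation}
invoking the identity for the von Neumann conditional NS-entropy established earlier in the paper. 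Dividing by $n$, letting $\delta \to 0$ and $n\to\infty$ with $\varepsilon_{n}\to 0$ so that $\tfrac{1}{n}f(n,\varepsilon_{n})\to 0$, yields the claimed upper bound $C_{\mathrm{gain}}[A|B]_{\mathcal{N}} \le \lim_{n\to\infty}\tfrac{1}{n}\,S^{\not\to}[A^{n}|B^{n}]_{\mathcal{N}^{\otimes n}}$.

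The main obstacle I anticipate is establishing the continuity correction $f(n,\varepsilon_{n})$ in a form that is uniform over all (possibly cross-copy entangled) inputs $\psi$: the output states $\Phi_{\psi}$ need not be i.i.d.\ across the $n$ uses, so the standard AEP argument that passes from one-shot smooth entropies to the von Neumann conditional entropy does not apply pointwise. A natural path is to combine a smooth max-entropy weak converse for one-shot state merging with the inequality between $H_{\max}^{\epsilon}(A^{n}|E^{n})$ and $S(A^{n}|E^{n})$ plus an AFW-continuity correction, and then verify that after dividing by $n$ and regularizing the residual terms vanish. A secondary bookkeeping point is to confirm that Bob's inert presence does not affect the state-merging weak converse, which follows because the converse depends only on the pure-state structure and on the identity of the sender and the receiver.
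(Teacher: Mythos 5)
Your proposal follows essentially the same route as the paper, which omits the details and points to the converse of \cite[Proposition 23]{GW21} (itself based on \cite[Section IV.B]{HOW06}) combined with the key observation $\inf_{\psi}S(A^n|RB^n)_{\mathcal{N}^{\otimes n}(\psi)}=S^{\not\to}[A^n|B^n]_{\mathcal{N}^{\otimes n}}$ — exactly the per-input state-merging weak converse, pure-state duality $-S(A^n|E^n)=S(A^n|RB^n)$, and infimum over inputs that you describe. The obstacle you anticipate is not actually an issue for a \emph{weak} converse: the HOW-style argument bounds $\log M$ directly by the von Neumann conditional entropy of the $n$-copy output state via LOCC monotonicity and AFW continuity (with correction $O(n\varepsilon\log|A|)+g_2(\varepsilon)$, uniform in $\psi$), so no AEP or passage through one-shot smooth entropies is needed.
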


   We omit the detailed proof here as the proof argument for the above theorem is similar (almost same arguments) to the (converse) proof of \cite[Proposition 23]{GW21}, which is closely based on \cite[Section IV.B.]{HOW06}. To arrive at the inequality~\eqref{eq:converse-merging}, we made the observation that
    \begin{equation}
        \inf_{\op{\psi}\in{\rm St}(RA'^nB'^n)}S(A^n|RB^n)_{\mathcal{N}^{\otimes n}(\op{\psi})}=S^{\not\to}[A^n|B^n]_{\mathcal{N}^{\otimes n}}.
    \end{equation}

For any tensor-product channel $\mathcal{N}^1_{A'\to A}\otimes\mathcal{N}^2_{B'\to B}$, $C_{\rm gain}[A|B]_{\mathcal{N}^1\otimes\mathcal{N}^2}=S^{\not\to}[A|B]_{\mathcal{N}^1\otimes\mathcal{N}^2} = S[A|B]_{\mathcal{N}^1\otimes\mathcal{N}^2} =S[A]_{\mathcal{N}^1}$, due to Theorem~\ref{thm:axioms} and \cite[Theorem 10]{GW21}.

\section{Strong subadditivity of the entropy of channels} \label{sec:strong_subad_QCH}
The strong subaddivity of quantum von Neumann entropy for an arbitrary tripartite state $\rho_{ABC}$ is~\cite{LR73}
\begin{equation}
    S(A|B)_{\rho}-S(A|BC)_{\rho}\geq 0,
\end{equation}
which is equivalent to the quantum conditional mutual information $I(A;B|C)_\rho$ for quantum states being nonnegative,
\begin{equation}
    I(A;C|B)_{\rho}:=S(A|B)_{\rho}-S(A|BC)_{\rho}\geq 0.
\end{equation}
The structure of states that satisfy the strong subadditivity condition with equality were studied in \cite{HJPW04}. Tripartite quantum states $\rho_{ABC}$ for which $I(A;C|B)_{\rho}=0$ are called quantum Markov states~\cite{HJPW04,JRS+18}. In this case there exist an operation from subsystem $B$ to $BC$ which can reconstruct $\rho_{ABC}$ from $\rho_{AB}$.   

The strong subadditivity of quantum entropy for an arbitrary tripartite quantum channel $\mathcal{N}_{A'B'C'\to ABC}$ implies that
\begin{equation}
\mathbf{S}[A|B]_{\mathcal{N}}-\mathbf{S}[A|BC]_{\mathcal{N}}\geq 0. 
\end{equation}
\begin{dfn}[Generalized conditional mutual information]
    The generalized conditional mutual information $\mathbf{I}[A;C|B]_{\mathcal{N}}$ of an arbitrary tripartite channel $\mathcal{N}_{A'B'C'\to ABC}$ is defined as 
    \begin{equation}~\label{eq:qcmi-ch}
       \mathbf{I}[A;C|B]_{\mathcal{N}}:= \mathbf{S}[A|B]_{\mathcal{N}}-\mathbf{S}[A|BC]_{\mathcal{N}}.
    \end{equation}
We obtain the (quantum) conditional mutual information $I[A;C|B]_{\mathcal{N}}$ of a tripartite channel $\mathcal{N}_{A'B'C'\to ABC}$ by considering the conditional von Neumann entropy in Eq.~\eqref{eq:qcmi-ch}, i.e., the associated generalized divergence is taken to be the quantum relative entropy.
\end{dfn}
It follows from the definition that $\mathbf{I}[A;C|B]_{\mathcal{N}}\geq 0$ and so $I[A;C|B]_{\mathcal{N}}\geq 0$.

\begin{dfn}[Quantum Markov channel]
    A tripartite channel $\mathcal{N}_{A'B'C'\to ABC}$ is called a quantum Markov chain or channel, in order $A-B-C$, if its conditional mutual information vanishes, i.e.,
    \begin{equation}
        I[A;C|B]_{\mathcal{N}}=0.
    \end{equation}
\end{dfn}

\begin{theorem}\label{thm:cmi-tele-ch}
The conditional mutual information ${I}[A;C|B]_{\mathcal{N}}$ of a tele-covariant tripartite channel $\mathcal{N}_{A'B'C'\to ABC}$ is equal to the conditional mutual information of its Choi state in the following way
    \begin{equation}~\label{eq:qcmi-tele}
       {I}[A;C|B]_{\mathcal{N}}= I(R_AA;C|R_BBR_C)_{\Phi^{\mathcal{N}}},
       \end{equation}
       where $\Phi^{\mathcal{N}}_{R_AAR_BBR_CC}=\mathcal{N}(\Phi_{R_AA'}\otimes\Phi_{R_BB'}\otimes\Phi_{R_CC'})$.
\end{theorem}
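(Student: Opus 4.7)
The statement is an equality, so I will prove it by expanding the left-hand side via the definition of conditional mutual information and then applying the bipartite tele-covariant formula of Theorem~\ref{thm:vN-choi-state} twice. By Eq.~\eqref{eq:qcmi-ch},
\[
I[A;C|B]_{\mathcal{N}} = S[A|B]_{\mathcal{N}} - S[A|BC]_{\mathcal{N}},
\]
where the first term is the conditional entropy of the reduced bipartite channel $\tr_C\circ\mathcal{N}_{A'B'C'\to AB}$ (viewed as a channel from $(A',B'C')$ to $(A,B)$), and the second term is the conditional entropy of the full channel $\mathcal{N}_{A'B'C'\to ABC}$ (viewed as a channel from $(A',B'C')$ to $(A,BC)$). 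Thus both terms are conditional entropies of bipartite channels in which $A'\to A$ plays the nonconditioning role and $B'C'\to (\cdot)$ plays the conditioning role.

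\textbf{Step 1: verify tele-covariance of both channels.} I would invoke Lemma~\ref{lem:tele-cov-reduced}: since $\mathcal{N}$ is a tele-covariant tripartite channel, the reduced channel $\tr_C\circ\mathcal{N}_{A'B'C'\to AB}$ is jointly tele-covariant (viewed as $(A',B'C')\to(A,B)$) with the associated product of the completely mixing channel and a reduced channel on $B'C'\to B$; and $\mathcal{N}$ itself, viewed as $(A',B'C')\to(A,BC)$, is tele-covariant with the analogous product, using the full reduced channel on $B'C'\to BC$. This is exactly the setting in which Theorem~\ref{thm:vN-choi-state} applies.

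\textbf{Step 2: apply Theorem~\ref{thm:vN-choi-state} twice.} For the reduced channel, the conditioning reference system is the joint reference $R_BR_C$ associated to $B'C'$, and its Choi state is the partial trace $\tr_C(\Phi^{\mathcal{N}}_{R_AAR_BBR_CC})=\Phi^{\mathcal{N}}_{R_AAR_BBR_C}$. Therefore
\[
S[A|B]_{\mathcal{N}} = S(R_AA\,|\,R_BBR_C)_{\Phi^{\mathcal{N}}} - \log|A'|.
\]
For the full channel, the conditioning reference is still $R_BR_C$ but the conditioning output is now $BC$, giving
\[
S[A|BC]_{\mathcal{N}} = S(R_AA\,|\,R_BBR_CC)_{\Phi^{\mathcal{N}}} - \log|A'|.
\]
Subtracting, the $\log|A'|$ terms cancel and the difference is exactly $I(R_AA;C\,|\,R_BBR_C)_{\Phi^{\mathcal{N}}}$ by the state-level definition of conditional mutual information, which is the claim.

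\textbf{Main obstacle.} The arithmetic is routine once Theorem~\ref{thm:vN-choi-state} is in hand; the only subtle point is checking that the theorem really applies to both the full and the partially traced channel with the correct identification of conditioning reference systems. Concretely, I need that the joint tele-covariance used in Theorem~\ref{thm:vN-choi-state} is preserved after tracing out the $C$ output. This is precisely the content of Lemma~\ref{lem:tele-cov-reduced} (the tripartite statement), so the obstacle is really just bookkeeping: making sure the ``nonconditioning'' pair is $(A',A)$ in both invocations and that the conditioning reference system $R_BR_C$ is chosen consistently so that the two Choi states appearing are marginals of a single state $\Phi^{\mathcal{N}}_{R_AAR_BBR_CC}$, allowing the two conditional entropies to be combined into a single conditional mutual information.
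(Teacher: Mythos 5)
Your proposal is correct and follows essentially the same route as the paper: the paper likewise expands $I[A;C|B]_{\mathcal{N}}=S[A|B]_{\mathcal{N}}-S[A|BC]_{\mathcal{N}}$, establishes $S[A|B]_{\mathcal{N}}=S(R_AA|R_BBR_C)_{\Phi^{\mathcal{N}}}-\log|A'|$ by the argument of Theorem~\ref{thm:vN-choi-state} (with the tripartite part of Lemma~\ref{lem:tele-cov-reduced} handling the traced-out channel), and cancels the $\log|A'|$ terms upon subtraction. Your explicit bookkeeping of the conditioning pair $(B'C',\cdot)$ and the reference $R_BR_C$ is exactly the point the paper leaves implicit.
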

\begin{proof}
For a tele-covariant channel $\mathcal{N}_{A'B'C'\to ABC}$, using proof arguments similar to the proof of Theorem~\ref{thm:vN-choi-state}, we can show that 
\begin{align}
    S[A|B]_{\mathcal{N}}=S(R_AA|R_BBR_C)_{\Phi^{\mathcal{N}}}-\log|A'|.
\end{align}
We then have
    \begin{align}
         I[A;C|B]_{\mathcal{N}}&=S[A|B]_{\mathcal{N}}-S[A|BC]_{\mathcal{N}}\nonumber\\
         &= S(R_AA|R_BBR_C)_{\Phi^{\mathcal{N}}}-S(R_AA|R_BBR_CC)_{\Phi^{\mathcal{N}}}\nonumber\\
         &=I(R_AA;C|R_BBR_C)_{\Phi^{\mathcal{N}}}.
    \end{align}
\end{proof}
\begin{corollary}
    The conditional mutual information ${I}[A;C|B]_{\mathcal{U}}$ of a tele-covariant tripartite unitary channel $\mathcal{U}_{A'B'C'\to ABC}$ is equal to the mutual information of the marginal (reduced state) $\Phi^{\mathcal{U}}_{R_AAC}$ of its Choi state $\Phi^{\mathcal{U}}_{R_AAR_BBR_CC}$,
    \begin{equation}
       {I}[A;C|B]_{\mathcal{U}}= I(R_AA;C)_{\Phi^{\mathcal{U}}}.
    \end{equation}
\end{corollary}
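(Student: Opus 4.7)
The plan is to derive the corollary directly from Theorem \ref{thm:cmi-tele-ch}, whose statement says
\begin{equation*}
I[A;C|B]_{\mathcal{U}} = I(R_AA;C \mid R_BBR_C)_{\Phi^{\mathcal{U}}},
\end{equation*}
and then to use the purity of the Choi state of a unitary channel to reduce the conditional mutual information on the right-hand side to an ordinary mutual information. So the task reduces to showing
\begin{equation*}
I(R_AA;C \mid R_BBR_C)_{\Phi^{\mathcal{U}}} = I(R_AA;C)_{\Phi^{\mathcal{U}}}
\end{equation*}
under the assumption that $\Phi^{\mathcal{U}}_{R_AAR_BBR_CC}$ is pure.

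First I would justify purity. Since the input $\Phi_{R_AA'}\otimes\Phi_{R_BB'}\otimes\Phi_{R_CC'}$ is a pure product of maximally entangled states and $\mathcal{U}_{A'B'C'\to ABC}$ is a unitary channel, its action on this pure state yields a pure state on $R_AAR_BBR_CC$. Hence $\Phi^{\mathcal{U}}_{R_AAR_BBR_CC}$ is pure, and for any bipartition of its systems into complementary subsets $S_1$ and $S_2$ we have $S(S_1)_{\Phi^{\mathcal{U}}} = S(S_2)_{\Phi^{\mathcal{U}}}$.

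Next I would expand both sides using the standard identity
\begin{equation*}
I(X;Y\mid Z)_\omega = S(XZ)_\omega + S(YZ)_\omega - S(XYZ)_\omega - S(Z)_\omega,
\end{equation*}
setting $X = R_AA$, $Y = C$, and $Z = R_BBR_C$. Since $XYZ = R_AAR_BBR_CC$ is the full system carrying the pure state, $S(XYZ)_{\Phi^{\mathcal{U}}} = 0$, and the purity-duality gives $S(XZ)_{\Phi^{\mathcal{U}}} = S(C)_{\Phi^{\mathcal{U}}}$, $S(YZ)_{\Phi^{\mathcal{U}}} = S(R_AA)_{\Phi^{\mathcal{U}}}$, and $S(Z)_{\Phi^{\mathcal{U}}} = S(R_AAC)_{\Phi^{\mathcal{U}}}$. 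Plugging these in yields
\begin{equation*}
I(R_AA;C\mid R_BBR_C)_{\Phi^{\mathcal{U}}} = S(R_AA)_{\Phi^{\mathcal{U}}} + S(C)_{\Phi^{\mathcal{U}}} - S(R_AAC)_{\Phi^{\mathcal{U}}} = I(R_AA;C)_{\Phi^{\mathcal{U}}},
\end{equation*}
which combined with Theorem \ref{thm:cmi-tele-ch} gives the claim.

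There is no real obstacle here; the argument is a direct application of the purity-duality for von Neumann entropies on top of the already-established Choi-state formula for tele-covariant tripartite channels. The only thing to be careful about is bookkeeping of the subsystem partition and making sure that the isometric/unitary assumption is used precisely to secure the purity of $\Phi^{\mathcal{U}}$; if one relaxes to a general tele-covariant channel (non-unitary), the Choi state is mixed and the reduction from the conditional mutual information to the ordinary mutual information fails in general.
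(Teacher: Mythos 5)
Your proof is correct and follows exactly the route the paper intends: the paper states this as an immediate corollary of Theorem~\ref{thm:cmi-tele-ch}, with the purity of the unitary channel's Choi state and the resulting entropy duality $S(S_1)_{\Phi^{\mathcal{U}}}=S(S_2)_{\Phi^{\mathcal{U}}}$ for complementary subsystems doing the work of collapsing $I(R_AA;C|R_BBR_C)_{\Phi^{\mathcal{U}}}$ to $I(R_AA;C)_{\Phi^{\mathcal{U}}}$. Your bookkeeping of the complementary subsystems is accurate, so nothing further is needed.
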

\begin{remark}
    For a tele-covariant tripartite channel $\mathcal{N}_{A'B'C'\to ABC}$, its conditional mutual information $I[A;C|B]_{\mathcal{N}}$ is upper bounded by the conditional mutual information $I(\overbar{A};\overbar{C}|\overbar{B})_{\Phi^{\mathcal{N}}}$ of its Choi state $\Phi^{\mathcal{N}}_{\overbar{A}\overbar{B}\overbar{C}}$, where $\overbar{A}:=R_AA, \overbar{B}:=R_BB, \overbar{C}:=R_CC$, i.e.,
    \begin{equation}
        I[A;C|B]_{\mathcal{N}}\leq I(\overbar{A};\overbar{C}|\overbar{B})_{\Phi}.
    \end{equation}
\end{remark}
\begin{proposition}[Continuity of conditional mutual information]
    If a tripartite channel $\mathcal{N}_{A'B'C'\to ABC}$ is $\varepsilon$-close to a tele-covariant tripartite channel $\mathcal{T}_{A'B'C'\to ABC}$ in the diamond norm, i.e., $\frac{1}{2}\norm{\mathcal{N}-\mathcal{T}}_{\diamond}= \varepsilon\in[0,1]$, then
    \begin{equation}\label{eq:con-qcmi-ch}
        \abs{I[A;C|B]_{\mathcal{N}}-I[A;C|B]_{\mathcal{T}}}\leq 2\varepsilon\log\min\{|A'||A|,|C|\}+2g_2(\varepsilon).
    \end{equation}
\end{proposition}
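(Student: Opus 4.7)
The plan is to mirror the earlier continuity argument for the conditional entropy (leading to Eq.~\eqref{eq:tele-afw}) and reduce the claim to a state-level continuity bound for the conditional mutual information, evaluated on the Choi states of $\mathcal{N}$ and $\mathcal{T}$. By Theorem~\ref{thm:cmi-tele-ch}, for the tele-covariant channel $\mathcal{T}$ one has $I[A;C|B]_{\mathcal{T}} = I(R_AA;C|R_BBR_C)_{\Phi^{\mathcal{T}}}$, and in the same Choi-state spirit I would express (or at least bound) $I[A;C|B]_{\mathcal{N}}$ in terms of $I(R_AA;C|R_BBR_C)_{\Phi^{\mathcal{N}}}$. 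The task then reduces to controlling $\bigl|I(R_AA;C|R_BBR_C)_{\Phi^{\mathcal{N}}} - I(R_AA;C|R_BBR_C)_{\Phi^{\mathcal{T}}}\bigr|$.

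Two ingredients feed the reduction. First, from $\tfrac{1}{2}\|\mathcal{N}-\mathcal{T}\|_{\diamond} = \varepsilon$, supplying the product purifying input $\Phi_{R_AA'}\otimes\Phi_{R_BB'}\otimes\Phi_{R_CC'}$ and using monotonicity of the trace norm under partial trace yields $\tfrac{1}{2}\|\Phi^{\mathcal{N}}_{R_AAR_BBR_CC} - \Phi^{\mathcal{T}}_{R_AAR_BBR_CC}\|_{1}\le \varepsilon$. Second, I would invoke a Shirokov/AFW-type tight uniform continuity bound for the conditional mutual information of states,
\begin{equation*}
\bigl|I(X;Y|Z)_\rho - I(X;Y|Z)_\sigma\bigr| \leq 2\varepsilon \log\min\{|X|,|Y|\} + 2g_2(\varepsilon),
\end{equation*}
valid whenever $\tfrac{1}{2}\|\rho-\sigma\|_1 \leq \varepsilon$, with $g_2$ as in the preceding continuity proposition. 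Applying this with $X=R_AA$, $Y=C$, $Z=R_BBR_C$, and using $|R_A|=|A'|$, delivers precisely the target bound $2\varepsilon\log\min\{|A'||A|,|C|\} + 2g_2(\varepsilon)$.

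The hard part will be justifying the Choi-state identification for $I[A;C|B]_{\mathcal{N}}$ when $\mathcal{N}$ is not itself assumed tele-covariant: the generalized mutual information involves infima over post-channels on $B$ and $BC$, and these need not saturate to the Choi-state conditional mutual information in general. The natural workaround, in the same pattern as the proof of Eq.~\eqref{eq:tele-afw}, is to use the witness channels that attain the infima defining $S[A|B]_{\mathcal{T}}$ and $S[A|BC]_{\mathcal{T}}$ as feasible points inside the infima defining $S[A|B]_{\mathcal{N}}$ and $S[A|BC]_{\mathcal{N}}$; the resulting one-sided inequalities, together with the reverse comparison obtained by swapping $\mathcal{N}\leftrightarrow \mathcal{T}$, collapse the triangle inequality onto a single state-level CMI comparison on Choi states, where the Shirokov bound applies. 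A secondary point is to match constants: if the reference CMI continuity carries an additive term of the form $(1+\varepsilon)h_2(\varepsilon/(1+\varepsilon))$ rather than $2g_2(\varepsilon)$, a short elementary estimate reconciles the two while preserving the leading $2\varepsilon\log\min\{|A'||A|,|C|\}$ coefficient.
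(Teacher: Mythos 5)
Your proposal follows essentially the same route as the paper: convert diamond-norm closeness of the channels into trace-distance closeness of their Choi states, use Theorem~\ref{thm:cmi-tele-ch} to express the channel CMI via the Choi-state CMI, and conclude with Shirokov's uniform continuity bound for the conditional mutual information of states. The one point where you are more careful than the paper — justifying the Choi-state identification for $I[A;C|B]_{\mathcal{N}}$ when $\mathcal{N}$ itself is not assumed tele-covariant — is a genuine subtlety that the paper's two-line proof simply does not address, so your proposed witness-channel workaround is a worthwhile addition rather than a deviation.
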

\begin{proof}
   Given that $\frac{1}{2}\norm{\mathcal{N}-\mathcal{T}}_{\diamond}= \varepsilon$, we have $\frac{1}{2}\norm{\Phi^{\mathcal{N}}-\Phi^{\mathcal{T}}}_1\leq \varepsilon$. We then arrive at the bound~\eqref{eq:con-qcmi-ch} by directly invoking Theorem~\ref{thm:cmi-tele-ch} and the continuity of quantum conditional mutual information (see \cite[Corollary 2]{Shi17}): For two states $\rho_{ABC},\sigma_{ABC}$ such that $\frac{1}{2}\norm{\rho-\sigma}_1=\varepsilon\in[0,1]$ then
    \begin{equation}
         \abs{I[A;C|B]_{\rho}-I[A;C|B]_{\rho}}\leq 2\varepsilon\log\min\{|A|,|C|\}+2g_2(\varepsilon).
    \end{equation}
\end{proof}

\subsection{Approximate Quantum Markov channel}
For an arbitrary quantum state $\rho_{ABC}$, there exists a universal recovery map $\mathcal{R}_{B\to BC}$ such that~\cite{SFR16,JRS+18} (also see~\cite{BHOS15,LDW18} for relevant discussions) 
\begin{equation}\label{eq:qcmi-fid}
    I(A;C|B)_{\rho}\geq -\log F(\rho_{ABC},\mathcal{R}_{B\to BC}(\rho_{AB})).
\end{equation}
A universal recovery map $\mathcal{R}_{B\to BC}$ in inequality~\eqref{eq:qcmi-fid} is a quantum channel depending only on $\rho_{BC}$ (not on $\rho_{ABC}$) and $\mathcal{R}_{B\to BC}(\rho_{B})=\rho_{BC}$. From inequality~\eqref{eq:qcmi-fid}, we see that for any state $\rho_{ABC}$, there exists a universal recovery map $\mathcal{R}_{B\to BC}$ such that
\begin{equation}\label{eq:fid-cmi}
    F(\rho_{ABC},\mathcal{R}_{B\to BC}(\rho_{AB}))\geq 2^{-I(A;C|B)_{\rho}}.
\end{equation}
A state $\rho_{ABC}$ is a quantum Markov state ($A-B-C$), i.e., $I(A;C|B)_{\rho}=0$, if and only if there exists a universal map $\mathcal{R}_{B\to BC}$ such that $\rho_{ABC}=\mathcal{R}_{B\to BC}(\rho_{AB})$. An approximate Markov state $\rho_{ABC}$ is approximately recoverable from $\rho_{AB}$ (or $\rho_{BC}$) via an universal recovery map $\mathcal{R}_{B\to BC}$ (or $\mathcal{R}_{B\to AB}$) as $I(A;C|B)_{\rho}\leq \varepsilon$ for some appropriately small $\varepsilon$.

\begin{dfn}
    A tripartite quantum channel $\mathcal{N}_{A'B'C'\to ABC}$ is called an $\varepsilon$-approximate quantum Markov chain (channel), in order $A-B-C$, if $I[A;C|B]_{\mathcal{N}}\leq \varepsilon$ for $0\leq \varepsilon\ll 1$.
\end{dfn}

\begin{proposition}\label{pro:tele-cov-cmi}
     If a tele-covariant tripartite channel $\mathcal{N}_{A'B'C'\to ABC}$ is an $\varepsilon$-approximate quantum Markov chain, i.e., $I[A;C|B]_{\mathcal{N}}\leq \varepsilon$ for a sufficiently small $\varepsilon$, then there exists
    \begin{enumerate}
        \item  a universal recovery map $\mathcal{R}_{R_BBR_C\to R_AAR_BBR_C}$ such that
    \begin{equation}
        F(\Phi^{\mathcal{N}}_{R_AAR_BBR_CC},\mathcal{R}_{R_BBR_C\to R_AAR_BBR_C}(\Phi^{\mathcal{N}}_{R_BBR_CC}))\geq 2^{-\varepsilon}\approx 1.
    \end{equation}
    \item  a universal recovery map $\mathcal{R}_{R_BBR_C\to R_BBR_C}$ such that
    \begin{equation}
        F(\Phi^{\mathcal{N}}_{R_AAR_BBR_CC},\mathcal{R}_{R_BBR_C\to R_BBR_CC}(\Phi^{\mathcal{N}}_{R_AAR_BBR_C}))\geq 2^{-\varepsilon}\approx 1.
           \end{equation}
    \end{enumerate}
\end{proposition}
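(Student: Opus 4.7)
The plan is to reduce the channel-level claim to the known state-level universal recovery bound applied to the Choi state $\Phi^{\mathcal{N}}_{R_AAR_BBR_CC}$, using Theorem~\ref{thm:cmi-tele-ch} as the bridge. Since $\mathcal{N}$ is tele-covariant, that theorem gives $I[A;C|B]_{\mathcal{N}} = I(R_AA;C\,|\,R_BBR_C)_{\Phi^{\mathcal{N}}}$, so the hypothesis $I[A;C|B]_{\mathcal{N}}\le\varepsilon$ translates directly into a small tripartite conditional mutual information on the state $\Phi^{\mathcal{N}}$ with the effective partition $X=R_AA$, $Y=R_BBR_C$, $Z=C$. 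From here both claims are a single application of the universal recovery inequality~\eqref{eq:qcmi-fid} (equivalently~\eqref{eq:fid-cmi}) in its two symmetric forms.

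For part~(2), which matches the form of~\eqref{eq:qcmi-fid} literally, I would invoke the state-level recovery result for $\Phi^{\mathcal{N}}_{R_AAR_BBR_CC}$ under the identification $A\leftrightarrow R_AA$, $B\leftrightarrow R_BBR_C$, $C\leftrightarrow C$. This yields a map $\mathcal{R}_{R_BBR_C\to R_BBR_CC}$, depending only on the marginal $\Phi^{\mathcal{N}}_{R_BBR_CC}$ and hence only on $\mathcal{N}$, satisfying
\begin{equation*}
F\bigl(\Phi^{\mathcal{N}}_{R_AAR_BBR_CC},\,\mathcal{R}_{R_BBR_C\to R_BBR_CC}(\Phi^{\mathcal{N}}_{R_AAR_BBR_C})\bigr)\;\geq\;2^{-I(R_AA;C\,|\,R_BBR_C)_{\Phi^{\mathcal{N}}}}\;\geq\;2^{-\varepsilon}.
\end{equation*}

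Part~(1) uses the symmetric version of the same recovery theorem, which is standard (cf.~\cite{SFR16,JRS+18}): because $I(X;Z|Y)=I(Z;X|Y)$, one may also recover $\rho_{XYZ}$ from $\rho_{YZ}$ via a universal channel $\mathcal{R}_{Y\to XY}$ with the identical fidelity bound. Specializing to $X=R_AA$, $Y=R_BBR_C$, $Z=C$ gives a map $\mathcal{R}_{R_BBR_C\to R_AAR_BBR_C}$, built from $\Phi^{\mathcal{N}}_{R_AAR_BBR_C}$ alone, obeying
\begin{equation*}
F\bigl(\Phi^{\mathcal{N}}_{R_AAR_BBR_CC},\,\mathcal{R}_{R_BBR_C\to R_AAR_BBR_C}(\Phi^{\mathcal{N}}_{R_BBR_CC})\bigr)\;\geq\;2^{-I(R_AA;C\,|\,R_BBR_C)_{\Phi^{\mathcal{N}}}}\;\geq\;2^{-\varepsilon}.
\end{equation*}

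There is no substantive obstacle beyond careful bookkeeping of subsystem labels; the proposition is essentially a transfer of the established state-level recoverability inequality through Theorem~\ref{thm:cmi-tele-ch}. The only mild subtlety worth flagging is that ``universal'' should be read in the channel sense, meaning the recovery map is independent of any purification or additional reference system attached to $\mathcal{N}$; this is automatic because each recovery map is determined solely by the relevant bipartite marginal of $\Phi^{\mathcal{N}}$, which is itself fixed once $\mathcal{N}$ is fixed.
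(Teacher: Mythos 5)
Your proposal is correct and follows essentially the same route as the paper, which likewise proves the proposition by combining Theorem~\ref{thm:cmi-tele-ch} (to identify $I[A;C|B]_{\mathcal{N}}$ with $I(R_AA;C|R_BBR_C)_{\Phi^{\mathcal{N}}}$) with the universal state-level recovery bound of Eq.~\eqref{eq:fid-cmi} in its two symmetric forms. Your extra care with the subsystem identification $X=R_AA$, $Y=R_BBR_C$, $Z=C$ and the remark on what ``universal'' means are consistent with, and slightly more explicit than, the paper's one-line argument.
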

\begin{proof}
    The proof follows by invoking Theorem~\ref{thm:cmi-tele-ch} and Eq.~\eqref{eq:fid-cmi}.
\end{proof}

    A tele-covariant channel $\mathcal{T}_{A'B'C'\to ABC}$ is simulable via teleportation with the resource state as its Choi state $\Phi^{\mathcal{T}}_{R_AAR_BBR_CC}$~\cite[Theorem 5]{DBWH21}. That is, the action of the tele-covariant channel $\mathcal{T}$ can be obtained by local (quantum) operations and classical communication (LOCC) among Ash (holding $R_A,A',A$), Bob (holding $R_B,B',B$), and Charlie (holding $R_C,C',C$), if Ash, Bob, and Charlie share the Choi state $\Phi^{\mathcal{N}}$ among themselves,
    \begin{equation}
        \mathcal{T}_{A'B'C'\to ABC} (\cdot)= \mathcal{L}_{R_AA'AR_BB'BR_CC'C\to ABC}(\cdot\otimes \Phi^\mathcal{T}_{R_AAR_BBR_CC}),
    \end{equation}
    where $\mathcal{L}$ is an LOCC channel among Ash, Bob, and Charlie (one-way LOCC channel from Ash to Bob suffices). This fact along with the Proposition~\ref{pro:tele-cov-cmi} lead to the following observation.
    \begin{observation}\label{obs:tele-sim}
        For a tele-covariant channel $\mathcal{T}_{A'B'C'\to ABC}$ that is also a quantum Markov chain $(I[A;C|B]_{\mathcal{T}}=0)$, its Choi state $\Phi^{\mathcal{T}}_{R_AAR_BBR_CC}$ can be perfectly recovered by applying a universal recovery map $\mathcal{R}^P_{R_BBR_C\to R_AAR_BBR_C}$ on the marginal state $\Phi^{\mathcal{T}}_{R_BBR_CC}$, where due to~\cite{Petz:1986},
        \begin{equation}
            \mathcal{R}^P_{R_BBR_C\to R_AAR_BBR_C}(\cdot)=(\Phi^{\mathcal{T}}_{R_AAR_BBR_C})^{\frac{1}{2}}[\mathbbm{1}_{R_A}\otimes\mathbbm{1}_A\otimes(\Phi^{\mathcal{T}}_{R_BBR_C})^{-\frac{1}{2}}(\cdot)(\Phi^{\mathcal{T}}_{R_BBR_C})^{-\frac{1}{2}}](\Phi^{\mathcal{T}}_{R_AAR_BBR_C})^{\frac{1}{2}}.
        \end{equation}
   Even if an initial reference system $R_A$ and an output $A$ of the channel $\mathcal{T}_{A'B'C'\to ABC}$ are made inaccessible (for some reason), the tripartite channel can be perfectly simulated via teleportation by first recovering the resource state (the Choi state) from the marginal state $\Phi^{\mathcal{T}}_{R_BBR_CC}$.
    \end{observation}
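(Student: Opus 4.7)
The plan is to reduce the claim to a statement about the Choi state and then invoke the Petz recovery theorem for quantum Markov states. First, I would apply Theorem~\ref{thm:cmi-tele-ch} to the tele-covariant tripartite channel $\mathcal{T}_{A'B'C'\to ABC}$, obtaining $I[A;C|B]_{\mathcal{T}} = I(R_AA;C|R_BBR_C)_{\Phi^{\mathcal{T}}}$. The hypothesis $I[A;C|B]_{\mathcal{T}}=0$ therefore amounts to the Choi state $\Phi^{\mathcal{T}}_{R_AAR_BBR_CC}$ being a quantum Markov chain in the order $(R_AA)-(R_BBR_C)-C$.

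Next I would invoke Petz's theorem on the recoverability of Markov states~\cite{Petz:1986}: whenever a tripartite state $\omega_{XYZ}$ satisfies $I(X;Z|Y)_{\omega}=0$, the Petz transpose of the partial-trace channel $\tr_X$ with respect to $\omega_{XY}$ reconstructs $\omega_{XYZ}$ from its marginal $\omega_{YZ}$ exactly. Identifying $X=R_AA$, $Y=R_BBR_C$, $Z=C$, and $\omega=\Phi^{\mathcal{T}}$, the resulting Petz transpose is precisely the sandwich-form map $\mathcal{R}^P_{R_BBR_C \to R_AAR_BBR_C}$ displayed in the observation, acting on $R_BBR_C$ while leaving $C$ untouched, so that $\mathcal{R}^P(\Phi^{\mathcal{T}}_{R_BBR_CC})=\Phi^{\mathcal{T}}_{R_AAR_BBR_CC}$. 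An equivalent route is to specialize Proposition~\ref{pro:tele-cov-cmi}(1) to the zero-CMI regime: the bound $F(\Phi^{\mathcal{T}}_{R_AAR_BBR_CC},\mathcal{R}^P(\Phi^{\mathcal{T}}_{R_BBR_CC})) \geq 2^{-I[A;C|B]_{\mathcal{T}}}$ collapses to $1$, and unit fidelity forces the two states to coincide.

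For the teleportation-simulation claim, I would combine the above with \cite[Theorem 5]{DBWH21}, which asserts that any tele-covariant channel is LOCC-simulable with its Choi state as the shared resource. Because $\mathcal{R}^P$ is fully determined by $\Phi^{\mathcal{T}}_{R_AAR_BBR_C}$, a fixed state specified entirely by $\mathcal{T}$, the party holding the surviving registers $R_BBR_CC$ can apply $\mathcal{R}^P$ locally to $R_BBR_C$ to regenerate an exact copy of $\Phi^{\mathcal{T}}_{R_AAR_BBR_CC}$, after which the standard LOCC teleportation simulation is performed and $\mathcal{T}$ is reproduced exactly. The step I would treat most carefully is the support/invertibility condition hidden in the definition of $\mathcal{R}^P$: the inverse square root $(\Phi^{\mathcal{T}}_{R_BBR_C})^{-1/2}$ must be read as the generalized inverse on the support, which is the usual support caveat of Petz recovery and is benign in the finite-dimensional tele-covariant setting considered here.
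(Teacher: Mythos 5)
Your proposal is correct and follows essentially the same route as the paper: the paper derives this observation by combining Proposition~\ref{pro:tele-cov-cmi} (itself resting on Theorem~\ref{thm:cmi-tele-ch} and the recoverability bound~\eqref{eq:fid-cmi}, which collapses to unit fidelity at zero conditional mutual information) with the teleportation-simulation property of tele-covariant channels from \cite[Theorem 5]{DBWH21}, exactly as you do. Your explicit identification of the Markov ordering $(R_AA)-(R_BBR_C)-C$ and the remark on the generalized inverse in the Petz map are welcome clarifications but do not change the argument.
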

    We succinctly present the findings from the above observation as a lemma below (see also \cite[Theorem 2 and Proposition 9]{SPSD24} for a broader statement).
       \begin{lemma}
         A quantum Markov channel $\mathcal{T}_{A'B'C'\to ABC}$ that is also tele-covariant, is perfectly recoverable (or simulable in the sense of Observation~\ref{obs:tele-sim} and Proposition~\ref{pro:tele-cov-cmi}) from the reduced channel $\tr_A\circ\mathcal{T}_{A'B'C'\to ABC}$ or $\tr_C\circ\mathcal{T}_{A'B'C'\to ABC}$.
    \end{lemma}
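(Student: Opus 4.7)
The plan is to reduce the statement to the Markov-state recovery guarantee already invoked in Observation~\ref{obs:tele-sim} and Proposition~\ref{pro:tele-cov-cmi}, so the proof is essentially a packaging of three ingredients: (i) the channel-to-Choi translation of the conditional mutual information for tele-covariant channels (Theorem~\ref{thm:cmi-tele-ch}); (ii) the structure theorem for states of vanishing conditional mutual information, via the Petz recovery map~\cite{Petz:1986}; and (iii) the teleportation-simulation of tele-covariant channels using their Choi states. The only genuinely new observation is that the relevant marginal of the Choi state is itself prepared by feeding the reduced channel on the appropriate systems, so that the whole procedure is driven by $\tr_A\circ\mathcal{T}$ (or $\tr_C\circ\mathcal{T}$) alone.

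In the first step I would use the tele-covariance hypothesis together with Theorem~\ref{thm:cmi-tele-ch} to convert $I[A;C|B]_{\mathcal{T}}=0$ into $I(R_AA;C\,|\,R_BBR_C)_{\Phi^{\mathcal{T}}}=0$ on the Choi state $\Phi^{\mathcal{T}}_{R_AAR_BBR_CC}$. By the Markov-state characterization (Petz, together with~\cite{HJPW04,JRS+18}), there exists a universal recovery map $\mathcal{R}^P_{R_BBR_C\to R_AAR_BBR_C}$, explicitly of the form written in Observation~\ref{obs:tele-sim}, such that
\begin{equation}
\Phi^{\mathcal{T}}_{R_AAR_BBR_CC}=\bigl(\mathcal{R}^P_{R_BBR_C\to R_AAR_BBR_C}\otimes\id_C\bigr)\bigl(\Phi^{\mathcal{T}}_{R_BBR_CC}\bigr).
\end{equation}

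In the second step I would observe that the marginal $\Phi^{\mathcal{T}}_{R_BBR_CC}$ is producible from the reduced channel $\tr_A\circ\mathcal{T}_{A'B'C'\to BC}$ alone: feeding in $\pi_{A'}\otimes\Phi_{R_BB'}\otimes\Phi_{R_CC'}$ and using $\tr_{R_A}\Phi_{R_AA'}=\pi_{A'}$ yields exactly this marginal. Applying the Petz map $\mathcal{R}^P$ reconstructs the full Choi state $\Phi^{\mathcal{T}}_{R_AAR_BBR_CC}$. Since $\mathcal{T}$ is tele-covariant, by~\cite[Theorem 5]{DBWH21} its action is simulable by an LOCC protocol that consumes $\Phi^{\mathcal{T}}_{R_AAR_BBR_CC}$ as a resource state; composing these two steps gives a simulation of $\mathcal{T}$ whose only channel-black-box call is to $\tr_A\circ\mathcal{T}$.

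For the second claim, I would invoke the symmetry of quantum conditional mutual information, $I(R_AA;C\,|\,R_BBR_C)_{\Phi^{\mathcal{T}}}=I(C;R_AA\,|\,R_BBR_C)_{\Phi^{\mathcal{T}}}$, so that the same Markov-state theorem now supplies a Petz map $\widetilde{\mathcal{R}}^P_{R_BBR_C\to R_BBR_CC}$ recovering $\Phi^{\mathcal{T}}_{R_AAR_BBR_CC}$ from the other marginal $\Phi^{\mathcal{T}}_{R_AAR_BBR_C}$, which in turn is produced by $\tr_C\circ\mathcal{T}$ fed with $\pi_{C'}\otimes\Phi_{R_AA'}\otimes\Phi_{R_BB'}$; the rest of the argument is identical. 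The only step that requires any care, and the one I would flag as the main potential obstacle, is verifying that the two marginals really are the Choi states of the reduced channels (on an appropriately maximally-mixed nonconditioning input), because this is what converts a statement about Choi-state recovery into a statement about channel recovery; once this identification is made, the lemma reduces to stringing together Theorem~\ref{thm:cmi-tele-ch}, the Petz recovery theorem, and tele-simulability.
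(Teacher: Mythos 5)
Your proposal is correct and follows essentially the same route as the paper: Theorem~\ref{thm:cmi-tele-ch} turns $I[A;C|B]_{\mathcal{T}}=0$ into vanishing conditional mutual information of the Choi state, the Petz/Markov-state theorem supplies the universal recovery map, and tele-covariance converts Choi-state recovery into channel simulation, exactly as in Observation~\ref{obs:tele-sim} and Proposition~\ref{pro:tele-cov-cmi}. Your additional check that $\Phi^{\mathcal{T}}_{R_BBR_CC}$ (resp.\ $\Phi^{\mathcal{T}}_{R_AAR_BBR_C}$) is itself produced by feeding $\pi_{A'}$ (resp.\ $\pi_{C'}$) into the reduced channel is a point the paper leaves implicit (it is contained in Lemma~\ref{lem:tele-cov-reduced}), and making it explicit is a correct and worthwhile refinement rather than a deviation.
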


\section{Generalized mutual information of quantum channels} \label{sec:mutual_information_CH}
Informational measures of quantum channels provide useful notions to quantify information processing capabilities of quantum channels, for examples see~\cite{SS09,CGLM14,HQRY16,Das19,DBWH21,HWD22,GPG+24,ERAG24}. (Multipartite) mutual information of quantum states is a useful quantity that captures the total amount of correlations, classical or quantum, in a multipartite state. In this work, we provide analogue of the multipartite quantum information $I(A;B;\ldots, N)_{\rho}$ of a $N$-partite state $\rho_{AB\ldots N}$, due to \cite{Wan60} (see \cite{CS02,Pat09,MPW+10}),
\begin{align}
    I(A;B;\ldots;N)_{\rho}& :=D(\rho_{AB\ldots N}\Vert\rho_A\otimes\rho_B\otimes\cdots\otimes\rho_N)\nonumber\\
    &=\inf_{\omega_{i}\in{\rm St}(i), \forall i\in\{A,B,\ldots, N\}} D(\rho_{AB\ldots N}\Vert\otimes_{i=A}^N\omega_i).
\end{align}
The mutual information $I(A;B)_{\rho}$ of a state $\rho_{AB}$ finds operational meaning in the task of erasing the total correlation present in the state $\rho_{AB}$~\cite{GPW05}. It also has a direct operational meaning as the optimal error exponent in the asymmetric hypothesis testing, in Stein's setting, for the state discrimination task between $\rho_{AB}$~(null hypothesis) and $\rho_{A}\otimes\rho_{B}$~(alternate hypothesis). Analogous to the mutual information of multipartite states, the multipartite generalized mutual information of a multipartite state $\rho_{AB\ldots N}$ is defined as
\begin{equation}
     \mathbb{I}(A;B;\ldots;N)_{\rho}:=\inf_{\omega_{i}\in{\rm St}(i), \forall i\in\{A,B,\ldots, N\}} \mathbb{D}(\rho_{AB\ldots N}\Vert\otimes_{i=A}^N\omega_i).
\end{equation}

We now discuss ways to quantify the total amount of correlations intrinsic to multipartite quantum channels and bipartite channels as a particular case.

\begin{dfn}[Generalized mutual information]
    The generalized mutual information of a multipartite quantum channel $\mathcal{N}_{A'B'\ldots N'\to AB\ldots N}$ is defined as
    \begin{equation}
        \mathbb{I}[A;B;\ldots;N]_{\mathcal{N}}:=\inf_{\mathcal{M}^1\otimes\mathcal{M}^2\otimes\cdots\otimes\mathcal{M}^n}\mathbb{D}[\mathcal{N}_{A'B'\ldots N'\to AB\ldots N}\Vert\mathcal{M}^1_{A'\to A}\otimes \mathcal{M}^2_{B'\to B}\otimes\cdots\otimes\mathcal{M}^n_{N'\to N}],
    \end{equation}
    where $\mathcal{M}^1\in\mathrm{Q}(A',A),\mathcal{M}^2\in\mathrm{Q}(B',B),\ldots,\mathcal{M}^n\in\mathrm{Q}(N',N)$ and $\mathbb{D}[\cdot\Vert\cdot]$ is the generalized channel divergence.
\end{dfn}

\begin{lemma}
For a multipartite channel $\mathcal{N}_{A'B'\ldots N'\to AB\ldots N}$, $\mathbb{I}[A;B;\ldots;N]_{\mathcal{N}}\geq 0$ if the associated generalized divergence is always nonnegative for the states. When the associated generalized divergence is faithful for the states then $\mathbb{I}[A;B;\ldots;N]_{\mathcal{N}}= 0$ if and only if $\mathcal{N}_{A'B'\ldots N'\to AB\ldots N}$ is product of local channels, i.e., of the form $\mathcal{N}_{A'B'\ldots N'\to AB\ldots N}=\mathcal{N}^1_{A'\to A}\otimes \mathcal{N}^2_{B'\to B}\otimes\cdots\otimes\mathcal{N}^n_{N'\to N}$.
\end{lemma}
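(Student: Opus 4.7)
The plan is to derive both claims directly from the definition
$\mathbb{I}[A;B;\ldots;N]_{\mathcal{N}}=\inf_{\mathcal{M}^1\otimes\cdots\otimes\mathcal{M}^n}\mathbb{D}[\mathcal{N}\Vert\mathcal{M}^1\otimes\cdots\otimes\mathcal{M}^n]$ and from the structural hypotheses (nonnegativity, faithfulness for states) placed on the underlying generalized state divergence $\mathbb{D}(\cdot\Vert\cdot)$. The nonnegativity claim is the shortest: the channel divergence is a supremum over input states of the (nonnegative) state divergence (cf.\ Eq.~\eqref{eq:gen-ch-f}), hence itself nonnegative, and an infimum of nonnegative numbers is nonnegative. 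So $\mathbb{I}[A;B;\ldots;N]_{\mathcal{N}}\geq 0$ whenever $\mathbb{D}$ is nonnegative on states.

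For the ``$\Leftarrow$'' direction of the equivalence, I would use the factorization as an ansatz: if $\mathcal{N}=\mathcal{N}^1\otimes\cdots\otimes\mathcal{N}^n$, plug $\mathcal{M}^i=\mathcal{N}^i$ into the optimization. Then $\mathbb{D}[\mathcal{N}\Vert\mathcal{N}^1\otimes\cdots\otimes\mathcal{N}^n]=\sup_{\psi}\mathbb{D}(\mathcal{N}(\psi)\Vert\mathcal{N}(\psi))=0$ by the faithfulness identity $\mathbb{D}(\tau\Vert\tau)=0$, and combined with the lower bound already established this pins $\mathbb{I}[A;B;\ldots;N]_{\mathcal{N}}$ to zero.

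For the ``$\Rightarrow$'' direction, assume $\mathbb{I}[A;B;\ldots;N]_{\mathcal{N}}=0$ and that the infimum is attained at some $(\mathcal{M}^1_\star,\ldots,\mathcal{M}^n_\star)$. The channel divergence being a supremum of nonnegative reals, it being zero forces $\mathbb{D}(\mathcal{N}(\psi)\Vert(\mathcal{M}^1_\star\otimes\cdots\otimes\mathcal{M}^n_\star)(\psi))=0$ for every input state $\psi_{RA'B'\cdots N'}$. The equality clause of faithfulness then upgrades these pointwise vanishings to a channel-level equality $\mathcal{N}=\mathcal{M}^1_\star\otimes\cdots\otimes\mathcal{M}^n_\star$, which is exactly the claimed tensor-product form.

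The main obstacle is the nonattainment case: a priori the infimum is only approached by a sequence $(\mathcal{M}^1_k,\ldots,\mathcal{M}^n_k)$ whose channel divergences tend to zero, and I cannot simply read off a limiting product channel. To handle this in finite dimensions, I would invoke compactness of each set $\mathrm{Q}(i',i)$ (a closed, bounded subset of Hermitian-preserving maps, identified with its Choi set) to extract a convergent subsequence, and then lower semicontinuity of $\mathbb{D}$ in its first argument, which holds for the standard divergences used in the paper (Umegaki relative entropy, sandwiched R\'enyi for $\alpha\in[1/2,1)\cup(1,\infty)$, geometric R\'enyi, and max-/min-relative entropies) to conclude that the limit tuple $(\mathcal{M}^1_\star,\ldots,\mathcal{M}^n_\star)$ actually attains the infimum. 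Once attainment is secured, the previous paragraph closes the proof; in infinite dimensions the statement should be qualified or proved under an additional lower-semicontinuity/regularity assumption on $\mathbb{D}$.
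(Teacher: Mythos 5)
The paper states this lemma without providing a proof, so there is nothing to compare against line by line; your argument supplies the missing details and is essentially the canonical one. The nonnegativity claim and the ``$\Leftarrow$'' direction are correct and complete: with all $\mathcal{M}^i$ channels the second argument of the state divergence is a state, so the channel divergence is a supremum of nonnegative numbers, and choosing $\mathcal{M}^i=\mathcal{N}^i$ for a product channel gives $\sup_\psi\mathbb{D}(\mathcal{N}(\psi)\Vert\mathcal{N}(\psi))=0$ by faithfulness. For the ``$\Rightarrow$'' direction you correctly identify the only genuine issue, namely attainment of the infimum: an abstract faithful divergence carries no quantitative relation to trace distance, so a minimizing sequence alone does not suffice, and the compactness-of-$\mathrm{Q}(i',i)$ plus semicontinuity route is the right fix in finite dimensions (the set of product channels is the continuous image of a compact product, so the subsequential limit is again a product channel, and equality of $\mathcal{N}(\psi_{R\mathbf{A}'})$ and $\mathcal{M}^1_\star\otimes\cdots\otimes\mathcal{M}^n_\star(\psi_{R\mathbf{A}'})$ on a maximally entangled input gives equality of Choi operators and hence of channels). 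One small correction: the semicontinuity you need is in the \emph{second} argument of the state divergence (the argument in which $\mathcal{M}^1\otimes\cdots\otimes\mathcal{M}^n$ appears), not the first; since $\mathbb{D}[\mathcal{N}\Vert\cdot]$ is then a supremum of lower semicontinuous functions of $\mathcal{M}$, it is itself lower semicontinuous, which is what closes the argument. This property holds (indeed jointly) for the divergences the paper uses, so with that relabelling your proof is correct; your closing caveat that the statement needs an extra regularity hypothesis in infinite dimensions is also a fair observation that the paper itself does not address.
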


\begin{proposition}
      The multipartite generalized mutual information of a (multipartite) quantum channel $\mathcal{N}_{A'B'\ldots N'\to AB\ldots B}$ is lower bounded by the multipartite generalized mutual information of its Choi state $\Phi^{\mathcal{N}}_{R_AAR_BB\ldots R_NN}$,
    \begin{equation}
     \mathbb{I}[A;B;\ldots;N]_{\mathcal{N}}\geq    \mathbb{I}(R_AA;R_BB;\ldots;R_NN)_{\Phi^\mathcal{N}}.
    \end{equation}
\end{proposition}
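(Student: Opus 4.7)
The plan is to exploit the definitional supremum in the generalized channel divergence by choosing a specific input, namely a tensor product of maximally entangled states, which converts everything into a statement about the Choi state.

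First I would fix an arbitrary tuple of channels $\mathcal{M}^1_{A'\to A},\mathcal{M}^2_{B'\to B},\ldots,\mathcal{M}^n_{N'\to N}$ and apply the definition of the generalized channel divergence:
\begin{equation*}
\mathbb{D}[\mathcal{N}\Vert\mathcal{M}^1\otimes\cdots\otimes\mathcal{M}^n]
= \sup_{\psi\in{\rm St}(R A'B'\cdots N')}
\mathbb{D}\bigl(\mathcal{N}(\psi)\,\Vert\,(\mathcal{M}^1\otimes\cdots\otimes\mathcal{M}^n)(\psi)\bigr).
\end{equation*}
Restricting the supremum to the pure product input $\psi=\Phi_{R_AA'}\otimes\Phi_{R_BB'}\otimes\cdots\otimes\Phi_{R_NN'}$ yields, on the left, the Choi state $\Phi^{\mathcal{N}}_{R_AA R_B B\cdots R_N N}$, and on the right, the tensor product of Choi states $\Phi^{\mathcal{M}^1}_{R_AA}\otimes\cdots\otimes\Phi^{\mathcal{M}^n}_{R_NN}$, since the action of a tensor product channel on a product of maximally entangled states factorizes.

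Next I would invoke nothing more than the definition of the state-level generalized mutual information: because each $\Phi^{\mathcal{M}^i}_{R_iA_i}$ is a particular state on $R_iA_i$, the infimum over all tuples of channels $\mathcal{M}^i$ is an infimum over a (generally strict) subset of the infimum over all tuples of states $\omega_i\in{\rm St}(R_iA_i)$. Concretely,
\begin{align*}
\mathbb{I}[A;B;\ldots;N]_{\mathcal{N}}
&\ge \inf_{\{\mathcal{M}^i\}}\mathbb{D}\bigl(\Phi^{\mathcal{N}}\,\big\Vert\,\Phi^{\mathcal{M}^1}\otimes\cdots\otimes\Phi^{\mathcal{M}^n}\bigr) \\
&\ge \inf_{\{\omega_i\in{\rm St}(R_iA_i)\}}\mathbb{D}\bigl(\Phi^{\mathcal{N}}\,\big\Vert\,\omega_1\otimes\cdots\otimes\omega_n\bigr) \\
&= \mathbb{I}(R_AA;R_BB;\ldots;R_NN)_{\Phi^{\mathcal{N}}},
\end{align*}
which is the desired bound.

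There is no real obstacle here: the proof is a one-liner once one commits to plugging in the maximally entangled inputs. The only subtle point worth mentioning is that restricting the feasible set on the right-hand side of the divergence from all product states to product-of-Choi states goes the ``right way'' for an infimum (making it no smaller), so enlarging that set only decreases the value and keeps the inequality in the correct direction. If one wanted to state the result for possibly infinite-dimensional channels, one would need to restrict to those $A_i', A_i$ for which the maximally entangled operator $\Phi_{R_iA_i'}$ makes sense as a (normalized) state, but otherwise the argument transfers verbatim.
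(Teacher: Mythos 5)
Your proposal is correct and follows essentially the same route as the paper: restrict the supremum in the channel divergence to the tensor product of maximally entangled inputs to obtain the Choi states, then relax the infimum over product-of-Choi states to an infimum over all product states. The paper writes this out for $N=3$ and notes the generalization, but the argument is identical.
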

\begin{proof}
     We prove the theorem for $N=3$ and the proof argument is generalizable for any positive integer $N\geq 2$.

    Let us assume $\mathcal{M}^1_{A'\to A}, \mathcal{M}^2_{B'\to B}, \mathcal{M}^3_{C'\to C}$ be quantum channels such that the following identity holds,
    \begin{align}
        \mathbb{I}[A;B;C]_{\mathcal{N}} &= \mathbb{D}[\mathcal{N}_{A'B'C'\to ABC}\Vert \mathcal{M}^1_{A'\to A}\otimes\mathcal{M}^2_{B'\to B}\otimes\mathcal{M}^3_{C'\to C}]\nonumber\\
        &\geq \mathbb{D}(\mathcal{N}(\Phi_{R_AA'}\otimes\Phi_{R_BB'}\otimes\Phi_{R_CC'})\Vert \mathcal{M}^1(\Phi_{R_AA'})\otimes\mathcal{M}^2(\Phi_{R_BB'})\otimes\mathcal{M}^3(\Phi_{R_CC'}))\nonumber\\
        &= \mathbb{D}(\Phi^{\mathcal{N}}_{R_AAR_BBR_CC}\Vert\Phi^{\mathcal{M}^1}_{R_AA}\otimes\Phi^{\mathcal{M}^2}_{R_BB}\otimes\Phi^{\mathcal{M}^3}_{R_CC})\nonumber\\
        & \geq \mathbb{I}(R_AA;R_BB;R_CC)_{\Phi^{\mathcal{N}}}.
    \end{align}
\end{proof}

\begin{dfn}
    The multipartite mutual information of a multipartite channel $\mathcal{N}_{A'B'\ldots N'\to AB\ldots N}$ is defined as
    \begin{equation}
    I[A;B;\ldots;N]_{\mathcal{N}}=\inf_{\mathcal{M}^1\otimes\mathcal{M}^2\otimes\cdots\otimes\mathcal{M}^n} D[\mathcal{N}_{A'B'\ldots N'\to AB\ldots N}\Vert\mathcal{M}^1_{A'\to A}\otimes \mathcal{M}^2_{B'\to B}\otimes\cdots\otimes\mathcal{M}^n_{N'\to N}],
    \end{equation}
    where  $\mathcal{M}^1\in\mathrm{Q}(A',A),\mathcal{M}^2\in\mathrm{Q}(B',B),\ldots,\mathcal{M}^n\in\mathrm{Q}(N',N)$.
\end{dfn}

For a replacer channel $\mathcal{N}$ that outputs only a fixed state $\rho\in{\rm St}(AB\ldots N)$, its multipartite mutual information is equal to the multipartite mutual information of the state $\rho_{AB\ldots N}$,
\begin{equation}
    I[A;B;\ldots;N]_{\mathcal{N}}=I(A;B;\ldots;N)_{\rho}.
\end{equation}

\begin{theorem}\label{thm:mi-cov-ch}
    For a tele-covariant multipartite channel $\mathcal{T}_{A'B'\ldots N'\to AB\ldots B}$~\cite[Theorem 5]{DBWH21}, the multipartite mutual information of the channel is equal to the multipartite mutual information of its Choi state $\Phi^{\mathcal{T}}_{R_AAR_BB\ldots R_NN}$,
    \begin{equation}
     I[A;B;\ldots;N]_{\mathcal{T}}=   I(R_AA;R_BB;\ldots;R_NN)_{\Phi^\mathcal{T}}.
    \end{equation}
\end{theorem}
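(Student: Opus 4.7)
The plan is to establish the two matching inequalities. The direction $I[A;B;\ldots;N]_{\mathcal{T}} \ge I(R_AA;R_BB;\ldots;R_NN)_{\Phi^\mathcal{T}}$ is immediate from the preceding proposition (specialized to the Umegaki relative entropy, which is additive and nonnegative on states), so the substantive content lies in constructing an explicit product of local channels that attains the infimum in the definition of $I[A;B;\ldots;N]_{\mathcal{T}}$.

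The first concrete step is to introduce reduced channels on each output system. For each $X\in\{A,B,\ldots,N\}$, define $\mathcal{T}^{X}_{X'\to X}$ by feeding the maximally mixed state into every other input and tracing out every other output; explicitly, $\mathcal{T}^{A}_{A'\to A}(\cdot):=\tr_{B\ldots N}\circ\mathcal{T}_{A'B'\ldots N'\to AB\ldots N}((\cdot)_{A'}\otimes\pi_{B'}\otimes\cdots\otimes\pi_{N'})$, and analogously for the other output systems. A direct computation (using $\tr_{R_X}[\Phi_{R_XX'}(\cdot)]=\pi_{X'}(\cdot)$ up to normalization) shows that the Choi state of $\mathcal{T}^{X}$ equals the corresponding marginal of the full Choi state: $\Phi^{\mathcal{T}^{X}}_{R_XX}=\Phi^{\mathcal{T}}_{R_XX}$ for every $X$.

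The next step is a multipartite generalization of Lemma~\ref{lem:tele-cov-reduced}: because the tele-covariance of $\mathcal{T}$ is governed by a group action that factorizes across the input--output pairs, each $\mathcal{T}^{X}$ inherits tele-covariance on its own block, and hence the product channel $\mathcal{T}^{A}_{A'\to A}\otimes\mathcal{T}^{B}_{B'\to B}\otimes\cdots\otimes\mathcal{T}^{N}_{N'\to N}$ is jointly tele-covariant with $\mathcal{T}_{A'B'\ldots N'\to AB\ldots N}$. Invoking \cite[Corollary II.5]{LKDW18} (exactly as in the proof of Theorem~\ref{thm:vN-choi-state}) collapses the channel divergence to the state divergence between the corresponding Choi states,
\begin{equation}
D[\mathcal{T}\Vert\mathcal{T}^{A}\otimes\cdots\otimes\mathcal{T}^{N}]=D(\Phi^{\mathcal{T}}\Vert\Phi^{\mathcal{T}}_{R_AA}\otimes\cdots\otimes\Phi^{\mathcal{T}}_{R_NN})=I(R_AA;R_BB;\ldots;R_NN)_{\Phi^{\mathcal{T}}}.
\end{equation}
Since the left-hand side is at least as large as the infimum that defines $I[A;B;\ldots;N]_{\mathcal{T}}$, this produces the matching upper bound and, together with the lower bound, the claimed equality.

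The main technical obstacle I anticipate is the rigorous verification of joint tele-covariance of $\mathcal{T}^{A}\otimes\cdots\otimes\mathcal{T}^{N}$ with $\mathcal{T}$; although intuitively clear from the block-structure of the tele-covariance unitaries (a tensor product of local representations on each $(X',X)$ pair), one must check that the partial-trace-and-maximally-mixed-input substitution commutes with the covariance representation on each block that has been eliminated, and that the resulting representation on the surviving block matches the one carried by $\mathcal{T}^{X}$. This is the natural multipartite analogue of the argument used for the bipartite case in Lemma~\ref{lem:tele-cov-reduced} and in the proof of Theorem~\ref{thm:vN-choi-state}, but the bookkeeping across many systems deserves care.
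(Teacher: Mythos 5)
Your proposal is correct and follows essentially the same route as the paper: the lower bound comes from the preceding Choi-state proposition, and the upper bound is obtained by choosing exactly the same product of reduced channels $\mathcal{T}^{X}_{X'\to X}(\cdot)=\tr_{\text{rest}}\circ\mathcal{T}(\cdot\otimes\pi\otimes\cdots\otimes\pi)$, which the paper likewise asserts to be jointly tele-covariant with $\mathcal{T}$ (via the multipartite extension of Lemma~\ref{lem:tele-cov-reduced}) so that the channel divergence collapses to the relative entropy between the Choi state and the product of its marginals. The technical caveat you flag about verifying joint tele-covariance is also left at the level of an appeal to that lemma in the paper itself.
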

\begin{proof}
    We prove the theorem for $N=3$ and the proof argument is generalizable for any positive integer $N\geq 2$.

    Let us assume $\mathcal{M}^1_{A'\to A}, \mathcal{M}^2_{B'\to B}, \mathcal{M}^3_{C'\to C}$ be quantum channels such that the following identity holds,
    \begin{align}
        I[A;B;C]_{\mathcal{T}} &= D[\mathcal{T}_{A'B'C'\to ABC}\Vert \mathcal{M}^1_{A'\to A}\otimes\mathcal{M}^2_{B'\to B}\otimes\mathcal{M}^3_{C'\to C}]\nonumber\\
        &\geq D(\Phi^{\mathcal{T}}_{R_AAR_BBR_CC}\Vert\Phi^{\mathcal{M}^1}_{R_AA}\otimes\Phi^{\mathcal{M}^2}_{R_BB}\otimes\Phi^{\mathcal{M}^3}_{R_CC})\nonumber\\
        & \geq I(R_AA;R_BB;R_CC)_{\Phi^{\mathcal{T}}}.
    \end{align}
    Now, we notice that there exists tele-covariant channels $\mathcal{M}^{1,\mathcal{T}}_{A'\to A}, \mathcal{M}^{2,\mathcal{T}}_{B'\to B}, \mathcal{M}^{3,\mathcal{T}}_{C'\to C}$ that are jointly tele-covariant with $\mathcal{T}_{A'B'C'\to ABC}$, see Lemma~\ref{lem:tele-cov-reduced}, and are of the form,
    \begin{equation}
        \mathcal{M}^{1,\mathcal{T}}_{A'\to A}=\tr_{BC}\circ \mathcal{T}_{A'B'C'\to ABC}(\cdot\otimes\pi_{B}\otimes\pi_C).
    \end{equation}
    Then, noticing that $\mathcal{M}^{1,\mathcal{T}}_{A'\to A}(\Phi_{R_AA'})=\tr_{BC}\circ \mathcal{T}_{A'B'C'\to ABC}(\Phi_{R_AA}\otimes\pi_{B}\otimes\pi_C)$, we have
     \begin{align}
        I[A;B;C]_{\mathcal{T}}&\leq D[\mathcal{T}_{A'B'C'\to ABC}\Vert \mathcal{M}^{1,\mathcal{T}}_{A'\to A}\otimes\mathcal{M}^{2,\mathcal{T}}_{B'\to B}\otimes\mathcal{M}^{3,\mathcal{T}}_{C'\to C}]\nonumber\\
        &= I(R_AA;R_BB;R_CC)_{\Phi^{\mathcal{T}}}.
        \end{align}
That concludes the proof.
\end{proof}

A direct consequence of the above theorem is the following corollary.
\begin{corollary}
    The half of the mutual information of any tele-covariant bipartite unitary channel $\mathcal{U}_{A'B'\to AB}$ is equal to the von Neumann entropy of its reduced state,
    \begin{equation}
       \frac{1}{2} I[A;B]_{\mathcal{U}}=S(R_AA)_{\Phi^\mathcal{U}}=S(R_BB)_{\Phi^\mathcal{U}}.
    \end{equation}
\end{corollary}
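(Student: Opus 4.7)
The plan is to invoke Theorem~\ref{thm:mi-cov-ch} specialized to the bipartite case ($N=2$), which directly gives
\begin{equation*}
    I[A;B]_{\mathcal{U}} = I(R_AA;R_BB)_{\Phi^{\mathcal{U}}},
\end{equation*}
so the problem reduces to computing the mutual information of the Choi state $\Phi^{\mathcal{U}}_{R_AAR_BB}$ of a tele-covariant bipartite \emph{unitary} channel.

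The key observation is that $\Phi^{\mathcal{U}}_{R_AAR_BB} = \mathcal{U}(\Phi_{R_AA'}\otimes\Phi_{R_BB'})$ is a \emph{pure} state on $R_AAR_BB$. Indeed, $\Phi_{R_AA'}\otimes\Phi_{R_BB'}$ is pure, and applying a unitary channel to a pure state preserves purity. Hence $S(R_AAR_BB)_{\Phi^{\mathcal{U}}} = 0$ and, by Schmidt symmetry of pure bipartite states, $S(R_AA)_{\Phi^{\mathcal{U}}} = S(R_BB)_{\Phi^{\mathcal{U}}}$.

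Using the relative-entropy expression $I(X;Y)_{\rho}=S(X)_{\rho}+S(Y)_{\rho}-S(XY)_{\rho}$ for the bipartition $X=R_AA$, $Y=R_BB$, and substituting the two facts above, I obtain
\begin{equation*}
    I(R_AA;R_BB)_{\Phi^{\mathcal{U}}} = 2\, S(R_AA)_{\Phi^{\mathcal{U}}} = 2\, S(R_BB)_{\Phi^{\mathcal{U}}}.
\end{equation*}
Combining this with the identity from Theorem~\ref{thm:mi-cov-ch} and dividing by two yields the claimed equalities. No genuine obstacle arises: the work is entirely done by Theorem~\ref{thm:mi-cov-ch} together with the elementary fact that unitary channels preserve purity; the only subtlety worth flagging is that the tele-covariance hypothesis is used only to invoke Theorem~\ref{thm:mi-cov-ch}, while unitarity (i.e., isometric action with $|A'|=|A|$, $|B'|=|B|$) is what guarantees that $\Phi^{\mathcal{U}}$ is pure and thereby collapses the mutual information to twice a marginal entropy.
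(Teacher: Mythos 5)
Your proposal is correct and follows exactly the route the paper intends: the corollary is stated as a direct consequence of Theorem~\ref{thm:mi-cov-ch}, and the remaining step — that the Choi state of a unitary channel is pure, so its mutual information across the cut $R_AA:R_BB$ collapses to twice either marginal entropy — is precisely the argument you give. Nothing further is needed.
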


\begin{lemma}
    The multipartite max-mutual information and the multipartite geometric R\'enyi mutual information of a (multipartite) quantum channel $\mathcal{N}_{A'B'\ldots N'\to AB\ldots N}$ are respectively given as
    \begin{align}
       I_{\max}[A;B;\ldots;N]_{\mathcal{N}}&=   \inf_{\bigotimes_{i=A}^N\mathcal{M}^i}D_{\max}(\Phi^{\mathcal{N}}_{R_AAR_BB\ldots R_NN}\Vert \otimes_{i=A}^{N}\Phi^{\mathcal{M}^i}_{R_ii}),\\
        \widehat{I}_{\alpha}[A;B;\ldots;N]_{\mathcal{N}}&=   \inf_{\bigotimes_{i=A}^N\mathcal{M}^i}\widehat{D}_{\alpha}(\Phi^{\mathcal{N}}_{R_AAR_BB\ldots R_NN}\Vert \otimes_{i=A}^{N}\Phi^{\mathcal{M}^i}_{R_ii}) \quad \forall \alpha\in(1,2],
    \end{align}
    where $\mathcal{M}^i\in{\rm Q}(i',i)$ for all $i\in\{A,B,\ldots,N\}$.
\end{lemma}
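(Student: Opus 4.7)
The plan is to prove both identities by combining two ingredients: (i) the known closed-form expressions for $D_{\max}[\mathcal{N}\Vert\mathcal{M}]$ and $\widehat{D}_\alpha[\mathcal{N}\Vert\mathcal{M}]$ in terms of Choi operators, and (ii) the factorization of Choi operators under tensor products of completely positive maps.

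For the max-relative entropy identity, I would begin with the definition $I_{\max}[A;B;\ldots;N]_{\mathcal{N}} = \inf D_{\max}[\mathcal{N}\Vert \bigotimes_i \mathcal{M}^i]$ and invoke Eq.~\eqref{eq:max_channel_div}, which gives $D_{\max}[\mathcal{N}\Vert\mathcal{M}] = D_{\max}(\Gamma^{\mathcal{N}}\Vert\Gamma^{\mathcal{M}})$. The Choi operator of a tensor-product map factorizes as $\Gamma^{\bigotimes_i \mathcal{M}^i} = \bigotimes_i \Gamma^{\mathcal{M}^i}$, which follows directly from the preliminaries' identity $\Gamma_{R_AR_B\ldots R_N:A'B'\ldots N'} = \bigotimes_i \Gamma_{R_i i'}$ together with linearity. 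Rescaling both Choi operators by the total input-system dimension $|A'B'\ldots N'|$ produces the corresponding Choi states, and since $D_{\max}(c\rho\Vert c\sigma) = D_{\max}(\rho\Vert\sigma)$ for any $c>0$, this rescaling leaves $D_{\max}$ invariant. Taking the infimum over product channels then yields the first identity.

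For the geometric R\'enyi identity with $\alpha \in (1,2]$, the structural steps are analogous, with Eq.~\eqref{equ:geometric_renyi_div} playing the role of Eq.~\eqref{eq:max_channel_div}. The key property I need is that the supremum over input states in $\widehat{D}_\alpha[\mathcal{N}\Vert \bigotimes_i \mathcal{M}^i]$ is attained by a product of local maximally entangled states $\bigotimes_i \Phi_{R_i i'}$, which reduces the channel divergence to $\widehat{D}_\alpha(\Phi^{\mathcal{N}}\Vert \bigotimes_i \Phi^{\mathcal{M}^i})$. Homogeneity of $\widehat{D}_\alpha$ under common rescaling, i.e., $\widehat{D}_\alpha(c\rho\Vert c\sigma) = \widehat{D}_\alpha(\rho\Vert\sigma)$, then converts Choi operators to Choi states, and the infimum over product channels closes the argument.

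The step I expect to be the main obstacle is precisely this last claim for the geometric R\'enyi case. The trivial direction $\widehat{D}_\alpha[\mathcal{N}\Vert \bigotimes_i \mathcal{M}^i] \ge \widehat{D}_\alpha(\Phi^{\mathcal{N}}\Vert \bigotimes_i \Phi^{\mathcal{M}^i})$ is obtained by selecting the maximally entangled product input in the channel divergence's variational expression, which generates the Choi state under $\mathcal{N}$ and the tensor product of Choi states under $\bigotimes_i \mathcal{M}^i$. For the reverse inequality at the level of the infimum over product channels, I would invoke the amortization-collapse property of $\widehat{D}_\alpha$ established in~\cite{FF21}, arguing through a covariance/twirling reduction that the infimum may without loss be restricted to product channels $\bigotimes_i \mathcal{M}^i$ for which the maximally entangled product state is an optimal input in the channel-divergence variational expression; the infima on both sides then coincide.
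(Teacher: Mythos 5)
Your argument for the max-relative-entropy identity is correct and complete: Eq.~\eqref{eq:max_channel_div} reduces $D_{\max}[\mathcal{N}\Vert\bigotimes_i\mathcal{M}^i]$ to $D_{\max}$ between Choi operators, the Choi operator of a product map factorizes as $\bigotimes_i\Gamma^{\mathcal{M}^i}$, and the common rescaling by $|A'||B'|\cdots|N'|$ is absorbed by the scale invariance $D_{\max}(c\rho\Vert c\sigma)=D_{\max}(\rho\Vert\sigma)$. The paper states this lemma without proof, so there is nothing to compare against, but this is surely the intended route for the first identity.

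The geometric R\'enyi half has a genuine gap, and you have correctly located where it sits but not closed it. The closed form Eq.~\eqref{equ:geometric_renyi_div} gives
\begin{equation*}
\widehat{D}_\alpha\Bigl[\mathcal{N}\,\Big\Vert\,\textstyle\bigotimes_i\mathcal{M}^i\Bigr]=\frac{1}{\alpha-1}\log\norm{\tr_{AB\ldots N}\,G_{1-\alpha}\bigl(\Gamma^{\mathcal{N}},\textstyle\bigotimes_i\Gamma^{\mathcal{M}^i}\bigr)}_{\infty},
\end{equation*}
whereas the state divergence between the Choi states equals $\frac{1}{\alpha-1}\log\bigl(\tfrac{1}{d}\tr G_{1-\alpha}(\Gamma^{\mathcal{N}},\bigotimes_i\Gamma^{\mathcal{M}^i})\bigr)$ with $d=|A'|\cdots|N'|$, using $\widehat{D}_\alpha(X\Vert Y)=\frac{1}{\alpha-1}\log\tr G_{1-\alpha}(X,Y)$ and the joint $1$-homogeneity of $G_{1-\alpha}$. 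Since $\norm{\tr_{\rm out}Z}_\infty\ge\tfrac{1}{d}\tr Z$ for $Z\ge 0$ and $\tfrac{1}{\alpha-1}>0$, the channel divergence dominates the Choi-state divergence, with equality only when $\tr_{AB\ldots N}G_{1-\alpha}(\cdot,\cdot)$ is proportional to the identity on $R_AR_B\ldots R_N$ --- equivalently, only when the maximally entangled product input actually attains the supremum. Establishing this for the optimizing product channel is precisely the missing step, and neither of your proposed remedies supplies it: a twirling/covariance reduction would require $\mathcal{N}$ itself to be (jointly) covariant, which is not assumed here and is exactly the hypothesis the paper invokes in Theorems~\ref{thm:vN-choi-state} and~\ref{thm:mi-cov-ch} when it does pass to Choi states; and the amortization-collapse property of $\widehat{D}_\alpha$ from \cite{FF21} concerns adaptive versus parallel discrimination strategies and says nothing about which single-copy input achieves the supremum. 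You also give no reason why the infimum over product channels should select a channel for which the partial trace above happens to be flat. As written, you have only established $\widehat{I}_\alpha[A;B;\ldots;N]_{\mathcal{N}}\ge\inf_{\bigotimes_i\mathcal{M}^i}\widehat{D}_\alpha(\Phi^{\mathcal{N}}\Vert\bigotimes_i\Phi^{\mathcal{M}^i})$; the reverse inequality remains unproven (and would either need an extra structural argument or an additional covariance-type hypothesis).
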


\subsection{Conditional mutual information for channels: An alternate definition}
We recall that the conditional quantum mutual information of a tripartite state $\rho_{ABC}$ is
\begin{align}
    I(A;C|B)_{\rho}&= S(A|B)_{\rho} -S(A|BC)_{\rho} =I(A;BC)_{\rho}-I(A;B)_{\rho}.\label{eq:qcmi-states-mi}
\end{align}
We can generalize the definition of the conditional mutual information of states from the first identity in Eq.~\eqref{eq:qcmi-states-mi} to tripartite channels in an alternate way as follows.
\begin{dfn}[MI-based generalized conditional mutual information]
    The mutual information (MI)-based generalized conditional mutual information of an arbitrary tripartite channel $\mathcal{N}_{A'B'C'\to ABC}$ is defined as
    \begin{equation}
       \mathbf{\Delta}[A;C|B]_{\mathcal{N}}:= \mathbf{I}[A;BC]_{\mathcal{N}}-\mathbf{I}[A;B]_{\mathcal{N}},
    \end{equation}
    where $\mathbf{I}[A;BC]_{\mathcal{M}}$ denotes the (bipartite) mutual information of a bipartite channel $\mathcal{M}_{A':B'C\to A:BC}$ and $\mathbf{I}[A;B]_{\mathcal{M}}$ denotes the mutual information of its reduced bipartite channel $\mathcal{M}_{A':B'C'\to A:B}:=\tr_C\circ\mathcal{M}_{A':B'C\to A:BC}$, i.e.,
    \begin{align}
        \mathbf{I}[A;BC]_{\mathcal{N}} &=\inf_{\mathcal{M}^1\in{\rm Q}(A',A),\mathcal{M}^2\in{\rm Q}(B'C',BC)}\mathbf{D}[\mathcal{N}_{A'B'C'\to ABC}\Vert\mathcal{M}^1_{A'\to A}\otimes\mathcal{M}^2_{B'C'\to BC}]\\
      \mathbf{I}[A;B]_{\mathcal{N}}  & = \inf_{\mathcal{M}^1\in{\rm Q}(A',A),\mathcal{M}^2\in{\rm Q}(B'C',B)}\mathbf{D}[\mathcal{N}_{A'B'C'\to AB}\Vert\mathcal{M}^1_{A'\to A}\otimes\mathcal{M}^2_{B'C'\to B}].
    \end{align}
\end{dfn}

\begin{remark}
    The MI-based generalized conditional mutual information $ \mathbf{\Delta}[A;C|B]_{\mathcal{N}}$ for any tripartite channel $\mathcal{N}_{A'B'C'\to ABC}$ is always nonnegative, $\mathbf{\Delta}[A;C|B]_{\mathcal{N}}\geq 0$. This can be referred to as MI-based strong subadditivity. It directly follows from the monotonicity of the generalized channel divergence.
\end{remark}

We can then study the MI-based quantum Markov channel in a similar spirit as in Section~\ref{sec:strong_subad_QCH}. We call a tripartite quantum channel $\mathcal{N}_{A'B'C'\to ABC}$ a MI-based quantum Markov chain or channel, in order $A-B-C$, if
\begin{equation}
    \Delta[A;C|B]_{\mathcal{N}}:= I[A;BC]_{\mathcal{N}}-I[A;B]_{\mathcal{N}}=0.
\end{equation}
It is apparently not clear if $I[A;C|B]_{\mathcal{N}}=\Delta[A;C|B]_{\mathcal{N}}$ for a tripartite channel $\mathcal{N}_{A'B'C'\to ABC}$ in general. However, for tele-covariant tripartite channels $\mathcal{T}_{A'B'C'\to ABC}$, both definitions of conditional mutual information are equal.
\begin{lemma}
    If $\mathcal{T}_{A'B'C'\to ABC}$ is a tele-covariant tripartite channel, then
\begin{equation}
    \Delta[A;C|B]_{\mathcal{T}}=I[A;C|B]_{\mathcal{T}}=I(R_AA;C|R_BBR_C)_{\Phi^{\mathcal{T}}},
\end{equation}
where $\Phi^{\mathcal{T}}_{R_AAR_BBR_CC}$ is the Choi state of $\mathcal{T}$.
\end{lemma}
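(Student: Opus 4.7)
The second equality $I[A;C|B]_{\mathcal{T}}=I(R_AA;C|R_BBR_C)_{\Phi^{\mathcal{T}}}$ is already supplied by Theorem~\ref{thm:cmi-tele-ch}, so the task reduces to showing $\Delta[A;C|B]_{\mathcal{T}}=I[A;C|B]_{\mathcal{T}}$. The plan is to express each of the two bipartite mutual informations comprising $\Delta$ as a quantity on a marginal of the Choi state $\Phi^{\mathcal{T}}_{R_AAR_BBR_CC}$ and then collapse the difference by the standard quantum chain rule for mutual information.

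First, I would view $\mathcal{T}$ as a bipartite channel across the cut $A'\,|\,B'C'$, identifying the composite reference with $R_BR_C$ via $\Phi_{R_BR_CB'C'}=\Phi_{R_BB'}\otimes\Phi_{R_CC'}$. Tripartite tele-covariance of $\mathcal{T}$ promotes to bipartite tele-covariance across this cut, because the Heisenberg--Weyl (or Pauli) group on $B'C'$ factorizes through the individual groups on $B'$ and $C'$ under which $\mathcal{T}$ is already covariant. Theorem~\ref{thm:mi-cov-ch} (its $N=2$ instance) then yields
\begin{equation*}
    I[A;BC]_{\mathcal{T}}=I(R_AA;R_BBR_CC)_{\Phi^{\mathcal{T}}}.
\end{equation*}
The reduced channel $\tr_C\circ\mathcal{T}$ inherits bipartite tele-covariance on the same cut (covariance is preserved under partial trace on the output), and its bipartite Choi state is exactly the marginal $\Phi^{\mathcal{T}}_{R_AAR_BBR_C}$, so another application of Theorem~\ref{thm:mi-cov-ch} gives
\begin{equation*}
    I[A;B]_{\mathcal{T}}=I(R_AA;R_BBR_C)_{\Phi^{\mathcal{T}}}.
\end{equation*}

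Subtracting the two expressions and invoking the chain rule $I(X;YZ)_{\omega}-I(X;Y)_{\omega}=I(X;Z\,|\,Y)_{\omega}$ with $X=R_AA$, $Y=R_BBR_C$, $Z=C$ on the state $\Phi^{\mathcal{T}}_{R_AAR_BBR_CC}$ yields
\begin{equation*}
    \Delta[A;C|B]_{\mathcal{T}}=I(R_AA;C\,|\,R_BBR_C)_{\Phi^{\mathcal{T}}},
\end{equation*}
which matches $I[A;C|B]_{\mathcal{T}}$ by Theorem~\ref{thm:cmi-tele-ch}, completing the chain of equalities.

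The main obstacle I anticipate is the careful bookkeeping required to justify that both $\mathcal{T}_{A'|B'C'\to A|BC}$ and $\tr_C\circ\mathcal{T}_{A'|B'C'\to A|B}$ indeed satisfy the tele-covariance hypothesis of Theorem~\ref{thm:mi-cov-ch} across the coarser bipartition, together with the identification of the bipartite Choi state in that cut with the appropriate marginal of the tripartite Choi state $\Phi^{\mathcal{T}}_{R_AAR_BBR_CC}$. Once this identification is in place, the remaining manipulation is just the quantum chain rule at the level of states, which is entirely standard.
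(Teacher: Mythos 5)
Your proposal is correct and follows essentially the same route as the paper: the second equality is quoted from Theorem~\ref{thm:cmi-tele-ch}, both bipartite mutual informations in $\Delta$ are converted to mutual informations of the Choi state via Theorem~\ref{thm:mi-cov-ch} (with the fact that the reduced channel $\tr_C\circ\mathcal{T}$ remains (jointly) tele-covariant, which the paper supplies through Lemma~\ref{lem:tele-cov-reduced}), and the chain rule $I(X;YZ)-I(X;Y)=I(X;Z|Y)$ closes the argument. The "bookkeeping" you flag as the main obstacle is exactly what Lemma~\ref{lem:tele-cov-reduced} is invoked for in the paper, so no genuinely new ingredient is needed.
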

\begin{proof}
Consider a tele-covariant tripartite channel $\mathcal{T}_{A'B'C'\to ABC}$. From Theorem~\ref{thm:cmi-tele-ch}, we have $I[A;C|B]_{\mathcal{T}}=I(R_AA;C|R_BBR_C)_{\Phi^{\mathcal{T}}}$. We now notice that the reduced channel of a tele-covariant multipartite channel is also a tele-covariant channel and jointly tele-covariant with the full (original) channel, see Lemma~\ref{lem:tele-cov-reduced} and \cite[Remark 5]{DBWH21}. Using this fact and Theorem~\ref{thm:mi-cov-ch},
\begin{align}
    \Delta[A;C|B]_{\mathcal{T}}& = I[A;C|B]_{\mathcal{T}}-I[A;B]_{\mathcal{T}}\nonumber\\
    & = I(R_AA;R_BBR_CC)_{\Phi^{\mathcal{T}}}-I(R_AA;R_BBR_C)_{\Phi^{\mathcal{T}}}\nonumber\\
    &= I(R_AA;C|R_BBR_C)_{\Phi^{\mathcal{T}}}=I[A;C|B]_{\mathcal{T}}.
\end{align}
\end{proof}

\section{Bidirectional conditional entropy and information} \label{sec:bidirectional_CH}
A bipartite channel $\mathcal{N}_{A'B'\to AB}$ is called a bidirectional channel if pairs $(A',A)$ and $(B',B)$ are in separate labs, say with Ash and Bob, respectively, and they have access only to the separable states for input to the channel (cf.~\cite{BHLS03,Das19,DBW17}). However, it suffices to optimize $\mathbf{D}^{\leftrightarrow}[\mathcal{N}^1\Vert\mathcal{N}^2]$ between two bidirectional quantum channels $\mathcal{N}^1,\mathcal{N}^2\in{\rm Q}(A'B',AB)$ over (pure) product input states $\psi_{R_AA'}\otimes\phi_{R_BB'}$. 

\begin{dfn}[Bidirectional entropy]
     The generalized entropy of a bidirectional channel $\mathcal{N}_{A'B'\to AB}$ is called bidirectional generalized entropy $\mathbf{S}^{\leftrightarrow}[AB]_{\mathcal{N}}$,
     \begin{equation}
         \mathbf{S}^{\leftrightarrow}[AB]_{\mathcal{N}}:= \mathbf{S}^{\leftrightarrow}[\mathcal{N}]:= - \mathbf{D}^{\leftrightarrow}[\mathcal{N}_{A'B'\to AB}\Vert\mathcal{R}_{A'\to A}\otimes\mathcal{R}_{B'\to B}],
     \end{equation}
     where $\mathbf{D}^{\leftrightarrow}$ is an appropriate generalized channel divergence.
\end{dfn}
\begin{dfn}[Bidirectional conditional entropy]
    The generalized conditional entropy of a bidirectional channel $\mathcal{N}_{A'B'\to AB}$ is called bidirectional generalized conditional entropy $\mathbf{S}^{\leftrightarrow}[A|B]_{\mathcal{N}}$,
    \begin{equation}
        \mathbf{S}^{\leftrightarrow}[A|B]_{\mathcal{N}}:=-\inf_{\mathcal{M}\in\mathrm{Q}(B',B)}\mathbf{D}^{\leftrightarrow}[\mathcal{N}_{A'B'\to AB}\Vert\mathcal{R}_{A'\to A}\otimes\mathcal{M}_{B'\to B}].
    \end{equation}
\end{dfn}

\begin{remark}
    For a bipartite channel $\mathcal{N}_{A'B'\to AB}$, the bidirectional generalized conditional entropy $\mathbf{S}^{\leftrightarrow}[A|B]_{\mathcal{N}}$ is lower bounded by the generalized conditional entropy $\mathbf{S}[A|B]_{\mathcal{N}}$, 
    \begin{equation}\label{eq:rem-bi-gen}
        \mathbf{S}[A|B]_{\mathcal{N}}\leq \mathbf{S}^{\leftrightarrow}[A|B]_{\mathcal{N}}. 
    \end{equation}
\end{remark}
\begin{proposition}
        The bidirectional conditional entropy $\mathbf{S}^{\leftrightarrow}[A|B]_{\mathcal{N}}$ of a bipartite channel $\mathcal{N}_{A'B'\to AB}$ is invariant under postprocessing with a local unitary channel $\mathcal{U}_{A\to C}$ on Ash and nondecreasing under postprocessing with a local channel on Bob $\mathcal{M}_{B\to D}$,
        \begin{equation}
            \mathbf{S}^{\leftrightarrow}[A|B]_{\mathcal{N}}= \mathbf{S}^{\leftrightarrow}[C|B]_{\mathcal{U}\circ\mathcal{N}}\quad \text{and}\quad  \mathbf{S}^{\leftrightarrow}[A|B]_{\mathcal{N}}\leq  \mathbf{S}^{\leftrightarrow}[A|D]_{\mathcal{M}\circ\mathcal{N}}.
        \end{equation}
\end{proposition}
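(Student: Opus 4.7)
The plan is to reduce both claims to a local monotonicity property of the bidirectional channel divergence $\mathbf{D}^{\leftrightarrow}$, which in turn follows immediately from its definition via the state-level monotonicity of the underlying generalized state divergence $\mathbf{D}$. Concretely, for any two bipartite channels $\mathcal{N}^1,\mathcal{N}^2\in\mathrm{Q}(A'B',AB)$ and any channel $\mathcal{K}_{B\to D}$ applied on Bob's output, data processing at the state level gives
\begin{equation}
\mathbf{D}(\mathcal{K}\circ\mathcal{N}^1(\rho_{R_AA'}\otimes\omega_{R_BB'})\Vert \mathcal{K}\circ\mathcal{N}^2(\rho_{R_AA'}\otimes\omega_{R_BB'}))\leq \mathbf{D}(\mathcal{N}^1(\rho_{R_AA'}\otimes\omega_{R_BB'})\Vert\mathcal{N}^2(\rho_{R_AA'}\otimes\omega_{R_BB'})),
\end{equation}
and supremizing over product pure inputs yields $\mathbf{D}^{\leftrightarrow}[\mathcal{K}\circ\mathcal{N}^1\Vert\mathcal{K}\circ\mathcal{N}^2]\leq \mathbf{D}^{\leftrightarrow}[\mathcal{N}^1\Vert\mathcal{N}^2]$. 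The same reasoning works for post-processing on Ash's side. When the post-processing channel is a unitary $\mathcal{U}$, applying this inequality together with its inverse $\mathcal{U}^{\dag}$ turns it into an equality.

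\noindent\textbf{Unitary invariance on Ash.} First I would record that $\mathcal{U}_{A\to C}\circ\mathcal{R}_{A'\to A}=\mathcal{R}_{A'\to C}$, which holds because $\mathcal{U}(\mathbbm{1}_A)=\mathbbm{1}_C$ for any unitary channel. Then, for an arbitrary $\mathcal{M}'\in\mathrm{Q}(B',B)$, applying the unitary invariance of $\mathbf{D}^{\leftrightarrow}$ on Ash's output yields
\begin{equation}
\mathbf{D}^{\leftrightarrow}[\mathcal{N}_{A'B'\to AB}\Vert \mathcal{R}_{A'\to A}\otimes \mathcal{M}'_{B'\to B}]=\mathbf{D}^{\leftrightarrow}[\mathcal{U}\circ\mathcal{N}\Vert \mathcal{R}_{A'\to C}\otimes \mathcal{M}'_{B'\to B}].
\end{equation}
Taking the infimum over $\mathcal{M}'\in\mathrm{Q}(B',B)$ on both sides and negating gives $\mathbf{S}^{\leftrightarrow}[A|B]_{\mathcal{N}}=\mathbf{S}^{\leftrightarrow}[C|B]_{\mathcal{U}\circ\mathcal{N}}$.

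\noindent\textbf{Monotonicity under local channel on Bob.} Fix any $\mathcal{M}''\in\mathrm{Q}(B',B)$ and set $\mathcal{M}'_{B'\to D}:=\mathcal{M}_{B\to D}\circ\mathcal{M}''_{B'\to B}\in\mathrm{Q}(B',D)$. Post-processing on Bob's side commutes with the tensor product in the natural way, so
\begin{equation}
\mathcal{M}_{B\to D}\circ\bigl(\mathcal{R}_{A'\to A}\otimes \mathcal{M}''_{B'\to B}\bigr)=\mathcal{R}_{A'\to A}\otimes \mathcal{M}'_{B'\to D}.
\end{equation}
Local monotonicity of $\mathbf{D}^{\leftrightarrow}$ under $\mathcal{M}_{B\to D}$ then gives
\begin{equation}
\mathbf{D}^{\leftrightarrow}[\mathcal{N}\Vert \mathcal{R}_{A'\to A}\otimes \mathcal{M}''_{B'\to B}]\geq \mathbf{D}^{\leftrightarrow}[\mathcal{M}\circ\mathcal{N}\Vert \mathcal{R}_{A'\to A}\otimes \mathcal{M}'_{B'\to D}]\geq \inf_{\widetilde{\mathcal{M}}\in\mathrm{Q}(B',D)}\mathbf{D}^{\leftrightarrow}[\mathcal{M}\circ\mathcal{N}\Vert \mathcal{R}_{A'\to A}\otimes\widetilde{\mathcal{M}}_{B'\to D}].
\end{equation}
Taking the infimum on the left over all $\mathcal{M}''\in\mathrm{Q}(B',B)$ and negating delivers $\mathbf{S}^{\leftrightarrow}[A|B]_{\mathcal{N}}\leq \mathbf{S}^{\leftrightarrow}[A|D]_{\mathcal{M}\circ\mathcal{N}}$.

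\noindent\textbf{Main obstacle.} There is no substantive obstacle here: the argument is entirely parallel to the proof of axiom P1 (Theorem~\ref{thm:axioms}) for the ordinary conditional entropy of bipartite channels, the only difference being that the optimization is now over product input states rather than arbitrary bipartite inputs. The one subtlety worth verifying carefully is that the local post-processing on one side does not disturb the product structure of the inputs used in the supremum defining $\mathbf{D}^{\leftrightarrow}$; this is automatic since the post-processing acts only on outputs.
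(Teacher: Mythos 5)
Your proof is correct. The paper states this proposition without proof, and your argument is exactly the expected one: it mirrors the paper's proof of axiom A1 in Theorem~\ref{thm:axioms}, with the only modification being that the supremum in $\mathbf{D}^{\leftrightarrow}$ runs over product inputs $\rho_{R_AA'}\otimes\omega_{R_BB'}$, which (as you correctly note) is untouched by output post-processing. The key identities $\mathcal{U}_{A\to C}\circ\mathcal{R}_{A'\to A}=\mathcal{R}_{A'\to C}$ and $\mathcal{M}_{B\to D}\circ(\mathcal{R}\otimes\mathcal{M}'')=\mathcal{R}\otimes(\mathcal{M}\circ\mathcal{M}'')$ are both right, and your use of state-level data processing with the second argument a positive operator is consistent with the paper's own conventions (e.g., the appendix proof of Theorem~\ref{chgen_div_mon}).
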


\begin{proposition}
         The bidirectional conditional entropy $\mathbf{S}^{\leftrightarrow}[A|B]_{\mathcal{N}}$ of a bipartite channel $\mathcal{N}_{A'B'\to AB}$ that is product of local channels $\mathcal{N}^1_{A'\to A}$ and $\mathcal{N}_{A'B'\to AB}$ is equal to the entropy of the channel $\mathcal{N}^1_{A'\to A}$,
     \begin{equation}
         \mathbf{S}^{\leftrightarrow}[A|B]_{\mathcal{N}}=\mathbf{S}[A]_{\mathcal{N}^1}=:\mathbf{S}[\mathcal{N}^1].
     \end{equation}
\end{proposition}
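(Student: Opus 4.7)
The plan is to adapt the proof argument of property A4 in Theorem~\ref{thm:axioms} to the bidirectional setting, under the standing assumption that the underlying generalized state divergence $\mathbf{D}(\cdot\Vert\cdot)$ is additive and nonnegative on states (the same assumption used for Theorem~\ref{thm:axioms}). The key simplification is that Definition~\ref{def:bi-gen-ch} already restricts the supremum defining $\mathbf{D}^{\leftrightarrow}[\cdot\Vert\cdot]$ to pure product inputs $\psi_{R_AA'}\otimes\phi_{R_BB'}$, which matches the factorized form of both the channel $\mathcal{N}^1_{A'\to A}\otimes\mathcal{N}^2_{B'\to B}$ and of every reference map $\mathcal{R}_{A'\to A}\otimes\mathcal{M}_{B'\to B}$ appearing in the infimum.

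For an arbitrary $\mathcal{M}\in\mathrm{Q}(B',B)$, on a product input the two completely positive maps produce product outputs $\mathcal{N}^1(\psi_{R_AA'})\otimes\mathcal{N}^2(\phi_{R_BB'})$ and $\mathcal{R}(\psi_{R_AA'})\otimes\mathcal{M}(\phi_{R_BB'})$, respectively. Applying additivity of $\mathbf{D}$ on product states and the independence of the suprema over $\psi_{R_AA'}$ and $\phi_{R_BB'}$, I would compute
\begin{align}
\mathbf{D}^{\leftrightarrow}[\mathcal{N}^1\otimes\mathcal{N}^2\Vert\mathcal{R}\otimes\mathcal{M}]
&=\sup_{\psi,\phi}\bigl[\mathbf{D}(\mathcal{N}^1(\psi_{R_AA'})\Vert\mathcal{R}(\psi_{R_AA'}))+\mathbf{D}(\mathcal{N}^2(\phi_{R_BB'})\Vert\mathcal{M}(\phi_{R_BB'}))\bigr]\nonumber\\
&=\mathbf{D}[\mathcal{N}^1\Vert\mathcal{R}]+\mathbf{D}[\mathcal{N}^2\Vert\mathcal{M}].
\end{align}

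Next, I would take the infimum over $\mathcal{M}$. The first summand does not depend on $\mathcal{M}$, while for the second the choice $\mathcal{M}=\mathcal{N}^2$ gives $\sup_\phi\mathbf{D}(\mathcal{N}^2(\phi)\Vert\mathcal{N}^2(\phi))=0$. The vanishing $\mathbf{D}(\omega\Vert\omega)=0$ follows from the additivity and nonnegativity hypotheses together with monotonicity under replacer channels (which forces $\mathbf{D}(\omega\Vert\omega)$ to be independent of $\omega$, and additivity then pins its value to zero). Combined with nonnegativity of $\mathbf{D}[\mathcal{N}^2\Vert\mathcal{M}]$, this yields $\inf_{\mathcal{M}}\mathbf{D}[\mathcal{N}^2\Vert\mathcal{M}]=0$, and therefore
\begin{equation}
-\mathbf{S}^{\leftrightarrow}[A|B]_{\mathcal{N}^1\otimes\mathcal{N}^2}=\mathbf{D}[\mathcal{N}^1\Vert\mathcal{R}]=-\mathbf{S}[\mathcal{N}^1],
\end{equation}
which is the claimed identity.

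No genuine obstacle arises here: unlike the proof of property A4 for the general $\mathbf{S}[A|B]_{\mathcal{N}}$, where one first has to restrict the sup over arbitrary inputs to product inputs (losing an inequality), in the bidirectional case the supremum is defined over product inputs at the outset, so the computation becomes an exact additive split rather than a lower bound. The only step requiring care is the separability of the joint sup over $(\psi,\phi)$, which is immediate because the two optimizations range over independent sets.
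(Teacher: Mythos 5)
Your proof is correct, and it is the natural argument here: the paper states this proposition without proof, evidently regarding it as the bidirectional analogue of property A4 of Theorem~\ref{thm:axioms}, which is exactly what you supply. Your observation that the bidirectional divergence is restricted to product inputs from the outset, so that the additive split $\mathbf{D}^{\leftrightarrow}[\mathcal{N}^1\otimes\mathcal{N}^2\Vert\mathcal{R}\otimes\mathcal{M}]=\mathbf{D}[\mathcal{N}^1\Vert\mathcal{R}]+\mathbf{D}[\mathcal{N}^2\Vert\mathcal{M}]$ is an exact identity rather than the one-sided inequality appearing in the A4 proof, is the right point to emphasize. The only stylistic divergence from the paper's template is how you close the infimum over $\mathcal{M}$: the A4 proof invokes invariance of the channel divergence under appending a channel (property C2) to get $\mathbf{D}[\mathcal{N}^1\otimes\mathcal{N}^2\Vert\mathcal{R}\otimes\mathcal{N}^2]=\mathbf{D}[\mathcal{N}^1\Vert\mathcal{R}]$, whereas you argue directly that $\mathbf{D}[\mathcal{N}^2\Vert\mathcal{N}^2]=0$ via monotonicity under replacer channels plus additivity; both routes are valid under the standing hypotheses. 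One minor caveat, which applies equally to the paper's own A4 proof: the additivity axiom S2 is stated for states, while here it is applied with the unnormalized second argument $\mathcal{R}(\psi_{R_AA'})=\psi_{R_A}\otimes\mathbbm{1}_A$, so strictly one needs additivity to hold for positive operators as well (as it does for all the divergences the paper actually uses).
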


\begin{lemma}
    The bidirectional generalized entropy $\mathbf{S}^{\leftrightarrow}[AB]_{\mathcal{N}}$ and the generalized entropy $\mathbf{S}[AB]_{\mathcal{N}}$ of a tele-covariant channel $\mathcal{N}_{A'B'\to AB}$ are equal.
\end{lemma}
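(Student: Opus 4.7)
The plan is to prove both inequalities $\mathbf{S}[AB]_{\mathcal{N}} \leq \mathbf{S}^{\leftrightarrow}[AB]_{\mathcal{N}}$ and $\mathbf{S}[AB]_{\mathcal{N}} \geq \mathbf{S}^{\leftrightarrow}[AB]_{\mathcal{N}}$. First observe that, as noted implicitly above, $\mathcal{R}_{A'\to A}\otimes\mathcal{R}_{B'\to B}$ and $\mathcal{R}_{A'B'\to AB}$ are the same completely positive map up to a dimensional factor: each sends $X_{A'B'}\mapsto \tr(X)\mathbbm{1}_{AB}$. Hence both $\mathbf{S}[AB]_{\mathcal{N}}$ and $\mathbf{S}^{\leftrightarrow}[AB]_{\mathcal{N}}$ are the (negatives of) suprema of the same state-level divergence $\mathbf{D}(\mathcal{N}(\cdot)\Vert \mathcal{R}\otimes\mathcal{R}(\cdot))$, differing only in the feasible set of input states: arbitrary $\rho_{RA'B'}\in\mathrm{St}(RA'B')$ for $\mathbf{D}[\cdot\Vert\cdot]$, versus product states $\psi_{R_AA'}\otimes\phi_{R_BB'}$ for $\mathbf{D}^{\leftrightarrow}[\cdot\Vert\cdot]$.

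The inequality $\mathbf{S}[AB]_{\mathcal{N}} \leq \mathbf{S}^{\leftrightarrow}[AB]_{\mathcal{N}}$ is then immediate: every product input is a valid general input, so optimizing over a larger set in $\mathbf{D}[\mathcal{N}\Vert\mathcal{R}\otimes\mathcal{R}]$ yields a value at least as large as in $\mathbf{D}^{\leftrightarrow}[\mathcal{N}\Vert\mathcal{R}\otimes\mathcal{R}]$, and negation reverses the inequality.

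For the reverse inequality, I would invoke the tele-covariance machinery already used in the proof of Theorem~\ref{thm:vN-choi-state}. The map $\mathcal{R}_{A'\to A}\otimes\mathcal{R}_{B'\to B}$ is trivially tele-covariant with respect to any group action on $A'B'$ since it is input-independent (up to the scalar trace), so it is \emph{jointly} tele-covariant with the given tele-covariant bipartite channel $\mathcal{N}_{A'B'\to AB}$ in the sense of \cite[Remark 1]{DBW17} and \cite[Definition 5]{DW19}. Then, by \cite[Corollary II.5]{LKDW18} (the same step invoked in the first inequality chain of Theorem~\ref{thm:vN-choi-state}), the supremum defining $\mathbf{D}[\mathcal{N}\Vert\mathcal{R}\otimes\mathcal{R}]$ is attained by the product maximally entangled input $\Phi_{R_AA'}\otimes\Phi_{R_BB'}$. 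Since this state is itself a product state across the $R_AA':R_BB'$ bipartition, it is feasible for $\mathbf{D}^{\leftrightarrow}$, giving $\mathbf{D}[\mathcal{N}\Vert\mathcal{R}\otimes\mathcal{R}] \leq \mathbf{D}^{\leftrightarrow}[\mathcal{N}\Vert\mathcal{R}\otimes\mathcal{R}]$, i.e., $\mathbf{S}[AB]_{\mathcal{N}} \geq \mathbf{S}^{\leftrightarrow}[AB]_{\mathcal{N}}$. Combining the two directions gives the claimed equality.

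The argument is essentially a specialization of the proof of Theorem~\ref{thm:vN-choi-state} to the unconditional entropy with no auxiliary channel $\mathcal{M}$ to optimize over, so the only point requiring care is verifying that $\mathcal{R}\otimes\mathcal{R}$ qualifies as a partner for joint tele-covariance with $\mathcal{N}$. This is straightforward because for any bipartite unitaries $U_{A'}\otimes V_{B'}$ and $W_A\otimes X_B$ appearing in the covariance relation for $\mathcal{N}$, the completely mixing map satisfies $\mathcal{R}\otimes\mathcal{R}(U\otimes V(\cdot)U^\dagger\otimes V^\dagger) = \mathcal{R}\otimes\mathcal{R}(\cdot) = W\otimes X\,\mathcal{R}\otimes\mathcal{R}(\cdot)\,W^\dagger\otimes X^\dagger$ trivially. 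Once that verification is in place, the two-line argument above closes the proof; no substantial obstacle is anticipated.
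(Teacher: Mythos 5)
Your proof is correct, and the paper in fact states this lemma without any proof, so there is nothing to compare against directly; your argument is exactly the specialization of the tele-covariance machinery the paper uses for the analogous statements (Theorem~\ref{thm:vN-choi-state} and Theorem~\ref{thm:mi-cov-ch}) to the unconditional entropy. Two small remarks: $\mathcal{R}_{A'\to A}\otimes\mathcal{R}_{B'\to B}$ and $\mathcal{R}_{A'B'\to AB}$ are \emph{identical} maps (both send $X\mapsto\tr(X)\mathbbm{1}_{AB}$), not merely equal up to a dimensional factor, which only makes your first step cleaner; and the step asserting that the maximally entangled product input $\Phi_{R_AA'}\otimes\Phi_{R_BB'}$ attains the supremum for jointly tele-covariant pairs relies on the generalized divergence satisfying the direct-sum property assumed in \cite[Proposition~II.4 / Corollary~II.5]{LKDW18}, a hypothesis the paper also leaves implicit wherever it invokes that result, so you are consistent with its conventions.
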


\begin{proposition}
    For a bipartite channel $\mathcal{N}_{A'B'\to AB}$, the von Neumann conditional entropy of its Choi state $\Phi^{\mathcal{N}}_{R_AAR_BB}$ and the bidirectional von Neumann conditional entropy $S^{\leftrightarrow}[A|B]_{\mathcal{N}}$ are related as
    \begin{equation}
        \log|A'|\leq S(R_AA|R_BB)_{\Phi^{\mathcal{N}}}-S^{\leftrightarrow}[A|B]_{\mathcal{N}}. 
    \end{equation}
\end{proposition}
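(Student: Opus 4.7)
The plan is to lower-bound the defining infimum of $-S^{\leftrightarrow}[A|B]_{\mathcal{N}}$ by plugging in the product of maximally entangled states $\Phi_{R_AA'}\otimes\Phi_{R_BB'}$ as the input to the bidirectional channel divergence, and then relaxing the remaining infimum over channels $\mathcal{M}\in\mathrm{Q}(B',B)$ to an infimum over all states on $R_BB$, which brings the conditional von Neumann entropy of the Choi state into view.

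First, by Definition~\ref{def:bi-gen-ch} we have, for any fixed $\mathcal{M}\in\mathrm{Q}(B',B)$,
\begin{align}
    D^{\leftrightarrow}[\mathcal{N}\Vert\mathcal{R}_{A'\to A}\otimes\mathcal{M}_{B'\to B}]
    &\geq D\bigl(\mathcal{N}(\Phi_{R_AA'}\otimes\Phi_{R_BB'})\;\big\Vert\;\mathcal{R}_{A'\to A}(\Phi_{R_AA'})\otimes\mathcal{M}_{B'\to B}(\Phi_{R_BB'})\bigr),
\end{align}
since the supremum in the bidirectional channel divergence can only be larger than the value at this particular pure product input. Taking the infimum over $\mathcal{M}$ on both sides preserves the inequality, giving a lower bound on $-S^{\leftrightarrow}[A|B]_{\mathcal{N}}$.

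Next I would identify the three pieces on the right-hand side. The output $\mathcal{N}(\Phi_{R_AA'}\otimes\Phi_{R_BB'})$ is exactly the Choi state $\Phi^{\mathcal{N}}_{R_AAR_BB}$. The action of the completely mixing map gives $\mathcal{R}_{A'\to A}(\Phi_{R_AA'})=\mathbbm{1}_A\otimes\pi_{R_A}=\mathbbm{1}_{R_AA}/|A'|$, and $\mathcal{M}(\Phi_{R_BB'})=\Phi^{\mathcal{M}}_{R_BB}$. Using the scaling property $D(\rho\Vert c\sigma)=-\log c+D(\rho\Vert\sigma)$ of the Umegaki relative entropy, the factor $1/|A'|$ pulls out as $\log|A'|$, yielding
\begin{equation}
    -S^{\leftrightarrow}[A|B]_{\mathcal{N}}\;\geq\;\log|A'|+\inf_{\mathcal{M}\in\mathrm{Q}(B',B)} D\bigl(\Phi^{\mathcal{N}}_{R_AAR_BB}\,\big\Vert\,\mathbbm{1}_{R_AA}\otimes\Phi^{\mathcal{M}}_{R_BB}\bigr).
\end{equation}

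Finally I would note that Choi states $\Phi^{\mathcal{M}}_{R_BB}$ form a strict subset of $\mathrm{St}(R_BB)$ (they satisfy $\tr_B\Phi^{\mathcal{M}}_{R_BB}=\pi_{R_B}$), so restricting the minimization to Choi states only increases the infimum:
\begin{equation}
    \inf_{\mathcal{M}\in\mathrm{Q}(B',B)} D\bigl(\Phi^{\mathcal{N}}\,\big\Vert\,\mathbbm{1}_{R_AA}\otimes\Phi^{\mathcal{M}}_{R_BB}\bigr)\;\geq\;\inf_{\sigma_{R_BB}\in\mathrm{St}(R_BB)} D\bigl(\Phi^{\mathcal{N}}\,\big\Vert\,\mathbbm{1}_{R_AA}\otimes\sigma_{R_BB}\bigr)\;=\;-S(R_AA|R_BB)_{\Phi^{\mathcal{N}}},
\end{equation}
where the last equality is the relative-entropy form of the conditional von Neumann entropy (as in Eq.~\eqref{eq:cvN-RE}). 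Combining the two displays and rearranging gives the claimed bound $\log|A'|\leq S(R_AA|R_BB)_{\Phi^{\mathcal{N}}}-S^{\leftrightarrow}[A|B]_{\mathcal{N}}$. No genuine obstacle is anticipated; the only subtlety is keeping track of the normalization convention (so that $\mathcal{R}_{A'\to A}$ outputs $\mathbbm{1}_A$ rather than $\pi_A$) and justifying the inf-sup interchange that is avoided here by simply evaluating the supremum at a single point before taking the infimum.
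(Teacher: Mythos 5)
Your proposal is correct and follows essentially the same route as the paper's proof: evaluate the bidirectional channel divergence at the product of maximally entangled inputs, pull out the $\log|A'|$ normalization via $D(\rho\Vert c\sigma)=-\log c+D(\rho\Vert\sigma)$, and relax the infimum over Choi states $\Phi^{\mathcal{M}}_{R_BB}$ to one over all states on $R_BB$ to recover $-S(R_AA|R_BB)_{\Phi^{\mathcal{N}}}$. The only cosmetic difference is that the paper fixes the optimizing $\mathcal{M}$ at the outset rather than carrying the infimum through, which changes nothing.
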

\begin{proof}
    Let $\mathcal{M}_{B'\to B}$ be a quantum channel that yields infimum for $D^{\leftrightarrow}[\mathcal{N}_{A'B'\to AB}\vert \mathcal{R}_{A'\to A}\otimes\mathcal{M}_{B'\to B}]=-S^{\leftrightarrow}[A|B]_{\mathcal{N}}$.
    \begin{align}
        -S^{\leftrightarrow}[A|B]_{\mathcal{N}}&\geq D(\Phi^{\mathcal{N}}_{R_AAR_BB}\Vert \pi_{R_A}\otimes\mathbbm{1}_A\otimes\Phi^{\mathcal{M}}_{R_BB})\nonumber\\
        &=\log|A'|+D(\Phi^{\mathcal{N}}_{R_AAR_BB}\Vert \mathbbm{1}_{R_A}\otimes\mathbbm{1}_A\otimes\Phi^{\mathcal{M}}_{R_BB})\nonumber\\
        &\geq \log|A'|+ \inf_{\sigma\in\mathrm{St}(R_BB)}D(\Phi^{\mathcal{N}}_{R_AAR_BB}\Vert \mathbbm{1}_{R_A}\otimes\mathbbm{1}_A\otimes\sigma_{R_BB})\nonumber\\
        &= \log|A'|-S(R_AA|R_BB)_{\Phi^{\mathcal{N}}}.
    \end{align}
\end{proof}

\begin{dfn}
      The generalized mutual information of a bidirectional channel $\mathcal{N}_{A'B'\ldots N'\to AB\ldots N}$ is called bidirectional generalized mutual information $\mathbb{I}^{\leftrightarrow}[A;B;\ldots;N]_{\mathcal{N}}$,
    \begin{equation}
        \mathbb{I}^{\leftrightarrow}[A;B;\ldots;N]_{\mathcal{N}}:=\inf_{\mathcal{M}^1\otimes\mathcal{M}^2\otimes\cdots\otimes\mathcal{M}^n}\mathbb{D}^{\leftrightarrow}[\mathcal{N}_{A'B'\ldots N'\to AB\ldots N}\Vert\mathcal{M}^1_{A'\to A}\otimes \mathcal{M}^2_{B'\to B}\otimes\cdots\otimes\mathcal{M}^n_{N'\to N}],
    \end{equation}
    where $\mathcal{M}^1\in\mathrm{Q}(A',A),\mathcal{M}^2\in\mathrm{Q}(B',B),\ldots,\mathcal{M}^n\in\mathrm{Q}(N',N)$.
\end{dfn}
\begin{lemma}
For a multipartite channel $\mathcal{N}_{A'B'\ldots N'\to AB\ldots N}$, $ \mathbb{I}^{\leftrightarrow}[A;B;\ldots;N]_{\mathcal{N}}\geq 0$ if the associated generalized divergence is always nonnegative for the states. When the associated generalized divergence is faithful for the states then $ \mathbb{I}^{\leftrightarrow}[A;B;\ldots;N]_{\mathcal{N}}= 0$ if and only if $\mathcal{N}_{A'B'\ldots N'\to AB\ldots N}$ is product of local channels, i.e., of the form $\mathcal{N}_{A'B'\ldots N'\to AB\ldots N}=\mathcal{N}^1_{A'\to A}\otimes \mathcal{N}^2_{B'\to B}\otimes\cdots\otimes\mathcal{N}^n_{N'\to N}$.
\end{lemma}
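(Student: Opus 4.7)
The plan is to establish both claims by direct appeal to the definition of the bidirectional generalized channel function together with the assumed properties (nonnegativity and faithfulness) of the underlying state divergence $\mathbb{D}$.

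For the nonnegativity claim, observe that for any choice of local channels $\mathcal{M}^i \in \mathrm{Q}(i',i)$, the tensor product $\mathcal{M}^1 \otimes \cdots \otimes \mathcal{M}^n$ is again a valid quantum channel, and for every product input state $\rho^1_{R_1A'_1} \otimes \cdots \otimes \rho^n_{R_nA'_n}$ both $\mathcal{N}$ and $\mathcal{M}^1 \otimes \cdots \otimes \mathcal{M}^n$ output bona fide density operators. By Definition~\ref{def:bi-gen-ch}, $\mathbb{D}^{\leftrightarrow}[\mathcal{N}\Vert\mathcal{M}^1\otimes\cdots\otimes\mathcal{M}^n]$ is a supremum over such product inputs of $\mathbb{D}$ evaluated on pairs of states; if $\mathbb{D}(\cdot\Vert\cdot)\geq 0$ on all state pairs then the supremum is nonnegative, and taking the infimum over product channels preserves nonnegativity, giving $\mathbb{I}^{\leftrightarrow}[A;B;\ldots;N]_{\mathcal{N}}\geq 0$.

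For the equivalence under faithfulness, the easy direction ($\Leftarrow$) is to assume $\mathcal{N} = \mathcal{N}^1 \otimes \cdots \otimes \mathcal{N}^n$; then plugging the feasible choice $\mathcal{M}^i := \mathcal{N}^i$ into the infimum yields $\mathbb{D}^{\leftrightarrow}[\mathcal{N}\Vert\mathcal{N}]=\sup \mathbb{D}(\sigma\Vert\sigma)=0$ by faithfulness, which together with nonnegativity forces $\mathbb{I}^{\leftrightarrow}=0$. For the converse ($\Rightarrow$), I would argue that $\mathbb{I}^{\leftrightarrow}=0$ combined with nonnegativity means there is a (minimizing, or limiting) choice of product channels $\mathcal{M}^{*}=\mathcal{M}^{1,*}\otimes\cdots\otimes\mathcal{M}^{n,*}$ with $\mathbb{D}^{\leftrightarrow}[\mathcal{N}\Vert\mathcal{M}^{*}]=0$. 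Since $\mathbb{D}^{\leftrightarrow}$ is a supremum of nonnegative quantities, this zero must be attained for every product input state, so $\mathbb{D}(\mathcal{N}(\rho^1\otimes\cdots\otimes\rho^n)\Vert\mathcal{M}^{*}(\rho^1\otimes\cdots\otimes\rho^n))=0$ for all pure $\rho^i_{R_iA'_i}$. By faithfulness of $\mathbb{D}$ on states, the two output states coincide for every such input. In particular, specializing each $\rho^i_{R_iA'_i}$ to the maximally entangled state $\Phi_{R_iA'_i}$ (which is itself a valid choice in the bidirectional supremum, being a product across the $n$ parties) yields equality of the Choi states, hence by the Choi--Jamio\l{}kowski isomorphism $\mathcal{N}=\mathcal{M}^{1,*}\otimes\cdots\otimes\mathcal{M}^{n,*}$, i.e., $\mathcal{N}$ is a product of local channels.

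The main obstacle is the attainment of the infimum: in finite dimensions the set of channels $\mathrm{Q}(i',i)$ is compact and, assuming reasonable lower semicontinuity of $\mathbb{D}^{\leftrightarrow}$ in the second argument, a minimizer $\mathcal{M}^{*}$ exists and the Choi argument above applies directly. In the infinite-dimensional setting one would instead work with a minimizing sequence $\{\mathcal{M}^{1,k}\otimes\cdots\otimes\mathcal{M}^{n,k}\}_k$ satisfying $\mathbb{D}^{\leftrightarrow}[\mathcal{N}\Vert\mathcal{M}^{1,k}\otimes\cdots\otimes\mathcal{M}^{n,k}]\to 0$ and, evaluating on the single input $\Phi_{R_1A'_1}\otimes\cdots\otimes\Phi_{R_NA'_N}$, conclude that the Choi operators of the product-channel sequence converge (in whatever topology $\mathbb{D}$ controls convergence, e.g.\ trace-norm via Pinsker-type bounds when $\mathbb{D}=D$) to $\Gamma^{\mathcal{N}}$; since the limit Choi operator is a product across the $n$ parties, one deduces that $\Gamma^{\mathcal{N}}$ itself has product structure and hence $\mathcal{N}$ factorizes. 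This convergence/attainment step is the only nontrivial point; the rest is a direct unwinding of the definitions.
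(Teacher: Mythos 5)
The paper states this lemma without supplying any proof (the same is true of its non-bidirectional counterpart in Section~\ref{sec:mutual_information_CH}), so there is no ``paper proof'' to compare against; your argument is the natural one and is essentially correct. Nonnegativity follows exactly as you say: $\mathbb{D}^{\leftrightarrow}[\mathcal{N}\Vert\mathcal{M}^1\otimes\cdots\otimes\mathcal{M}^n]$ is a supremum of nonnegative state divergences, and an infimum of nonnegative quantities is nonnegative. The ``if'' direction via the feasible choice $\mathcal{M}^i=\mathcal{N}^i$ and $\mathbb{D}(\sigma\Vert\sigma)=0$ is fine, and the ``only if'' direction correctly exploits that the product of maximally entangled states $\Phi_{R_1A'_1}\otimes\cdots\otimes\Phi_{R_NA'_N}$ is itself an admissible product input in the bidirectional supremum (the paper records $\Gamma_{R_AR_B:A'B'}=\Gamma_{R_AA'}\otimes\Gamma_{R_BB'}$ in the preliminaries precisely for this kind of step), so equality of outputs on that input gives equality of Choi states and hence of the channels.

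The one point worth stressing is the caveat you already flag: the converse direction as stated needs more than bare faithfulness of $\mathbb{D}$. Faithfulness only says $\mathbb{D}(\rho\Vert\sigma)=0\iff\rho=\sigma$; it does not by itself imply that a vanishing infimum is attained, nor that $\mathbb{D}(\rho\Vert\sigma_k)\to 0$ forces $\sigma_k\to\rho$. So you genuinely need either (i) attainment of the infimum, which in finite dimensions follows from compactness of $\mathrm{Q}(i',i)$ \emph{together with} lower semicontinuity of $\mathbb{D}$ in its second argument, or (ii) a quantitative Pinsker-type bound for the minimizing-sequence route. Both hold for every divergence the paper actually uses (Umegaki, max-, min-, geometric R\'enyi), but neither is among the axioms the lemma assumes, so strictly speaking the lemma's hypotheses should be read as implicitly including such a regularity condition (or finite dimensionality plus lower semicontinuity). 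Also note that the maximally-entangled-input trick presupposes $|A'_i|<\infty$; in infinite dimensions one should instead conclude from equality of outputs on all pure product inputs with arbitrary finite-dimensional references. With these provisos made explicit, your proof is complete.
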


\section{Discussion}\label{sec:discussion}
The notions of entropy, conditional entropy, mutual information, and conditional mutual information form primitives of the information theory for both classical and quantum systems. The applications of the conditional entropy, the strong subadditivity of the entropy, and mutual information measures of the quantum states have been immense in the study of quantum systems and processes from fundamental as well as technological aspects, see, e.g., quantum error correction, quantum communication, quantum field theory, condensed matter physics, memory-effects in open quantum dynamics, quantum thermodynamics, etc.~\cite{CC97,BM01,WVHC08,ADHW09,BDW16,CBTW17,BRLW17,PP17,PW21,AP22,AMT24,ANJ+24,GPG+24,BGG+24}.

In this work, we have introduced the definitions of the conditional entropy and strong subadditivity of the entropy of quantum channels grounded in well-motivated axioms. We also provide definition of the conditional mutual information of tripartite quantum channels. Analogous to the conditional entropy of states, we have shown that the conditional entropy of bipartite quantum channels has the potential to offer valuable insights into their correlation-generating features, which cannot be captured by the entropy. In particular, it can demarcate semicausal channels with no causal influence from the input non-conditioning system to the output conditioning system with the channels that allow for signaling between the aforementioned systems. The strong subadditivity property of the entropy allows us to define the quantum Markov chain for multipartite channels, analogous to the Markov chain for random variables and quantum states.

The fundamental role of conditional entropy and conditional mutual information in information theory motivates us to claim that these notions for quantum channels will have wide applications and implications in the study of quantum processes, including higher-order processes. We have discussed the recipe for the educated guesses of the definitions of the conditional entropy, strong subadditivity of the entropy, and the multipartite mutual information of quantum channels. This allows to extend the entropic and information measures of the channels to higher order processes in a way similar to \cite[Section V and VI]{SPSD24}. It would be useful to study the properties of these definitions and their operational meanings for transition maps (classical channels)\footnote{To the best of our knowledge, we are not aware of definitions of the conditional entropies and (conditional) mutual information for multipartite channels in classical information theory. We could perhaps be mistaken.}. One may also utilize the conditional entropy of channels or their variants to algorithmically characterize their underlying causal structure~\cite{BWZ+22,GC24} and characterize the memory effects in open quantum dynamics~\cite{DMM+24,BGG+24}. The conditional entropy of channel may also be used to derive the entropic uncertainty relation for the measurements of quantum channels~\cite{SRCZ16}. Our approach could be useful in the context of error correction in quantum gates (error correction at the dynamical level) and characterizing the memory effects in open quantum dynamics. We leave open the question of finding a novel information or computation task, other than the channel discrimination, for which the conditional entropy of an \textit{arbitrary} quantum channel has operational meaning as the optimal rate.

\begin{acknowledgments}
The authors thank Sohail, Himanshu Badhani, Manabendra Nath Bera, Karol Horodecki, and Uttam Singh for the discussions. S.D. acknowledges support from the Science and Engineering Research Board, Department of Science and Technology (SERB-DST), Government of India, under Grant No. SRG/2023/000217 and the Ministry of Electronics and Information Technology (MeitY), Government of India, under Grant No. 4(3)/2024-ITEA. S.D. also thanks IIIT Hyderabad for the Faculty Seed Grant. K.G. is supported by the Senior Research Fellowship Scheme SRFS2021-7S02 and the John Templeton Foundation through grant 62312, “The Quantum Information Structure of
Spacetime” (qiss.fr).
\end{acknowledgments}
\appendix

\section{Proof of Theorem \ref{chgen_div_mon}}\label{prop:monotonicity_of_channeldivergence_proof}

\begin{proof}
     To prove that the proposed inequality holds, we start with the l.h.s., i.e., $ \mathbb{D}\left[\Theta\left(\mathcal{N}\right)\Vert \Theta \left(\mathcal{M}\right)\right]$. Now, considering a general superchannel $\Theta$, whose action on any CP map is defined in Eq.~\eqref{equ:superchannel_action}, we can write
  \begin{align}
     & \chD*{\Theta\left(\mathcal{N}\right)}{\Theta \left(\mathcal{M}\right)} \nonumber\\
     & = \chD*{\mathcal{P}^{\rm{post}}_{AE \rightarrow C}\hspace{0.05cm}\circ\left(\mathcal{N}_{A'\rightarrow A} \otimes \mathrm{id}_{E}\right) \hspace{0.05cm} \circ \mathcal{P}^{\rm{pre}}_{C' \rightarrow A'E}}{\mathcal{P}^{\rm{post}}_{AE \rightarrow C}\hspace{0.05cm}\circ\left(\mathcal{M}_{A'\rightarrow A} \otimes \mathrm{id}_{E}\right) \hspace{0.05cm} \circ \mathcal{P}^{\rm{pre}}_{C' \rightarrow A'E}} \nonumber\\
     & =  \underset{\psi_{C'B}}{\mathrm{sup}} \staD*{\left(\mathcal{P}^{\rm{post}}_{AE \rightarrow C}\hspace{0.05cm}\circ\left(\mathcal{N}_{A'\rightarrow A} \otimes \mathrm{id}_{E}\right) \hspace{0.05cm} \circ \mathcal{P}^{\rm{pre}}_{C' \rightarrow A'E} \otimes \mathrm{id}_{B}\right) \psi_{C'B} }{\left(\mathcal{P}^{\rm{post}}_{AE \rightarrow C}\hspace{0.05cm}\circ\left(\mathcal{M}_{A'\rightarrow A} \otimes \mathrm{id}_{E}\right) \hspace{0.05cm} \circ \mathcal{P}^{\rm{pre}}_{C' \rightarrow A'E} \otimes \mathrm{id}_{B}\right) \psi_{C'B}} \nonumber \\
     & \leq \underset{\psi_{A'E B}}{\mathrm{sup}} \staD*{\left(\mathcal{P}^{\rm{post}}_{AE \rightarrow C}\hspace{0.05cm}\circ\left(\mathcal{N}_{A'\rightarrow A} \otimes \mathrm{id}_{E}\right)  \otimes \mathrm{id}_{B}\right) \psi_{A'EB} }{\left(\mathcal{P}^{\rm{post}}_{AE \rightarrow C}\hspace{0.05cm}\circ\left(\mathcal{M}_{A'\rightarrow A} \otimes \mathrm{id}_{E}\right) \otimes \mathrm{id}_{B}\right) \psi_{A'EB}},\nonumber
  \end{align}
  where the inequality in the third line follows from the fact that the set of pure states obtained by varying $\psi_{C'B}$ and applying $\mathcal{P}^{\rm{post}}_{C' \rightarrow A'E}$ is smaller in comparison to the set of all pure states in the system $A'EB$, i.e, $\{\psi_{A'E B}\}$. In other words, the supremum of a set is always greater than or equal to the supremum over its subsets. Let us define $\left(\left(\mathcal{N}_{A'\rightarrow A}\otimes \mathrm{id}_{E}\right)\otimes \mathrm{id}_{B}\right)\psi_{A'EB} := \rho^{\psi}_{ABE}$, which is also a density operator since $\mathcal{N}_{A' \rightarrow A}$ is a quantum channel, and $\left(\left(\mathcal{M}_{A'\rightarrow A}\otimes \mathrm{id}_{E}\right)\otimes \mathrm{id}_{B}\right)\psi_{A'EB} := \sigma^{\psi}_{ABE}$, which is a positive operator, because map $\mathcal{M}$ is CP but not trace preserving. Now, we have 
   \begin{align}
       &  \underset{\psi_{A'E B}}{\mathrm{sup}} \staD*{\left(\mathcal{P}^{\rm{post}}_{AE \rightarrow C}\hspace{0.05cm}\circ\left(\mathcal{N}_{A'\rightarrow A} \otimes \mathrm{id}_{E}\right)  \otimes \mathrm{id}_{B}\right) \psi_{A'EB} }{\left(\mathcal{P}^{\rm{post}}_{AE \rightarrow C}\hspace{0.05cm}\circ\left(\mathcal{M}_{A'\rightarrow A} \otimes \mathrm{id}_{E}\right) \otimes \mathrm{id}_{B}\right) \psi_{A'EB}} \\
       & =\underset{\psi_{A'E B}}{\mathrm{sup}} \staD*{\left( \mathcal{P}^{\rm{post}}_{AE \rightarrow C} \otimes \mathrm{id}_{B}\right) \rho^{\psi}_{ABE}}{\left( \mathcal{P}^{\rm{post}}_{AE \rightarrow C}\otimes \mathrm{id}_{B}\right) \sigma^{\psi}_{AEB}}\nonumber \\
       & \leq \underset{\psi_{A'E B}}{\mathrm{sup}} \staD*{\rho^{\psi}_{ABE}}{\sigma^{\psi}_{AEB}}\nonumber \\
       & = \underset{\psi_{A'E B}}{\mathrm{sup}} \staD*{\left(\left(\mathcal{N}_{A'\rightarrow A}\otimes \mathrm{id}_{E}\right)\otimes \mathrm{id}_{B}\right)\psi_{A'EB}}{\left(\left(\mathcal{M}_{A'\rightarrow A}\otimes \mathrm{id}_{E}\right)\otimes \mathrm{id}_{B}\right)\psi_{A'EB}} = \chD*{\mathcal{N}}{\mathcal{M}},
   \end{align}
where the third line follows from the monotonicity of relative entropy under a quantum channel, and the last line simply follows from the definition of relative entropy between two completely positive maps.  Thus, we have the inequality
\begin{equation}
   \mathbb{D}\left[\Theta\left(\mathcal{N}\right) \Vert \Theta \left(\mathcal{M}\right)\right] \leq \mathbb{D}\left[\mathcal{N}\Vert \mathcal{M}\right].
\end{equation}
\end{proof}
\bibliography{output}{}

\begin{thebibliography}{99}%
\makeatletter
\providecommand \@ifxundefined [1]{%
 \@ifx{#1\undefined}
}%
\providecommand \@ifnum [1]{%
 \ifnum #1\expandafter \@firstoftwo
 \else \expandafter \@secondoftwo
 \fi
}%
\providecommand \@ifx [1]{%
 \ifx #1\expandafter \@firstoftwo
 \else \expandafter \@secondoftwo
 \fi
}%
\providecommand \natexlab [1]{#1}%
\providecommand \enquote  [1]{``#1''}%
\providecommand \bibnamefont  [1]{#1}%
\providecommand \bibfnamefont [1]{#1}%
\providecommand \citenamefont [1]{#1}%
\providecommand \href@noop [0]{\@secondoftwo}%
\providecommand \href [0]{\begingroup \@sanitize@url \@href}%
\providecommand \@href[1]{\@@startlink{#1}\@@href}%
\providecommand \@@href[1]{\endgroup#1\@@endlink}%
\providecommand \@sanitize@url [0]{\catcode `\\12\catcode `\$12\catcode `\&12\catcode `\#12\catcode `\^12\catcode `\_12\catcode `\%12\relax}%
\providecommand \@@startlink[1]{}%
\providecommand \@@endlink[0]{}%
\providecommand \url  [0]{\begingroup\@sanitize@url \@url }%
\providecommand \@url [1]{\endgroup\@href {#1}{\urlprefix }}%
\providecommand \urlprefix  [0]{URL }%
\providecommand \Eprint [0]{\href }%
\providecommand \doibase [0]{https://doi.org/}%
\providecommand \selectlanguage [0]{\@gobble}%
\providecommand \bibinfo  [0]{\@secondoftwo}%
\providecommand \bibfield  [0]{\@secondoftwo}%
\providecommand \translation [1]{[#1]}%
\providecommand \BibitemOpen [0]{}%
\providecommand \bibitemStop [0]{}%
\providecommand \bibitemNoStop [0]{.\EOS\space}%
\providecommand \EOS [0]{\spacefactor3000\relax}%
\providecommand \BibitemShut  [1]{\csname bibitem#1\endcsname}%
\let\auto@bib@innerbib\@empty
\bibitem [{\citenamefont {von Neumann}(1932)}]{Neu32}%
  \BibitemOpen
  \bibfield  {author} {\bibinfo {author} {\bibfnamefont {J.}~\bibnamefont {von Neumann}},\ }\href@noop {} {\emph {\bibinfo {title} {Mathematische grundlagen der quantenmechanik}}}\ (\bibinfo  {publisher} {Verlag von Julius Springer Berlin},\ \bibinfo {year} {1932})\BibitemShut {NoStop}%
\bibitem [{\citenamefont {Shannon}(1948)}]{Sha48}%
  \BibitemOpen
  \bibfield  {author} {\bibinfo {author} {\bibfnamefont {C.~E.}\ \bibnamefont {Shannon}},\ }\bibfield  {title} {\bibinfo {title} {A mathematical theory of communication},\ }\href@noop {} {\bibfield  {journal} {\bibinfo  {journal} {Bell System Technical Journal}\ }\textbf {\bibinfo {volume} {27}},\ \bibinfo {pages} {379} (\bibinfo {year} {1948})}\BibitemShut {NoStop}%
\bibitem [{\citenamefont {Caves}\ and\ \citenamefont {Drummond}(1994)}]{CD94}%
  \BibitemOpen
  \bibfield  {author} {\bibinfo {author} {\bibfnamefont {C.~M.}\ \bibnamefont {Caves}}\ and\ \bibinfo {author} {\bibfnamefont {P.~D.}\ \bibnamefont {Drummond}},\ }\bibfield  {title} {\bibinfo {title} {Quantum limits on bosonic communication rates},\ }\href {https://doi.org/10.1103/RevModPhys.66.481} {\bibfield  {journal} {\bibinfo  {journal} {Reviews of Modern Physics}\ }\textbf {\bibinfo {volume} {66}},\ \bibinfo {pages} {481} (\bibinfo {year} {1994})}\BibitemShut {NoStop}%
\bibitem [{\citenamefont {Hirata}\ and\ \citenamefont {Takayanagi}(2007)}]{HT07}%
  \BibitemOpen
  \bibfield  {author} {\bibinfo {author} {\bibfnamefont {T.}~\bibnamefont {Hirata}}\ and\ \bibinfo {author} {\bibfnamefont {T.}~\bibnamefont {Takayanagi}},\ }\bibfield  {title} {\bibinfo {title} {{AdS/CFT} and strong subadditivity of entanglement entropy},\ }\href {https://doi.org/10.1088/1126-6708/2007/02/042} {\bibfield  {journal} {\bibinfo  {journal} {Journal of High Energy Physics}\ }\textbf {\bibinfo {volume} {2007}},\ \bibinfo {pages} {042–042} (\bibinfo {year} {2007})}\BibitemShut {NoStop}%
\bibitem [{\citenamefont {Abeyesinghe}\ \emph {et~al.}(2009)\citenamefont {Abeyesinghe}, \citenamefont {Devetak}, \citenamefont {Hayden},\ and\ \citenamefont {Winter}}]{ADHW09}%
  \BibitemOpen
  \bibfield  {author} {\bibinfo {author} {\bibfnamefont {A.}~\bibnamefont {Abeyesinghe}}, \bibinfo {author} {\bibfnamefont {I.}~\bibnamefont {Devetak}}, \bibinfo {author} {\bibfnamefont {P.}~\bibnamefont {Hayden}},\ and\ \bibinfo {author} {\bibfnamefont {A.}~\bibnamefont {Winter}},\ }\bibfield  {title} {\bibinfo {title} {The mother of all protocols: Restructuring quantum information’s family tree},\ }\href@noop {} {\bibfield  {journal} {\bibinfo  {journal} {Proceedings of the Royal Society A: Mathematical, Physical and Engineering Sciences}\ }\textbf {\bibinfo {volume} {465}},\ \bibinfo {pages} {2537} (\bibinfo {year} {2009})}\BibitemShut {NoStop}%
\bibitem [{\citenamefont {Short}\ and\ \citenamefont {Wehner}(2010)}]{SW10}%
  \BibitemOpen
  \bibfield  {author} {\bibinfo {author} {\bibfnamefont {A.~J.}\ \bibnamefont {Short}}\ and\ \bibinfo {author} {\bibfnamefont {S.}~\bibnamefont {Wehner}},\ }\bibfield  {title} {\bibinfo {title} {Entropy in general physical theories},\ }\href {https://doi.org/10.1088/1367-2630/12/3/033023} {\bibfield  {journal} {\bibinfo  {journal} {New Journal of Physics}\ }\textbf {\bibinfo {volume} {12}},\ \bibinfo {pages} {033023} (\bibinfo {year} {2010})}\BibitemShut {NoStop}%
\bibitem [{\citenamefont {Rio}\ \emph {et~al.}(2011)\citenamefont {Rio}, \citenamefont {{\AA}berg}, \citenamefont {Renner}, \citenamefont {Dahlsten},\ and\ \citenamefont {Vedral}}]{RAR+11}%
  \BibitemOpen
  \bibfield  {author} {\bibinfo {author} {\bibfnamefont {L.~d.}\ \bibnamefont {Rio}}, \bibinfo {author} {\bibfnamefont {J.}~\bibnamefont {{\AA}berg}}, \bibinfo {author} {\bibfnamefont {R.}~\bibnamefont {Renner}}, \bibinfo {author} {\bibfnamefont {O.}~\bibnamefont {Dahlsten}},\ and\ \bibinfo {author} {\bibfnamefont {V.}~\bibnamefont {Vedral}},\ }\bibfield  {title} {\bibinfo {title} {The thermodynamic meaning of negative entropy},\ }\href {https://doi.org/10.1038/nature10123} {\bibfield  {journal} {\bibinfo  {journal} {Nature}\ }\textbf {\bibinfo {volume} {474}},\ \bibinfo {pages} {61–63} (\bibinfo {year} {2011})}\BibitemShut {NoStop}%
\bibitem [{\citenamefont {Tomamichel}\ \emph {et~al.}(2014)\citenamefont {Tomamichel}, \citenamefont {Berta},\ and\ \citenamefont {Hayashi}}]{TBH14}%
  \BibitemOpen
  \bibfield  {author} {\bibinfo {author} {\bibfnamefont {M.}~\bibnamefont {Tomamichel}}, \bibinfo {author} {\bibfnamefont {M.}~\bibnamefont {Berta}},\ and\ \bibinfo {author} {\bibfnamefont {M.}~\bibnamefont {Hayashi}},\ }\bibfield  {title} {\bibinfo {title} {Relating different quantum generalizations of the conditional {R}ényi entropy},\ }\bibfield  {journal} {\bibinfo  {journal} {Journal of Mathematical Physics}\ }\textbf {\bibinfo {volume} {55}},\ \href {https://doi.org/10.1063/1.4892761} {10.1063/1.4892761} (\bibinfo {year} {2014})\BibitemShut {NoStop}%
\bibitem [{\citenamefont {Coles}\ \emph {et~al.}(2017)\citenamefont {Coles}, \citenamefont {Berta}, \citenamefont {Tomamichel},\ and\ \citenamefont {Wehner}}]{CBTW17}%
  \BibitemOpen
  \bibfield  {author} {\bibinfo {author} {\bibfnamefont {P.~J.}\ \bibnamefont {Coles}}, \bibinfo {author} {\bibfnamefont {M.}~\bibnamefont {Berta}}, \bibinfo {author} {\bibfnamefont {M.}~\bibnamefont {Tomamichel}},\ and\ \bibinfo {author} {\bibfnamefont {S.}~\bibnamefont {Wehner}},\ }\bibfield  {title} {\bibinfo {title} {Entropic uncertainty relations and their applications},\ }\href {https://doi.org/10.1103/RevModPhys.89.015002} {\bibfield  {journal} {\bibinfo  {journal} {Reviews of Modern Physics}\ }\textbf {\bibinfo {volume} {89}},\ \bibinfo {pages} {015002} (\bibinfo {year} {2017})}\BibitemShut {NoStop}%
\bibitem [{\citenamefont {Das}\ \emph {et~al.}(2021)\citenamefont {Das}, \citenamefont {B\"auml}, \citenamefont {Winczewski},\ and\ \citenamefont {Horodecki}}]{DBWH21}%
  \BibitemOpen
  \bibfield  {author} {\bibinfo {author} {\bibfnamefont {S.}~\bibnamefont {Das}}, \bibinfo {author} {\bibfnamefont {S.}~\bibnamefont {B\"auml}}, \bibinfo {author} {\bibfnamefont {M.}~\bibnamefont {Winczewski}},\ and\ \bibinfo {author} {\bibfnamefont {K.}~\bibnamefont {Horodecki}},\ }\bibfield  {title} {\bibinfo {title} {Universal limitations on quantum key distribution over a network},\ }\href {https://doi.org/10.1103/PhysRevX.11.041016} {\bibfield  {journal} {\bibinfo  {journal} {Physical Review X}\ }\textbf {\bibinfo {volume} {11}},\ \bibinfo {pages} {041016} (\bibinfo {year} {2021})}\BibitemShut {NoStop}%
\bibitem [{\citenamefont {Buscemi}\ \emph {et~al.}(2024)\citenamefont {Buscemi}, \citenamefont {Gangwar}, \citenamefont {Goswami}, \citenamefont {Badhani}, \citenamefont {Pandit}, \citenamefont {Mohan}, \citenamefont {Das},\ and\ \citenamefont {Bera}}]{BGG+24}%
  \BibitemOpen
  \bibfield  {author} {\bibinfo {author} {\bibfnamefont {F.}~\bibnamefont {Buscemi}}, \bibinfo {author} {\bibfnamefont {R.}~\bibnamefont {Gangwar}}, \bibinfo {author} {\bibfnamefont {K.}~\bibnamefont {Goswami}}, \bibinfo {author} {\bibfnamefont {H.}~\bibnamefont {Badhani}}, \bibinfo {author} {\bibfnamefont {T.}~\bibnamefont {Pandit}}, \bibinfo {author} {\bibfnamefont {B.}~\bibnamefont {Mohan}}, \bibinfo {author} {\bibfnamefont {S.}~\bibnamefont {Das}},\ and\ \bibinfo {author} {\bibfnamefont {M.~N.}\ \bibnamefont {Bera}},\ }\href {https://arxiv.org/abs/2405.05326} {\bibinfo {title} {Information revival without backflow: non-causal explanations for non-{M}arkovian quantum stochastic processes}} (\bibinfo {year} {2024}),\ \bibinfo {note} {arXiv:2405.05326}\BibitemShut {NoStop}%
\bibitem [{\citenamefont {Fehr}\ and\ \citenamefont {Berens}(2014)}]{FB14}%
  \BibitemOpen
  \bibfield  {author} {\bibinfo {author} {\bibfnamefont {S.}~\bibnamefont {Fehr}}\ and\ \bibinfo {author} {\bibfnamefont {S.}~\bibnamefont {Berens}},\ }\bibfield  {title} {\bibinfo {title} {On the conditional {R}ényi entropy},\ }\href {https://doi.org/10.1109/TIT.2014.2357799} {\bibfield  {journal} {\bibinfo  {journal} {IEEE Transactions on Information Theory}\ }\textbf {\bibinfo {volume} {60}},\ \bibinfo {pages} {6801} (\bibinfo {year} {2014})}\BibitemShut {NoStop}%
\bibitem [{\citenamefont {Cerf}\ and\ \citenamefont {Adami}(1997)}]{CA97}%
  \BibitemOpen
  \bibfield  {author} {\bibinfo {author} {\bibfnamefont {N.~J.}\ \bibnamefont {Cerf}}\ and\ \bibinfo {author} {\bibfnamefont {C.}~\bibnamefont {Adami}},\ }\bibfield  {title} {\bibinfo {title} {Negative entropy and information in quantum mechanics},\ }\href {https://doi.org/10.1103/PhysRevLett.79.5194} {\bibfield  {journal} {\bibinfo  {journal} {Physical Review Letters}\ }\textbf {\bibinfo {volume} {79}},\ \bibinfo {pages} {5194} (\bibinfo {year} {1997})}\BibitemShut {NoStop}%
\bibitem [{\citenamefont {Horodecki}\ \emph {et~al.}(2005)\citenamefont {Horodecki}, \citenamefont {Oppenheim},\ and\ \citenamefont {Winter}}]{HOW05}%
  \BibitemOpen
  \bibfield  {author} {\bibinfo {author} {\bibfnamefont {M.}~\bibnamefont {Horodecki}}, \bibinfo {author} {\bibfnamefont {J.}~\bibnamefont {Oppenheim}},\ and\ \bibinfo {author} {\bibfnamefont {A.}~\bibnamefont {Winter}},\ }\bibfield  {title} {\bibinfo {title} {Partial quantum information},\ }\href {https://doi.org/10.1038/nature03909} {\bibfield  {journal} {\bibinfo  {journal} {Nature}\ }\textbf {\bibinfo {volume} {436}},\ \bibinfo {pages} {673–676} (\bibinfo {year} {2005})}\BibitemShut {NoStop}%
\bibitem [{\citenamefont {Horodecki}\ \emph {et~al.}(2006)\citenamefont {Horodecki}, \citenamefont {Oppenheim},\ and\ \citenamefont {Winter}}]{HOW06}%
  \BibitemOpen
  \bibfield  {author} {\bibinfo {author} {\bibfnamefont {M.}~\bibnamefont {Horodecki}}, \bibinfo {author} {\bibfnamefont {J.}~\bibnamefont {Oppenheim}},\ and\ \bibinfo {author} {\bibfnamefont {A.}~\bibnamefont {Winter}},\ }\bibfield  {title} {\bibinfo {title} {Quantum state merging and negative information},\ }\href {https://doi.org/10.1007/s00220-006-0118-x} {\bibfield  {journal} {\bibinfo  {journal} {Communications in Mathematical Physics}\ }\textbf {\bibinfo {volume} {269}},\ \bibinfo {pages} {107–136} (\bibinfo {year} {2006})}\BibitemShut {NoStop}%
\bibitem [{\citenamefont {Schr{\"o}dinger}(1935)}]{Sch35}%
  \BibitemOpen
  \bibfield  {author} {\bibinfo {author} {\bibfnamefont {E.}~\bibnamefont {Schr{\"o}dinger}},\ }\bibfield  {title} {\bibinfo {title} {Discussion of {P}robability {R}elations between {S}eparated {S}ystems},\ }\href {https://doi.org/10.1017/S0305004100013554} {\bibfield  {journal} {\bibinfo  {journal} {Mathematical Proceedings of the Cambridge Philosophical Society}\ }\textbf {\bibinfo {volume} {31}},\ \bibinfo {pages} {555–563} (\bibinfo {year} {1935})}\BibitemShut {NoStop}%
\bibitem [{\citenamefont {Kraus}(1983)}]{Kra83}%
  \BibitemOpen
  \bibfield  {author} {\bibinfo {author} {\bibfnamefont {K.}~\bibnamefont {Kraus}},\ }\href {https://books.google.co.in/books?id=fRBBAQAAIAAJ} {\emph {\bibinfo {title} {States, effects, and operations}}}\ (\bibinfo  {publisher} {Springer, Berlin},\ \bibinfo {year} {1983})\BibitemShut {NoStop}%
\bibitem [{\citenamefont {Sudarshan}(1985)}]{Sud85}%
  \BibitemOpen
  \bibfield  {author} {\bibinfo {author} {\bibfnamefont {E.~C.~G.}\ \bibnamefont {Sudarshan}},\ }\href {{http://inis.iaea.org/search/search.aspx?orig_q=RN:17063283}} {\emph {\bibinfo {title} {Quantum measurement and dynamical maps}}}\ (\bibinfo  {publisher} {Cambridge University Press},\ \bibinfo {address} {Cambridge},\ \bibinfo {year} {1985})\BibitemShut {NoStop}%
\bibitem [{\citenamefont {Stinespring}(1955)}]{Sti55}%
  \BibitemOpen
  \bibfield  {author} {\bibinfo {author} {\bibfnamefont {W.~F.}\ \bibnamefont {Stinespring}},\ }\bibfield  {title} {\bibinfo {title} {Positive functions on {C*}-algebras},\ }\href@noop {} {\bibfield  {journal} {\bibinfo  {journal} {Proceedings of the American Mathematical Society}\ }\textbf {\bibinfo {volume} {6}},\ \bibinfo {pages} {211} (\bibinfo {year} {1955})}\BibitemShut {NoStop}%
\bibitem [{\citenamefont {Gour}\ and\ \citenamefont {Wilde}(2021)}]{GW21}%
  \BibitemOpen
  \bibfield  {author} {\bibinfo {author} {\bibfnamefont {G.}~\bibnamefont {Gour}}\ and\ \bibinfo {author} {\bibfnamefont {M.~M.}\ \bibnamefont {Wilde}},\ }\bibfield  {title} {\bibinfo {title} {Entropy of a quantum channel},\ }\href {https://doi.org/10.1103/PhysRevResearch.3.023096} {\bibfield  {journal} {\bibinfo  {journal} {Physical Review Research}\ }\textbf {\bibinfo {volume} {3}},\ \bibinfo {pages} {023096} (\bibinfo {year} {2021})}\BibitemShut {NoStop}%
\bibitem [{\citenamefont {Gour}(2019)}]{Gou19}%
  \BibitemOpen
  \bibfield  {author} {\bibinfo {author} {\bibfnamefont {G.}~\bibnamefont {Gour}},\ }\bibfield  {title} {\bibinfo {title} {Comparison of quantum channels by superchannels},\ }\href {https://doi.org/10.1109/tit.2019.2907989} {\bibfield  {journal} {\bibinfo  {journal} {IEEE Transactions on Information Theory}\ }\textbf {\bibinfo {volume} {65}},\ \bibinfo {pages} {5880–5904} (\bibinfo {year} {2019})}\BibitemShut {NoStop}%
\bibitem [{\citenamefont {Yang}\ \emph {et~al.}(2019)\citenamefont {Yang}, \citenamefont {Horodecki},\ and\ \citenamefont {Winter}}]{YHW19}%
  \BibitemOpen
  \bibfield  {author} {\bibinfo {author} {\bibfnamefont {D.}~\bibnamefont {Yang}}, \bibinfo {author} {\bibfnamefont {K.}~\bibnamefont {Horodecki}},\ and\ \bibinfo {author} {\bibfnamefont {A.}~\bibnamefont {Winter}},\ }\bibfield  {title} {\bibinfo {title} {Distributed private randomness distillation},\ }\href {https://doi.org/10.1103/PhysRevLett.123.170501} {\bibfield  {journal} {\bibinfo  {journal} {Physical Review Letters}\ }\textbf {\bibinfo {volume} {123}},\ \bibinfo {pages} {170501} (\bibinfo {year} {2019})}\BibitemShut {NoStop}%
\bibitem [{\citenamefont {Yuan}(2019)}]{Yua19}%
  \BibitemOpen
  \bibfield  {author} {\bibinfo {author} {\bibfnamefont {X.}~\bibnamefont {Yuan}},\ }\bibfield  {title} {\bibinfo {title} {Hypothesis testing and entropies of quantum channels},\ }\href {https://doi.org/10.1103/PhysRevA.99.032317} {\bibfield  {journal} {\bibinfo  {journal} {Physical Review A}\ }\textbf {\bibinfo {volume} {99}},\ \bibinfo {pages} {032317} (\bibinfo {year} {2019})}\BibitemShut {NoStop}%
\bibitem [{\citenamefont {Sohail}\ \emph {et~al.}(2024)\citenamefont {Sohail}, \citenamefont {Pandey}, \citenamefont {Singh},\ and\ \citenamefont {Das}}]{SPSD24}%
  \BibitemOpen
  \bibfield  {author} {\bibinfo {author} {\bibnamefont {Sohail}}, \bibinfo {author} {\bibfnamefont {V.}~\bibnamefont {Pandey}}, \bibinfo {author} {\bibfnamefont {U.}~\bibnamefont {Singh}},\ and\ \bibinfo {author} {\bibfnamefont {S.}~\bibnamefont {Das}},\ }\href {https://arxiv.org/abs/2403.12947} {\bibinfo {title} {Fundamental limitations on the recoverability of quantum processes}} (\bibinfo {year} {2024}),\ \bibinfo {note} {arXiv:2403.12947}\BibitemShut {NoStop}%
\bibitem [{\citenamefont {R{\'e}nyi}(1965)}]{Ren65}%
  \BibitemOpen
  \bibfield  {author} {\bibinfo {author} {\bibfnamefont {A.}~\bibnamefont {R{\'e}nyi}},\ }\bibfield  {title} {\bibinfo {title} {On the foundations of information theory},\ }\href@noop {} {\bibfield  {journal} {\bibinfo  {journal} {Revue de l'Institut International de Statistique}\ ,\ \bibinfo {pages} {1}} (\bibinfo {year} {1965})}\BibitemShut {NoStop}%
\bibitem [{\citenamefont {Renner}(2005)}]{Ren05}%
  \BibitemOpen
  \bibfield  {author} {\bibinfo {author} {\bibfnamefont {R.}~\bibnamefont {Renner}},\ }\emph {\bibinfo {title} {Security of Quantum Key Distribution}},\ \href {https://arxiv.org/abs/quant-ph/0512258} {Ph.D. thesis},\ \bibinfo  {school} {ETH Z\"urich} (\bibinfo {year} {2005}),\ \bibinfo {note} {arXiv:quant-ph/0512258}\BibitemShut {NoStop}%
\bibitem [{\citenamefont {{M\"uller}-Lennert}\ \emph {et~al.}(2013)\citenamefont {{M\"uller}-Lennert}, \citenamefont {Dupuis}, \citenamefont {Szehr}, \citenamefont {Fehr},\ and\ \citenamefont {Tomamichel}}]{MDSFT13}%
  \BibitemOpen
  \bibfield  {author} {\bibinfo {author} {\bibfnamefont {M.}~\bibnamefont {{M\"uller}-Lennert}}, \bibinfo {author} {\bibfnamefont {F.}~\bibnamefont {Dupuis}}, \bibinfo {author} {\bibfnamefont {O.}~\bibnamefont {Szehr}}, \bibinfo {author} {\bibfnamefont {S.}~\bibnamefont {Fehr}},\ and\ \bibinfo {author} {\bibfnamefont {M.}~\bibnamefont {Tomamichel}},\ }\bibfield  {title} {\bibinfo {title} {On quantum {R\'enyi} entropies: a new definition and some properties},\ }\href@noop {} {\bibfield  {journal} {\bibinfo  {journal} {Journal of Mathematical Physics}\ }\textbf {\bibinfo {volume} {54}},\ \bibinfo {pages} {122203} (\bibinfo {year} {2013})},\ \bibinfo {note} {arXiv:1306.3142}\BibitemShut {NoStop}%
\bibitem [{\citenamefont {Brandão}\ and\ \citenamefont {Horodecki}(2014)}]{BH14}%
  \BibitemOpen
  \bibfield  {author} {\bibinfo {author} {\bibfnamefont {F.~G. S.~L.}\ \bibnamefont {Brandão}}\ and\ \bibinfo {author} {\bibfnamefont {M.}~\bibnamefont {Horodecki}},\ }\bibfield  {title} {\bibinfo {title} {Exponential decay of correlations implies area law},\ }\href {https://doi.org/10.1007/s00220-014-2213-8} {\bibfield  {journal} {\bibinfo  {journal} {Communications in Mathematical Physics}\ }\textbf {\bibinfo {volume} {333}},\ \bibinfo {pages} {761–798} (\bibinfo {year} {2014})}\BibitemShut {NoStop}%
\bibitem [{\citenamefont {Wilde}\ \emph {et~al.}(2014)\citenamefont {Wilde}, \citenamefont {Winter},\ and\ \citenamefont {Yang}}]{WWY14}%
  \BibitemOpen
  \bibfield  {author} {\bibinfo {author} {\bibfnamefont {M.~M.}\ \bibnamefont {Wilde}}, \bibinfo {author} {\bibfnamefont {A.}~\bibnamefont {Winter}},\ and\ \bibinfo {author} {\bibfnamefont {D.}~\bibnamefont {Yang}},\ }\bibfield  {title} {\bibinfo {title} {Strong converse for the classical capacity of entanglement-breaking and {Hadamard} channels via a sandwiched {R\'enyi} relative entropy},\ }\href@noop {} {\bibfield  {journal} {\bibinfo  {journal} {Communications in Mathematical Physics}\ }\textbf {\bibinfo {volume} {331}},\ \bibinfo {pages} {593} (\bibinfo {year} {2014})},\ \bibinfo {note} {arXiv:1306.1586}\BibitemShut {NoStop}%
\bibitem [{\citenamefont {Mosonyi}\ and\ \citenamefont {Ogawa}(2015)}]{MO15}%
  \BibitemOpen
  \bibfield  {author} {\bibinfo {author} {\bibfnamefont {M.}~\bibnamefont {Mosonyi}}\ and\ \bibinfo {author} {\bibfnamefont {T.}~\bibnamefont {Ogawa}},\ }\bibfield  {title} {\bibinfo {title} {Quantum hypothesis testing and the operational interpretation of the quantum {R}{\'e}nyi relative entropies},\ }\href {https://link.springer.com/article/10.1007/s00220-014-2248-x} {\bibfield  {journal} {\bibinfo  {journal} {Communications in Mathematical Physics}\ }\textbf {\bibinfo {volume} {334}},\ \bibinfo {pages} {1617} (\bibinfo {year} {2015})}\BibitemShut {NoStop}%
\bibitem [{\citenamefont {Arqand}\ \emph {et~al.}(2024)\citenamefont {Arqand}, \citenamefont {Metger},\ and\ \citenamefont {Tan}}]{AMT24}%
  \BibitemOpen
  \bibfield  {author} {\bibinfo {author} {\bibfnamefont {A.}~\bibnamefont {Arqand}}, \bibinfo {author} {\bibfnamefont {T.}~\bibnamefont {Metger}},\ and\ \bibinfo {author} {\bibfnamefont {E.~Y.~Z.}\ \bibnamefont {Tan}},\ }\href {https://arxiv.org/abs/2407.20396} {\bibinfo {title} {Mutual information chain rules for security proofs robust against device imperfections}} (\bibinfo {year} {2024}),\ \bibinfo {note} {arXiv:2407.20396}\BibitemShut {NoStop}%
\bibitem [{\citenamefont {Lieb}\ and\ \citenamefont {Ruskai}(1973)}]{LR73}%
  \BibitemOpen
  \bibfield  {author} {\bibinfo {author} {\bibfnamefont {E.~H.}\ \bibnamefont {Lieb}}\ and\ \bibinfo {author} {\bibfnamefont {M.~B.}\ \bibnamefont {Ruskai}},\ }\bibfield  {title} {\bibinfo {title} {A fundamental property of quantum-mechanical entropy},\ }\href {https://doi.org/10.1103/PhysRevLett.30.434} {\bibfield  {journal} {\bibinfo  {journal} {Physical Review Letters}\ }\textbf {\bibinfo {volume} {30}},\ \bibinfo {pages} {434} (\bibinfo {year} {1973})}\BibitemShut {NoStop}%
\bibitem [{\citenamefont {Sharma}\ and\ \citenamefont {Warsi}(2012)}]{SW12}%
  \BibitemOpen
  \bibfield  {author} {\bibinfo {author} {\bibfnamefont {N.}~\bibnamefont {Sharma}}\ and\ \bibinfo {author} {\bibfnamefont {N.~A.}\ \bibnamefont {Warsi}},\ }\bibfield  {title} {\bibinfo {title} {On the strong converses for the quantum channel capacity theorems},\ }\bibfield  {journal} {\bibinfo  {journal} {Physical Review Letters}\ }\href {https://doi.org/10.1103/PhysRevLett.110.080501} {10.1103/PhysRevLett.110.080501} (\bibinfo {year} {2012}),\ \bibinfo {note} {arXiv:1205.1712}\BibitemShut {NoStop}%
\bibitem [{\citenamefont {Leditzky}\ \emph {et~al.}(2018)\citenamefont {Leditzky}, \citenamefont {Kaur}, \citenamefont {Datta},\ and\ \citenamefont {Wilde}}]{LKDW18}%
  \BibitemOpen
  \bibfield  {author} {\bibinfo {author} {\bibfnamefont {F.}~\bibnamefont {Leditzky}}, \bibinfo {author} {\bibfnamefont {E.}~\bibnamefont {Kaur}}, \bibinfo {author} {\bibfnamefont {N.}~\bibnamefont {Datta}},\ and\ \bibinfo {author} {\bibfnamefont {M.~M.}\ \bibnamefont {Wilde}},\ }\bibfield  {title} {\bibinfo {title} {Approaches for approximate additivity of the {H}olevo information of quantum channels},\ }\href {https://doi.org/10.1103/PhysRevA.97.012332} {\bibfield  {journal} {\bibinfo  {journal} {Physical Review A}\ }\textbf {\bibinfo {volume} {97}},\ \bibinfo {pages} {012332} (\bibinfo {year} {2018})}\BibitemShut {NoStop}%
\bibitem [{\citenamefont {Umegaki}(1962)}]{Ume62}%
  \BibitemOpen
  \bibfield  {author} {\bibinfo {author} {\bibfnamefont {H.}~\bibnamefont {Umegaki}},\ }\bibfield  {title} {\bibinfo {title} {Conditional expectations in an operator algebra, {IV} (entropy and information)},\ }\href {https://doi.org/10.2996/kmj/1138844604} {\bibfield  {journal} {\bibinfo  {journal} {Kodai Mathematical Seminar Reports}\ }\textbf {\bibinfo {volume} {14}},\ \bibinfo {pages} {59} (\bibinfo {year} {1962})}\BibitemShut {NoStop}%
\bibitem [{\citenamefont {Datta}(2009)}]{Dat09}%
  \BibitemOpen
  \bibfield  {author} {\bibinfo {author} {\bibfnamefont {N.}~\bibnamefont {Datta}},\ }\bibfield  {title} {\bibinfo {title} {Min- and max-relative entropies and a new entanglement monotone},\ }\href {https://doi.org/10.1109/tit.2009.2018325} {\bibfield  {journal} {\bibinfo  {journal} {IEEE Transactions on Information Theory}\ }\textbf {\bibinfo {volume} {55}},\ \bibinfo {pages} {2816–2826} (\bibinfo {year} {2009})}\BibitemShut {NoStop}%
\bibitem [{\citenamefont {Fang}\ and\ \citenamefont {Fawzi}(2021)}]{FF21}%
  \BibitemOpen
  \bibfield  {author} {\bibinfo {author} {\bibfnamefont {K.}~\bibnamefont {Fang}}\ and\ \bibinfo {author} {\bibfnamefont {H.}~\bibnamefont {Fawzi}},\ }\bibfield  {title} {\bibinfo {title} {Geometric {R}ényi divergence and its applications in quantum channel capacities},\ }\href {https://doi.org/10.1007/s00220-021-04064-4} {\bibfield  {journal} {\bibinfo  {journal} {Communications in Mathematical Physics}\ }\textbf {\bibinfo {volume} {384}},\ \bibinfo {pages} {1615–1677} (\bibinfo {year} {2021})}\BibitemShut {NoStop}%
\bibitem [{\citenamefont {Beckman}\ \emph {et~al.}(2001)\citenamefont {Beckman}, \citenamefont {Gottesman}, \citenamefont {Nielsen},\ and\ \citenamefont {Preskill}}]{BGNP01}%
  \BibitemOpen
  \bibfield  {author} {\bibinfo {author} {\bibfnamefont {D.}~\bibnamefont {Beckman}}, \bibinfo {author} {\bibfnamefont {D.}~\bibnamefont {Gottesman}}, \bibinfo {author} {\bibfnamefont {M.~A.}\ \bibnamefont {Nielsen}},\ and\ \bibinfo {author} {\bibfnamefont {J.}~\bibnamefont {Preskill}},\ }\bibfield  {title} {\bibinfo {title} {Causal and localizable quantum operations},\ }\bibfield  {journal} {\bibinfo  {journal} {Physical Review A}\ }\textbf {\bibinfo {volume} {64}},\ \href {https://doi.org/10.1103/physreva.64.052309} {10.1103/physreva.64.052309} (\bibinfo {year} {2001})\BibitemShut {NoStop}%
\bibitem [{\citenamefont {Piani}\ \emph {et~al.}(2006)\citenamefont {Piani}, \citenamefont {Horodecki}, \citenamefont {Horodecki},\ and\ \citenamefont {Horodecki}}]{PHHH06}%
  \BibitemOpen
  \bibfield  {author} {\bibinfo {author} {\bibfnamefont {M.}~\bibnamefont {Piani}}, \bibinfo {author} {\bibfnamefont {M.}~\bibnamefont {Horodecki}}, \bibinfo {author} {\bibfnamefont {P.}~\bibnamefont {Horodecki}},\ and\ \bibinfo {author} {\bibfnamefont {R.}~\bibnamefont {Horodecki}},\ }\bibfield  {title} {\bibinfo {title} {Properties of quantum nonsignaling boxes},\ }\bibfield  {journal} {\bibinfo  {journal} {Physical Review A}\ }\textbf {\bibinfo {volume} {74}},\ \href {https://doi.org/10.1103/physreva.74.012305} {10.1103/physreva.74.012305} (\bibinfo {year} {2006})\BibitemShut {NoStop}%
\bibitem [{\citenamefont {Horodecki}\ \emph {et~al.}(1999)\citenamefont {Horodecki}, \citenamefont {Horodecki},\ and\ \citenamefont {Horodecki}}]{HHH99}%
  \BibitemOpen
  \bibfield  {author} {\bibinfo {author} {\bibfnamefont {M.}~\bibnamefont {Horodecki}}, \bibinfo {author} {\bibfnamefont {P.}~\bibnamefont {Horodecki}},\ and\ \bibinfo {author} {\bibfnamefont {R.}~\bibnamefont {Horodecki}},\ }\bibfield  {title} {\bibinfo {title} {General teleportation channel, singlet fraction, and quasidistillation},\ }\href {https://doi.org/10.1103/PhysRevA.60.1888} {\bibfield  {journal} {\bibinfo  {journal} {Physical Review A}\ }\textbf {\bibinfo {volume} {60}},\ \bibinfo {pages} {1888} (\bibinfo {year} {1999})},\ \bibinfo {note} {arXiv:quant-ph/9807091}\BibitemShut {NoStop}%
\bibitem [{\citenamefont {Chiribella}\ \emph {et~al.}(2009)\citenamefont {Chiribella}, \citenamefont {D'Ariano},\ and\ \citenamefont {Perinotti}}]{CDP09}%
  \BibitemOpen
  \bibfield  {author} {\bibinfo {author} {\bibfnamefont {G.}~\bibnamefont {Chiribella}}, \bibinfo {author} {\bibfnamefont {G.~M.}\ \bibnamefont {D'Ariano}},\ and\ \bibinfo {author} {\bibfnamefont {P.}~\bibnamefont {Perinotti}},\ }\bibfield  {title} {\bibinfo {title} {Theoretical framework for quantum networks},\ }\href {https://doi.org/10.1103/PhysRevA.80.022339} {\bibfield  {journal} {\bibinfo  {journal} {Physical Review A}\ }\textbf {\bibinfo {volume} {80}},\ \bibinfo {pages} {022339} (\bibinfo {year} {2009})}\BibitemShut {NoStop}%
\bibitem [{\citenamefont {Das}\ \emph {et~al.}(2020)\citenamefont {Das}, \citenamefont {B\"auml},\ and\ \citenamefont {Wilde}}]{DBW17}%
  \BibitemOpen
  \bibfield  {author} {\bibinfo {author} {\bibfnamefont {S.}~\bibnamefont {Das}}, \bibinfo {author} {\bibfnamefont {S.}~\bibnamefont {B\"auml}},\ and\ \bibinfo {author} {\bibfnamefont {M.~M.}\ \bibnamefont {Wilde}},\ }\bibfield  {title} {\bibinfo {title} {Entanglement and secret-key-agreement capacities of bipartite quantum interactions and read-only memory devices},\ }\href {https://doi.org/10.1103/PhysRevA.101.012344} {\bibfield  {journal} {\bibinfo  {journal} {Physical Review A}\ }\textbf {\bibinfo {volume} {101}},\ \bibinfo {pages} {012344} (\bibinfo {year} {2020})},\ \bibinfo {note} {arXiv:1712.00827}\BibitemShut {NoStop}%
\bibitem [{\citenamefont {Baldwin}\ and\ \citenamefont {Jones}(2023)}]{BJ23}%
  \BibitemOpen
  \bibfield  {author} {\bibinfo {author} {\bibfnamefont {A.~J.}\ \bibnamefont {Baldwin}}\ and\ \bibinfo {author} {\bibfnamefont {J.~A.}\ \bibnamefont {Jones}},\ }\bibfield  {title} {\bibinfo {title} {Efficiently computing the uhlmann fidelity for density matrices},\ }\href {https://doi.org/10.1103/PhysRevA.107.012427} {\bibfield  {journal} {\bibinfo  {journal} {Physical Review A}\ }\textbf {\bibinfo {volume} {107}},\ \bibinfo {pages} {012427} (\bibinfo {year} {2023})}\BibitemShut {NoStop}%
\bibitem [{\citenamefont {Starke}\ \emph {et~al.}(2024)\citenamefont {Starke}, \citenamefont {Basso},\ and\ \citenamefont {Maziero}}]{SBM24}%
  \BibitemOpen
  \bibfield  {author} {\bibinfo {author} {\bibfnamefont {D.~S.}\ \bibnamefont {Starke}}, \bibinfo {author} {\bibfnamefont {M.~L.~W.}\ \bibnamefont {Basso}},\ and\ \bibinfo {author} {\bibfnamefont {J.}~\bibnamefont {Maziero}},\ }\bibfield  {title} {\bibinfo {title} {Efficient fidelity estimation: alternative derivation and related applications},\ }\href {https://doi.org/10.1088/1572-9494/ad5664} {\bibfield  {journal} {\bibinfo  {journal} {Communications in Theoretical Physics}\ }\textbf {\bibinfo {volume} {76}},\ \bibinfo {pages} {095101} (\bibinfo {year} {2024})}\BibitemShut {NoStop}%
\bibitem [{\citenamefont {Mueller-Hermes}\ and\ \citenamefont {Reeb}(2017)}]{MR15}%
  \BibitemOpen
  \bibfield  {author} {\bibinfo {author} {\bibfnamefont {A.}~\bibnamefont {Mueller-Hermes}}\ and\ \bibinfo {author} {\bibfnamefont {D.}~\bibnamefont {Reeb}},\ }\bibfield  {title} {\bibinfo {title} {Monotonicity of the quantum relative entropy under positive maps},\ }\href {https://doi.org/10.1007/s00023-017-0550-9} {\bibfield  {journal} {\bibinfo  {journal} {Annales Henri Poincar{\'{e}}}\ }\textbf {\bibinfo {volume} {18}},\ \bibinfo {pages} {1777} (\bibinfo {year} {2017})},\ \bibinfo {note} {arXiv:1512.06117}\BibitemShut {NoStop}%
\bibitem [{\citenamefont {Buscemi}\ and\ \citenamefont {Datta}(2010)}]{BD10}%
  \BibitemOpen
  \bibfield  {author} {\bibinfo {author} {\bibfnamefont {F.}~\bibnamefont {Buscemi}}\ and\ \bibinfo {author} {\bibfnamefont {N.}~\bibnamefont {Datta}},\ }\bibfield  {title} {\bibinfo {title} {The quantum capacity of channels with arbitrarily correlated noise},\ }\href {https://doi.org/10.1109/TIT.2009.2039166} {\bibfield  {journal} {\bibinfo  {journal} {IEEE Transactions on Information Theory}\ }\textbf {\bibinfo {volume} {56}},\ \bibinfo {pages} {1447} (\bibinfo {year} {2010})},\ \bibinfo {note} {arXiv:0902.0158}\BibitemShut {NoStop}%
\bibitem [{\citenamefont {Wang}\ and\ \citenamefont {Renner}(2012)}]{WR12}%
  \BibitemOpen
  \bibfield  {author} {\bibinfo {author} {\bibfnamefont {L.}~\bibnamefont {Wang}}\ and\ \bibinfo {author} {\bibfnamefont {R.}~\bibnamefont {Renner}},\ }\bibfield  {title} {\bibinfo {title} {One-shot classical-quantum capacity and hypothesis testing},\ }\href {https://doi.org/10.1103/PhysRevLett.108.200501} {\bibfield  {journal} {\bibinfo  {journal} {Physical Review Letters}\ }\textbf {\bibinfo {volume} {108}},\ \bibinfo {pages} {200501} (\bibinfo {year} {2012})},\ \bibinfo {note} {arXiv:1007.5456}\BibitemShut {NoStop}%
\bibitem [{\citenamefont {Barnum}\ \emph {et~al.}(1998)\citenamefont {Barnum}, \citenamefont {Nielsen},\ and\ \citenamefont {Schumacher}}]{BNS98}%
  \BibitemOpen
  \bibfield  {author} {\bibinfo {author} {\bibfnamefont {H.}~\bibnamefont {Barnum}}, \bibinfo {author} {\bibfnamefont {M.~A.}\ \bibnamefont {Nielsen}},\ and\ \bibinfo {author} {\bibfnamefont {B.}~\bibnamefont {Schumacher}},\ }\bibfield  {title} {\bibinfo {title} {Information transmission through a noisy quantum channel},\ }\href {https://doi.org/10.1103/physreva.57.4153} {\bibfield  {journal} {\bibinfo  {journal} {Physical Review A}\ }\textbf {\bibinfo {volume} {57}},\ \bibinfo {pages} {4153–4175} (\bibinfo {year} {1998})}\BibitemShut {NoStop}%
\bibitem [{\citenamefont {Shirokov}(2023)}]{Shi23}%
  \BibitemOpen
  \bibfield  {author} {\bibinfo {author} {\bibfnamefont {M.~E.}\ \bibnamefont {Shirokov}},\ }\href {https://arxiv.org/abs/2304.05388} {\bibinfo {title} {Correlation measures of a quantum state and information characteristics of a quantum channel}} (\bibinfo {year} {2023}),\ \bibinfo {note} {arXiv:2304.05388}\BibitemShut {NoStop}%
\bibitem [{\citenamefont {Chiribella}\ \emph {et~al.}(2008)\citenamefont {Chiribella}, \citenamefont {D'Ariano},\ and\ \citenamefont {Perinotti}}]{CDGP08}%
  \BibitemOpen
  \bibfield  {author} {\bibinfo {author} {\bibfnamefont {G.}~\bibnamefont {Chiribella}}, \bibinfo {author} {\bibfnamefont {G.~M.}\ \bibnamefont {D'Ariano}},\ and\ \bibinfo {author} {\bibfnamefont {P.}~\bibnamefont {Perinotti}},\ }\bibfield  {title} {\bibinfo {title} {Optimal cloning of unitary transformation},\ }\href {https://doi.org/10.1103/PhysRevLett.101.180504} {\bibfield  {journal} {\bibinfo  {journal} {Physical Review Letters}\ }\textbf {\bibinfo {volume} {101}},\ \bibinfo {pages} {180504} (\bibinfo {year} {2008})}\BibitemShut {NoStop}%
\bibitem [{\citenamefont {Gilchrist}\ \emph {et~al.}(2005)\citenamefont {Gilchrist}, \citenamefont {Langford},\ and\ \citenamefont {Nielsen}}]{GLN05}%
  \BibitemOpen
  \bibfield  {author} {\bibinfo {author} {\bibfnamefont {A.}~\bibnamefont {Gilchrist}}, \bibinfo {author} {\bibfnamefont {N.~K.}\ \bibnamefont {Langford}},\ and\ \bibinfo {author} {\bibfnamefont {M.~A.}\ \bibnamefont {Nielsen}},\ }\bibfield  {title} {\bibinfo {title} {Distance measures to compare real and ideal quantum processes},\ }\href {https://doi.org/10.1103/PhysRevA.71.062310} {\bibfield  {journal} {\bibinfo  {journal} {Phys. Rev. A}\ }\textbf {\bibinfo {volume} {71}},\ \bibinfo {pages} {062310} (\bibinfo {year} {2005})}\BibitemShut {NoStop}%
\bibitem [{\citenamefont {Datta}\ \emph {et~al.}(2013)\citenamefont {Datta}, \citenamefont {Mosonyi}, \citenamefont {Hsieh},\ and\ \citenamefont {Brandao}}]{DMHB13}%
  \BibitemOpen
  \bibfield  {author} {\bibinfo {author} {\bibfnamefont {N.}~\bibnamefont {Datta}}, \bibinfo {author} {\bibfnamefont {M.}~\bibnamefont {Mosonyi}}, \bibinfo {author} {\bibfnamefont {M.-H.}\ \bibnamefont {Hsieh}},\ and\ \bibinfo {author} {\bibfnamefont {F.~G. S.~L.}\ \bibnamefont {Brandao}},\ }\bibfield  {title} {\bibinfo {title} {A smooth entropy approach to quantum hypothesis testing and the classical capacity of quantum channels},\ }\href {https://doi.org/10.1109/tit.2013.2282160} {\bibfield  {journal} {\bibinfo  {journal} {IEEE Transactions on Information Theory}\ }\textbf {\bibinfo {volume} {59}},\ \bibinfo {pages} {8014–8026} (\bibinfo {year} {2013})}\BibitemShut {NoStop}%
\bibitem [{\citenamefont {Das}\ and\ \citenamefont {Wilde}(2019)}]{DW19}%
  \BibitemOpen
  \bibfield  {author} {\bibinfo {author} {\bibfnamefont {S.}~\bibnamefont {Das}}\ and\ \bibinfo {author} {\bibfnamefont {M.~M.}\ \bibnamefont {Wilde}},\ }\bibfield  {title} {\bibinfo {title} {Quantum reading capacity: General definition and bounds},\ }\href {https://doi.org/10.1109/tit.2019.2929925} {\bibfield  {journal} {\bibinfo  {journal} {IEEE Transactions on Information Theory}\ }\textbf {\bibinfo {volume} {65}},\ \bibinfo {pages} {7566–7583} (\bibinfo {year} {2019})}\BibitemShut {NoStop}%
\bibitem [{\citenamefont {Winter}(2016)}]{Win16}%
  \BibitemOpen
  \bibfield  {author} {\bibinfo {author} {\bibfnamefont {A.}~\bibnamefont {Winter}},\ }\bibfield  {title} {\bibinfo {title} {Tight uniform continuity bounds for quantum entropies: Conditional entropy, relative entropy distance and energy constraints},\ }\href {https://doi.org/10.1007/s00220-016-2609-8} {\bibfield  {journal} {\bibinfo  {journal} {Communications in Mathematical Physics}\ }\textbf {\bibinfo {volume} {347}},\ \bibinfo {pages} {291–313} (\bibinfo {year} {2016})}\BibitemShut {NoStop}%
\bibitem [{\citenamefont {Wilde}\ \emph {et~al.}(2020)\citenamefont {Wilde}, \citenamefont {Berta}, \citenamefont {Hirche},\ and\ \citenamefont {Kaur}}]{WBHK02}%
  \BibitemOpen
  \bibfield  {author} {\bibinfo {author} {\bibfnamefont {M.~M.}\ \bibnamefont {Wilde}}, \bibinfo {author} {\bibfnamefont {M.}~\bibnamefont {Berta}}, \bibinfo {author} {\bibfnamefont {C.}~\bibnamefont {Hirche}},\ and\ \bibinfo {author} {\bibfnamefont {E.}~\bibnamefont {Kaur}},\ }\bibfield  {title} {\bibinfo {title} {Amortized channel divergence for asymptotic quantum channel discrimination},\ }\href {https://doi.org/10.1007/s11005-020-01297-7} {\bibfield  {journal} {\bibinfo  {journal} {Letters in Mathematical Physics}\ }\textbf {\bibinfo {volume} {110}},\ \bibinfo {pages} {2277–2336} (\bibinfo {year} {2020})}\BibitemShut {NoStop}%
\bibitem [{\citenamefont {Eggeling}\ \emph {et~al.}(2002)\citenamefont {Eggeling}, \citenamefont {Schlingemann},\ and\ \citenamefont {Werner}}]{Eggeling_2002}%
  \BibitemOpen
  \bibfield  {author} {\bibinfo {author} {\bibfnamefont {T.}~\bibnamefont {Eggeling}}, \bibinfo {author} {\bibfnamefont {D.}~\bibnamefont {Schlingemann}},\ and\ \bibinfo {author} {\bibfnamefont {R.~F.}\ \bibnamefont {Werner}},\ }\bibfield  {title} {\bibinfo {title} {Semicausal operations are semilocalizable},\ }\href {https://doi.org/10.1209/epl/i2002-00579-4} {\bibfield  {journal} {\bibinfo  {journal} {Europhysics Letters (EPL)}\ }\textbf {\bibinfo {volume} {57}},\ \bibinfo {pages} {782–788} (\bibinfo {year} {2002})}\BibitemShut {NoStop}%
\bibitem [{\citenamefont {Schumacher}\ and\ \citenamefont {Westmoreland}(2005)}]{SW05}%
  \BibitemOpen
  \bibfield  {author} {\bibinfo {author} {\bibfnamefont {B.}~\bibnamefont {Schumacher}}\ and\ \bibinfo {author} {\bibfnamefont {M.~D.}\ \bibnamefont {Westmoreland}},\ }\bibfield  {title} {\bibinfo {title} {Locality and information transfer in quantum operations},\ }\href {https://doi.org/10.1007/s11128-004-3193-y} {\bibfield  {journal} {\bibinfo  {journal} {Quantum Information Processing}\ }\textbf {\bibinfo {volume} {4}},\ \bibinfo {pages} {13–34} (\bibinfo {year} {2005})}\BibitemShut {NoStop}%
\bibitem [{\citenamefont {Lorenz}\ and\ \citenamefont {Barrett}(2021)}]{LB21}%
  \BibitemOpen
  \bibfield  {author} {\bibinfo {author} {\bibfnamefont {R.}~\bibnamefont {Lorenz}}\ and\ \bibinfo {author} {\bibfnamefont {J.}~\bibnamefont {Barrett}},\ }\bibfield  {title} {\bibinfo {title} {Causal and compositional structure of unitary transformations},\ }\href {https://doi.org/10.22331/q-2021-07-28-511} {\bibfield  {journal} {\bibinfo  {journal} {Quantum}\ }\textbf {\bibinfo {volume} {5}},\ \bibinfo {pages} {511} (\bibinfo {year} {2021})}\BibitemShut {NoStop}%
\bibitem [{\citenamefont {Kaur}\ \emph {et~al.}(2021)\citenamefont {Kaur}, \citenamefont {Das}, \citenamefont {Wilde},\ and\ \citenamefont {Winter}}]{KDWW21}%
  \BibitemOpen
  \bibfield  {author} {\bibinfo {author} {\bibfnamefont {E.}~\bibnamefont {Kaur}}, \bibinfo {author} {\bibfnamefont {S.}~\bibnamefont {Das}}, \bibinfo {author} {\bibfnamefont {M.~M.}\ \bibnamefont {Wilde}},\ and\ \bibinfo {author} {\bibfnamefont {A.}~\bibnamefont {Winter}},\ }\bibfield  {title} {\bibinfo {title} {Resource theory of unextendibility and nonasymptotic quantum capacity},\ }\bibfield  {journal} {\bibinfo  {journal} {Physical Review A}\ }\textbf {\bibinfo {volume} {104}},\ \href {https://doi.org/10.1103/physreva.104.022401} {10.1103/physreva.104.022401} (\bibinfo {year} {2021})\BibitemShut {NoStop}%
\bibitem [{\citenamefont {Cooney}\ \emph {et~al.}(2016)\citenamefont {Cooney}, \citenamefont {Mosonyi},\ and\ \citenamefont {Wilde}}]{CMW16}%
  \BibitemOpen
  \bibfield  {author} {\bibinfo {author} {\bibfnamefont {T.}~\bibnamefont {Cooney}}, \bibinfo {author} {\bibfnamefont {M.}~\bibnamefont {Mosonyi}},\ and\ \bibinfo {author} {\bibfnamefont {M.~M.}\ \bibnamefont {Wilde}},\ }\bibfield  {title} {\bibinfo {title} {Strong converse exponents for a quantum channel discrimination problem and quantum-feedback-assisted communication},\ }\href {https://doi.org/10.1007/s00220-016-2645-4} {\bibfield  {journal} {\bibinfo  {journal} {Communications in Mathematical Physics}\ }\textbf {\bibinfo {volume} {344}},\ \bibinfo {pages} {797–829} (\bibinfo {year} {2016})}\BibitemShut {NoStop}%
\bibitem [{\citenamefont {Fang}\ \emph {et~al.}(2020)\citenamefont {Fang}, \citenamefont {Fawzi}, \citenamefont {Renner},\ and\ \citenamefont {Sutter}}]{FFRS20}%
  \BibitemOpen
  \bibfield  {author} {\bibinfo {author} {\bibfnamefont {K.}~\bibnamefont {Fang}}, \bibinfo {author} {\bibfnamefont {O.}~\bibnamefont {Fawzi}}, \bibinfo {author} {\bibfnamefont {R.}~\bibnamefont {Renner}},\ and\ \bibinfo {author} {\bibfnamefont {D.}~\bibnamefont {Sutter}},\ }\bibfield  {title} {\bibinfo {title} {Chain rule for the quantum relative entropy},\ }\href {https://doi.org/10.1103/PhysRevLett.124.100501} {\bibfield  {journal} {\bibinfo  {journal} {Physical Review Letters}\ }\textbf {\bibinfo {volume} {124}},\ \bibinfo {pages} {100501} (\bibinfo {year} {2020})}\BibitemShut {NoStop}%
\bibitem [{\citenamefont {Bergh}\ \emph {et~al.}(2023)\citenamefont {Bergh}, \citenamefont {Kochanowski}, \citenamefont {Salzmann},\ and\ \citenamefont {Datta}}]{BKSD23}%
  \BibitemOpen
  \bibfield  {author} {\bibinfo {author} {\bibfnamefont {B.}~\bibnamefont {Bergh}}, \bibinfo {author} {\bibfnamefont {J.}~\bibnamefont {Kochanowski}}, \bibinfo {author} {\bibfnamefont {R.}~\bibnamefont {Salzmann}},\ and\ \bibinfo {author} {\bibfnamefont {N.}~\bibnamefont {Datta}},\ }\href {https://arxiv.org/abs/2308.12959} {\bibinfo {title} {Infinite dimensional asymmetric quantum channel discrimination}} (\bibinfo {year} {2023}),\ \bibinfo {note} {arXiv:2308.12959}\BibitemShut {NoStop}%
\bibitem [{\citenamefont {Hiai}\ and\ \citenamefont {Petz}(1991)}]{HP91}%
  \BibitemOpen
  \bibfield  {author} {\bibinfo {author} {\bibfnamefont {F.}~\bibnamefont {Hiai}}\ and\ \bibinfo {author} {\bibfnamefont {D.}~\bibnamefont {Petz}},\ }\bibfield  {title} {\bibinfo {title} {The proper formula for relative entropy and its asymptotics in quantum probability},\ }\href@noop {} {\bibfield  {journal} {\bibinfo  {journal} {Communications in mathematical physics}\ }\textbf {\bibinfo {volume} {143}},\ \bibinfo {pages} {99} (\bibinfo {year} {1991})}\BibitemShut {NoStop}%
\bibitem [{\citenamefont {Ogawa}\ and\ \citenamefont {Nagaoka}(2000)}]{ON00}%
  \BibitemOpen
  \bibfield  {author} {\bibinfo {author} {\bibfnamefont {T.}~\bibnamefont {Ogawa}}\ and\ \bibinfo {author} {\bibfnamefont {H.}~\bibnamefont {Nagaoka}},\ }\bibfield  {title} {\bibinfo {title} {Strong converse and {S}tein's lemma in quantum hypothesis testing},\ }\href {https://doi.org/10.1109/18.887855} {\bibfield  {journal} {\bibinfo  {journal} {IEEE Transactions on Information Theory}\ }\textbf {\bibinfo {volume} {46}},\ \bibinfo {pages} {2428} (\bibinfo {year} {2000})},\ \bibinfo {note} {arXiv:quant-ph/9906090v1}\BibitemShut {NoStop}%
\bibitem [{\citenamefont {Berta}(2009)}]{Ber09}%
  \BibitemOpen
  \bibfield  {author} {\bibinfo {author} {\bibfnamefont {M.}~\bibnamefont {Berta}},\ }\href {https://arxiv.org/abs/0912.4495} {\bibinfo {title} {Single-shot quantum state merging}} (\bibinfo {year} {2009}),\ \bibinfo {note} {arXiv:0912.4495}\BibitemShut {NoStop}%
\bibitem [{\citenamefont {Devetak}\ \emph {et~al.}(2006)\citenamefont {Devetak}, \citenamefont {Junge}, \citenamefont {King},\ and\ \citenamefont {Ruskai}}]{DJKR06}%
  \BibitemOpen
  \bibfield  {author} {\bibinfo {author} {\bibfnamefont {I.}~\bibnamefont {Devetak}}, \bibinfo {author} {\bibfnamefont {M.}~\bibnamefont {Junge}}, \bibinfo {author} {\bibfnamefont {C.}~\bibnamefont {King}},\ and\ \bibinfo {author} {\bibfnamefont {M.~B.}\ \bibnamefont {Ruskai}},\ }\bibfield  {title} {\bibinfo {title} {Multiplicativity of completely bounded p-norms implies a new additivity result},\ }\href {https://doi.org/10.1007/s00220-006-0034-0} {\bibfield  {journal} {\bibinfo  {journal} {Communications in Mathematical Physics}\ }\textbf {\bibinfo {volume} {266}},\ \bibinfo {pages} {37–63} (\bibinfo {year} {2006})}\BibitemShut {NoStop}%
\bibitem [{\citenamefont {Hayden}\ \emph {et~al.}(2004)\citenamefont {Hayden}, \citenamefont {Jozsa}, \citenamefont {Petz},\ and\ \citenamefont {Winter}}]{HJPW04}%
  \BibitemOpen
  \bibfield  {author} {\bibinfo {author} {\bibfnamefont {P.}~\bibnamefont {Hayden}}, \bibinfo {author} {\bibfnamefont {R.}~\bibnamefont {Jozsa}}, \bibinfo {author} {\bibfnamefont {D.}~\bibnamefont {Petz}},\ and\ \bibinfo {author} {\bibfnamefont {A.}~\bibnamefont {Winter}},\ }\bibfield  {title} {\bibinfo {title} {Structure of states which satisfy strong subadditivity of quantum entropy with equality},\ }\href {https://link.springer.com/article/10.1007/s00220-004-1049-z} {\bibfield  {journal} {\bibinfo  {journal} {Communications in Mathematical Physics}\ }\textbf {\bibinfo {volume} {246}},\ \bibinfo {pages} {359} (\bibinfo {year} {2004})}\BibitemShut {NoStop}%
\bibitem [{\citenamefont {Junge}\ \emph {et~al.}(2018)\citenamefont {Junge}, \citenamefont {Renner}, \citenamefont {Sutter}, \citenamefont {Wilde},\ and\ \citenamefont {Winter}}]{JRS+18}%
  \BibitemOpen
  \bibfield  {author} {\bibinfo {author} {\bibfnamefont {M.}~\bibnamefont {Junge}}, \bibinfo {author} {\bibfnamefont {R.}~\bibnamefont {Renner}}, \bibinfo {author} {\bibfnamefont {D.}~\bibnamefont {Sutter}}, \bibinfo {author} {\bibfnamefont {M.~M.}\ \bibnamefont {Wilde}},\ and\ \bibinfo {author} {\bibfnamefont {A.}~\bibnamefont {Winter}},\ }\bibfield  {title} {\bibinfo {title} {Universal recovery maps and approximate sufficiency of quantum relative entropy},\ }\href {https://doi.org/10.1007/s00023-018-0716-0} {\bibfield  {journal} {\bibinfo  {journal} {Ann. Henri Poincare}\ }\textbf {\bibinfo {volume} {19}},\ \bibinfo {pages} {2955} (\bibinfo {year} {2018})}\BibitemShut {NoStop}%
\bibitem [{\citenamefont {Shirokov}(2017)}]{Shi17}%
  \BibitemOpen
  \bibfield  {author} {\bibinfo {author} {\bibfnamefont {M.~E.}\ \bibnamefont {Shirokov}},\ }\bibfield  {title} {\bibinfo {title} {Tight uniform continuity bounds for the quantum conditional mutual information, for the holevo quantity, and for capacities of quantum channels},\ }\bibfield  {journal} {\bibinfo  {journal} {Journal of Mathematical Physics}\ }\textbf {\bibinfo {volume} {58}},\ \href {https://doi.org/10.1063/1.4987135} {10.1063/1.4987135} (\bibinfo {year} {2017})\BibitemShut {NoStop}%
\bibitem [{\citenamefont {Sutter}\ \emph {et~al.}(2016)\citenamefont {Sutter}, \citenamefont {Fawzi},\ and\ \citenamefont {Renner}}]{SFR16}%
  \BibitemOpen
  \bibfield  {author} {\bibinfo {author} {\bibfnamefont {D.}~\bibnamefont {Sutter}}, \bibinfo {author} {\bibfnamefont {O.}~\bibnamefont {Fawzi}},\ and\ \bibinfo {author} {\bibfnamefont {R.}~\bibnamefont {Renner}},\ }\bibfield  {title} {\bibinfo {title} {Universal recovery map for approximate {M}arkov chains},\ }\href {https://doi.org/10.1098/rspa.2015.0623} {\bibfield  {journal} {\bibinfo  {journal} {Proc. Math. Phys. Eng. Sci.}\ }\textbf {\bibinfo {volume} {472}},\ \bibinfo {pages} {20150623} (\bibinfo {year} {2016})}\BibitemShut {NoStop}%
\bibitem [{\citenamefont {Brand\~ao}\ \emph {et~al.}(2015)\citenamefont {Brand\~ao}, \citenamefont {Harrow}, \citenamefont {Oppenheim},\ and\ \citenamefont {Strelchuk}}]{BHOS15}%
  \BibitemOpen
  \bibfield  {author} {\bibinfo {author} {\bibfnamefont {F.~G. S.~L.}\ \bibnamefont {Brand\~ao}}, \bibinfo {author} {\bibfnamefont {A.~W.}\ \bibnamefont {Harrow}}, \bibinfo {author} {\bibfnamefont {J.}~\bibnamefont {Oppenheim}},\ and\ \bibinfo {author} {\bibfnamefont {S.}~\bibnamefont {Strelchuk}},\ }\bibfield  {title} {\bibinfo {title} {Quantum conditional mutual information, reconstructed states, and state redistribution},\ }\href {https://doi.org/10.1103/PhysRevLett.115.050501} {\bibfield  {journal} {\bibinfo  {journal} {Physical Review Letters}\ }\textbf {\bibinfo {volume} {115}},\ \bibinfo {pages} {050501} (\bibinfo {year} {2015})}\BibitemShut {NoStop}%
\bibitem [{\citenamefont {Lami}\ \emph {et~al.}(2018)\citenamefont {Lami}, \citenamefont {Das},\ and\ \citenamefont {Wilde}}]{LDW18}%
  \BibitemOpen
  \bibfield  {author} {\bibinfo {author} {\bibfnamefont {L.}~\bibnamefont {Lami}}, \bibinfo {author} {\bibfnamefont {S.}~\bibnamefont {Das}},\ and\ \bibinfo {author} {\bibfnamefont {M.~M.}\ \bibnamefont {Wilde}},\ }\bibfield  {title} {\bibinfo {title} {Approximate reversal of quantum gaussian dynamics},\ }\href {https://doi.org/10.1088/1751-8121/aaad26} {\bibfield  {journal} {\bibinfo  {journal} {Journal of Physics A: Mathematical and Theoretical}\ }\textbf {\bibinfo {volume} {51}},\ \bibinfo {pages} {125301} (\bibinfo {year} {2018})}\BibitemShut {NoStop}%
\bibitem [{\citenamefont {Petz}(1986)}]{Petz:1986}%
  \BibitemOpen
  \bibfield  {author} {\bibinfo {author} {\bibfnamefont {D.}~\bibnamefont {Petz}},\ }\bibfield  {title} {\bibinfo {title} {{Sufficient subalgebras and the relative entropy of states of a von Neumann algebra}},\ }\href {https://doi.org/10.1007/BF01212345} {\bibfield  {journal} {\bibinfo  {journal} {Communications in Mathematical Physics}\ }\textbf {\bibinfo {volume} {105}},\ \bibinfo {pages} {123} (\bibinfo {year} {1986})}\BibitemShut {NoStop}%
\bibitem [{\citenamefont {Smith}\ and\ \citenamefont {Smolin}(2009)}]{SS09}%
  \BibitemOpen
  \bibfield  {author} {\bibinfo {author} {\bibfnamefont {G.}~\bibnamefont {Smith}}\ and\ \bibinfo {author} {\bibfnamefont {J.~A.}\ \bibnamefont {Smolin}},\ }\bibfield  {title} {\bibinfo {title} {Can nonprivate channels transmit quantum information?},\ }\href {https://doi.org/10.1103/PhysRevLett.102.010501} {\bibfield  {journal} {\bibinfo  {journal} {Physical Review Letters}\ }\textbf {\bibinfo {volume} {102}},\ \bibinfo {pages} {010501} (\bibinfo {year} {2009})}\BibitemShut {NoStop}%
\bibitem [{\citenamefont {Caruso}\ \emph {et~al.}(2014)\citenamefont {Caruso}, \citenamefont {Giovannetti}, \citenamefont {Lupo},\ and\ \citenamefont {Mancini}}]{CGLM14}%
  \BibitemOpen
  \bibfield  {author} {\bibinfo {author} {\bibfnamefont {F.}~\bibnamefont {Caruso}}, \bibinfo {author} {\bibfnamefont {V.}~\bibnamefont {Giovannetti}}, \bibinfo {author} {\bibfnamefont {C.}~\bibnamefont {Lupo}},\ and\ \bibinfo {author} {\bibfnamefont {S.}~\bibnamefont {Mancini}},\ }\bibfield  {title} {\bibinfo {title} {Quantum channels and memory effects},\ }\href {https://doi.org/10.1103/RevModPhys.86.1203} {\bibfield  {journal} {\bibinfo  {journal} {Reviews of Modern Physics}\ }\textbf {\bibinfo {volume} {86}},\ \bibinfo {pages} {1203} (\bibinfo {year} {2014})}\BibitemShut {NoStop}%
\bibitem [{\citenamefont {Hosur}\ \emph {et~al.}(2016)\citenamefont {Hosur}, \citenamefont {Qi}, \citenamefont {Roberts},\ and\ \citenamefont {Yoshida}}]{HQRY16}%
  \BibitemOpen
  \bibfield  {author} {\bibinfo {author} {\bibfnamefont {P.}~\bibnamefont {Hosur}}, \bibinfo {author} {\bibfnamefont {X.-L.}\ \bibnamefont {Qi}}, \bibinfo {author} {\bibfnamefont {D.~A.}\ \bibnamefont {Roberts}},\ and\ \bibinfo {author} {\bibfnamefont {B.}~\bibnamefont {Yoshida}},\ }\bibfield  {title} {\bibinfo {title} {Chaos in quantum channels},\ }\bibfield  {journal} {\bibinfo  {journal} {Journal of High Energy Physics}\ }\textbf {\bibinfo {volume} {2016}},\ \href {https://doi.org/10.1007/jhep02(2016)004} {10.1007/jhep02(2016)004} (\bibinfo {year} {2016})\BibitemShut {NoStop}%
\bibitem [{\citenamefont {Das}(2019)}]{Das19}%
  \BibitemOpen
  \bibfield  {author} {\bibinfo {author} {\bibfnamefont {S.}~\bibnamefont {Das}},\ }\href {https://arxiv.org/abs/1901.05895} {\bibinfo {title} {Bipartite quantum interactions: Entangling and information processing abilities}} (\bibinfo {year} {2019}),\ \bibinfo {note} {arXiv:1901.05895}\BibitemShut {NoStop}%
\bibitem [{\citenamefont {Horodecki}\ \emph {et~al.}(2022)\citenamefont {Horodecki}, \citenamefont {Winczewski},\ and\ \citenamefont {Das}}]{HWD22}%
  \BibitemOpen
  \bibfield  {author} {\bibinfo {author} {\bibfnamefont {K.}~\bibnamefont {Horodecki}}, \bibinfo {author} {\bibfnamefont {M.}~\bibnamefont {Winczewski}},\ and\ \bibinfo {author} {\bibfnamefont {S.}~\bibnamefont {Das}},\ }\bibfield  {title} {\bibinfo {title} {Fundamental limitations on the device-independent quantum conference key agreement},\ }\href {https://doi.org/10.1103/PhysRevA.105.022604} {\bibfield  {journal} {\bibinfo  {journal} {Physical Review A}\ }\textbf {\bibinfo {volume} {105}},\ \bibinfo {pages} {022604} (\bibinfo {year} {2022})}\BibitemShut {NoStop}%
\bibitem [{\citenamefont {Gangwar}\ \emph {et~al.}(2024)\citenamefont {Gangwar}, \citenamefont {Pandit}, \citenamefont {Goswami}, \citenamefont {Das},\ and\ \citenamefont {Bera}}]{GPG+24}%
  \BibitemOpen
  \bibfield  {author} {\bibinfo {author} {\bibfnamefont {R.}~\bibnamefont {Gangwar}}, \bibinfo {author} {\bibfnamefont {T.}~\bibnamefont {Pandit}}, \bibinfo {author} {\bibfnamefont {K.}~\bibnamefont {Goswami}}, \bibinfo {author} {\bibfnamefont {S.}~\bibnamefont {Das}},\ and\ \bibinfo {author} {\bibfnamefont {M.~N.}\ \bibnamefont {Bera}},\ }\href {https://arxiv.org/abs/2311.18323} {\bibinfo {title} {Squashed quantum non-{M}arkovianity: a measure of genuine quantum non-{M}arkovianity in states}} (\bibinfo {year} {2024}),\ \bibinfo {note} {arXiv:2311.18323}\BibitemShut {NoStop}%
\bibitem [{\citenamefont {Wagner}\ \emph {et~al.}(2024)\citenamefont {Wagner}, \citenamefont {Nigmatullin}, \citenamefont {Gilchrist},\ and\ \citenamefont {Brennen}}]{ERAG24}%
  \BibitemOpen
  \bibfield  {author} {\bibinfo {author} {\bibfnamefont {E.}~\bibnamefont {Wagner}}, \bibinfo {author} {\bibfnamefont {R.}~\bibnamefont {Nigmatullin}}, \bibinfo {author} {\bibfnamefont {A.}~\bibnamefont {Gilchrist}},\ and\ \bibinfo {author} {\bibfnamefont {G.~K.}\ \bibnamefont {Brennen}},\ }\bibfield  {title} {\bibinfo {title} {{Information flow in non-unitary quantum cellular automata}},\ }\href {https://doi.org/10.21468/SciPostPhys.16.1.014} {\bibfield  {journal} {\bibinfo  {journal} {SciPost Physics}\ }\textbf {\bibinfo {volume} {16}},\ \bibinfo {pages} {014} (\bibinfo {year} {2024})}\BibitemShut {NoStop}%
\bibitem [{\citenamefont {Watanabe}(1960)}]{Wan60}%
  \BibitemOpen
  \bibfield  {author} {\bibinfo {author} {\bibfnamefont {S.}~\bibnamefont {Watanabe}},\ }\bibfield  {title} {\bibinfo {title} {Information theoretical analysis of multivariate correlation},\ }\href {https://doi.org/10.1147/rd.41.0066} {\bibfield  {journal} {\bibinfo  {journal} {IBM Journal of Research and Development}\ }\textbf {\bibinfo {volume} {4}},\ \bibinfo {pages} {66} (\bibinfo {year} {1960})}\BibitemShut {NoStop}%
\bibitem [{\citenamefont {Cerf}\ \emph {et~al.}(2002)\citenamefont {Cerf}, \citenamefont {Massar},\ and\ \citenamefont {Schneider}}]{CS02}%
  \BibitemOpen
  \bibfield  {author} {\bibinfo {author} {\bibfnamefont {N.~J.}\ \bibnamefont {Cerf}}, \bibinfo {author} {\bibfnamefont {S.}~\bibnamefont {Massar}},\ and\ \bibinfo {author} {\bibfnamefont {S.}~\bibnamefont {Schneider}},\ }\bibfield  {title} {\bibinfo {title} {Multipartite classical and quantum secrecy monotones},\ }\href {https://doi.org/10.1103/PhysRevA.66.042309} {\bibfield  {journal} {\bibinfo  {journal} {Physical Review A}\ }\textbf {\bibinfo {volume} {66}},\ \bibinfo {pages} {042309} (\bibinfo {year} {2002})}\BibitemShut {NoStop}%
\bibitem [{\citenamefont {Partovi}(2009)}]{Pat09}%
  \BibitemOpen
  \bibfield  {author} {\bibinfo {author} {\bibfnamefont {M.~H.}\ \bibnamefont {Partovi}},\ }\bibfield  {title} {\bibinfo {title} {Correlative capacity of composite quantum states},\ }\href {https://doi.org/10.1103/PhysRevLett.103.230502} {\bibfield  {journal} {\bibinfo  {journal} {Physical Review Letters}\ }\textbf {\bibinfo {volume} {103}},\ \bibinfo {pages} {230502} (\bibinfo {year} {2009})}\BibitemShut {NoStop}%
\bibitem [{\citenamefont {Modi}\ \emph {et~al.}(2010)\citenamefont {Modi}, \citenamefont {Paterek}, \citenamefont {Son}, \citenamefont {Vedral},\ and\ \citenamefont {Williamson}}]{MPW+10}%
  \BibitemOpen
  \bibfield  {author} {\bibinfo {author} {\bibfnamefont {K.}~\bibnamefont {Modi}}, \bibinfo {author} {\bibfnamefont {T.}~\bibnamefont {Paterek}}, \bibinfo {author} {\bibfnamefont {W.}~\bibnamefont {Son}}, \bibinfo {author} {\bibfnamefont {V.}~\bibnamefont {Vedral}},\ and\ \bibinfo {author} {\bibfnamefont {M.}~\bibnamefont {Williamson}},\ }\bibfield  {title} {\bibinfo {title} {Unified view of quantum and classical correlations},\ }\href {https://doi.org/10.1103/PhysRevLett.104.080501} {\bibfield  {journal} {\bibinfo  {journal} {Physical Review Letters}\ }\textbf {\bibinfo {volume} {104}},\ \bibinfo {pages} {080501} (\bibinfo {year} {2010})}\BibitemShut {NoStop}%
\bibitem [{\citenamefont {Groisman}\ \emph {et~al.}(2005)\citenamefont {Groisman}, \citenamefont {Popescu},\ and\ \citenamefont {Winter}}]{GPW05}%
  \BibitemOpen
  \bibfield  {author} {\bibinfo {author} {\bibfnamefont {B.}~\bibnamefont {Groisman}}, \bibinfo {author} {\bibfnamefont {S.}~\bibnamefont {Popescu}},\ and\ \bibinfo {author} {\bibfnamefont {A.}~\bibnamefont {Winter}},\ }\bibfield  {title} {\bibinfo {title} {Quantum, classical, and total amount of correlations in a quantum state},\ }\href {https://doi.org/10.1103/PhysRevA.72.032317} {\bibfield  {journal} {\bibinfo  {journal} {Physical Review A}\ }\textbf {\bibinfo {volume} {72}},\ \bibinfo {pages} {032317} (\bibinfo {year} {2005})}\BibitemShut {NoStop}%
\bibitem [{\citenamefont {Bennett}\ \emph {et~al.}(2003)\citenamefont {Bennett}, \citenamefont {Harrow}, \citenamefont {Leung},\ and\ \citenamefont {Smolin}}]{BHLS03}%
  \BibitemOpen
  \bibfield  {author} {\bibinfo {author} {\bibfnamefont {C.~H.}\ \bibnamefont {Bennett}}, \bibinfo {author} {\bibfnamefont {A.~W.}\ \bibnamefont {Harrow}}, \bibinfo {author} {\bibfnamefont {D.~W.}\ \bibnamefont {Leung}},\ and\ \bibinfo {author} {\bibfnamefont {J.~A.}\ \bibnamefont {Smolin}},\ }\bibfield  {title} {\bibinfo {title} {On the capacities of bipartite {Hamiltonians} and unitary gates},\ }\href@noop {} {\bibfield  {journal} {\bibinfo  {journal} {IEEE Transactions on Information Theory}\ }\textbf {\bibinfo {volume} {49}},\ \bibinfo {pages} {1895} (\bibinfo {year} {2003})},\ \bibinfo {note} {arXiv:quant-ph/0205057}\BibitemShut {NoStop}%
\bibitem [{\citenamefont {Cerf}\ and\ \citenamefont {Cleve}(1997)}]{CC97}%
  \BibitemOpen
  \bibfield  {author} {\bibinfo {author} {\bibfnamefont {N.~J.}\ \bibnamefont {Cerf}}\ and\ \bibinfo {author} {\bibfnamefont {R.}~\bibnamefont {Cleve}},\ }\bibfield  {title} {\bibinfo {title} {Information-theoretic interpretation of quantum error-correcting codes},\ }\href {https://doi.org/10.1103/PhysRevA.56.1721} {\bibfield  {journal} {\bibinfo  {journal} {Physical Review A}\ }\textbf {\bibinfo {volume} {56}},\ \bibinfo {pages} {1721} (\bibinfo {year} {1997})}\BibitemShut {NoStop}%
\bibitem [{\citenamefont {Schumacher}\ and\ \citenamefont {Westmoreland}(2001)}]{BM01}%
  \BibitemOpen
  \bibfield  {author} {\bibinfo {author} {\bibfnamefont {B.}~\bibnamefont {Schumacher}}\ and\ \bibinfo {author} {\bibfnamefont {M.~D.}\ \bibnamefont {Westmoreland}},\ }\href {https://arxiv.org/abs/quant-ph/0112106} {\bibinfo {title} {Approximate quantum error correction}} (\bibinfo {year} {2001}),\ \bibinfo {note} {arXiv:quant-ph/0112106}\BibitemShut {NoStop}%
\bibitem [{\citenamefont {Wolf}\ \emph {et~al.}(2008)\citenamefont {Wolf}, \citenamefont {Verstraete}, \citenamefont {Hastings},\ and\ \citenamefont {Cirac}}]{WVHC08}%
  \BibitemOpen
  \bibfield  {author} {\bibinfo {author} {\bibfnamefont {M.~M.}\ \bibnamefont {Wolf}}, \bibinfo {author} {\bibfnamefont {F.}~\bibnamefont {Verstraete}}, \bibinfo {author} {\bibfnamefont {M.~B.}\ \bibnamefont {Hastings}},\ and\ \bibinfo {author} {\bibfnamefont {J.~I.}\ \bibnamefont {Cirac}},\ }\bibfield  {title} {\bibinfo {title} {Area laws in quantum systems: Mutual information and correlations},\ }\href {https://doi.org/10.1103/PhysRevLett.100.070502} {\bibfield  {journal} {\bibinfo  {journal} {Physical Review Letters}\ }\textbf {\bibinfo {volume} {100}},\ \bibinfo {pages} {070502} (\bibinfo {year} {2008})}\BibitemShut {NoStop}%
\bibitem [{\citenamefont {Buscemi}\ \emph {et~al.}(2016)\citenamefont {Buscemi}, \citenamefont {Das},\ and\ \citenamefont {Wilde}}]{BDW16}%
  \BibitemOpen
  \bibfield  {author} {\bibinfo {author} {\bibfnamefont {F.}~\bibnamefont {Buscemi}}, \bibinfo {author} {\bibfnamefont {S.}~\bibnamefont {Das}},\ and\ \bibinfo {author} {\bibfnamefont {M.~M.}\ \bibnamefont {Wilde}},\ }\bibfield  {title} {\bibinfo {title} {Approximate reversibility in the context of entropy gain, information gain, and complete positivity},\ }\href {https://doi.org/10.1103/PhysRevA.93.062314} {\bibfield  {journal} {\bibinfo  {journal} {Physical Review A}\ }\textbf {\bibinfo {volume} {93}},\ \bibinfo {pages} {062314} (\bibinfo {year} {2016})}\BibitemShut {NoStop}%
\bibitem [{\citenamefont {Bera}\ \emph {et~al.}(2017)\citenamefont {Bera}, \citenamefont {Riera}, \citenamefont {Lewenstein},\ and\ \citenamefont {Winter}}]{BRLW17}%
  \BibitemOpen
  \bibfield  {author} {\bibinfo {author} {\bibfnamefont {M.~N.}\ \bibnamefont {Bera}}, \bibinfo {author} {\bibfnamefont {A.}~\bibnamefont {Riera}}, \bibinfo {author} {\bibfnamefont {M.}~\bibnamefont {Lewenstein}},\ and\ \bibinfo {author} {\bibfnamefont {A.}~\bibnamefont {Winter}},\ }\bibfield  {title} {\bibinfo {title} {Generalized laws of thermodynamics in the presence of correlations},\ }\bibfield  {journal} {\bibinfo  {journal} {Nature Communications}\ }\textbf {\bibinfo {volume} {8}},\ \href {https://doi.org/10.1038/s41467-017-02370-x} {10.1038/s41467-017-02370-x} (\bibinfo {year} {2017})\BibitemShut {NoStop}%
\bibitem [{\citenamefont {Pastawski}\ and\ \citenamefont {Preskill}(2017)}]{PP17}%
  \BibitemOpen
  \bibfield  {author} {\bibinfo {author} {\bibfnamefont {F.}~\bibnamefont {Pastawski}}\ and\ \bibinfo {author} {\bibfnamefont {J.}~\bibnamefont {Preskill}},\ }\bibfield  {title} {\bibinfo {title} {Code properties from holographic geometries},\ }\href {https://doi.org/10.1103/PhysRevX.7.021022} {\bibfield  {journal} {\bibinfo  {journal} {Physical Review X}\ }\textbf {\bibinfo {volume} {7}},\ \bibinfo {pages} {021022} (\bibinfo {year} {2017})}\BibitemShut {NoStop}%
\bibitem [{\citenamefont {Strasberg}\ and\ \citenamefont {Winter}(2021)}]{PW21}%
  \BibitemOpen
  \bibfield  {author} {\bibinfo {author} {\bibfnamefont {P.}~\bibnamefont {Strasberg}}\ and\ \bibinfo {author} {\bibfnamefont {A.}~\bibnamefont {Winter}},\ }\bibfield  {title} {\bibinfo {title} {First and second law of quantum thermodynamics: A consistent derivation based on a microscopic definition of entropy},\ }\href {https://doi.org/10.1103/PRXQuantum.2.030202} {\bibfield  {journal} {\bibinfo  {journal} {PRX Quantum}\ }\textbf {\bibinfo {volume} {2}},\ \bibinfo {pages} {030202} (\bibinfo {year} {2021})}\BibitemShut {NoStop}%
\bibitem [{\citenamefont {Akers}\ and\ \citenamefont {Penington}(2022)}]{AP22}%
  \BibitemOpen
  \bibfield  {author} {\bibinfo {author} {\bibfnamefont {C.}~\bibnamefont {Akers}}\ and\ \bibinfo {author} {\bibfnamefont {G.}~\bibnamefont {Penington}},\ }\bibfield  {title} {\bibinfo {title} {{Quantum minimal surfaces from quantum error correction}},\ }\href {https://doi.org/10.21468/SciPostPhys.12.5.157} {\bibfield  {journal} {\bibinfo  {journal} {SciPost Physics}\ }\textbf {\bibinfo {volume} {12}},\ \bibinfo {pages} {157} (\bibinfo {year} {2022})}\BibitemShut {NoStop}%
\bibitem [{\citenamefont {Munson}\ \emph {et~al.}(2024)\citenamefont {Munson}, \citenamefont {Kothakonda}, \citenamefont {Haferkamp}, \citenamefont {Halpern}, \citenamefont {Eisert},\ and\ \citenamefont {Faist}}]{ANJ+24}%
  \BibitemOpen
  \bibfield  {author} {\bibinfo {author} {\bibfnamefont {A.}~\bibnamefont {Munson}}, \bibinfo {author} {\bibfnamefont {N.~B.~T.}\ \bibnamefont {Kothakonda}}, \bibinfo {author} {\bibfnamefont {J.}~\bibnamefont {Haferkamp}}, \bibinfo {author} {\bibfnamefont {N.~Y.}\ \bibnamefont {Halpern}}, \bibinfo {author} {\bibfnamefont {J.}~\bibnamefont {Eisert}},\ and\ \bibinfo {author} {\bibfnamefont {P.}~\bibnamefont {Faist}},\ }\href {https://arxiv.org/abs/2403.04828} {\bibinfo {title} {Complexity-constrained quantum thermodynamics}} (\bibinfo {year} {2024}),\ \bibinfo {note} {arXiv:2403.04828}\BibitemShut {NoStop}%
\bibitem [{\citenamefont {Bai}\ \emph {et~al.}(2022)\citenamefont {Bai}, \citenamefont {Wu}, \citenamefont {Zhu}, \citenamefont {Hayashi},\ and\ \citenamefont {Chiribella}}]{BWZ+22}%
  \BibitemOpen
  \bibfield  {author} {\bibinfo {author} {\bibfnamefont {G.}~\bibnamefont {Bai}}, \bibinfo {author} {\bibfnamefont {Y.-D.}\ \bibnamefont {Wu}}, \bibinfo {author} {\bibfnamefont {Y.}~\bibnamefont {Zhu}}, \bibinfo {author} {\bibfnamefont {M.}~\bibnamefont {Hayashi}},\ and\ \bibinfo {author} {\bibfnamefont {G.}~\bibnamefont {Chiribella}},\ }\bibfield  {title} {\bibinfo {title} {Quantum causal unravelling},\ }\bibfield  {journal} {\bibinfo  {journal} {npj Quantum Information}\ }\textbf {\bibinfo {volume} {8}},\ \href {https://doi.org/10.1038/s41534-022-00578-4} {10.1038/s41534-022-00578-4} (\bibinfo {year} {2022})\BibitemShut {NoStop}%
\bibitem [{\citenamefont {Goswami}\ and\ \citenamefont {Chiribella}(2024)}]{GC24}%
  \BibitemOpen
  \bibfield  {author} {\bibinfo {author} {\bibfnamefont {K.}~\bibnamefont {Goswami}}\ and\ \bibinfo {author} {\bibfnamefont {G.}~\bibnamefont {Chiribella}},\ }\href {https://arxiv.org/abs/2404.07683} {\bibinfo {title} {Maximum and minimum causal effects of physical processes}} (\bibinfo {year} {2024}),\ \Eprint {https://arxiv.org/abs/2404.07683} {arXiv:2404.07683 [quant-ph]} \BibitemShut {NoStop}%
\bibitem [{\citenamefont {Dowling}\ \emph {et~al.}(2024)\citenamefont {Dowling}, \citenamefont {Modi}, \citenamefont {Mu\~noz}, \citenamefont {Singh},\ and\ \citenamefont {White}}]{DMM+24}%
  \BibitemOpen
  \bibfield  {author} {\bibinfo {author} {\bibfnamefont {N.}~\bibnamefont {Dowling}}, \bibinfo {author} {\bibfnamefont {K.}~\bibnamefont {Modi}}, \bibinfo {author} {\bibfnamefont {R.~N.}\ \bibnamefont {Mu\~noz}}, \bibinfo {author} {\bibfnamefont {S.}~\bibnamefont {Singh}},\ and\ \bibinfo {author} {\bibfnamefont {G.~A.~L.}\ \bibnamefont {White}},\ }\bibfield  {title} {\bibinfo {title} {Capturing long-range memory structures with tree-geometry process tensors},\ }\href {https://doi.org/10.1103/PhysRevX.14.041018} {\bibfield  {journal} {\bibinfo  {journal} {Physical Review X}\ }\textbf {\bibinfo {volume} {14}},\ \bibinfo {pages} {041018} (\bibinfo {year} {2024})}\BibitemShut {NoStop}%
\bibitem [{\citenamefont {Sedlák}\ \emph {et~al.}(2016)\citenamefont {Sedlák}, \citenamefont {Reitzner}, \citenamefont {Chiribella},\ and\ \citenamefont {Ziman}}]{SRCZ16}%
  \BibitemOpen
  \bibfield  {author} {\bibinfo {author} {\bibfnamefont {M.}~\bibnamefont {Sedlák}}, \bibinfo {author} {\bibfnamefont {D.}~\bibnamefont {Reitzner}}, \bibinfo {author} {\bibfnamefont {G.}~\bibnamefont {Chiribella}},\ and\ \bibinfo {author} {\bibfnamefont {M.}~\bibnamefont {Ziman}},\ }\bibfield  {title} {\bibinfo {title} {Incompatible measurements on quantum causal networks},\ }\bibfield  {journal} {\bibinfo  {journal} {Physical Review A}\ }\textbf {\bibinfo {volume} {93}},\ \href {https://doi.org/10.1103/physreva.93.052323} {10.1103/physreva.93.052323} (\bibinfo {year} {2016})\BibitemShut {NoStop}%
\end{thebibliography}%
\end{document}